\numberwithin{equation}{section}
\newtheorem{thm}{Theorem}[section]
\newtheorem{prop}[thm]{Proposition}
\newtheorem{rem}[thm]{Remark}
\newtheorem{assu}[thm]{Assumption}
\def\Tr{{\rm{Tr}}}
\def\tr{{\rm{Tr}}}
\def\xx{{\bf x}}
\def\yy{ {\bf y}}
\def\zz{{\bf z}}
\let\e=\varepsilon
\def \be{\begin{equation}}
\def \ee{\end{equation}}
\def\d{\delta}
\def \bea{\begin{eqnarray}}
\def \eea{\end{eqnarray}}
\def\m{\mu}
\def\b{\beta}
\def\nn{\nonumber}
\def\G{\Gamma}
\def\l{\lambda}
\begin{document}

\title{Multi-Channel Luttinger Liquids at the Edge of Quantum Hall Systems}

\author[1]{Vieri Mastropietro}
\affil[1]{University of Milano, Department of Mathematics ``F. Enriquez'', Via C. Saldini 50, 20133 Milano, Italy}
\author[2]{Marcello Porta\footnote{e-mail address: mporta@sissa.it}}
\affil[2]{SISSA, Via Bonomea 265, 34136 Trieste, Italy}

\maketitle

\begin{abstract} 
We consider the edge transport properties of a generic class of interacting quantum Hall systems on a cylinder, in the infinite volume and zero temperature limit. We prove that the large-scale behavior of the edge correlation functions is effectively described by the multi-channel Luttinger model. In particular, we prove that the edge conductance is universal, and equal to the sum of the chiralities of the non-interacting edge modes. The proof is based on rigorous renormalization group methods, that allow to fully take into account the effect of backscattering at the edge. Universality arises as a consequence of the integrability of the emergent multi-channel Luttinger liquid combined with lattice Ward identities for the microscopic $2d$ theory.
\end{abstract}
\maketitle

\tableofcontents     

\section{Introduction}

\noindent{\bf Quantum Hall Effect.} The integer quantum Hall effect (IQHE) is a paradigmatic example of topological quantum phenomenon in condensed matter physics: the transverse, or Hall, conductivity of two-dimensional insulating systems exposed to magnetic fields is equal to an integer multiple of $e^{2}/h$, where $e$ is the electric charge and $h$ is the Planck's constant. Despite the complexity of the system at the microscopic scale, the IQHE is measured with astonishing precision. For translation invariant, noninteracting models, the explanation of the IQHE has been given in \cite{TKNN}, see also \cite{AS2}, identifying the Hall conductivity as the Chern number of a suitable vector bundle. If translation invariance is broken, quantization can be understood via a noncommutative-geometric framework \cite{B1}, or in a more functional-analytic setting \cite{AS2index, AG}. In all these approaches, the Hall conductivity is identified as a topological invariant; universality against perturbations is understood via the stability of an index.

The quantum Hall effect admits dual descriptions, in terms of either bulk or edge transport properties \cite{Halp}. The infinite volume, or bulk, Hall conductivity turns out to be equal to the edge conductance, describing the charge transport along the boundary of the system. This remarkable fact is called the {\it bulk-edge correspondence.} In the absence of interactions, the edge conductance is equal to the signed number of gapless edge modes, the sign taking into account the chirality (or direction of propagation) of the edge states.

For noninteracting quantum Hall systems, the bulk-edge correspondence has been proved in \cite{Hat} for translation invariant systems, in \cite{SKR, EG} for disordered systems with a spectral gap, and in \cite{EGS} for disordered systems with a mobility gap. The positive temperature extension has been recently discussed in \cite{CMT}. A general field theoretic approach to the bulk-edge duality, based on anomaly cancellations, has been introduced in \cite{FK, FST}.

Concerning many-body systems, the quest for the rigorous understanding of the quantization of the Hall conductivity has been opened for a while. Recently, two different approaches have been proposed; see \cite{A} for a review of the problem, and of its solution. In \cite{HM}, see also \cite{BBdRF}, the quantization of the interacting Hall conductivity has been proved via geometric methods, following earlier insights by \cite{ASint}. A key assumption is the spectral gap for the interacting fermionic model, proved later in \cite{HM1, dRS} for weak interactions. Afterwards, an index theorem for gapped many-body systems has been introduced in \cite{BBdRF2}. A different approach has been introduced in \cite{GMP}. There, quantization follows from universality, by showing that the interacting Hall conductivity is equal to the non-interacting one. The proof holds under the assumption that the non-interacting Hamiltonian is gapped, and that the many-body interaction is weak enough. It is based on cluster expansion methods, to construct the interacting Gibbs state, and on Ward identities, to prove universality. The same cluster expansion techniques have then been used to prove the stability of the spectral gap \cite{dRS}. More recently, in the context of the interacting Haldane model, the approach of \cite{GMP} has been combined with renormalization group (RG) methods, to obtain results that hold uniformly in the size of the spectral gap \cite{GJMP, GMPcritical}.

Much less is known for interacting edge transport. One expects the large-scale properties of the edge currents to be effectively described by the multi-channel Luttinger model \cite{W}, a relativistic theory for interacting fermions in $1+1$ dimensions; see \cite{KF, C} for reviews. It generalizes the Luttinger model, describing interacting fermions with two opposite chiralities, to an arbitrary number of chiral modes. The Luttinger model and its variants can be solved by bosonization, as first pointed out in the pioneering work \cite{ML}, see \cite{MM} for a review. 

A striking feature of the exact solution of the Luttinger model is the appearence of anomalous exponents in the scaling of correlations. These are determined by the so-called Luttinger parameter $K\equiv K(\lambda)$, a nontrivial function of the coupling constant $\lambda$, such that $K(0) \equiv 1$. Besides fixing the anomalous exponents, the Luttinger parameter also allows to determine the transport properties of the system. In fact, the quantum of conductance due to a single channel is $K(\lambda) e^{2} / h$. More generally, in the multi-channel Luttinger model every chiral channel comes with its own Luttinger parameter, and their values strongly depend on whether backscattering is present. In the special case in which all chiral fermions propagate in the same direction (that is, no backscattering is present), the conductance of a single chiral mode is universal, and equal to $e^{2} / h$. The Landauer conductance, describing the current appearing in the system after putting it in contact with reservoirs at different chemical potentials, turns out to be equal to the sum of the absolute values of the conductances of all channels. See \cite{Gi} for a computation of the Landauer conductance of the Luttinger model based on Kubo formula, or \cite{LLMM} for a recent derivation from quantum dynamics.

The Landauer conductance is related to the two-terminal conductance of the Hall bar, see e.g. \cite{KF, KFP}. In the absence of impurities, one expects this transport coefficient to be universal only in the absence of backscattering between edge modes. Notice however that counterpropagating edge modes appear in a wide class of physically relevant systems, see e.g. \cite{Fukui} for the case of the Hofstadter model. Instead, the edge conductance of quantum Hall systems, defined as response of the edge modes after a variation of the chemical potential at the edge, is related to the sum of the signed conductances of the edge channels, where the sign takes into account the direction of propagation. This quantity is expected to be stable against backscattering and disorder effects, and quantized.

The multi-channel Luttinger model is not a fundamental description of the system: it is an emergent theory, whose validity needs to be proved. It depends on a number of free parameters, which need to be fixed in order to provide a quantitatively correct description of the edge physics. It is a key ingredient of a phenomenological theory of the fractional quantum Hall effect \cite{W, KF, KFP}, where counterpropagating edge modes play an important role. There, the Luttinger parameter is fixed as the inverse of an odd integer, as a consistency condition dictated by the fermionic nature of the microscopic theory. In the absence of backscattering, this choice of the Luttinger parameter yields fractional values of the quantum of conductance of the form $e^{2} / qh$ with $q$ odd. In principle, other rational values of the two-terminal conductance might be realized taking into account counterpropagating modes \cite{KF, KFP}. However, as mentioned above, the many-body interactions between counterpropagating edge currents might give rise to non-universal values, in contrast to experimental observations. In order to understand the quantization of the two-terminal conductance in the Hall bar, disorder effects have been proposed to play an important role \cite{KF, KFP, S}.

\medskip

\noindent{\bf Achievements of the present work.} The present paper has a two-fold main goal. We start from an interacting, microscopic $2d$ Hamiltonian and we rigorously justify the multi-channel Luttinger liquid description for the edge currents in a wide setting, which allows to take into account counterpropagating edge modes. As stressed above, counterpropagating edge modes appear typically in topological insulators \cite{Fukui} and play an important role in the phenomenological theory of the quantum Hall effect. This effective approach needs to be justified from first principle, and this is what is what we do in the present paper, in the context of the IQHE. We then use this result to compute the edge conductance of the two-dimensional lattice model, defined as the response of the edge currents after introducing a local variation of the chemical potential in proximity of the corresponding boundary. It is worth pointing out that this quantity is, in general, different from the conductance of the multi-channel Luttinger model as defined via Landauer formula, which is not expected to take quantized values in presence of backscattering. We prove the universality of the edge conductance, for weakly interacting quantum Hall systems on the cylinder. Universality is guaranteed by the gauge symmetry of the two-dimensional lattice model, which is lost in the Luttinger liquid approximation for the edge currents. Combined with earlier work on the universality of the bulk Hall conductivity \cite{GMP}, and with results about non-interacting models \cite{Hat}, our result allows to lift the bulk-edge correspondence to the realm of weakly interacting fermionic systems. Our proof holds at the level of equality of transport coefficients. In the non-interacting case, it is known that both transport coefficients have the interpretation of topological indices. Recently, the interpretation of the bulk Hall conductivity for interacting fermions as a topological index has been established in \cite{BBdRF2}. It would be of interest to prove a similar statement for the edge conductance, and to prove the bulk-edge duality as an identity between topological indices. 

In principle, the methods of the present paper could be used to compute all edge correlation functions and other edge transport coefficients, such as the edge Drude weight and the edge susceptibility, which we do not expect to be universal. In addition, we believe that the strategy could also be used to rigorously establish the non-universality of the two-terminal conductance for the Hall bar, as predicted in \cite{KF, KFP} on the basis of the Landauer formula for the conductance of the multi-component Luttinger model.

\medskip

\noindent{\bf Rigorous renormalization.} The main technical tool used in our work is a rigorous formulation of the Wilsonian renormalization group. In its application to condensed matter systems, this method has been pioneered in \cite{BG, FeTr}, see \cite{BGrev, Sabook, Mabook} for reviews. These techniques have been used to study interacting fermionic systems with extended Fermi surface, in dimension greater than one, at finite temperature or with asymmetric Fermi surface, in \cite{DR1, DR2, BGM, FKT}. For one-dimensional systems, these techniques can be used to construct the ground state correlations of interacting models. The first results have been obtained in \cite{BGPS}, for spinless fermions, and in \cite{BoM}, for spinful fermions. Both works crucially rely on the exact solvability of the Luttinger model \cite{ML}, which describes the scaling limit of the lattice models of \cite{BGPS, BoM}, and which ultimately allows to control the infrared fixed point of the renormalization group flow. The use of the exact solution represented an important limitation of the approach. Such restriction has been overcome in \cite{BMdensity, BMchiral, MaQED}, via a completely different strategy. Instead of relying on exact solutions, the approach combines rigorous RG with Ward identities (WIs); together with the Schwinger-Dyson equation for the four-point function, these allow to control non-perturbatively the flow of the effective marginal couplings, and to construct the RG fixed point.  A major technical difficulty is the rigorous control of the anomalies in the Ward identities of the Luttinger liquid, which unavoidably appear in the rigorous formulation of the QFT. This ultimately allows to solve the Schwinger-Dyson equation, and to determine the non-trivial fixed point of the RG without relying on exact solutions. Furthermore, the combination of emergent WIs and of Hamiltonian (or lattice) WIs allowed to prove a number of long-standing conjectures. We mention here the proof of the Haldane relations for transport coefficients of $1d$ quantum systems \cite{BMimp} and the Kadanoff relations for classical $2d$ spin systems \cite{BFM}. Further extensions include the results \cite{F, GMT} for interacting dimers, and \cite{AMP} for the edge transport coefficients of a class of single-mode $2d$ topological insulators. 

Recently, related methods have been used to study transport in two- and three-dimensional quantum systems, with gapped spectrum or with pointlike Fermi surface. The work \cite{GM} considered the two-dimensional Hubbard model on the hexagonal lattice at half-filling, as a model for interacting graphene, and used RG methods to prove the analyticity of the ground state correlations of the system. In \cite{GMPcond}, the combination of RG methods and of lattice Ward identities allowed to prove the universality of the longitudinal conductivity for interacting graphene from a microscopic model, in agreement with the experimental observations of \cite{Nair}; universality holds thanks to the subtle combination of high and low energy modes, and it is missed in the effective relativistic approximation of graphene, widely used in the theoretical physics literature. Concerning the quantum Hall effect, cluster expansion techniques, lattice Ward identities and Schwinger-Dyson equations have been used to prove the universality of the Hall conductivity for weakly interacting fermions on two-dimensional lattices \cite{GMP}. Furthermore, the combination of the strategy of \cite{GMP} with RG techniques allowed to study quantum Hall transitions in the Haldane-Hubbard model \cite{GJMP, GMPcritical}, a prototypical example of interacting topological insulator. In particular, the method allows to fully characterize the interacting phase diagram, and to prove the universality of the longitudinal conductivity on the transition curves. Finally, concerning three-dimensional systems with pointlike Fermi surface, the combination of RG with lattice and emergent Ward identities has been used to establish the non-renormalization of the lattice analogue of the chiral anomaly for Weyl semimetals \cite{GMPweyl}. In all these works, an important simplification is provided by the fact that the interaction is irrelevant in the RG sense, which allows to construct the RG fixed point of the theories in a considerably simpler way. 

In our present setting, the many-body interaction between the edge modes is marginal, as in $1d$ quantum systems. The major difference with respect to previous works is that the low-energy properties of the model are not described by the Luttinger model: it is a genuinely new case. One has to consider an arbitrarily large number of marginal couplings (their number depends on the number of edge modes), and it is a priori not clear at all that the RG flow can be controlled. Furthermore, even if such control is achieved, the proof of universality of edge transport coefficients requires detecting an enormous cancellation between an arbitrarily large number of marginal parameters, impossible to achieve in perturbation theory, and that cannot be inferred by previous results. These problems are solved in the present paper, thus establishing a major advance with respect to the previous literature on rigorous RG. In the next paragraph, we give an outline of the main new ideas introduced in the present work.

\medskip

\noindent{\bf Outline of the proof.} This work generalizes \cite{AMP, MPhelical}. In \cite{AMP} we considered the special case of one chiral edge mode, up to spin degeneracy, while in \cite{MPhelical} we considered the special case of two counterpropagating edge modes, with opposite spins. The models of \cite{AMP, MPhelical} fall into the universality classes of the chiral Luttinger model and of the helical Luttinger model, respectively, and can be studied thanks to the methods of \cite{FM} and of \cite{BMchiral}. In the present paper we allow for an arbitrary number of edge states, without any constraint on their velocities. The main novelty of the present work with respect to \cite{AMP, MPhelical} is the control of the backscattering of the edge modes. In particular, we prove that the edge conductance is unaffected by backscattering. To achieve this, we extend the proofs of the vanishing of the beta function of \cite{FM, BMchiral} to a much larger class theories describing relativistic chiral fermions. The extension is non-trivial, and requires a critical revisitation of the method of \cite{BMchiral}.

Let us give a brief overview of the strategy of the proof. We start by proving that the real-time edge transport coefficients can be analytically continued to imaginary-time, via a rigorous version of the Wick rotation. The argument streamlines and generalizes the one in \cite{AMP}. Imaginary-time correlation functions can then be studied via rigorous RG methods. Our analysis allows to quantitatively compare the edge correlation functions with the correlations of a suitably regularized version of the multi-channel Luttinger model, where the regularization is introduced by a momentum cut-off to be removed at the end. This QFT plays the role of reference model in our work, and it underlies the scaling limit of the edge correlation functions. With respect to the original lattice theory, the reference model enjoys an extra chiral gauge symmetry, associated to independent phase transformations for the chiral fermions. The reference model is also studied via RG; a main technical achievement of the present work is the proof of the vanishing of the beta function, by a nontrivial extension of \cite{BMchiral}. This ultimately allows to control the RG flow of the reference model, which in turn can be used to control the RG flow of the original lattice model. A consequence of this result is the construction of the Gibbs state of the lattice model, in terms of a convergent renormalized series. In particular, we are able to express the edge conductivity of the $2d$ lattice model in terms of the density-density correlation functions of the multi-channel Luttinger model, up to finite renormalizations, which are a priori impossible to compute explicitly.

To access the precise value of the edge conductance, we make use once more of the chiral gauge symmetry of the reference model, to derive closed equations for the density-density correlation functions. These equations take the form of Ward identities, which are anomalous due to the presence of the momentum regularizations. By the analog of the Adler-Bardeen non-renormalization property for the chiral anomaly, we are able to solve the equations, and to find explicit expressions for the density-density correlation functions. However, this analysis is not sufficient to prove the universality of the edge conductance. In fact, the edge conductance is still dependent on various non-universal parameters, entering the definition of the reference model, which have to be fine-tuned in order to capture the correct large-scale behavior of the edge correlations. A key observation is to combine the Ward identity for the lattice vertex function together with the anomalous Ward identity for the vertex function of the reference model, to derive {\it constraints} on the various renormalized parameters. These relations ultimately imply a remarkable cancellation in the expression of the edge conductance, from which universality follows.

\medskip

\noindent{\bf Structure of the paper.} The paper is structured as follows. In Section \ref{sec:model} we introduce the class of many-body lattice systems we consider; in particular, Assumption \ref{assu} specifies a generic structure of the noninteracting edge states. Informally, Assumption \ref{assu} restricts to models exhibiting configurations of edge modes which give rise to a bosonizable many-body interaction, from the emergent field theory viewpoint. Next, in Section \ref{sec:curr} we define the edge conductance, the Schwinger functions, and we prove some important identities among correlations that follow from the lattice conservation laws (lattice Ward identities). In Section \ref{sec:main} we state our main result, Theorem \ref{thm:main}, about the universality of the edge conductance for many-body fermionic systems on a cylinder. The rest of the paper is devoted to the proof of the result. In Section \ref{sec:wick} we prove the Wick rotation. In Section \ref{sec:func} we introduce the Grassmann integral formulation of the Gibbs state, which can then be studied via cluster expansion and multiscale methods. In particular, in Section \ref{sec:red1d} we recall the rigorous mapping of the Grassmann field theory in terms of a $1d$ theory, and in Section \ref{sec:sketchRG} we sketch its RG analysis. Sections \ref{sec:red1d}, \ref{sec:sketchRG} summarize the analysis of \cite{AMP}. The reference model is introduced in Section \ref{sec:defref}, and its RG analysis is discussed in Section \ref{sec:RG}. In Section \ref{sec:flow} we prove the vanishing of the beta function for the reference model, which allows to construct its infrared RG fixed point.  In Section \ref{sec:compare} we compare the original lattice model with the reference model. In particular, in Proposition \ref{prp:rel} we recall a result of \cite{AMP}, that allows to express the edge correlation functions of the lattice model in terms of those of the reference model, up to finite multiplicative and additive renormalizations. Finally, in Section \ref{sec:proof} we collect all the ingredients, and we give the proof of our main result, Theorem \ref{thm:main}.

\section{The model}\label{sec:model}

\subsection{The Hamiltonian and the Gibbs state}
Let $L \in \mathbb{N}$ and let us consider the square lattice:
\begin{equation}
\Lambda_{L} = \big\{ x \in \mathbb{Z}^{2} \mid 0\leq x_{i} \leq L-1\;,\quad i =1,2 \big\}\;.
\end{equation}
We equip the lattice with periodic boundary conditions in the direction of $x_{1}$, and Dirichlet boundary conditions in the direction of $x_{2}$. The wave function of a single particle is $f\in \ell^{2}(\Lambda_{L}; \mathbb{C}^{M})$, where the parameter $M$ takes into account the possible presence of internal degrees of freedom, such as the spin or the sublattice label. The wave function $f(x) = (f_{1}(x), \ldots, f_{M}(x))$ satisfies the following boundary conditions:
\begin{equation}\label{eq:cyl}
f_{\rho}(x) = f_{\rho}(x + n L e_{1})\;,\qquad f_{\rho}((x_{1}, 0)) = f_{\rho}((x_{1}, L-1)) = 0\;,\qquad \rho = 1,\ldots, M\;.
\end{equation}
Thus, we shall think of $\Lambda_{L}$ as lying on a cylinder, and  we shall say that the function $f$ of (\ref{eq:cyl}) satisfies cylindric boundary conditions. It is convenient to introduce the following notion of distance on $\Lambda_{L}$, compatible with the cylindric boundary conditions:
\begin{equation}
\| x - y \|_{L} = \inf_{n \in \mathbb{Z}} \| x - y - n L e_{1} \|\;,\qquad \forall x,y\in \Lambda_{L}\;,
\end{equation}
with $\|\cdot\|$ the usual Euclidean norm on $\mathbb{Z}^{2}$. Let:
\begin{equation}
\mathbb{S}^{1}_{L} := \Big\{ k = \frac{2\pi}{L} n\mid n = 0, 1, \ldots, L-1 \Big\}\;.
\end{equation}
We introduce the partial Fourier transform of $f$ as:
\begin{equation}\label{eq:FT}
f(x_{1}, x_{2}) = \frac{1}{L} \sum_{k\in \mathbb{S}^{1}_{L}} e^{-ikx_{1}} \hat f(k, x_{2}) \iff \hat f(k, x_{2}) = \sum_{x_{1} = 0}^{L-1} e^{ikx_{1}} f(x_{1}, x_{2})\;.
\end{equation}

We consider a class of interacting fermionic models on $\Lambda_{L}$, that we define in second quantization as follows. The Fock space Hamiltonian is:
\begin{eqnarray}\label{eq:Ham}
\mathcal{H}_{L} &=& \sum_{x,y \in \Lambda_{L}} \sum_{\rho, \rho' = 1}^{M} a^{+}_{x,\rho} H_{\rho\rho'}(x;y) a^{-}_{y,\rho'} + \lambda \sum_{x,y \in \Lambda_{L}} \sum_{\rho, \rho' = 1}^{M}\Big( n_{x, \rho} - \frac{1}{2} \Big) w_{\rho \rho'}(x,y) \Big( n_{y,\rho'} - \frac{1}{2} \Big)\nonumber\\
&\equiv& \mathcal{H}_{L}^{0} + \lambda \mathcal{V}_{L}\;,
\end{eqnarray}
where $a^{\pm}_{x,\rho}$ are the usual fermionic creation/annihilation operators, acting on the fermionic Fock space, $\mathcal{F} = \bigoplus_{n\geq 0} \frak{h}^{\wedge n}$, with $\frak{h} = \ell^{2}(\Lambda_{L}; \mathbb{C}^{M})$. The fermionic creation/annihilation operators satisfy the canonical anticommutation relations:
\begin{equation}
\{ a^{-}_{x,\rho}, a^{-}_{y,\rho'} \} = \{ a^{+}_{x,\rho}, a^{+}_{y,\rho'} \} = 0\;,\qquad \{ a^{-}_{x,\rho}, a^{+}_{y,\rho'} \} = \delta_{x,y}  \delta_{\rho, \rho'}\;.
\end{equation}
Furthermore, the fermionic operators are compatible with the periodic boundary conditions:
\begin{equation}
a^{\pm}_{x,\rho} = a^{\pm}_{x + n e_{1} L, \rho}\;,\qquad \forall x\in \Lambda_{L}\;.
\end{equation}
We shall introduce the partial Fourier transform of the fermionic operators, as follows:
\begin{eqnarray}
a^{\pm}_{x,\rho} = \frac{1}{L} \sum_{k \in \mathbb{S}^{1}_{L}} e^{\pm ikx_{1}} \hat a^{\pm}_{k, x_{2}, \rho} &\iff& \hat a^{\pm}_{k, x_{2},\rho} = \sum_{x_{1} = 0}^{L-1} e^{\mp ik x_{1}} a^{\pm}_{x,\rho}\;.
\end{eqnarray}
The first term in (\ref{eq:Ham}), $\mathcal{H}_{L}^{0}$, is the second-quantization of the single particle Hamiltonian $H$ on $\frak{h}$, $H = H^{*}$. The second term in (\ref{eq:Ham}), $\mathcal{V}_{L}$, is the second-quantization of the many-body interaction, defined by a real-valued two-body potential $w$. The symbol $n_{x,\rho}$ denotes the density operator, defined as $n_{x,\rho} = a^{+}_{x,\rho} a^{-}_{x,\rho}$. The parameter $\lambda \in \mathbb{R}$ is the coupling constant of the model. The Hamiltonian $H$ and the two-body potential $w$ are compatible with the periodic boundary condition in the $e_{1}$ direction, and we can suppose that they act trivially on the boundary, in view of the Dirichlet boundary condition (\ref{eq:cyl}). That is:
\begin{eqnarray}
&&H_{\rho\rho'}(x;y) = w_{\rho\rho'}(x;y) = 0\qquad \text{if $x_{2} = 0, L-1$ or $y_{2} = 0, L-1$}\\
&&H_{\rho\rho'}(x,y) = H_{\rho\rho'}(x + nLe_{1},y + m L e_{1})\;,\qquad w_{\rho\rho'}(x,y) = w_{\rho\rho'}(x + nLe_{1},y + m L e_{1})\;.\nonumber
\end{eqnarray}
Also, we will suppose that both $H_{\rho\rho'}(x;y)$ and $w_{\rho\rho'}(x;y)$ are finite-ranged:
\begin{equation}
H_{\rho\rho'}(x;y) = w_{\rho\rho'}(x;y) = 0\qquad \text{if $\| x - y \|_{L} > D$}
\end{equation}
for some finite $D>0$, and translation-invariant:
\begin{equation}
H_{\rho\rho'}(x;y) \equiv H_{\rho\rho'}(x_{1} - y_{1}; x_{2}, y_{2})\;,\qquad w_{\rho\rho'}(x;y) \equiv w_{\rho\rho'}(x_{1} - y_{1}; x_{2}, y_{2})\;.
\end{equation}
Without loss of generality, we suppose that the Hamiltonian only involves nearest and next-to-nearest neighbour hopping terms. That is, 
\begin{equation}
H_{\rho\rho'}(x;y) = 0\;\qquad \text{if $\|x - y\|_{L} > \sqrt{2}$.}
\end{equation}
Longer range hoppings can be taken into account enlarging the number of internal degrees of freedom $M$. Next, we introduce the partial Fourier transforms:
\begin{equation}
\hat H(k; x_{2}, y_{2}) = \sum_{z_{1} = 0}^{L-1} e^{ikz_{1}} H(z_{1}; x_{2}, y_{2})\;,\qquad \hat w(k_{1}; x_{2}, y_{2}) = \sum_{z_{1} = 0}^{L-1} e^{ikz_{1}} w(z_{1}; x_{2}, y_{2})\;.
\end{equation}
The first identity defines the partial Bloch decomposition of $H$, 
\begin{equation}
H = \bigoplus_{k\in \mathbb{S}^{1}_{L}} \hat H(k)\;,
\end{equation}
where $\hat H(k)$ is an effective one-dimensional Schr\"odinger operator on $\ell^{2}([0,L-1] \cap \mathbb{Z}; \mathbb{C}^{M})$. The self-adjointness of $H$ implies the self-adjointness of $\hat H(k)$, and the finite-range of $H$ implies the analyticity of $k\mapsto \hat H(k)$.

The grand-canonical Gibbs state of the model is defined in the usual way, for inverse temperature $\beta > 0$:
\begin{equation}\label{eq:gibbs}
\langle \mathcal{O} \rangle_{\beta, \mu, L} := \frac{1}{\mathcal{Z}_{\beta, \mu, L}} \Tr_{\mathcal{F}}\, \mathcal{O} e^{-\beta (\mathcal{H}_{L} - \mu \mathcal{N}_{L})}\;,
\end{equation}
where $\mu \in \mathbb{R}$ is the chemical potential of the system, and $\mathcal{N}_{L}$ is the number operator:
\begin{equation}
\mathcal{N}_{L} := \sum_{x\in \Lambda_{L}} \sum_{\rho = 1}^{M} n_{x,\rho}\;.
\end{equation}
Being the fermionic Fock space finite-dimensional, the Gibbs state (\ref{eq:gibbs}) is well-defined for finite $\beta$ and finite $L$ and any $\lambda \in \mathbb{R}$. The factors $-1/2$ in (\ref{eq:Ham}) amount to a $O(\lambda)$ shift of the chemical potential, that we introduce in order to simplify the Grassmann representation of the model later on.

We will now make some important assumptions of the single-particle Hamiltonian $H$. We view $H$ as the single-particle Schr\"odinger operator for a quantum Hall system. By the bulk-edge duality, these systems display an insulating behavior in the bulk, and a metallic behavior in proximity of the edges. For a system with no boundary, the insulating behavior follows by requiring that the spectrum of $H$ is gapped, and that the chemical potential lies in the spectral gap. In our setting, the model comes with Dirichlet boundary conditions in the $x_{2}$ direction; a nontrivial bulk topological phase implies the presence of edge modes, understood as solutions of the Schr\"odinger equation that are extended along the edges and integrable in the bulk:
\begin{equation}\label{eq:gedgeintro}
\psi(x_{1}, x_{2}) = e^{ik_{1} x_{1}} \hat \xi(k_{1}; x_{2})\;,
\end{equation}
where $\hat \xi(k_{1})$ is a $\ell^{2}$-normalized solution of:
\begin{equation}\label{eq:Seq}
\hat H(k_{1}) \hat \xi(k_{1}) = \varepsilon(k_{1}) \hat \xi(k_{1})\;,
\end{equation}
compatible with the Dirichlet boundary conditions, $\hat \xi(k_{1}; x_{2}) = 0$ for $x_{2} = 0, L-1$. Edge modes correspond to solutions of (\ref{eq:Seq}) associated to isolated eigenvalues $\varepsilon(k_{1})$. This implies, in particular, that $\hat\xi(k_{1})$ decays exponentially fast away from the boundary of the system:
\begin{equation}
| \hat\xi(k_{1}; x_{2}) | \leq Ce^{-c|x_{2}|}\;,\qquad | \hat\xi(k_{1}; x_{2}) | \leq Ce^{-c|x_{2} - L|}
\end{equation}
where the decay rate $c>0$ depends on the distance between $\varepsilon(k_{1})$ and the rest of the spectrum of $\hat H(k_{1})$. Notice that, in general, there might be multiple edge modes branches, corresponding to dispersion relations $k_{1}\mapsto \varepsilon_{\omega}(k_{1})$, for $\omega = 1,2\ldots$. As $L\to \infty$, these dispersion relations intersect the Fermi level $\mu$ at Fermi points $k^{\omega}_{F}$, $\varepsilon_{\omega}(k^{\omega}_{F}) = \mu$. If non-intersecting, the functions $\varepsilon_{\omega}(k_{1})$ are smooth for $k_{1}$ in a neighbourhood of the corresponding Fermi point $k_{F}^{\omega}$, as a consequence of the smoothness of $\hat H(k_{1})$. Of course, if the system is defined on a finite lattice the notion of smoothness has to be properly interpreted, due to the discreteness of the momenta $k_{1}$. The next assumption specifies the class of Hamiltonians $H$ we will consider; in particular, it specifies the properties of the edge mode branches we will consider in our analysis.
\medskip

\begin{assu}\label{assu} There exists $\delta > 0$, $\tilde \delta > \delta$ such that for all $L$ large enough the following is true. 
\begin{itemize}
\item[a)] Let $I_{\tilde \delta} = (\mu - \tilde \delta, \mu + \tilde \delta)$. We suppose that all the spectrum in this interval of energies consists of edge mode branches. That is, consider a pair $(E, \hat\xi_{E})$ solving the Schr\"odinger equation for energies in $I_{\tilde \delta}$:
\begin{equation}\label{eq:Sedge}
\hat H(k_{1}) \hat\xi_{E} = E \hat\xi_{E}\;,\qquad E \in I_{\tilde \delta}\;.
\end{equation}
Then, $E$ belongs to an edge mode branch, $E = \varepsilon_{\omega}(k_{1})$. Let $\partial_{k_{1}}$ the discrete derivative, $\partial_{k_{1}} f(k_{1}) = \frac{L}{2\pi}( f(k_{1} + 2\pi/L) - f(k_{1}))$. The edge mode branches satisfy:
\begin{eqnarray}\label{eq:deredge}
&&\max_{k_{1}: |\varepsilon_{\omega}(k_{1}) - \mu| \leq \delta}| \partial_{k_{1}}^{n} \varepsilon_{\omega}(k_{1}) | \leq C_{n}\;, \qquad \forall n\in \mathbb{N}\;,\nonumber\\
&&\min_{k_{1}: |\varepsilon_{\omega}(k_{1}) - \mu| \leq \delta} | \partial_{k_{1}} \varepsilon_{\omega}(k_{1}) | > 0\;.
\end{eqnarray}

\item[b)] Let $\hat{\xi}^{\omega}(k_{1})$ be a solution of (\ref{eq:Sedge}), with eigenvalue $\varepsilon_{\omega}(k_{1})$. It satisfies the bounds:
\begin{equation}\label{eq:bdedge}
\| \partial^{n}_{k_{1}} \hat{\xi}^{\omega}(k_{1}; x_{2}) \| \leq C_{n} e^{-c |x_{2}|}\qquad \text{or} \qquad \| \partial^{n}_{k_{1}} \hat{\xi}^{\omega}(k_{1}; x_{2}) \| \leq C_{n} e^{-c |L- x_{2}|}\;.
\end{equation}
We shall denote by $n_{\text{e}}$ the total number of edge modes intersecting the Fermi level, localized at the lower edge (that is satisfying the first bound in (\ref{eq:bdedge})).

\item[c)] The functions $\varepsilon_{\omega}(k_{1})$ converge, as $L\to \infty$, to $C^{1}$ functions of $k_{1} \in I_{\delta}$. We shall set $v_{\omega} := \partial_{k_{1}} \varepsilon_{\omega}(k^{\omega}_{F})$. The second inequality in (\ref{eq:deredge}) implies that $v_{\omega} \neq 0$.

\item[d)] For any fixed $k_{1}\in\mathbb{S}^{1}_{L}$ and $E \in I_{\tilde \delta}$, the eigenvalue equation (\ref{eq:Sedge}) has at most one solution. Also, we suppose that the momentum conservation associated to the two-fermion scattering at the Fermi level is satisfied by at most two different edge modes. That is, the Fermi momenta $k_{F}^{\omega}$ associated to the edge states localized on a given edge satisfy, modulo $2\pi$,
\begin{eqnarray}\label{eq:asse}
k_{F}^{\omega_{1}} &=& k_{F}^{\omega_{2}}\nonumber\\
k_{F}^{\omega_{1}} - k_{F}^{\omega_{2}} &=& k_{F}^{\omega_{3}} - k_{F}^{\omega_{4}}
\end{eqnarray}
only if: for the first line, $\omega_{1} = \omega_{2}$; for the second line $\omega_{1} = \omega_{2}$ and $\omega_{3} = \omega_{4}$ or $\omega_{1} = \omega_{3}$ and $\omega_{2} = \omega_{4}$. For all the other choices of Fermi momenta, there exists $\gamma > 0$ independent of $L$ such that:
\begin{eqnarray}\label{eq:asse2}
| k_{F}^{\omega_{1}} - k_{F}^{\omega_{2}} | &\geq& \gamma \nonumber\\
| (k_{F}^{\omega_{1}} - k_{F}^{\omega_{2}}) - (k_{F}^{\omega_{3}} - k_{F}^{\omega_{4}})| &\geq& \gamma\;.
\end{eqnarray}
\end{itemize} 
\end{assu}

\begin{rem} 
\begin{itemize}
\item[a)] The assumptions a)-c) can be proved to hold for the class of finite-ranged Hamiltonians we are considering, and for generic choices of the chemical potential $\mu$. 
\item[b)] We expect assumption d) to be generically true, unless the Hamiltonian satisfies extra symmetries, usually not present for quantum Hall systems. For an explicit example, consider the direct sum of Haldane's zero-flux Hamiltonians $H_{\text{H}}$ \cite{H1} on the cylinder in a nontrivial topological phase, with an energy shift: $H = H_{\text{H}} \oplus (H_{\text{H}} + \epsilon \mathbbm{1}) \oplus (H_{\text{H}} + \eta \mathbbm{1})$. The resulting system has three edge modes, that satisfy the assumption for generic choices of $\epsilon$ and $\eta$. Notice also that the assumption is trivially true for quantum Hall systems with two edge modes, at non-exceptional values of the chemical potential (to avoid special degeneracies of the Fermi momenta). See {\it e.g.} \cite{Fukui} for numerical results about the edge spectrum of the Hofstadter model. Assumption $d)$ implies that the effective interactions between edge modes in the emergent QFT description are of density-density type; from a non-rigorous viewpoint, these could be studied via bosonization. Here we shall pursue a rigorous approach, based on Ward identities.

\item[c)] The assumption d) rules out the case of multiple edge modes with spin degeneracy. For the case of a single chiral edge mode with spin degeneracy, see \cite{AMP}. For the case of two counterpropagating edge modes with fixed spins, see \cite{MPhelical}.

\item[d)] From now on, we will drop the $\mu$-dependence of the Gibbs state. We will suppose that $\mu$ is compatible with all the assumptions above, and we will write:
\begin{equation}
\langle \cdot \rangle_{\beta,\mu,L} \equiv \langle \cdot \rangle_{\beta, L}\;.
\end{equation}
\end{itemize}
\end{rem}

\section{Lattice current, transport coefficients and conservation laws}\label{sec:curr}

\subsection{Lattice current}

To begin, let us define the current operator. Let $n_{x} = \sum_{\rho} a^{+}_{x,\rho}a^{-}_{x,\rho}$ be the total density operator, and consider its time-evolution:
\begin{equation}
n_{x}(t) = e^{i\mathcal{H}_{L}t} n_{x} e^{-i\mathcal{H}_{L}t}\;.
\end{equation}
The time-evolution of the density operator satisfies:
\begin{eqnarray}\label{eq:cont}
\partial_{t} n_{x}(t) &=& i[ \mathcal{H}_{L}, n_{x}(t) ] \nonumber\\
&=& i e^{i\mathcal{H}_{L}t}[ \mathcal{H}_{L}^{0}, n_{x} ] e^{-i\mathcal{H}_{L}t}\;,
\end{eqnarray}
where $\mathcal{H}_{L}^{0}$ is the second quantization of $H$, and where we used that $[\mathcal{V}_{L}, n_{x}] = 0$. A simple computation gives:
\begin{equation}
i [ \mathcal{H}_{L}^{0}, n_{x} ]  = \sum_{y\in \Lambda_{L}} i ( a^{+}_{y}, H(y;x) a^{-}_{x} ) - i ( a^{+}_{x}, H(x;y) a^{-}_{y} )
\end{equation}
where we used the notation $(a^{+}, A a^{-}) = \sum_{\rho,\rho'} a^{+}_{\rho} A_{\rho\rho'} a^{-}_{\rho'}$. We shall also write:
\begin{equation}\label{eq:bondJ}
i [ \mathcal{H}_{L}^{0}, n_{x} ] = -\sum_{y\in \Lambda_{L}} j_{(x,y)}\;,\qquad j_{(x,y)} = i ( a^{+}_{x}, H(x;y) a^{-}_{y} ) - i ( a^{+}_{y}, H(y;x) a^{-}_{x} )
\end{equation}
where $j_{(x,y)}$ is the bond current associated with the bond $(x,y)$. Notice that $j_{(x,y)} = - j_{(y,x)}$. For our purposes, it will be convenient to rewrite the continuity equation (\ref{eq:cont}) in a divergence form. Consider the lattice derivatives and the lattice divergence:
\begin{equation}
\text{d}_{i} f(x) = f(x) - f(x - e_{i})\;,\qquad \text{div}_{x} \vec f = \text{d}_{1} f_{1}(x) + \text{d}_{2} f_{2}(x)\;.
\end{equation}
Eq. (\ref{eq:cont}) can be rewritten as:
\begin{equation}
\partial_{t} n_{x}(t) + \sum_{i=1,2}\sum_{\alpha = \pm} j_{(x, x + \alpha e_{i})}(t) + \sum_{\alpha = \pm} \sum_{\beta = \pm} j_{(x, x + \alpha e_{1} + \beta e_{2})}(t) = 0\;,
\end{equation}
which we can reformulate as, using the discrete derivative and the lattice divergence:
\begin{equation}\label{eq:conslaw}
\partial_{t} n_{x}(t) + \text{div}_{x} \vec j_{x}(t) = 0\;,
\end{equation}
where $\vec j_{x} = (\vec j_{1,x}, \vec j_{2,x})$ is the {\it current operator.} In terms of the bond currents:
\begin{eqnarray}\label{eq:currents}
j_{1,x} &=& j_{(x,x+e_{1})} + \frac{1}{2}( j_{(x, x + e_{1} - e_{2})} + j_{(x, x+ e_{1} + e_{2})} ) + \frac{1}{2}(j_{(x-e_{2}, x+e_{1})} + j_{(x+e_{2}, x+e_{1})}) \nonumber\\
j_{2,x} &=& j_{(x, x+e_{2})} + \frac{1}{2}( j_{(x, x - e_{1} + e_{2})} + j_{(x, x + e_{1} + e_{2})} ) + \frac{1}{2}( j_{(x - e_{1}, x + e_{2})} + j_{(x + e_{1}, x + e_{2})} )\;.
\end{eqnarray}
Clearly, the definition of $\vec j_{x}$ is far from being unique: we could always replace $\vec j_{x}$ by $\vec j_{x} + \text{curl}_{x} \alpha_{x}$, where $\alpha_{x}$ is a quadratic operator in Fock space, and where $\text{curl}_{x} \alpha_{x} = (\text{d}_{2} \alpha_{x}, -\text{d}_{1} \alpha_{x})$, without affecting the conservation law (\ref{eq:conslaw}). The results we will obtain will not depend on this choice.
\subsection{Edge conductance}
We are interested in the transport properties of the system in proximity of the lower edge of the cylinder, located at $x_{2} = 0$. Due to the edge states for the noninteracting Hamiltonian, we expect to observe a quasi-one dimensional metallic behavior. In order to probe the transport properties of the edge states, we introduce a local variation of the chemical potential, for $t\leq 0$, and $\eta > 0$:
\begin{equation}
\mathcal{H}_{L} \to \mathcal{H}_{L} + \sum_{x\in \Lambda_{L}} e^{\eta t} \mu (x) n_{x} \equiv \mathcal{H}_{L}(t)\;,
\end{equation}
where $\mu(x)$ is supported for $x_{2}\leq a$, with $a$ the width of a strip adjacent the $x_{2} = 0$ edge, $1\ll a \ll L$. The parameter $\eta$ fixes the time-scale on which the perturbation acts, and will be eventually sent to zero. The state of the system evolves according to the von Neumann equation:
\begin{equation}
i\partial_{t} \rho(t) = [ \mathcal{H}_{L}(t), \rho(t) ]\;,\qquad \rho(-\infty) = \rho_{\beta, L}\;.
\end{equation}
We shall denote by $\langle \cdot \rangle_{t}$ the time-dependent state associated to $\rho(t)$:
\begin{equation}
\langle \cdot \rangle_{t} = \tr \cdot \rho(t)\;.
\end{equation}
We are interested in the linear response of the current operator in proximity of the lower edge, at the time $t=0$. Let $a'< a$. We shall suppose that $1\ll a' \ll a \ll L$. Duhamel formula gives, at first order in the perturbation $\mu(\cdot)$:
\begin{equation}\label{eq:condedge}
\langle \sum_{x_{2} \leq a'} j_{1,x} \rangle_{t} - \langle \sum_{x_{2} \leq a'} j_{1,x} \rangle_{\beta, L} = -i\int_{-\infty}^{0}dt\, e^{\eta t}  \sum_{y} \mu(y) \sum_{x_{2}\leq a'} \langle [ n_{y}(t), j_{1,x} ]\rangle_{\beta, L} + \text{h.o.t.}
\end{equation}
For simplicity, suppose that $\mu(x) \equiv \mu(x_{1})$ for $x_{2} \leq a$ and $\mu(x) = 0$ for $x_{2} > a$. We rewrite the first term in the right-hand side of (\ref{eq:condedge}) as, setting $\mathcal{O}^{\leq a}_{x_{1}} := \sum_{x_{2}\leq a} \mathcal{O}_{x}$:
\begin{equation}
-i\int_{-\infty}^{0}dt\, e^{\eta t} \sum_{y_{1}} \mu(y_{1}) \langle [ n^{\leq a}_{y_{1}}(t), j^{\leq a'}_{1,x_{1}} ]\rangle_{\beta, L}  \equiv \frac{1}{L} \sum_{p \in\mathbb{S}^{1}_{L}} e^{-ipx_{1}} \hat \mu(-p) G^{\underline{a}}(\eta, p)\;,
\end{equation}
where $G^{\underline{a}}(\eta, p)$ is given by (using translation invariance in the $x_{1}$-direction):
\begin{equation}\label{eq:Ga}
G_{\beta, L}^{\underline{a}}(\eta, p) := -i\int_{-\infty}^{0}dt\, e^{\eta t}\sum_{y_{1}} e^{-ipy_{1}}  \langle [ n^{\leq a}_{y_{1}}(t), j^{\leq a'}_{1,0} ]\rangle_{\beta, L}\;. 
\end{equation}
Equivalently, by translation invariance, we can rewrite (\ref{eq:Ga}) as:
\begin{eqnarray}\label{eq:Gaequi}
G_{\beta, L}^{\underline{a}}(\eta, p) &=& -i\int_{-\infty}^{0}dt\, e^{\eta t} \frac{1}{L}\sum_{y_{1}, x_{1}} e^{-ip(y_{1} + x_{1})} e^{ipx_{1}}  \langle [ n^{\leq a}_{y_{1} + x_{1}}(t), j^{\leq a'}_{1,x_{1}} ]\rangle_{\beta, L} \nonumber\\
&\equiv& -i\int_{-\infty}^{0}dt\, e^{\eta t}  \frac{1}{L} \langle [ \hat n^{\leq a}_{-p_{1}}(t), \hat j^{\leq a'}_{1,p_{1}} ] \rangle_{\beta, L}\;. 
\end{eqnarray}
Let: 
\begin{equation}\label{eq:edgeconductance}
G^{\underline{a}}(\eta, p) := \lim_{\beta\to \infty} \lim_{L\to \infty}G_{\beta, L}^{\underline{a}}(\eta, p)\;.
\end{equation}
Notice that the left-hand side is defined for $p\in \mathbb{R}$, while the right-hand side is defined for $p \in \frac{2\pi}{L}\mathbb{Z}$. In the $L\to \infty$ limit, it has to be understood that a sequence $p_{L} \to p$ is taken, with $p_{L}\in \frac{2\pi}{L}\mathbb{Z}$ and $p\in \mathbb{R}$ (the limit will not depend on the choice of the sequence, as a byproduct of our analysis).  We define the {\it edge conductance} as:
\begin{equation}\label{eq:Gdef}
G := \lim_{a'\to \infty} \lim_{a\to \infty} \lim_{p\to 0} \lim_{\eta \to 0^{+}} G^{\underline{a}}(\eta, p)\;.
\end{equation}
This quantity describes the response of the edge currents to slowly varying perturbations, in space and in time. A few remarks are in order.
\begin{rem}
\begin{itemize}
\item[a)] Proving the existence of the limits in (\ref{eq:Gdef}) is nontrivial, due to the fact that the spectral gap of the Hamiltonian is closed by the presence of the edge states: this fact implies a slow algebraic decay of correlations in the direction of the edge. Not only we will prove that the limit is finite; our main result, presented in Section \ref{sec:main}, will provide an explicit expression of the edge conductance.
\item[b)] The order of the limits in (\ref{eq:Gdef}) is important. The parameter $a$ defines the width of the strip where the perturbation is supported, while the parameter $a'$ defines the width of the strip where the edge current is supported. We think the external perturbation as living on a macroscopic/mesoscopic scale; instead, the edge current probes the edge modes, which live on a microscopic scale. This justifies the order of limits $a\to \infty$ and then $a'\to \infty$. Furthermore, our analysis will be able to quantify the error introduced by not sending to infinity the parameters $a$ and $a'$. Finally, concerning the relative order of the limits $\eta \to 0^{+}$ and $p\to 0$: the rationale is that we are interested in the response of the system to a static perturbation. As we will see, exchanging the limits would produce a vanishing result.
\end{itemize}
\end{rem}

\subsection{Schwinger functions}\label{sec:schw}

The main part of our technical analysis will focus on the study of the Euclidean correlation functions of the model, also called Schwinger functions. These correlations involve the imaginary-time evolution of the fermionic creation and annihilation operators; for these operators, space-time decay estimates can be proved. Notice that, in general, these bounds cannot be used to directly study real time correlations. Nevertheless, suitable time integrals of real-time correlation functions can be rewritten as integrals of imaginary time correlations, via analytic continuation (also called Wick rotation). As we shall see in Section \ref{sec:wick}, an example of expression that can be analytically continued to imaginary times is the edge conductance, Eq. (\ref{eq:Ga}). 

Let $x_{0}\in [0,\beta)$, and let $\xx = (x_{0}, x) \in [0,\beta) \times \Lambda_{L}$. Let us define the imaginary-time evolution of the fermionic operators as:
\begin{equation}
a^{\pm}_{\xx, \rho} = e^{x_{0} (\mathcal{H}_{L} - \mu \mathcal{N}_{L})} a^{\pm}_{x,\rho} e^{-x_{0} (\mathcal{H}_{L} - \mu \mathcal{N}_{L})}\;.
\end{equation}
Notice that $a^{+}_{\xx,\rho}$ is not the adjoint of $a^{-}_{\xx,\rho}$, if $x_{0}\neq 0$. More generally, given a local Fock space operator $\mathcal{O}_{X}$ we define its imaginary-time evolution as:
\begin{equation}
\mathcal{O}_{(x_{0}, X)} = e^{x_{0} (\mathcal{H}_{L} - \mu \mathcal{N}_{L})} \mathcal{O}_{X} e^{-x_{0} (\mathcal{H}_{L} - \mu \mathcal{N}_{L})}\;.
\end{equation}
Given a collection of imaginary-time evolved fermionic operators, we define the time-ordered product as:
\begin{equation}
\mathbf{T} a^{\varepsilon_{1}}_{\xx_{1}, \rho_{1}} \cdots a^{\varepsilon_{n}}_{\xx_{n}, \rho_{n}} = \text{sgn}(\pi) a^{\varepsilon_{\pi(1)}}_{\xx_{\pi(1)}, \rho_{\pi(1)}} \cdots a^{\varepsilon_{\pi(n)}}_{\xx_{\pi(n)}, \rho_{\pi(n)}}\;,
\end{equation}
where the permutation $\pi$ is such that $x_{0,\pi(1)} \geq \ldots \geq x_{0,\pi(n)}$. If some times are equal, the operator $\mathbf{T}$ acts as normal ordering. We define the $n$-point Schwinger function as:
\begin{equation}\label{eq:Sn}
S^{\beta, L}_{n; \rho_{1}, \ldots, \rho_{n}}(\xx_{1}, \varepsilon_{1}; \ldots; \xx_{n}, \varepsilon_{n}) := \langle \mathbf{T} a^{\varepsilon_{1}}_{\xx_{1}, \rho_{1}} \cdots a^{\varepsilon_{n}}_{\xx_{n}, \rho_{n}} \rangle_{\beta, L} \;,
\end{equation}
and we will set 
\begin{equation}
S_{n}(\cdot) := \lim_{\beta \to \infty} \lim_{L\to \infty} S^{\beta, L}_{n}(\cdot)\;,
\end{equation}
provided the limits exist. So far, the Schwinger functions are defined for imaginary times in $[0,\beta)$. It turns out that the object defined in (\ref{eq:Sn}) satisfies antiperiodic boundary conditions in the time variables. It is then natural to antiperiodically extend the $n$-point Schwinger function to all times in $\mathbb{R}$. From now on, it will be understood that such antiperiodic extension has been taken.

In the absence of interactions, $\lambda = 0$, the Gibbs state is quasi-free and all $n$-point Schwinger functions can be computed from the two-point Schwinger function, thanks to the Wick rule. Suppose that $x_{0} - y_{0} \neq n\beta$. Let $\mathbb{M}^{F}_{\beta}$ be the set of frequencies compatible with antiperiodicity with period $\beta$, also called the fermionic Matsubara frequencies:
\begin{equation}
\mathbb{M}^{F}_{\beta} = \Big\{ k_{0} = \frac{2\pi}{\beta}\Big(n + \frac{1}{2}\Big) \mid n\in \mathbb{Z} \Big\}\;.
\end{equation}
Let us introduce the notation:
\begin{equation}
\underline{k} = (k_{0}, k_{1}) \in \mathbb{M}^{F}_{\beta}\times \mathbb{S}^{1}_{L}\;.
\end{equation}
Then, the two-point function $S_{2; \rho,\rho'}^{\beta, L}(\xx; \yy) \equiv S_{2; \rho,\rho'}^{\beta, L}(\xx, -; \yy, +)$ is given by, in the absence of interactions, see {\it e.g.} \cite[Section 7]{FW} or \cite[Section 4.2.4]{Sabook}:
\begin{equation}\label{eq:2pt}
S_{2; \rho,\rho'}^{\beta, L}(\xx; \yy)\big|_{\lambda = 0} = \frac{1}{\beta L} \sum_{\underline{k} \in \mathbb{M}_{\beta}^{\text{F}} \times \mathbb{S}^{1}_{L}} e^{-i\underline{k} \cdot (\underline{x} - \underline{y})}\Big( \frac{1}{-ik_{0} + \hat H(k) - \mu}\Big)_{\rho,\rho'}(x_{2}; y_{2})\;.
\end{equation}
If instead $x_{0} - y_{0} = n\beta$, the two-point function is the antiperiodic extension of the two-point function at equal times, which is defined via normal ordering. That is:
\begin{equation}
S_{2}^{\beta, L}((y_{0} + n\beta, x); \yy) = (-1)^{n} \lim_{y_{0} - x_{0} \to 0^{-}} S^{\beta, L}_{2}(\xx; \yy)\;.
\end{equation}
Besides the interacting two-point function, we will be interested in the current-current correlation function and in the vertex function. These last two correlations are defined as follows. For $\mu = 0,1,2$, let us define the space-time current $j_{\mu, \xx}$, with $j_{1,\xx}$ and $j_{2,\xx}$ the imaginary-time evolution of (\ref{eq:currents}), and let us set $j_{0,\xx} := n_{\xx}$. We define the {\it connected current-current correlation function} as:
\begin{equation}\label{eq:S02def}
S_{0,2;\mu,\nu}^{\beta, L}(\xx;\yy) = \langle {\bf T} j_{\mu,\xx}\;; j_{\nu,\yy} \rangle_{\beta, L}\;.
\end{equation}
Trivially, $S_{0,2;\mu,\nu}^{\beta, L}(\xx;\yy) = S_{0,2;\nu,\mu}^{\beta, L}(\yy;\xx)$. Also, we define the {\it connected vertex function} as:
\begin{equation}\label{eq:S21def}
S_{1,2;\mu, \rho, \rho'}^{\beta,L}(\zz; \xx; \yy) = \langle {\bf T} j_{\mu,\zz}\;; a^{-}_{\xx,\rho}\;; a^{+}_{\yy;\rho'} \rangle_{\beta, L}\;.
\end{equation}
Being the current operators even in the number of fermionic operators, the expressions in (\ref{eq:S02def}), (\ref{eq:S21def}) are extended periodically in the time variables appearing at the argument of the currents. Let us introduce the set of frequencies compatible with periodicity with period $\beta$, also called the bosonic Matsubara frequencies: 
\begin{equation}
\mathbb{M}^{B}_{\beta} = \Big\{ p_{0} = \frac{2\pi}{\beta} n \, \Big|\, n\in \mathbb{Z} \Big\}\;.
\end{equation}
We set, for $\underline{p} = (p_{0}, p_{1}) = \mathbb{M}^{\text{B}}_{\beta} \times \mathbb{S}^{1}_{L}$, and $\underline{k} \in \mathbb{M}^{\text{F}}_{\beta} \times \mathbb{S}^{1}_{L}$:
\begin{eqnarray}\label{eq:Fcor}
\widehat{S}^{\beta, L}_{2;\rho,\rho'}(\underline{k}; x_{2}, y_{2}) &:=& \int_{0}^{\beta} dx_{0} \sum_{x_{1}=0}^{L-1}e^{i\underline{k}\cdot \underline{x}}\, S^{\beta, L}_{2;\rho,\rho}(\xx; (0,y_{2}))\;,\nonumber\\ 
\widehat{S}_{0,2;\mu,\nu}^{\beta, L}(\underline{p}; x_{2}, y_{2}) &:=& \int_{0}^{\beta} dx_{0} \sum_{x_{1} = 0}^{L-1} e^{i\underline{p}\cdot \underline{x}}\, S_{0,2;\mu,\nu}^{\beta, L}(\xx;(\underline{0}, y_{2}))\;, \\
\widehat{S}_{1,2;\mu, \rho, \rho'}^{\beta,L}(\underline{p}, \underline{k}; z_{2}, x_{2}, y_{2}) &:=& \int_{0}^{\beta} dz_{0} \int_{0}^{\beta} dx_{0} \sum_{z_{1}, x_{1} = 0}^{L-1} e^{i\underline{p}\cdot \underline{z} + i \underline{k}\cdot \underline{x}}\, S_{1,2;\mu, \rho, \rho'}^{\beta,L}(\zz; \xx; (0,y_{2}))\;.\nonumber
\end{eqnarray}
As we shall see in the next section, the conservation of the lattice current (\ref{eq:conslaw}) implies nontrivial identities for these objects.

\subsection{Lattice Ward identities}\label{sec:WI}
The lattice conservation law (\ref{eq:conslaw}) implies nonperturbative identities between Schwinger functions, called Ward identities, that will play a crucial role in our analysis. Here we shall derive identities for the current-current correlation and for the vertex function.

\paragraph{Current-current identity.}  Consider the Euclidean current-density correlation function, for $x_{0} \neq y_{0}$, $0\leq x_{0}, y_{0} < \beta$:
\begin{equation}
\langle {\bf T} n_{\xx}\,; j_{i,\yy}  \rangle_{\beta, L} = \theta(x_{0} - y_{0})  \langle n_{\xx}\,; j_{i,\yy} \rangle_{\beta, L} + \theta(y_{0} - x_{0}) \langle j_{i,\yy}\,; n_{\xx} \rangle_{\beta, L}\;,
\end{equation}
where $\theta(x_{0} -y_{0})$ is $1$ for $x_{0} - y _{0} > 0$ and zero otherwise. Taking the derivative with respect to $x_{0}$, and using the lattice continuity equation (\ref{eq:conslaw}), we get:
\begin{equation}\label{eq:firstWI}
i\partial_{x_{0}} \langle {\bf T} n_{\xx}\,; j_{i,\yy}  \rangle_{\beta, L} = -\text{div}_{x} \langle {\bf T} \vec j_{\xx}\,; j_{i,\yy} \rangle_{\beta, L} + i\delta(x_{0} - y_{0}) \langle [n_{\xx}, j_{i,\yy}]\rangle_{\beta, L}\;.
\end{equation}
This identity will have important consequences on the structure of the edge response function. Let us take the Fourier transform of (\ref{eq:firstWI}):
\begin{equation}
p_{0} \hat S^{\beta, L}_{0,2;0,i}(\underline{p}; x_{2}, y_{2}) = -(1 - e^{ip_{1}}) \hat S^{\beta, L}_{0,2;1,i}(\underline{p}; x_{2}, y_{2}) - \text{d}_{x_{2}} \hat S^{\beta, L}_{0,2;2,i}(\underline{p}; x_{2}, y_{2}) + i \sum_{x_{1}} e^{ip_{1}x_{1}} \langle [n_{x}, j_{i,(0,y_{2})}]\rangle_{\beta, L}
\end{equation}
where we used the periodicity in $x_{0}$, to get rid of the boundary terms arising after integration by parts in the left-hand side. In particular, for $p_{0} \neq 0$:
\begin{equation}\label{eq:S020}
\sum_{x_{2} = 0}^{L-1} \hat S^{\beta, L}_{0,2;0,i}((p_{0}, 0); x_{2}, y_{2}) = 0\;,
\end{equation}
where we used the Dirichlet boundary conditions at $x_{2} = 0$ and $x_{2} = L-1$, and the fact that $[ \mathcal{N}_{L}, j_{i,y}] = 0$.

\paragraph{Vertex identity.} Another useful consequence of the lattice continuity equation is the following identity, for the vertex function:
\begin{eqnarray}
&&i\partial_{z_{0}} \langle {\bf T} n_{\zz}\,; a^{-}_{\xx,\rho}\;; a^{+}_{\yy,\rho'}\rangle_{\beta, L} \\&&\qquad= -\text{div}_{z} \langle {\bf T} \vec j_{\zz}\,; a^{-}_{\xx,\rho}\,; a^{+}_{\yy,\rho'}\rangle_{\beta, L} + i\delta(\yy - \zz) \langle {\bf T} a^{-}_{\xx,\rho} a^{+}_{\yy,\rho'}  \rangle_{\beta, L} - i\delta(\xx - \zz) \langle {\bf T} a^{-}_{\xx,\rho} a^{+}_{\yy,\rho'}  \rangle_{\beta, L}\;, \nonumber
\end{eqnarray}
where $\delta(\xx) = \delta(x_{0}) \delta_{x,0}$. Taking the Fourier transform:
\begin{eqnarray}
p_{0} \hat S^{\beta, L}_{1,2;0,\rho,\rho'}(\underline{p},\underline{k}; z_{2}, x_{2}, y_{2}) &=& -(1 - e^{ip_{1}}) \hat S^{\beta, L}_{1,2;1,\rho,\rho'}(\underline{p},\underline{k}; z_{2}, x_{2}, y_{2}) - \text{d}_{z_{2}}  \hat S^{\beta, L}_{1,2;2,\rho,\rho'}(\underline{p},\underline{k};z_{2},  x_{2}, y_{2}) \nonumber\\
&&  + i \delta_{y_{2},  z_{2}}\hat S^{\beta, L}_{2;\rho,\rho'}(\underline{k}; x_{2}, y_{2}) - i \delta_{x_{2}, z_{2}}\hat S^{\beta, L}_{2;\rho,\rho'}(\underline{k} + \underline{p}; x_{2}, y_{2})\;.
\end{eqnarray}
Finally, summing over $z_{2}$, and using the Dirichlet boundary conditions:
\begin{eqnarray}\label{eq:vertlat0}
&&p_{0} \sum_{z_{2} = 0}^{L-1} \hat S^{\beta, L}_{1,2;0,\rho,\rho'}(\underline{p},\underline{k}; z_{2}, x_{2}, y_{2}) \\
&&= -(1 - e^{ip_{1}}) \sum_{z_{2} = 0}^{L-1} \hat S^{\beta, L}_{1,2;1,\rho,\rho'}(\underline{p},\underline{k}; z_{2}, x_{2}, y_{2}) + i \hat S^{\beta, L}_{2;\rho,\rho'}(\underline{k}; x_{2}, y_{2}) - i \hat S^{\beta, L}_{2;\rho,\rho'}(\underline{k} + \underline{p}; x_{2}, y_{2})\;.\nonumber
\end{eqnarray}
This identity will play a key role in the proof of the universality of the edge conductance.

\section{Main result and sketch of the proof}\label{sec:main}

\subsection{Main result}

Our main result concerns the edge conductance, as defined in (\ref{eq:Gdef}). Already in the absence of interactions, the computation of this quantity is nontrivial. The next theorem provides the exact expression of the edge conductance for weakly interacting lattice models, satisfying the Assumption \ref{assu}.
\begin{thm}[Main result.]\label{thm:main} For the class of lattice models satisfying the Assumption \ref{assu}, there exists $\lambda_{0} > 0$ such that for $|\lambda| < \lambda_{0}$ the following is true. The limits in (\ref{eq:Ga}), (\ref{eq:Gdef}) exist and are finite. In particular,
\begin{equation}\label{eq:Gres}
G = \sum_{\omega}^{*} \frac{\text{sgn}(v_{\omega})}{2\pi}\;,
\end{equation}
where the asterisk denotes summation over all the edge modes localized at the lower edge $x_{2}=0$.
\end{thm}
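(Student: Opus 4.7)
The plan is to follow the roadmap outlined in the ``Outline of the proof'' paragraph and reduce the universality statement to a cancellation between renormalized parameters that is dictated by combining the lattice Ward identity with the anomalous Ward identity of an emergent chiral theory. First, I would perform a rigorous Wick rotation, showing that the real-time edge response $G^{\underline a}(\eta,p)$ defined in (\ref{eq:Ga})--(\ref{eq:edgeconductance}) coincides, in the limit $\eta\to 0^+$ after $L\to\infty$, $\beta\to\infty$, with a suitable linear combination of imaginary-time current-current correlations obtained by analytic continuation of the Matsubara two-point function $\widehat S^{\beta,L}_{0,2;1,0}$. This uses the decay of Euclidean correlations (to be proven below) together with a contour-deformation argument of the kind used in \cite{AMP}, and reduces the problem to evaluating the static limit $\lim_{p_1\to 0}\lim_{p_0\to 0}$ of the Euclidean current-current correlation of the charge and longitudinal currents, summed over the $x_2,y_2\leq a'$ and integrated against the chemical potential profile supported at $x_2\leq a$.

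Second, I would rewrite the Gibbs state in Grassmann variables and apply the Wilsonian multiscale RG of \cite{BG, BGPS, BoM}. The Dirichlet boundary condition in $x_2$, together with the exponential localization of the edge eigenfunctions in Assumption \ref{assu}(b), allows one to integrate out the bulk gapped modes in a single step, leaving an effective $1d$ Grassmann theory for the edge modes only, indexed by the chiralities $\omega=1,\dots,n_{\text{e}}$ with Fermi velocities $v_\omega$ and Fermi momenta $k_F^\omega$. Assumption \ref{assu}(d) ensures that the only non-irrelevant four-fermion interactions produced at the edge are of density-density type between different chiralities; these give an a priori unbounded number of marginal running couplings. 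To control their flow I would introduce the multi-channel Luttinger reference model of Section \ref{sec:defref}, defined with momentum regulators, and prove the vanishing of its beta function by extending the strategy of \cite{BMchiral,FM}: the reference model enjoys an emergent chiral gauge symmetry (separate phase rotations for each chirality), which yields anomalous Ward identities that, combined with the Schwinger--Dyson equation for the four-point function, close the flow equations for the marginal couplings non-perturbatively. This is the main technical obstacle, because the number of marginal parameters is arbitrary and cross-channel backscattering must be treated without the exact solvability used in \cite{BGPS,BoM}; the key point is that the anomaly coefficients satisfy Adler--Bardeen-type non-renormalization, so the Ward identities yield a triangular system for the effective couplings.

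Third, once the reference model is under control, Proposition \ref{prp:rel} (recalled from \cite{AMP}) expresses the edge correlation functions of the lattice model in terms of those of the reference model up to finite multiplicative renormalizations $Z^{(\mu)}_\omega$ of the currents $j_\mu$ and finite additive counterterms which do not contribute in the static limit $p_0\to 0$ after $p_1\to 0$. Using the explicit solution of the reference model via its anomalous Ward identities, the Euclidean edge conductance takes the form
\begin{equation}\label{eq:planformula}
G=\frac{1}{2\pi}\sum_\omega^{*}\frac{Z^{(1)}_\omega Z^{(0)}_\omega}{Z_\omega^2}\,\mathrm{sgn}(v_\omega),
\end{equation}
where the $Z_\omega$ are the wave function renormalizations at the edge. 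A priori the ratios $Z^{(1)}_\omega Z^{(0)}_\omega/Z_\omega^2$ are non-trivial functions of $\lambda$, so no universality is visible yet.

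Finally, I would exploit the lattice Ward identity (\ref{eq:vertlat0}) for the vertex function of the microscopic theory, which is exact and does not see the emergent chiral anomaly. Evaluating it in the scaling regime around each Fermi point $k_F^\omega$ and comparing it with the anomalous Ward identity for the vertex of the reference model derived via chiral gauge transformations, one obtains, after taking $\underline p\to 0$ along the directions $p_0/p_1=0,\infty$ and using the Adler--Bardeen non-renormalization of the anomaly, the matching relations
\begin{equation}
Z^{(0)}_\omega=Z_\omega,\qquad Z^{(1)}_\omega=Z_\omega
\end{equation}
for every edge channel $\omega$. Substituted into (\ref{eq:planformula}), this forces the enormous cancellation predicted in the sketch and yields the universal value $G=(2\pi)^{-1}\sum_\omega^{*}\mathrm{sgn}(v_\omega)$. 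The main conceptual novelty, and the step I expect to be hardest, is precisely this comparison: one must track the lattice Ward identity through the bulk integration and the multiscale RG and match it, at the level of individual chiralities, with the anomalous identity of the emergent theory, which requires that the chirality decomposition induced by Assumption \ref{assu}(d) be compatible with the lattice $U(1)$ symmetry in the scaling limit.
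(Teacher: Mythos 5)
Your overall architecture (Wick rotation, reduction to an effective $1d$ theory, multi-channel reference model with vanishing beta function, comparison of lattice and emergent Ward identities) coincides with the paper's, but the last step -- where universality is actually decided -- contains two genuine gaps. First, the claim that the additive remainder in Proposition \ref{prp:rel} ``does not contribute in the static limit'' is false, and dropping it destroys the result. In the paper, the regular part $R^{a'}(\underline{p})$ of the current--current correlation has a nonzero limit at $\underline{p}=\underline{0}$, and its value is \emph{computed} from the exact lattice conservation law in the form $\sum_{x_{2}}\hat S_{0,2;0,1}((p_{0},0);x_{2},y_{2})=0$, Eq.~(\ref{eq:S020}), i.e. $G^{a'}((p_{0},0))=0$: this fixes $R^{a'}(\underline{0})$ to be minus the opposite iterated limit of the singular reference-model contribution, and it is precisely what converts the single iterated limit in the definition of $G$ into the discontinuity matrix $\mathcal{A}=\lim_{p_{1}\to0}\lim_{p_{0}\to0}S^{\text{ref}}_{0,2}-\lim_{p_{0}\to0}\lim_{p_{1}\to0}S^{\text{ref}}_{0,2}$ appearing in (\ref{eq:Graw}). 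If one keeps only the reference-model term, a short computation with (\ref{eq:ddexp}) and (\ref{eq:Z0Z1}) gives $\frac{1}{4\pi}\sum^{*}_{\omega}\mathrm{sgn}(v_{\omega})$ plus a genuinely $\lambda_{\omega,\omega'}$-dependent, non-universal correction: even the prefactor is off by a factor $2$ at $\lambda=0$. So the ``bulk'' remainder does contribute, and the lattice continuity equation must be used on the current--current correlation itself, not only on the vertex.

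Second, in the presence of backscattering your channel-diagonal formula for $G$ and the matching relations $Z^{(0)}_{\omega}=Z^{(1)}_{\omega}=Z_{\omega}$ are not what the Ward identities deliver. The density--density correlation of the reference model is not diagonal in the chirality indices: by (\ref{eq:ddexp}) it carries the matrix $T(\underline{p})$ of (\ref{eq:Tdef}), so $\mathcal{A}$ mixes channels through $\Lambda_{Z}$. Correspondingly, comparing the lattice vertex identity (\ref{eq:vertlat0}) with the anomalous vertex identity yields the \emph{matrix} relations (\ref{eq:Z0Z1}),
\begin{equation*}
\vec Z^{\text{ref}}_{0}=\big(1-\Lambda_{Z}^{T}(4\pi|v^{\text{ref}}|)^{-1}\big)\vec Z^{\text{ref}}\;,\qquad
\vec Z^{\text{ref}}_{1}=\big(1+\Lambda_{Z}^{T}(4\pi|v^{\text{ref}}|)^{-1}\big)\,v^{\text{ref}}\vec Z^{\text{ref}}\;,
\end{equation*}
in which the current vertex carries the velocity factor $v^{\text{ref}}_{\omega}$ (missing from your relation $Z^{(1)}_{\omega}=Z_{\omega}$ even at $\lambda=0$) and both vertices are dressed by interaction-dependent, non-diagonal factors that do \emph{not} reduce to the identity when counterpropagating channels are coupled. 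Universality then follows not from channel-by-channel identities but from the matrix cancellation of $(1\mp\Lambda_{Z}^{T}(4\pi|v^{\text{ref}}|)^{-1})$ against the factors $(1\pm(4\pi|v^{\text{ref}}|)^{-1}\Lambda_{Z})^{-1}$ inside $\mathcal{A}$, as in (\ref{eq:Graw}). As written, your final cancellation is asserted rather than derived, and it fails exactly in the backscattering regime that Theorem \ref{thm:main} is meant to cover.
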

\begin{rem}
\begin{itemize}
\item[a)] The result proves the universality of the edge conductance against weak many-body interactions: the expression in (\ref{eq:Gres}) is completely specified by the chiralities of the noninteracting edge modes. Combined with the universality of bulk transport \cite{GMP}, our result lifts the bulk-edge correspondence for this class of lattice models to the realm of weakly interacting fermionic models.
\item[b)] The proof of the theorem actually gives much more information about the system that just the value of the edge conductance. It allows to prove the validity of the multi-component Luttinger liquid description of the edge currents, an effective QFT playing an important role for the description of both the integer and the fractional quantum Hall effect. In particular, our result covers cases in which backscattering is present, and hence the Luttinger parameters of the chiral edge modes are not $1$. In principle, the method could be used to compute all edge correlation functions, and various other transport coefficients such as the susceptibility and the Drude weight. In these cases we expect a non-universal behavior, in contrast to the edge conductance. 
\item[c)] Compared to our previous works, \cite{AMP, MPhelical}, Theorem \ref{thm:main} allows to consider an arbitrary number of edge modes. In fact, \cite{AMP} was restricted to the case of quantum Hall systems with single-channel chiral edge modes (up to spin degeneracy), while \cite{MPhelical} focused on the case to two counterpropagating edge modes, a setting relevant for time-reversal invariant topological insulators. The proofs of \cite{AMP, MPhelical} heavily relied on special integrability features of the chiral Luttinger model and of the helical Luttinger model. Technically, they used the \emph{vanishing of the beta function property} for these models to rigorously control the RG flow. With respect to \cite{AMP, MPhelical}, a main technical innovation of the present work is the generalization of the proof of the vanishing of the beta function property for the multi-component Luttinger model.
\end{itemize}
\end{rem}

\subsection{Sketch of the proof}

Let us give a short and informal overview of the proof of the main result. We start by proving that the real-time edge conductance (\ref{eq:edgeconductance}) can be represented in terms of Euclidean correlation functions. This mapping is useful, because Euclidean correlation functions can be efficiently studied via field-theoretic methods. In particular, our method provides a quantitative approximation of the edge conductance in terms of the density-density correlation functions of a regularized version of the multi-channel Luttinger model, which plays the role of reference model in our analysis. {\it Formally}, the partition function of this $1+1$ dimensional reference model is:
\begin{eqnarray}\label{eq:formal0}
\mathcal{Z} &=& \int D\Psi\, e^{-\mathcal{S}(\Psi)} \nonumber\\
\mathcal{S}(\Psi) &=& \sum_{\omega} \int_{\mathbb{R}^{2}} d\underline{x}\, Z_{\omega}\psi^{+}_{\underline{x},\omega} (\partial_{0} + iv_{\omega} \partial_{1}) \psi^{-}_{\underline{x},\omega}\nonumber\\&& + \sum_{\omega,\omega'} \lambda_{\omega,\omega'} Z_{\omega} Z_{\omega'} \int_{\mathbb{R}^{2}\times \mathbb{R}^{2}} d\underline{x}d\underline{y}\, \psi^{+}_{\underline{x},\omega} \psi^{-}_{\underline{y},\omega} \psi^{+}_{\underline{y},\omega'} \psi^{-}_{\underline{x},\omega'} v(\underline{x} - \underline{y})\;,
\end{eqnarray}
for a smooth and short-ranged potential $v(\cdot)$, such that $\hat v(\underline{p}) = 1$. The parameters $Z_{\omega}$, $v_{\omega}$ have to be fine-tuned, in order to make sure that this QFT correctly describes the scaling limit of the edge correlation functions. The label $\omega$ should be understood as indexing the different edge states. The correct definition of the model involves infrared and ultraviolet cutoffs, needed in order to make sense of the functional integral.

At the classical level, the above QFT is invariant under the global chiral gauge symmetry:
\begin{equation}
\psi^{\pm}_{\underline{x},\omega} \to e^{\pm i \alpha_{\omega}} \psi^{\pm}_{\underline{x},\omega}\;.
\end{equation}
This symmetry is absent in the original lattice model. Noether's theorem implies the conservation of the chiral current:
\begin{equation}
(\partial_{0} + i v_{\omega} \partial_{1}) n_{\underline{x},\omega} = 0\;,\qquad n_{\underline{x},\omega} = \psi^{+}_{\underline{x},\omega} \psi^{-}_{\underline{x},\omega}\;.
\end{equation}
In the quantization of the field theory, the analogous conservation law follows from the covariance of the theory under {\it local} chiral gauge transformations. This symmetry, however, turns out to be {\it broken} in the quantized field theory; already in the absence of many-body interactions, $\lambda_{\omega,\omega'} = 0$, we have:
\begin{equation}\label{eq:anom}
(-ip_{0} + v_{\omega} p_{1}) \langle \hat n_{\underline{p},\omega}\;; \hat n_{-\underline{p},\omega'} \rangle = \delta_{\omega,\omega'} \frac{1}{Z^{2}_{\omega}} \frac{(ip_{0} + v_{\omega} p_{1})}{4\pi |v_{\omega}|}\;.
\end{equation}
Eq. (\ref{eq:anom}) is an example of anomalous Ward identity. This identity can be used to compute the density-density correlations in the non-interacting theory. The breaking of the conservation law is due to the fact that in order to give a meaning to the expressions in (\ref{eq:formal0}) one has to introduce cutoffs, which make the model finite-dimensional, but which break the chiral gauge symmetry of the system. This affects the conservation of the chiral current; the extra terms appearing in the continuity equation do not vanish in the limit of cutoff removal, and produce the result (\ref{eq:anom}). We stress that the proof of this identity is non-trivial already in the absence of interactions, and it is reviewed in Section \ref{sec:Ward}, see discussion after (\ref{eq:buban}). 

Remarkably, a similar identity can be proven in the interacting theory. The analysis is much more involved; it relies on the vanishing of the beta function of the QFT, and the non-renormalization of the chiral anomaly. We eventually find, for $\lambda_{\omega,\omega'} \neq 0$ small enough:
\begin{equation}\label{eq:dd}
\langle \hat n_{\underline{p},\omega}\;; \hat n_{-\underline{p},\omega'} \rangle = T_{\omega,\omega'}(\underline{p}) \frac{1}{Z^{2}_{\omega'}} \frac{1}{4\pi |v_{\omega'}|} \frac{ip_{0} + v_{\omega'} p_{1}}{-ip_{0} + v_{\omega'} p_{1}}\;,
\end{equation}
where the matrix $T(\underline{p})$ is explicit:
\begin{equation}
\Big(\frac{1}{T(\underline{p})}\Big)_{\omega,\omega'} = \delta_{\omega,\omega'} + \frac{ip_{0} + v_{\omega} p_{1}}{-ip_{0} + v_{\omega} p_{1}} \frac{1}{4\pi |v_{\omega}|} \frac{1}{Z_{\omega}} \lambda_{\omega,\omega'} Z_{\omega'}\;.
\end{equation}
Another useful anomalous Ward identity relates the vertex function with the two-point function:
\begin{equation}\label{eq:anvert}
\langle \hat n_{\underline{p},\omega}\;; \hat \psi^{-}_{\underline{k}, \omega'}\;; \hat \psi^{+}_{\underline{k} + \underline{p},\omega'} \rangle = T_{\omega,\omega'}(\underline{p}) \frac{1}{Z_{\omega'} D_{\omega'}(\underline{p})} \big(\langle \hat\psi^{-}_{\underline{k},\omega'} \hat\psi^{+}_{\underline{k},\omega'}\rangle - \langle \hat\psi^{-}_{\underline{k} +\underline{p},\omega'} \hat\psi^{+}_{\underline{k} + \underline{p},\omega'}  \rangle \big)\;.
\end{equation}
Let us denote by $S^{\text{ref}}_{0,2;\omega,\omega'}(\underline{p})$ the correlation function in (\ref{eq:dd}). Our analysis allows to express the edge conductance in terms of the density-density correlation function of the reference model. Let $G^{a'}(\underline{p}) = \lim_{a\to \infty} G^{\underline{a}}(\underline{p})$, recall\footnote{We will prove that the $a\to \infty$ limit can be interchanged with the $\underline{p} \to \underline{0}$ limit.} the definition Eq. (\ref{eq:edgeconductance}).  By a rigorous renormalization group analysis of the edge correlation functions of the model we find, for $\underline{p} = (\eta, p_{1})$, $\|\underline{p}\|$ small:
\begin{equation}\label{eq:info}
G^{a'}(\underline{p}) = \big( \vec Z_{0}, S_{0,2}^{\text{ref}}(\underline{p}) \vec Z_{1} \big) + R^{a'}(\underline{p}) + O(e^{-ca'})\;,
\end{equation}
where $(\vec a, \vec b) = \sum_{\omega}^{*} a_{\omega} b_{\omega}$ and the asterisk restricts the sum to the edge modes localized at the lower edge. The quantities $Z_{0,\omega}$, $Z_{1,\omega}$ are the renormalized parameters that allow to relate the density and the current of the lattice model, for momenta $\underline{k}$ close to the space-time Fermi point $\underline{k}_{F}^{\omega} = (0, k_{F}^{\omega})$, to the densities of the reference model.

As manifest from (\ref{eq:dd}), the first term in the right-hand side of (\ref{eq:info}) is singular at $\underline{p} = \underline{0}$. Instead, the second term turns out to have better regularity properties; in particular, it is H\"older continuous at $\underline{p} = \underline{0}$. Now, lattice conservation laws imply that:
\begin{equation}\label{eq:Ga'}
G^{a'}((\eta, 0)) = 0\;.
\end{equation}
This identity can be used to compute the value at $\underline{p} = \underline{0}$ of the error term. We obtain:
\begin{equation}
R^{a'}(\underline{0}) = -\lim_{\eta \to 0^{+}} \lim_{p_{1} \to 0} \big( Z_{0}, S_{0,2}^{\text{ref}}(\underline{p}) Z_{1} \big) + O(e^{-ca'})\;.
\end{equation}
In particular, by the H\"older continuity of $R^{a'}(\underline{p})$:
\begin{equation}
G^{a'}(\underline{p}) = \Big( Z_{0}, \Big[S_{0,2}^{\text{ref}}(\underline{p})  -\lim_{\eta \to 0^{+}} \lim_{p_{1} \to 0}S_{0,2}^{\text{ref}}(\underline{p})\Big] Z_{1} \Big) + O(a'\|\underline{p}\|^{\theta}) + O(e^{-ca'})\;.
\end{equation}
Hence, we find the following remarkable expression for the edge conductance:
\begin{eqnarray}\label{eq:Ginfo}
G &=& \lim_{a'\to \infty} \lim_{p_{1}\to 0} \lim_{\eta \to 0^{+}} G^{a'}(\underline{p}) \nonumber\\
&\equiv& \big(Z_{0}, \mathcal{A} Z_{1} \big)
\end{eqnarray}
where the matrix $\mathcal{A}$ captures the discontinuity of the density-density correlation functions of the reference model:
\begin{equation}
\mathcal{A} = \lim_{p_{1} \to 0}\lim_{\eta \to 0^{+}}  S_{0,2}^{\text{ref}}(\underline{p})  - \lim_{\eta \to 0^{+}} \lim_{p_{1} \to 0}S_{0,2}^{\text{ref}}(\underline{p})\;.
\end{equation}
Eq. (\ref{eq:Ginfo}) is an exact expression for the edge conductance in terms of a number of non-universal parameters, given by a convergent series in the coupling, essentially impossible to compute explicitly. Hence, it is a priori unclear why the expression in (\ref{eq:Ginfo}) should be quantized. The key observation at this point is to combine the lattice Ward identities for the original model (\ref{eq:vertlat0}), together with the anomalous Ward identities for the reference model (\ref{eq:anvert}). More precisely, by our renormalization group analysis we can rigorously approximate the correlation functions of the lattice model with those of the reference model; we find that, for $\underline{k}'$, $\underline{p}$ small, up to subleading terms:
\begin{eqnarray}\label{eq:approxGLM}
S_{2; \rho, \rho'}(\underline{k}' + \underline{k}_{F}^{\omega}; x_{2}, y_{2}) &\simeq& S_{2; \omega}^{\text{ref}} (\underline{k}') \xi^{\omega}_{\rho}(k_{F}^{\omega}, x_{2}) \overline{\xi^{\omega}_{\rho'}(k_{F}^{\omega}, y_{2})} \nonumber\\
\sum_{z_{2}} S_{1,2;\mu,\omega,\rho,\rho'}(\underline{p}, \underline{k}'; z_{2}, x_{2}, y_{2}) &\simeq& \sum_{\omega'} Z_{\mu,\omega'} S^{\text{ref}}_{1,2;\omega',\omega}(\underline{p}, \underline{k}') \xi^{\omega}_{\rho}(k_{F}^{\omega}, x_{2}) \overline{\xi_{\rho'}^{\omega}(k_{F}^{\omega}, y_{2})}\;,
\end{eqnarray}
where
\begin{equation}
S_{2; \omega}^{\text{ref}} (\underline{k}')  = \langle \hat\psi^{-}_{\underline{k},\omega'} \hat\psi^{+}_{\underline{k},\omega'}\rangle\;,\qquad S^{\text{ref}}_{1,2;\omega',\omega}(\underline{p}, \underline{k}') = \langle \hat n_{\underline{p},\omega'}\;; \hat \psi^{-}_{\underline{k}, \omega}\;; \hat \psi^{+}_{\underline{k} + \underline{p},\omega} \rangle\;,
\end{equation}
are the two-point and the vertex function of the reference model. Using the $\ell^{2}$-normalization of the edge modes, it is not difficult to see that the lattice Ward identity (\ref{eq:vertlat0}) together with (\ref{eq:approxGLM}) implies, up to subleading terms for $\|\underline{p}\|$, $\|\underline{k}'\|$ small:
\begin{equation}\label{eq:newWI}
\sum_{\mu = 0,1} (-i)^{\delta_{\mu,0}}p_{\mu}  \sum_{\omega'} Z_{\mu,\omega'} S^{\text{ref}}_{1,2; \omega',\omega}(\underline{p}, \underline{k}') \simeq S_{2; \omega}^{\text{ref}} (\underline{k}') - S_{2; \omega}^{\text{ref}} (\underline{k}' + \underline{p})\;.
\end{equation}
The simultaneous validity of (\ref{eq:newWI}) together with (\ref{eq:anvert}) ultimately implies the following nontrivial relation among renormalized parameters:
\begin{equation}\label{eq:vertices}
\vec Z_{0} = (T^{T}(\underline{0}^{-}))^{-1} \vec Z\;,\qquad \vec Z_{1} = (T^{T}(\underline{0}^{+}))^{-1}  v \vec Z\;,\qquad 
\end{equation}
where $(v \vec Z)_{\omega} = v_{\omega} Z_{\omega}$ and:
\begin{equation}
T(\underline{0}^{-}) = \lim_{p_{1} \to 0} \lim_{\eta \to 0^{+}} T(\underline{p})\;,\qquad T(\underline{0}^{+}) = \lim_{\eta \to 0^{+}} \lim_{p_{1} \to 0} T(\underline{p})\;.
\end{equation}
Plugging the relations (\ref{eq:vertices}) into (\ref{eq:Ginfo}), and using the explicit expression of the discontinuity matrix $\mathcal{A}$, the universality of $G$ follows. This concludes the sketch of the proof of Theorem \ref{thm:main}.

As a final comment, we stress the importance of the gauge symmetry of the original $2d$ lattice model, and of the contribution of the bulk degrees of freedom, which are typically neglected in the existing effective field theory descriptions of the edge currents. In fact, it is only thanks to the lattice conservation law (\ref{eq:Ga'}) and to the bulk contribution $R^{a'}(\underline{p})$ that we are able to obtain the expression (\ref{eq:Ginfo}). Also, the key input for universality follows from the lattice Ward identity for the vertex function, which allows to prove the relations (\ref{eq:vertices}).

\section{Wick rotation}\label{sec:wick}
To begin, let us discuss how to rewrite the edge conductance (\ref{eq:Ga}) in terms of Euclidean correlations. The next result generalizes Lemma B.1 of \cite{AMP}.
\begin{prop}[Wick rotation.]\label{prp:wick} The following identity holds true, for all $T>0$:
\begin{equation}\label{eq:wick}
\int_{-T}^{0} dt\, e^{\eta t} \frac{1}{L} \langle [ \hat n^{\leq a}_{p_{1}}(t), \hat j^{\leq a'}_{1,-p_{1}} ] \rangle_{\beta, L} = i\int_{0}^{\beta} dt\, e^{-i\eta_{\beta} t} \frac{1}{L} \langle {\bf T}\, \hat n^{\leq a}_{p_{1}}(-it) \;; \hat j^{\leq a'}_{1,-p_{1}} \rangle_{\beta, L} + \mathcal{E}_{\beta, L}(T,\eta,p_{1})\;,
\end{equation} 
with $\eta_{\beta} \in \frac{2\pi}{\beta} \mathbb{Z}$ is such that $|\eta - \eta_{\beta}| = \min_{\eta'\in \frac{2\pi}{\beta} \mathbb{Z}} |\eta - \eta'|$ and  where:
\begin{equation}\label{eq:errwick}
|\mathcal{E}_{\beta, L}(T,\eta,p_{1})| \leq \frac{C a a'}{\eta^{3}\beta} + Ke^{-\eta T}\;.
\end{equation}
The constant $C$ is independent of $\beta, L, T, \eta, p_{1}$, while the constant $K$ is independent of $T$.
\end{prop}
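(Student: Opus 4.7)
To prove (\ref{eq:wick}), the plan is to compute both sides in a Lehmann representation and match them term by term. I diagonalize $K := \mathcal{H}_L - \mu\mathcal{N}_L$ on the (finite-dimensional) Fock space as $K|n\rangle = E_n|n\rangle$, set $A := \hat n^{\leq a}_{p_1}$, $B := \hat j^{\leq a'}_{1,-p_1}$, $C_{nm} := A_{nm}B_{mn}$, $\Delta_{nm} := E_n - E_m$; since both $A$ and $B$ commute with $\mathcal{N}_L$, evolutions by $K$ and by $\mathcal{H}_L$ coincide on them. Inserting the resolution of the identity, the commutator expectation reads $\langle[A(t),B]\rangle_{\beta,L} = \frac{1}{Z_{\beta,L}}\sum_{nm}C_{nm}(e^{-\beta E_n}-e^{-\beta E_m})e^{i\Delta_{nm}t}$, so integrating against $e^{\eta t}$ on $[-T,0]$ gives
\begin{equation*}
\int_{-T}^{0}\!e^{\eta t}\langle[A(t),B]\rangle_{\beta,L}\,dt = \frac{1}{Z_{\beta,L}}\sum_{n,m}\frac{C_{nm}(e^{-\beta E_n}-e^{-\beta E_m})}{\eta + i\Delta_{nm}}\bigl[1 - e^{-(\eta+i\Delta_{nm})T}\bigr].
\end{equation*}
The piece carrying $e^{-(\eta+i\Delta_{nm})T}$ is bounded in modulus by $e^{-\eta T}$ times a constant $K=K(\beta,L,a,a',\eta,p_1)$, obtained crudely from Hilbert--Schmidt bounds of $A$ and $B$ on the finite Fock space; this absorbs directly into the $Ke^{-\eta T}$ remainder of (\ref{eq:errwick}).

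For the Euclidean side, for $t\in(0,\beta)$ the imaginary-time operator $A(-it)=e^{tK}Ae^{-tK}$ has matrix elements $A_{nm}e^{t\Delta_{nm}}$, so the truncated time-ordered correlator equals $\frac{1}{Z_{\beta,L}}\sum_{n\neq m}C_{nm}e^{-\beta E_n}e^{t\Delta_{nm}}$ (the disconnected piece drops out for $\eta_\beta\neq 0$ and can be handled separately when $\eta_\beta=0$). Integrating against $e^{-i\eta_\beta t}$ on $[0,\beta]$, using $e^{-i\beta\eta_\beta}=1$ because $\eta_\beta\in \frac{2\pi}{\beta}\mathbb{Z}$, and multiplying by $i$, I obtain
\begin{equation*}
i\int_0^\beta\!e^{-i\eta_\beta t}\langle\mathbf{T}\,A(-it)\,;\,B\rangle_{\beta,L}\,dt = \frac{1}{Z_{\beta,L}}\sum_{n,m}\frac{C_{nm}(e^{-\beta E_n}-e^{-\beta E_m})}{\eta_\beta + i\Delta_{nm}},
\end{equation*}
the diagonal $n=m$ vanishing automatically thanks to $e^{-\beta E_n}-e^{-\beta E_m}$. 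Subtracting the RHS from the main part of the LHS gives
\begin{equation*}
(\eta_\beta-\eta)\,\frac{1}{Z_{\beta,L}}\sum_{n,m}\frac{C_{nm}(e^{-\beta E_n}-e^{-\beta E_m})}{(\eta + i\Delta_{nm})(\eta_\beta + i\Delta_{nm})},
\end{equation*}
and by construction $|\eta-\eta_\beta|\le \pi/\beta$, furnishing the $1/\beta$ prefactor of (\ref{eq:errwick}).

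The main obstacle is to show that the remaining double sum, divided by $L$ as it appears in (\ref{eq:wick}), is bounded uniformly in $\beta, L$ by a constant times $aa'/\eta^{2}$ (or $aa'/\eta^{3}$ via a sharper spectral estimate). My plan is to apply Cauchy--Schwarz in the form $\sum_{nm}|C_{nm}|(e^{-\beta E_n}+e^{-\beta E_m})/Z_{\beta,L} \le 2\langle A^{*}A\rangle_{\beta,L}^{1/2}\langle BB^{*}\rangle_{\beta,L}^{1/2}$ and to bound the thermal expectations as $\langle A^{*}A\rangle_{\beta,L}\le Ca^{2}L$ and $\langle BB^{*}\rangle_{\beta,L}\le C(a')^{2}L$, relying on translation invariance in $x_1$ and on the uniform boundedness of the connected two-point functions of densities and bond currents on strips of widths $a$ and $a'$. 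The $L^{1/2}\cdot L^{1/2}=L$ then cancels the $1/L$ in (\ref{eq:wick}), leaving the $aa'$ prefactor. The $\eta$-dependence is extracted by splitting $|\Delta_{nm}|\le \eta$ and $|\Delta_{nm}|>\eta$: in the first regime, $(\eta^{2}+\Delta_{nm}^{2})^{-1}\le \eta^{-2}$; in the second, I use $|e^{-\beta E_n}-e^{-\beta E_m}|\le \beta|\Delta_{nm}|\max(e^{-\beta E_n},e^{-\beta E_m})$ together with a dyadic decomposition to trade one power of $|\Delta_{nm}|$ for $\eta^{-1}$, at the cost of a $\beta$ compensated by the $1/\beta$ already extracted. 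The resulting estimate yields $\mathcal{E}_{\beta,L}(T,\eta,p_1) \le C aa'/(\beta\eta^{3})+Ke^{-\eta T}$, as claimed, and the argument closely parallels Lemma B.1 of \cite{AMP}, with the multiplicity of edge modes introducing only cosmetic changes in the operator-norm estimates.
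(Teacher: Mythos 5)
Your Lehmann-representation strategy is internally coherent up to the last step, and the algebra (the $[1-e^{-(\eta+i\Delta_{nm})T}]$ split, the use of $e^{-i\beta\eta_\beta}=1$, and the resolvent-difference identity producing the prefactor $|\eta-\eta_\beta|\le \pi/\beta$) is fine; note also that once $\beta\geq 2\pi/\eta$ one has $\eta_\beta\geq \eta/2$, so no dyadic decomposition in $\Delta_{nm}$ is needed at all and the denominators are simply bounded below by $\eta^2/2$. The genuine gap is the input you invoke to control the remaining double sum: the bounds $\langle A^{*}A\rangle_{\beta,L}\le Ca^{2}L$ and $\langle BB^{*}\rangle_{\beta,L}\le C(a')^{2}L$ are equivalent to uniform (in $\beta,L$) summability of the \emph{connected} equal-time density-density and current-current correlations of the interacting Gibbs state over the strip, i.e. to a uniform bound on the static structure factor. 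Without such clustering, translation invariance and operator-norm bounds only give $\langle \tilde A\tilde A^{*}\rangle = L\sum_{z_1}e^{ip_1z_1}\langle n^{\le a}_{z_1};n^{\le a}_{0}\rangle = O(a^{2}L^{2})$, and your error estimate degenerates to $O\big(aa'L/(\eta^{2}\beta)\big)$, which is useless because in this paper the limit $L\to\infty$ is taken \emph{before} $\beta\to\infty$. Crucially, the clustering you need is not available at this stage: the system is gapless at the edge, so exponential-clustering theorems for gapped states do not apply, and the decay of the interacting correlations is precisely what the later RG construction establishes (and only in the limit). The proposition is proved early, and the paper explicitly stresses (Remark c after the statement) that it requires no assumptions on the correlations; your proof, as written, smuggles such an assumption in.

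By contrast, the paper's proof never touches spectral sums: it replaces $e^{\eta t}$ by $e^{\eta_\beta t}$ and bounds that difference using $|e^{\eta t}-e^{\eta_\beta t}|\le C|t|\beta^{-1}e^{\eta t}$ together with a Lieb--Robinson bound on $\|[A_x(t),B_y]\|$, which restricts the commutator effectively to $|x_1-y_1|\lesssim K|t|$; this is what removes one factor of $L$ (giving $Laa'(1+|t|)$ instead of $L^{2}aa'$) and, after the $t$-integration, produces the $\eta^{-3}$ in (\ref{eq:errwick}) -- all without any knowledge of the Gibbs-state correlations. The identity itself is then obtained by the KMS relation plus Cauchy's theorem for the entire function $e^{\eta_\beta z}\langle \hat n^{\leq a}_{p_1}(z);\hat j^{\leq a'}_{1,-p_1}\rangle_{\beta,L}$ on the rectangle $0\to -i\beta\to -T-i\beta\to -T\to 0$, the segment at $\mathrm{Re}\,z=-T$ giving the $Ke^{-\eta T}$ term via a crude Cauchy--Schwarz trace bound (this part of your argument is fine). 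If you want to salvage your route, you must replace the structure-factor assumption by an unconditional estimate; the natural way to do so is to go back to the time domain and control $\int_{-T}^{0}(e^{\eta t}-e^{\eta_\beta t})\langle[A(t),B]\rangle\,dt$ with Lieb--Robinson bounds -- which is exactly the paper's argument.
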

\begin{rem} 
\begin{itemize}
\item[a)] The above proposition implies that, recall Eq. (\ref{eq:Gaequi}):
\begin{eqnarray}
G^{\underline{a}}_{\beta, L}(\eta, p) &=& \int_{0}^{\beta} dt\, e^{-i\eta_{\beta} t} \frac{1}{L} \langle {\bf T}\, \hat n^{\leq a}_{p_{1}}(-it) \;; \hat j^{\leq a'}_{1,-p_{1}} \rangle_{\beta, L} + O\Big( \frac{aa'}{\eta^{3} \beta} \Big) \nonumber\\
&\equiv& \sum_{x_{2} = 0}^{a} \sum_{y_{2} = 0}^{a'} \widehat{S}_{0,2;0,1}^{\beta, L}(\underline{p}; x_{2}, y_{2}) + O\Big( \frac{aa'}{\eta^{3} \beta} \Big)
\end{eqnarray}
with $\underline{p} = (\eta_{\beta}, p)$. The argument of the double sum in the right-hand side is the Fourier transform of the current-current Euclidean correlation function, (\ref{eq:Fcor}). 
\item[b)] An important consequence of the identity (\ref{eq:wick}) is that:
\begin{eqnarray}\label{eq:G2S}
\lim_{\beta\to \infty} \lim_{L \to \infty} G^{\underline{a}}_{\beta, L}(\eta, p) = \sum_{x_{2} = 0}^{a} \sum_{y_{2} = 0}^{a'} \lim_{\beta\to \infty} \lim_{L \to \infty}\widehat{S}_{0,2;0,1}^{\beta, L}(\underline{p}; x_{2}, y_{2})\;.
\end{eqnarray}
In particular, the existence of the limits in the left-hand side follows from the existence of the limits in the right-hand side. This last result is nontrivial, and will follow from the renormalization group analysis of the Euclidean correlation functions of the model. 
\item[c)] Compared to Lemma B.1 of \cite{AMP}, the statement (\ref{eq:wick}) does not require any assumption on the Euclidean correlations. Detailed control on these correlations is then of course needed in order to prove the existence of the limits in the right-hand side of (\ref{eq:G2S}).
\end{itemize}
\end{rem}
\begin{proof} The proof is based on a complex deformation argument. To begin, we write:
\begin{equation}\label{eq:E1}
\int_{-T}^{0} dt\, e^{\eta t} \frac{1}{L} \langle [ \hat n^{\leq a}_{p_{1}}(t), \hat j^{\leq a'}_{1,-p_{1}} ] \rangle_{\beta, L} = \int_{-T}^{0} dt\, e^{\eta_{\beta} t} \frac{1}{L} \langle [ \hat n^{\leq a}_{p_{1}}(t), \hat j^{\leq a'}_{1,-p_{1}} ] \rangle_{\beta, L} + \mathcal{E}^{(1)}_{\beta, L}(T,\eta,p_{1})\;,
\end{equation}
where $\mathcal{E}^{(1)}_{\beta, L}$ takes into account the error introduced replacing $e^{\eta t}$ with $e^{\eta_{\beta} t}$. We then rewrite the main term as:
\begin{equation}\label{eq:trivial}
\int_{-T}^{0} dt\, e^{\eta_{\beta} t} \frac{1}{L} \langle [ \hat n^{\leq a}_{p_{1}}(t), \hat j^{\leq a'}_{1,-p_{1}} ] \rangle_{\beta, L} \equiv \int_{-T}^{0} dt\, e^{\eta_{\beta} t} \frac{1}{L} \langle [ \hat n^{\leq a}_{p_{1}}(t)\, ; \hat j^{\leq a'}_{1,-p_{1}} ] \rangle_{\beta, L}\;,
\end{equation}
where $[A;B] = [A - \langle A \rangle; B - \langle B \rangle]$ (a shift by a constant does not change the commutator). Next, we rewrite the right-hand side of (\ref{eq:trivial}) as:\footnote{Recall that $\langle A\,; B\rangle = \langle (A - \langle A\rangle) (B - \langle B \rangle) \rangle$.}
\begin{eqnarray}
&&\int_{-T}^{0} dt\, e^{\eta_{\beta} t} \frac{1}{L} \langle [ \hat n^{\leq a}_{p_{1}}(t)\, ; \hat j^{\leq a'}_{1,-p_{1}} ] \rangle_{\beta, L} \\
&&\qquad\qquad\qquad= \int_{-T}^{0} dt\, \Big[ e^{\eta_{\beta}t} \langle \hat n^{\leq a}_{p_{1}}(t)\, ; \hat j^{\leq a'}_{1,-p_{1}} \rangle_{\beta,L} - e^{\eta_{\beta}t} \langle \hat j^{\leq a'}_{1,-p_{1}}\,; \hat n^{\leq a}_{p_{1}}(t) \rangle_{\beta,L} \Big] \nonumber\\
&&\qquad\qquad\qquad = \int_{-T}^{0} dt\, \Big[ e^{\eta_{\beta}t} \langle \hat n^{\leq a}_{p_{1}}(t)\, ; \hat j^{\leq a'}_{1,-p_{1}} \rangle_{\beta,L} - e^{\eta_{\beta}(t - i\beta)} \langle \hat n^{\leq a}_{p_{1}}(t - i\beta)\,; \hat j^{\leq a'}_{1,-p_{1}} \rangle_{\beta,L} \Big]\;.\nonumber
\end{eqnarray}
The last step follows from the KMS identity, and from the trivial (but crucial) fact $e^{i\eta_{\beta} \beta} = 1$. For $\beta, L$ finite the function
\begin{equation}
f(z) := e^{\eta_{\beta} z} \langle \hat n^{\leq a}_{p_{1}}(z)\,; \hat j^{\leq a'}_{1,-p_{1}} \rangle_{\beta,L}
\end{equation}
is entire. Therefore, by Cauchy theorem the integral of $f(z)$ along the closed complex path 
\begin{equation}
0 \to -i\beta \to -T -i\beta \to -T\to 0
\end{equation}
is zero. We use this observation to rewrite:
\begin{equation}\label{eq:E2}
\int_{-T}^{0} dt\, e^{\eta_{\beta} t} \frac{1}{L} \langle [ \hat n^{\leq a}_{p_{1}}(t)\, ; \hat j^{\leq a'}_{1,-p_{1}} ] \rangle_{\beta, L} = i\int_{0}^{\beta} dt\, e^{-i\eta_{\beta}t} \frac{1}{L} \langle \hat n^{\leq a}_{p_{1}}(-it)\,; \hat j^{\leq a'}_{1,-p_{1}} \rangle_{\beta, L} + \mathcal{E}^{(2)}_{\beta, L}(T,\eta,p_{1})\;,
\end{equation}
where $\mathcal{E}^{(2)}_{\beta, L}$ takes into account the contribution of the path integration on $-T -i\beta \to -T$. Eqs. (\ref{eq:E1}), (\ref{eq:E2}) imply the claim (\ref{eq:wick}), with $\mathcal{E}_{\beta, L}(T,\eta,p_{1}) = \mathcal{E}^{(1)}_{\beta, L}(T,\eta,p_{1}) + \mathcal{E}^{(2)}_{\beta, L}(T,\eta,p_{1})$. Let us bound these error terms. We have:
\begin{equation}
\mathcal{E}^{(1)}_{\beta, L}(T,\eta,p_{1}) = \int_{-T}^{0} dt\, (e^{\eta t} - e^{\eta_{\beta} t}) \frac{1}{L} \langle [ \hat n^{\leq a}_{p_{1}}(t), \hat j^{\leq a'}_{1,-p_{1}} ] \rangle_{\beta, L}\;.
\end{equation}
Consider the average of the commutator. We can estimate this quantity using Lieb-Robison bounds; the argument is standard, and we repeat it for completeness. The commutator is of the form:
\begin{equation}
\Big[ \sum_{x: x_{2}\leq a} A_{x}(t)\,, \sum_{y: y_{2}\leq a'} B_{y} \Big]\;,
\end{equation}
for $A_{x}$, $B_{y}$ bounded and local (meaning that they depend on a finite number of lattice sites around $x$ and $y$). The Lieb-Robinson bound tells us that, for some constants $C, c, v$:
\begin{equation}
\| [ A_{x}(t), B_{y} ] \| \leq C e^{vt - c\|x-y\|_{L}}\;.
\end{equation}
Therefore, we rewrite:
\begin{eqnarray}\label{eq:LR}
\Big[ \sum_{x: x_{2}\leq a} A_{x}(t)\,, \sum_{y: y_{2}\leq a'} B_{y} \Big] &=& \sum_{x: x_{2} \leq a} \sum_{y: y_{2} \leq a'} \chi(|x_{1} - y_{1}|_{L} \leq Kt) \big[ A_{x}(t), B_{y} \big] \nonumber\\
&&+ \sum_{x: x_{2} \leq a} \sum_{y: y_{2} \leq a'} \chi(|x_{1} - y_{1}|_{L} > Kt) \big[ A_{x}(t), B_{y} \big]\;.
\end{eqnarray}
Taking $K$ large enough, the second term is bounded as:
\begin{equation}
\begin{split}
\sum_{x: x_{2} \leq a} \sum_{y: y_{2} \leq a'} \chi(|x_{1} - y_{1}|_{L} > Kt) &\big\| \big[ A_{x}(t), B_{y} \big] \big\| \\ &\leq \sum_{x: x_{2} \leq a} \sum_{y: y_{2} \leq a'} \chi(|x_{1} - y_{1}|_{L} > Kt) Ce^{- (c/2)|x_{1} - y_{1}|_{L} } \nonumber\\
&\leq CLaa' e^{-(c/2)Kt}\;.
\end{split}
\end{equation}
Consider now the first term in (\ref{eq:LR}). We use the following trivial bound, implied by the boundedness of the fermionic operators:
\begin{equation}
\Big\| \sum_{x: x_{2} \leq a} \sum_{y: y_{2} \leq a'} \chi(|x_{1} - y_{1}|_{L} \leq Kt) \big[ A_{x}(t), B_{y} \big] \Big\| \leq C L aa' |t| K\;.
\end{equation}
We are now ready to estimate $\mathcal{E}^{(1)}_{\beta, L}$. Using that, for fixed $\eta$ and for $\beta$ large enough:
\begin{equation}
| e^{\eta t} - e^{\eta_{\beta} t} |\leq \frac{C|t|}{\beta} e^{\eta t}\;,
\end{equation}
we have, possibly for a different constant $C>0$:
\begin{eqnarray}
| \mathcal{E}^{(1)}_{\beta, L}(T,\eta,p_{1}) | &\leq& \frac{C}{\beta}\int_{-T}^{0}dt\, |t| e^{\eta t} \Big[ aa' e^{-(c/2)Kt} + aa'|t|K \Big] \nonumber\\
&\leq& \frac{\widetilde C a a'}{\eta^{3}\beta}\;.
\end{eqnarray}
This estimate corresponds to the first contribution to the right-hand side of (\ref{eq:errwick}). To conclude, let us consider the error term $\mathcal{E}^{(2)}_{\beta, L}(T,\eta,p_{1})$, corresponding to the complex path $-T -i\beta \to -T$:
\begin{equation}
| \mathcal{E}^{(2)}_{\beta, L}(T,\eta,p_{1}) | \leq e^{-\eta T} \int_{0}^{\beta} dt\, \frac{1}{L} \big| \langle \hat n^{\leq a}_{p_{1}}(-T - it)\,; \hat j^{\leq a'}_{1,-p_{1}} \rangle_{\beta, L} \big|\;.
\end{equation}
All we have to do is to find an estimate for the argument of the integral which is sub-exponential in $T$. To do this, we apply Cauchy-Schwarz inequality for traces, to get:
\begin{equation}
\big| \langle \hat n^{\leq a}_{p_{1}}(-T - it)\,; \hat j^{\leq a'}_{1,-p_{1}} \rangle_{\beta, L} \big|^{2} \leq \big| \langle \hat n^{\leq a}_{p_{1}}(-T - it)\,; {\hat n^{\leq a}_{p_{1}}(-T - it)}^{*} \rangle_{\beta, L} \big| \big| \langle \hat j^{\leq a' *}_{1,-p_{1}}\,; \hat j^{\leq a'}_{1,-p_{1}} \rangle_{\beta, L} \big|\;.
\end{equation}
The second factor is $T$ independent, and it is trivially bounded since $\beta$ and $L$ are finite. For the first factor, we use that:
\begin{equation}
\langle \hat n^{\leq a}_{p_{1}}(-T - it)\,; {\hat n^{\leq a}_{p_{1}}(-T - it)}^{*} \rangle_{\beta, L} = \langle \hat n^{\leq a}_{p_{1}}(- it)\,; {\hat n^{\leq a}_{-p_{1}}(it)} \rangle_{\beta, L}\;,
\end{equation}
which is independent of $T$, and also trivially bounded. Therefore:
\begin{equation}
| \mathcal{E}^{(2)}_{\beta, L}(T,\eta,p_{1}) | \leq K e^{-\eta T}\;,
\end{equation}
for some finite constant $K$ that might depend on all parameters but not on $T$. This estimate corresponds to the second contribution in (\ref{eq:errwick}). End of proof.
\end{proof}

\section{Functional integral formulation of the model}\label{sec:func}

\subsection{Grassmann field theory}\label{sec:grass}

The low temperature and large volume limit of the Gibbs state will be investigated using multiscale analysis and renormalization group. To do so, it is convenient to map the statistical mechanics problem into the study of a Grassmann field theory. The mapping is exact, in the sense that all correlation functions of the system have a Grassmann counterpart. We will follow closely the discussion of Section 5 of \cite{AMP}.

Let $\chi(s)$ be a smooth, even, compactly supported function, such that $\chi(s) = 0$ for $|s|>2$ and $\chi(s) = 1$ for $|s|<1$. Let $N\in \mathbb{N}$, and 
\begin{equation}
\mathbb{M}^{\text{F}*}_{\beta} = \{ k_{0}\in \mathbb{M}^{\text{F}}_{\beta} \mid \chi(2^{-N}k_{0}) > 0 \}\;,\qquad \mathbb{D}^{*}_{\beta,L} = \mathbb{M}^{\text{F}*}_{\beta}\times \mathbb{S}^{1}_{L}\;.\; 
\end{equation}
The set $\mathbb{M}^{\text{F}*}_{\beta}$ is finite. The parameter $N$ plays the role of ultraviolet cutoff for the Masubara frequencies of the model. We consider the finite Grassmann algebra generated by the the Grassmann variables $\{\hat\Psi^{\pm}_{\underline{k}, q}\}$ with $\underline{k}\in \mathbb{D}^{*}_{\beta,L}$, $q=1,\ldots, ML$. The label $q$ is an index for the eigenstates of $\hat H(k_{1})$, which is an $LM\times LM$ matrix in a finite volume. The eigenvalue equation reads:
\begin{equation}
\hat H(k_{1}) \varphi^{q}(k_{1}) = e_{q}(k_{1}) \varphi^{q}(k_{1})\;,\qquad e_{q}(k_{1}) \in \mathbb{R}\;,\qquad \varphi^{q}(k_{1}) \in \mathbb{C}^{LM}\;.
\end{equation}
We shall denote by $\varphi^{q}_{\rho}(k_{1}; x_{2})$ the components of the eigenvector $\varphi^{q}(k_{1})$. We shall always suppose that $\varphi^{q}(k_{1})$ is normalized, $\| \varphi^{q}(k_{1}) \| = 1$.

The Grassmann Gaussian integration $\int P_{ N}(d\Psi)$ is a linear functional acting on the Grassmann algebra as follows. Its action on a given monomial $\prod_{j=1}^{n} \hat\Psi^{\e_{j}}_{\underline{k}_{j}, q_{j}}$ is zero unless $| \{j: \e_{j} =+ \} | = |\{ j: \e_{j} = - \}|$, in which case:
\be \int P_{N}(d\Psi) \hat \Psi^-_{\underline{k}_{1},q_{1}}\hat \Psi^+_{\underline{p}_1,q'_{1}}\cdots
\hat \Psi^-_{\underline{k}_n,q_{n}}\hat \Psi^+_{\underline{p}_n,q'_{n}}=\det[C(\underline{k}_j,q_{j};\underline{p}_k,q'_{k})]_{j,k=1,\ldots,n},
\ee
where $C(\underline{k},q;\underline{p},q')=\b L \d_{\underline{k},\underline{p}}\delta_{q,q'} \hat g^{(\leq N)}(\underline{k},q)$ and
\bea\label{eq:prop}
&&\hat g^{(\leq N)}(\underline{k},q) := \frac{\chi_{N}(k_{0})}{-ik_{0} + e_{q}(k_{1}) - \m}\;,\qquad \chi_{N}(k_{0}) \equiv \chi_{0}(2^{-N}k_{0})\;.
\eea
Eq. (\ref{eq:prop}) defines the {\it free propagator} of the Grassmann field. Thanks to the fact that $k_{0} \in \frac{2\pi}{\beta} ( \mathbb{Z} + \frac{1}{2})$, the propagator (\ref{eq:prop}) is bounded uniformly in $\underline{k}, q$. Next, we define the configuration space Grassmann fields as:
\be
\Psi^{+}_{\xx,\rho} := \frac{1}{\beta L}\sum_{\underline{k}\in \mathbb{D}^{*}_{\beta,L}} \sum_{q=1}^{ML} e^{i\underline{k}\cdot \underline{x}}\, \overline{\varphi^{q}_{\rho}(k_{1}; x_{2})}\hat\Psi^{+}_{\underline{k}, q}\;,\qquad \Psi^{-}_{\xx,\rho} := \frac{1}{\beta L}\sum_{\underline{k}\in \mathbb{D}^{*}_{\beta,L}} \sum_{q=1}^{ML} e^{-i\underline{k}\cdot \underline{x}}\, \varphi^{q}_{\rho}(k_{1}; x_{2})\hat\Psi^{-}_{\underline{k}, q}\;.
\ee
We then have:
\be \int P_{N}(d\Psi)\Psi_{\xx,\rho}^-\Psi^+_{\yy,\rho'}= g_{\rho,\rho'}^{(\leq N)}(\xx, \yy),\label{eq:g14.b}\ee
where
\be
g^{(\leq N)}_{\rho,\rho'}(\xx, \yy) = \frac{1}{\beta L}\sum_{\underline{k} \in \mathbb{D}^{*}_{\beta,L}}\sum_{q=1}^{ML} e^{-i\underline{k}\cdot (\underline{x} - \underline{y})} \varphi^{q}_{\rho}(k_{1}; x_{2}) \overline{\varphi^{q}_{\rho'}(k_{1}; y_{2})} \hat g^{(\leq N)}(\underline{k}, q).
\ee
As $N\to \infty$ and for $x_{0} \neq y_{0}$, the propagator converges pointwise to the two-point Schwinger function of the noninteracting lattice model, Eq. (\ref{eq:2pt}). Notice that the propagator is periodic in the $x_{1}$ direction with period $L$, antiperiodic in the $x_{0}$ direction with antiperiod $\beta$, and it satisfies the Dirichlet boundary conditions. 
%
%

If needed, $\int P_{N}(d\Psi)$ can be written explicitly in terms of the usual Berezin integral $\int d\Psi$,
which is the linear functional on the Grassmann algebra acting non trivially on a monomial 
only if the monomial is of maximal degree, in which case 
$$\int d\Psi \prod_{\underline{k}\in\mathbb{D}_{\b,L}^*}\prod_{q=1}^{ML}
\hat \Psi^-_{\underline{k},q}\hat \Psi^+_{\underline{k},q}=1.$$ The explicit expression of 
$\int P_{N}(d\Psi)$ in terms of $\int d\Psi$ is
\bea\label{eq:P} &&\int P_{N}(d\Psi)\big(\cdot\big) = \frac1{\mathcal{N}_{\b,L,N}}\int d\Psi \exp\Big\{-\frac1{\b L}
\sum_{\underline{k}\in\mathbb{D}_{\b,L}^*}\sum_{q=1}^{ML}
\hat\Psi^{+}_{\underline{k},q}\,\big[{\hat g}_{\beta,L,N}(\underline{k}, q)\big]^{-1}\hat\Psi^{-}_{\underline{k},q}\Big\}\big(\cdot\big),
\nonumber\\
&& \quad \text{with normalization}\qquad \mathcal{N}_{\b,L,N}=\prod_{\underline{k}\in\mathbb{D}_{\b,L}^*}\prod_{q=1}^{ML} [\b L]
{\hat g}^{(\leq N)}(\underline{k},q)\;.
\label{2.3}\eea
The Grassmann counterpart of the many-body interaction is:
\be V(\Psi) := \l\int_{0}^\b dx_{0}\sum_{x,y\in \Lambda_{L}} \sum_{\rho, \rho' =1}^{M} n_{\xx,\rho} \delta(x_{0} - y_{0})w_{\rho\rho'}(x, y) n_{\yy,\rho'}\;,
\ee
where $n_{\xx,\rho} = \Psi^{+}_{\xx,\rho} \Psi^{-}_{\xx,\rho}$ is the Grassmann counterpart of the density operator. With respect to (\ref{eq:Ham}), notice the absence of the $-1/2$ factors. The partition function of the Grassmann field theory is:
\begin{equation}\label{eq:partG}
\mathcal{Z}_{N,\beta, L} := \int P_{N}(d\Psi)\, e^{-V(\Psi)}\;.
\end{equation}
For finite $N, \beta, L$, the right-hand side of Eq. (\ref{eq:partG}) is a polynomial in $\lambda$ with bounded coefficients, due to the finiteness of the Grassmann algebra, and to the boundedness of the fermionic propagator (\ref{eq:prop}). Next, let us introduce the generating functional of correlations. Let ${\bf e}_{i} = (0, e_{i})$, for $i=1,2$. We define the Grassmann counterpart of the current operator as:
\bea
&& J_{1,\xx} := J_{\xx,\xx+{\bf e}_{1}} + \frac{1}{2}( J_{\xx, \xx+{\bf e}_{1}-{\bf e}_{2}} + J_{\xx,\xx+{\bf e}_{1}+{\bf e}_{2}} ) + \frac{1}{2}( J_{\xx-{\bf e}_{2}, \xx+{\bf e}_{1}} + J_{\xx+{\bf e}_{2}, \xx+{\bf e}_{1}} ) \nn\\
&&J_{2,\xx} := J_{\xx,\xx+{\bf e}_{2}} + \frac{1}{2}( J_{\xx,\xx-{\bf e}_{1}+{\bf e}_{2}} + J_{\xx, \xx+{\bf e}_{1}+{\bf e}_{2}} ) + \frac{1}{2}( J_{\xx-{\bf e}_{1}, \xx+{\bf e}_{2}} + J_{\xx+{\bf e}_{1}, \xx+{\bf e}_{2}} )\;,\nn
\eea
with the Grassmann bond current, recall Eq. (\ref{eq:bondJ}):
\be
J_{\xx,\yy} := i(\Psi^{+}_{\xx}, H(x;y) \Psi^{-}_{\yy}) -  i(\Psi^{+}_{\yy}, H(y;x) \Psi^{-}_{\xx})\;.
\ee
We then define the source terms as:
\be
B(\Psi; \phi) := \int_{0}^{\beta} dx_{0}\sum_{x\in \Lambda_{L}} \sum_{\rho=1}^{M} (\phi^{+}_{\xx,\rho} \Psi^{-}_{\xx,\rho} + \Psi^{+}_{\xx,\rho} \phi^{-}_{\xx,\rho})\;,\quad \G(\Psi; A) := \int_{0}^{\beta} dx_{0}\sum_{x\in \Lambda_{L}}  \sum_{\m = 0,1,2} A_{\m,\xx} J_{\m,\xx}\;,
\ee
where $\phi^{\pm}_{\xx},\, A_{\m,\xx}$  are, respectively, Grassmann and a complex valued external fields. The generating functional of correlations $\mathcal{W}_{N,\beta,L}(A,\phi)$ is defined as:
\be\label{eq:genfcn}
\mathcal{W}_{N, \beta,L}(A,\phi) := \log \int P_{ N} (d\Psi) e^{-V(\Psi) + \G(\Psi; A) + B(\Psi; \phi)}\;.
\ee
For small $\|A\|_{\infty}$ and for small $|\lambda|$, the argument of the $\log$ is nonzero, thanks to the finiteness of the Grassmann algebra.

It is well-known that the Schwinger functions of the Gibbs state at noncoinciding arguments can be obtained as functional derivatives of the generating functional, in the $N\to \infty$ limit; see, {\it e.g.}, \cite{GMP} for a discussion about this point. We have:
\bea\label{eq:equivss}
&&\log \mathcal{Z}_{\beta, L} = \lim_{N\to \infty} \mathcal{W}_{N}(0,0) \nonumber\\
&&\langle {\bf T} a^{\e_{1}}_{\xx_{1}, \rho_{1}}\,; \cdots \,; a^{\e_{n}}_{\xx_{n}, \rho_{n}}\,; j_{\m_{1}, \yy_{1}}\,; \cdots \,; j_{\m_{m}, \yy_{m}} \rangle_{\beta,L}  \nonumber\\
&&\qquad \qquad = \lim_{N\to \infty}\frac{\partial^{n+m}\mathcal{W}_{N,\beta,L}(A,\phi)}{\partial \phi^{\e_{1}}_{\xx_{1}, \rho_{1}} \cdots\, \partial \phi^{\e_{n}}_{\xx_{n}, \rho_{n}} \partial A^{\sharp_{1}}_{\m_{1}, \yy_{1}} \cdots\, \partial A^{\sharp_{m}}_{\m_{m}, \yy_{m}}} \Big|_{\substack{A=0 \\ \phi=0}}\;.
\eea
The identities (\ref{eq:equivss}) allow to use the properties of Grassmann Gaussian fields in order to investigate the Euclidean correlation functions of our model. Eqs. (\ref{eq:equivss}) hold in the analyticity domain of the Grassmann theory, which coincides with the analyticity domain of the Fock space model.

\subsection{Reduction to an effective one-dimensional model}\label{sec:red1d}

The Grassmann field theory can be studied using multiscale analysis and renormalization group. To do so, a preliminary step is to introduce a notion of `integration of bulk degrees of freedom', that allows to recast the problem into the study of an emergent $1+1$ dimensional QFT, describing the edge currents. The separation between `bulk' and `edge' degrees of freedom has been discussed in detail in \cite{AMP}. Here we shall outline the main ideas, referring to \cite{AMP}, Section 5.2, for further details.

The main advantage of the Grassmann integral formulation with respect to the original Fock space formulation is the addition principle of Grassmann variables. Suppose that $\Psi^{\pm}_{\xx}$ is a Grassmann field with propagator $g = g^{(1)} + g^{(2)}$. Let $\Psi^{(1)\pm}_{\xx}$, $\Psi^{(2)\pm}_{\xx}$ be two independent Grassmann fields, with propagators $g^{(1)}$, $g^{(2)}$. Let $\mathbb{E}$ be the Grassmann Gaussian average with respect to the field $\Psi$, and let $\mathbb{E}_{i}$ be the Grassmann Gaussian average with respect to the field $\Psi^{(i)}$, $i=1,2$. Then:
\begin{equation}
\mathbb{E} f(\Psi) = \mathbb{E}_{2} \mathbb{E}_{1} f(\Psi^{(1)} + \Psi^{(2)}) \equiv \mathbb{E}_{2} f^{(2)}(\Psi^{(2)})\;,
\end{equation}
where $f^{(2)}(\Psi^{(2)}) = \mathbb{E}_{1} f(\Psi^{(1)} + \Psi^{(2)})$. This simple identity is the starting point for the multiscale analysis of the Grassmann field theory.

In order to separate the `bulk' degrees of freedom from the `edge' modes, we rewrite the propagator $\hat g^{(\leq N)}(\underline{k},q)$ as:
\begin{equation}
\hat g^{(\leq N)}(\underline{k},q) = g^{(\text{edge})}(\underline{k},q) + g^{(\text{bulk})}(\underline{k}, q)\;,
\end{equation}
where:
\begin{eqnarray}\label{eq:geb}
g^{(\text{edge})}(\underline{k},q) &:=& \hat g^{(\leq N)}(\underline{k},q) \chi_{N}(k_{0})\chi\Big(\frac{4| e_{q}(k_{1}) - \mu |}{\delta}\Big) \nonumber\\
g^{(\text{bulk})}(\underline{k}, q) &:=& \hat g^{(\leq N)}(\underline{k},q) - g^{(\text{edge})}(\underline{k},q)\;.
\end{eqnarray}
For $\delta$ small enough, the last characteristic function in the first line of (\ref{eq:geb}) selects the energies at a distance $\delta/2$ from the Fermi level. By the Assumptions \ref{assu}, the only allowed energies in this domain are those of the edge modes. For later convenience, we slightly modify the defintion of edge propagator, as follows:
\begin{equation}\label{eq:gedgeinfo}
g^{(\text{edge})}(\underline{k},q) = \sum_{\omega} g^{(\text{edge})}(\underline{k},q) \chi_{N}(k_{0})\chi_{\omega}(k_{1})\;,
\end{equation}
where
\begin{equation}\label{eq:chik1}
\chi_{\omega}(k_{1}) = \chi\Big( 4\Big|\frac{\varepsilon_{\omega}(k_{1}) - \mu_{\omega}}{\delta}\Big| \Big)\;.
\end{equation}
The sum in (\ref{eq:gedgeinfo}) runs over all edge modes intersecting the Fermi level. Recall that the function $\varepsilon_{\omega}(k_{1})$ is the dispersion relation of the edge mode labelled by $\omega$, see discussion after (\ref{eq:gedgeintro}). Notice the presence of $\mu_{\omega}$ instead of $\mu$ in the characteristic function: the two parameters will differ of $O(\lambda)$. The parameters $\mu_{\omega}$ will play the role of counterterms, and will be fixed later on to control the flow of terms that are relevant in the RG sense.

The above momentum-space decomposition induces the following decomposition of the real-space propagator:
\begin{equation}\label{eq:split1}
g^{(\leq N)}(\xx,\yy) = g^{(\text{edge})}(\xx,\yy) + g^{(\text{bulk})}(\xx,\yy)\;,
\end{equation}
where, being $g^{(\text{bulk})}$ supported for energies away from the Fermi momentum:
\begin{equation}\label{eq:gdec}
| g_{\rho,\rho'}^{(\text{bulk})}(\xx,\yy) | \leq \frac{C_{n}}{1 + \| \xx - \yy \|_{\beta, L}^{n}}\;,\qquad 
\end{equation}
where the distance $\| \xx - \yy \|_{\beta, L}$ is defined as:
\begin{equation}
\| \xx - \yy \|_{\beta, L} := \inf_{n,m \in \mathbb{Z}} \| \xx - \yy - n \beta {\bf e}_{0} - nL {\bf e}_{1} \|\;,
\end{equation}
with ${\bf e}_{0} = (1,0)$, ${\bf e}_{i} = (0, e_{i})$.

Correspondingly to (\ref{eq:split1}), the field $\Psi^{\pm}$ is decomposed as $\Psi^{(\text{edge})\pm} + \Psi^{(\text{bulk})\pm}$. The addition principle is then used to write:
\begin{eqnarray}\label{eq:bulkinteg}
e^{\mathcal{W}_{N,\beta,L}(A,\phi)} &=& \mathbb{E}_{\leq N} \Big(e^{-V(\Psi) + \G(\Psi; A) + B(\Psi; \phi)}\Big)\nonumber\\
&=& e^{\mathcal{W}_{N,\beta,L}^{(\text{bulk})}(A,\phi)} \mathbb{E}_{\text{edge}} \Big(e^{V^{\text{(edge)}}(\Psi^{\text{(edge)}}; A, \phi)}\Big)\;,
\end{eqnarray}
for a suitable new functional $\mathcal{W}_{N,\beta,L}^{(\text{bulk})}$ and a suitable new effective interaction $V^{\text{(edge)}}$. The integration step is performed using fermionic cluster expansion, via the Brydges-Battle-Federbush formula, whose convergence for small $|\lambda|$ is ensured by the good decay properties (\ref{eq:gdec}). See \cite{GM} for a review of the method. The form of $\mathcal{W}_{N,\beta,L}^{(\text{bulk})}$, $V^{\text{(edge)}}$ is a priori explicit: both objects are given by a finite sum of monomials in the Grassmann fields and external fields, whose space-time dependent coefficients (also called kernels) have good locality properties, inherited by (\ref{eq:gdec}). For instance:
\begin{equation}\label{eq:vedge}
V^{\text{(edge)}}(\psi; A, \phi) = \sum_{\Gamma = (\Gamma_{\psi}, \Gamma_{A}, \Gamma_{\phi})} \int_{\beta, L} D{\bf X} D{\bf Y} D{\bf Z}\, \Psi_{\Gamma_{\psi}}({\bf X}) \phi_{\Gamma_{\phi}}({\bf Y})A_{\Gamma_{A}}({\bf Z}) W^{\text{(edge)}}_{\Gamma}({\bf X}, {\bf Y}, {\bf Z})\;,
\end{equation}
where: ${\bf X}$, ${\bf Y}$, ${\bf Z}$ collect all the variables at the arguments of $\psi, \phi, A$, respectively; $\Psi_{\Gamma_{\psi}}({\bf X})$, $\phi_{\Gamma_{\phi}}({\bf Y})$, $A_{\Gamma_{A}}({\bf Z})$ are monomials; $\Gamma_{\psi}$, $\Gamma_{A}$, $\Gamma_{\phi}$ label all possible monomials in the corresponding variables; the integral sign denotes integration over all times and summation over all space variables; the kernels are analytic functions of $\lambda$, and satisfy the following bounds, collecting all variables in ${\bf Q}$:
\begin{equation}
\int_{\beta, L} D{\bf Q}\, \prod_{i<j} \| {\bf q}_{i} - {\bf q}_{j} \|_{\beta, L}^{n_{ij}} | W^{\text{(edge)}}_{\Gamma}({\bf Q}) | \leq \beta L^{2}C(\{n_{ij}\})\;.
\end{equation}
Next, let us show how to rewrite the average in the right-hand side of (\ref{eq:bulkinteg}) in terms of a $1+1$ dimensional Grassmann field theory. The key observation is to notice that the edge propagator $g^{\text{(edge)}}$ only depends on the edge modes wave functions and dispersion relations, in proximity of the Fermi level. Explicitly:
\begin{equation}
g^{\text{(edge)}}(\xx;\yy) = \frac{1}{\beta L}\sum_{\underline{k} \in \mathbb{D}^{*}_{\beta,L}}\sum_{\omega} e^{-i\underline{k}\cdot (\underline{x} - \underline{y})} \xi^{\omega}(k_{1};x_{2}) \overline{\xi^{\omega}(k_{1}; y_{2})} \frac{\chi_{N}(k_{0})\chi_{\omega}(k_{1})}{-ik_{0} + \varepsilon_{\omega}(k_{1}) - \mu}\;.
\end{equation}
Let us define:
\begin{equation}
\check{\xi}^{\omega}_{\rho}(x) := \frac{1}{L} \sum_{k_{1} \in \mathbb{S}^{1}_{L}} e^{-i k_{1} x_{1}} \xi^{\omega}_{\rho}(k_{1}; x_{2}) \chi\Big( 2\frac{|\varepsilon_{\omega}(k_{1}) - \mu_{\omega}|}{\delta} \Big)\;.
\end{equation}
As a consequence of the smoothness of the cutoff function and of the assumptions (\ref{eq:bdedge}) on the edge modes, this function satisfies the following decay estimates, for some $\delta$-dependent constants:
\begin{equation}\label{eq:decxi}
|\check{\xi}^{\omega}_{\rho}(x)|\leq C_{n} \frac{e^{-c x_{2}}}{1 + |x_{1}|_{L}^{n}}\;\qquad \text{or}\qquad |\check{\xi}^{\omega}_{\rho}(x)|\leq C_{n} \frac{e^{-c |L - x_{2}|}}{1 + |x_{1}|_{L}^{n}}\;,
\end{equation}
depending on whether the edge mode is localized on the upper or lower edge.

Let $\underline{x} = (x_{0}, x_{1})$. We can view the propagator $g^{\text{(edge)}}(\xx;\yy)$ as the propagator of an effective $1d$ field, $\psi^{\pm}_{\underline{x},\omega}$ with $\underline{x} = (x_{0}, x_{1})$, convoluted with $\check{\xi}^{\omega}(x)$:
\begin{eqnarray}
g^{\text{(edge)}}(\xx;\yy) &=& \mathbb{E}_{\text{1d}} \Big( ( \psi^{-} * \check{\xi} )(\xx)( \psi^{+} * \check{\xi} )(\yy)  \Big) \\
(\psi^{-} * \check{\xi})_{\rho}(\underline{x}) &:=& \sum_{\omega} \sum_{z_{1}}  \psi^{-}_{(x_{0}, z_{1})} \check{\xi}_{\rho}^{\omega}((x_{1} - z_{1}, x_{2}))\nonumber\\ (\psi^{+} * \check{\xi})_{\rho}(\underline{x}) &:=& \sum_{\omega} \sum_{z_{1}}  \psi^{+}_{(x_{0}, z_{1})} \overline{\check{\xi}_{\rho}^{\omega}((x_{1} - z_{1}, x_{2}))}\;.\nonumber
\end{eqnarray}
The new Grassmann Gaussian integration $\mathbb{E}_{\text{1d}}$ is specified by the propagator:
\begin{eqnarray}\label{eq:g1d}
\mathbb{E}_{\text{1d}}( \psi^{-}_{\underline{x},\omega} \psi^{+}_{\underline{y},\omega'} ) &=& \delta_{\omega\omega'} g^{\text{(1d)}}_{\omega}(\underline{x} - \underline{y}) \nonumber\\
g^{\text{(1d)}}_{\omega}(\underline{x} - \underline{y}) &=& \frac{1}{\beta L} \sum_{\underline{k} \in \mathbb{D}^{*}_{\beta,L}} e^{-i\underline{k}\cdot (\underline{x} - \underline{y})} \frac{\chi_{N}(k_{0}) \chi_{\omega}(k_{1})}{-ik_{0} + \varepsilon_{\omega}(k_{1}) - \mu}\;.
\end{eqnarray}
This observation allows to rewrite the generating functional of correlations as:
\begin{eqnarray}\label{eq:map21d}
e^{\mathcal{W}_{N,\beta,L}(A,\phi)} &=& e^{\mathcal{W}_{N,\beta,L}^{(\text{bulk})}(A,\phi)} \mathbb{E}_{\text{1d}} \Big(e^{V^{\text{(edge)}}(\psi * \xi, A, \phi)}\Big) \nonumber\\
&\equiv& e^{\mathcal{W}_{N,\beta,L}^{(\text{bulk})}(A,\phi)} \mathbb{E}_{\text{1d}} \Big(e^{V^{\text{(1d)}}(\psi; A, \phi)}\Big)\;,
\end{eqnarray}
where $V^{\text{(1d)}}$ is the effective interaction for a $1+1$ dimensional Grassmann field. The new effective potential $V^{\text{(1d)}}$ is obtained replacing $\Psi^{(\text{edge})}$ with $\psi * \check\xi$ in (\ref{eq:vedge}); in doing so, we notice that the sum over all $x_{2}$ variables only acts on the edge states eigenfunctions. Thus, it can be performed explicity, and it gives rise to new kernels, which are `anchored' to either the lower or the upper edge,  thanks to the bounds (\ref{eq:decxi}). The new kernels thus obtained satisfy the estimate:
\begin{equation}\label{eq:bd1d}
\int_{\beta, L} D\underline{Q} D\widetilde{Q}_{2} \prod_{i<j} \| \underline{q}_{i} - \underline{q}_{j} \|^{n_{ij}}_{\beta, L} \prod_{l<k} | \tilde q_{l,2} - \tilde q_{k,2} |^{m_{lk}} | W^{\text{(1d)}}_{\Gamma}(\underline{Q}, \widetilde Q_{2}) | \leq \beta L C(\{n_{ij}\}, \{m_{lk}\})\;,
\end{equation}
where $\widetilde{Q}_{2}$ only collects the $y_{2}$, $z_{2}$ variables, associated to the external fields. 
\begin{rem}\label{rem:edges}
\begin{itemize}
\item[a)] Notice the presence of a $\beta L$ factor instead of $\beta L^{2}$; this is a consequence of the localization around $x_{2} = 0$ or $x_{2} = L-1$ introduced by the bounds (\ref{eq:decxi}).
\item[b)] Thanks to the estimates (\ref{eq:decxi}) for the edge mode eigenfunctions and to the decay of the bulk propagator (\ref{eq:gdec}), kernels associated to fields localized on opposite sides of the cylinder are negligible for $L$ large. More precisely, the bound (\ref{eq:bd1d}) has to be multiplied by a factor $C_{n} L^{-n}$. Since the limit $L\to \infty$ is taken before the limit $\beta \to \infty$, these kernels do not play an important role in the RG iteration.
\end{itemize}
\end{rem}
Being the propagator $g^{(\text{1d})}$ translation invariant, the formula (\ref{eq:map21d}) is a convenient starting point for the renormalization group analysis of the model.

\subsection{Sketch of the RG analysis}\label{sec:sketchRG}

The discussion below relies on the detailed RG analysis of \cite{AMP}, see Section 9 there, to which we refer for futher details.  Due to the edge modes, the propagator $g^{(\text{1d})}$ is massless, recall Eq. (\ref{eq:g1d}). In particular, linearization of the dispersion relation around the Fermi energy shows that the new Grassmann field theory obeys the same power counting as a $1+1$ dimensional relativistic model.

In order to integrate the field $\psi$ we proceed in a multiscale fashion. Let $k_{F}^{\omega}(\lambda) = k_{F}^{\omega} + O(\lambda)$ be the solution of:
\begin{equation}\label{eq:intkF}
\varepsilon_{\omega}(k_{F}^{\omega}(\lambda)) = \mu_{\omega}\;.
\end{equation}
Strictly speaking, this equation may not have a solution for finite $L$. Since we are eventually interested in taking the $L\to \infty$ limit, with a slight abuse of notation we shall denote by $k_{F}^{\omega}(\lambda)\in \mathbb{S}^{1}_{L}$ the best approximation of the $L\to \infty$ solution of (\ref{eq:intkF}). We write, for $\underline{k}_{F}^{\omega}(\lambda) = (0, k_{F}^{\omega}(\lambda))$:
\begin{equation}
\psi^{\pm}_{\underline{x},\omega} = \psi^{(uv)\pm}_{\underline{x},\omega}+ \sum_{\omega} \sum_{\substack{ h\in \mathbb{Z}_{-} \\ h_{\beta} \leq h \leq 0}} e^{\pm i\underline{k}_{F}^{\omega}(\lambda)\cdot \underline{x}} \psi^{(h)\pm}_{\underline{x},\omega}\;;
\end{equation}
the field $\psi^{(uv)\pm}$ is supported for momenta $k_{0}$ away from zero, say $|k_{0}| \geq 1$; the single-scale fields $\psi^{(h)\pm}_{\underline{x},\omega}$ depend on momenta $\underline{k}' = \underline{k} - \underline{k}_{F}^{\omega}(\lambda)$, such that:
\begin{equation}\label{eq:mom}
2^{h-1} \leq \sqrt{k_{0}^{2} + v_{\omega}^{2} {k^{'2}_{1}} } \leq 2^{h+1}\;.
\end{equation}
The parameter $h_{\beta} \sim \log \beta$ fixes the infrared cutoff of the theory; it is due to the fact that the smallest fermionic Matsubara frequency is $\frac{\pi}{\beta}$. The propagator of the single scale fields has the form, for $|r_{h,\omega}(k'_{1})|\leq C\|\underline{k}'\|^{\theta}$ with $0<\theta < 1$:
\begin{equation}\label{eq:gassu}
\hat g^{(h)}_{\omega}(\underline{k}') = \frac{1}{Z_{h,\omega}} \frac{f_{h,\omega}(\underline{k}')}{-ik_{0} + v_{h,\omega} k'_{1}}(1 + r_{h,\omega}(k'_{1}))\;;
\end{equation}
the function $f_{h,\omega}(\underline{k}')$ is a smooth cutoff function, that enforces the contraint (\ref{eq:mom}); the parameters $v_{h,\omega}$ are called the effective Fermi velocities, while the parameters $Z_{h,\omega}$ are called the wave function renormalizations. Controlling the flow as $h\to -\infty$ of  $Z_{h,\omega}, v_{h,\omega}$ is one of the main goals of the RG analysis. We shall inductively assume, and ultimately prove, that:
\begin{equation}\label{eq:assZv}
\Big| \frac{Z_{h,\omega}}{Z_{h-1,\omega}} \Big| \leq e^{c|\lambda|}\;,\qquad | v_{h,\omega} - v_{\omega} | \leq C|\lambda|\;,
\end{equation}
uniformly in $h$. On scale $h=0$ one has, as a consequence of (\ref{eq:g1d}):
\begin{equation}
Z_{0,\omega} = 1\;,\qquad v_{0,\omega} = \partial_{k_{1}} \varepsilon_{\omega}(k_{F}^{\omega}(\lambda)) = v_{\omega} + O(\lambda)\;.
\end{equation}
The bounds (\ref{eq:assZv}) are not difficult to check for the first few scales, using perturbation theory; the difficulty is to prove that they hold uniformly in the scale label. 

The field $\psi^{(uv)}$ is integrated in a straightforward way, in an essentially model-independent fashion; see {\it e.g.} Section 5.3 of \cite{GMP}. Due to the natural ultraviolet cutoff introduced by the lattice, this is much easier than the integration of the infrared degrees of freedom. The only minor technical difficulty is due to the slow decay in $k_{0}$ of the momentum space propagator (\ref{eq:g1d}). A careful analysis of perturbation theory shows that the only apparently divergent contribution is the one related to the tadpole graph; this graph can be evaluated explicity, and one finds a finite result.

The fields $\psi^{(h)}$ are then integrated in an iterative way, starting from $h=0$ down to $h = h_{\beta}$. The iterative integration produces a new effective interaction, and the goal is to prove that the sequence of effective interactions converges to a limit, in a suitable sense. Following the analysis of Section 9.2 of \cite{AMP}, one finds out that:
\begin{equation}
\mathbb{E}_{\text{1d}} \Big(e^{V^{\text{(1d)}}(\psi; A, \phi)}\Big) = e^{\mathcal{W}^{(h)}_{\beta,L}(A,\phi)}\mathbb{E}_{\leq h} \Big(e^{V^{(h)}(\sqrt{Z_{h}}\psi; A, \phi)}\Big)
\end{equation}
for a suitable new generating functional $\mathcal{W}^{(h)}_{\beta,L}$ and a suitable new effective interaction $V^{(h)}$. The Grassmann integration involves the fields on scales $\leq h$, and has covariance given by:
\begin{eqnarray}\label{eq:propEh}
\mathbb{E}_{\leq h}( \psi^{-}_{\underline{x},\omega} \psi^{+}_{\underline{y},\omega'} ) &=& \delta_{\omega\omega'} g^{(\leq h)}_{\omega}(\underline{x} - \underline{y}) \nonumber\\
g^{(\leq h)}_{\omega}(\underline{x} - \underline{y}) &=& \frac{1}{\beta L} \sum_{\underline{k}' \in \mathbb{D}^{*}_{\beta,L}} e^{-i\underline{k}'\cdot (\underline{x} - \underline{y})} \frac{1}{Z_{h,\omega}}\frac{\chi_{h,\omega}(\underline{k}')}{-ik_{0} + v_{h,\omega} k'_{1}}(1 + r_{h,\omega}(k'_{1}))\;,
\end{eqnarray}
where:
\begin{equation}
\chi_{h,\omega}(\underline{k}') = \chi\Big(4\frac{2^{-h}}{\delta} \sqrt{ k_{0}^{2} + v_{\omega}^{2} k_{1}^{'2} } \Big)\;.
\end{equation}
The effective interaction has the form:
\begin{eqnarray}
&&V^{(h)}(\sqrt{Z_{h}}\psi; A, \phi) \\
&&= \sum_{\Gamma = (\Gamma_{\psi}, \Gamma_{A}, \Gamma_{\phi})} \int_{\beta, L} D{\underline{X}} D{\bf Y} D{\bf Z}\, \Big[\prod_{f\in \Gamma_{\psi}} \sqrt{Z_{h,\omega(f)}}\Big]\Psi_{\Gamma_{\psi}}(\underline{X}) \phi_{\Gamma_{\phi}}({\bf Y})A_{\Gamma_{A}}({\bf Z}) W^{(h)}_{\Gamma}(\underline{X}, {\bf Y}, {\bf Z})\;. \nonumber
\end{eqnarray}
The new kernels are determined by the integration of the previous scales, and the RG map is equivalent to a recursion relation for these kernels. Let us neglect for simplicity the external fields.  As observed above, the one-dimensional field $\psi_{\underline{x}}^{\pm}$ obeys to the same power counting of relativistic $1+1$ dimensional fermions. It is well-known that for these relativistic models the only dangerous contributions to the effective actions are those introduced by the quadratic and quartic Grassmann monomials. See \cite{GM} for a review of RG for one-dimensional fermions. 

We write:
\begin{eqnarray}\label{eq:2+4}
V^{(h)}(\sqrt{Z_{h}}\psi) &=& \sum_{\underline{\omega}} \int_{\beta, L} d\underline{x} d\underline{y}\, \sqrt{Z_{h,\omega_{1}}} \sqrt{Z_{h,\omega_{2}}} \psi^{+}_{\underline{x},\omega} \psi^{-}_{\underline{y},\omega} W^{(h)}_{2;\underline{\omega}}(\underline{x},\underline{y})\nonumber\\
&& + \sum_{\underline{\omega}} \int_{\beta, L} d\underline{X}\, \Big[\prod_{i=1}^{4} \sqrt{Z_{h,\omega_{i}}}\Big]\psi^{+}_{\underline{x}_{1},\omega_{1}} \psi^{-}_{\underline{x}_{2},\omega_{2}} \psi^{+}_{\underline{x}_{3},\omega_{3}} \psi^{-}_{\underline{x}_{4},\omega_{4}} W^{(h)}_{4;\underline{\omega}}(\underline{X}) \nonumber\\
&& + \text{higher order monomials.}
\end{eqnarray}
For the sake of this discussion, we can safely suppose that the sum over the edge mode labels in (\ref{eq:2+4}) only involves edge modes localized on the same side of the cylinder. As commented in Remark \ref{rem:edges}, kernels associated to product of fields associated to edge modes localized on opposide sides of the cylinder are smaller than any power in $L$. Since the limit $L\to \infty$ is taken before the limit $\beta \to \infty$, and since the number of scales is finite for $\beta$ finite, the contribution of these kernels is negligible.

Let us focus on the quadratic terms. By Assumption \ref{assu}, the Fermi momenta of two different edge states are well separated; recall the first bound in (\ref{eq:asse2}). Since, for $\delta$ small enough, momentum conservation implies Fermi momentum conservation, we get:
\begin{equation}
W^{(h)}_{2;\underline{\omega}}(\underline{x},\underline{y}) = \delta_{\omega_{1},\omega_{2}} W^{(h)}_{2;\omega_{1},\omega_{1}}(\underline{x},\underline{y}) \equiv  \delta_{\omega_{1},\omega_{2}} W^{(h)}_{2;\omega_{1}}(\underline{x},\underline{y})\;.
\end{equation}
We then rewrite the quadratic terms as:
\begin{eqnarray}
&&\sum_{\omega} \int_{\beta, L} d\underline{x} d\underline{y}\, Z_{h,\omega} \psi^{+}_{\underline{x},\omega} \psi^{-}_{\underline{y},\omega} W^{(h)}_{\omega}(\underline{x},\underline{y}) = \sum_{\omega} \int_{\beta, L} d\underline{x} d\underline{y}\, Z_{h,\omega} \psi^{+}_{\underline{x},\omega} \psi^{-}_{\underline{x},\omega} W^{(h)}_{\omega}(\underline{x},\underline{y})\\&&\qquad + \sum_{\omega,\mu} \int_{\beta, L} d\underline{x} d\underline{y}\, Z_{h,\omega} \psi^{+}_{\underline{x},\omega} (\partial_{\mu}\psi^{-}_{\underline{x},\omega}) (x_{\mu} - y_{\mu})W^{(h)}_{\omega}(\underline{x},\underline{y}) + \text{higher order derivatives.}\nonumber
\end{eqnarray}
Using the translation-invariance of the kernels:
\begin{eqnarray}\label{eq:tri}
&&\sum_{\omega} \int_{\beta, L} d\underline{x} d\underline{y}\, Z_{h,\omega} \psi^{+}_{\underline{x},\omega} \psi^{-}_{\underline{y},\omega} W^{(h)}_{\omega}(\underline{x},\underline{y}) = \sum_{\omega} \int_{\beta, L} d\underline{x}\, Z_{h,\omega} 2^{h}\nu_{h,\omega} \psi^{+}_{\underline{x},\omega} \psi^{-}_{\underline{x},\omega} \\&&\qquad + \sum_{\omega} \int_{\beta, L} d\underline{x}\, Z_{h,\omega} (\psi^{+}_{\underline{x},\omega} z_{h,0,\omega} \partial_{0} \psi^{-}_{\underline{x},\omega} + i\psi^{+}_{\underline{x},\omega} z_{h,1,\omega} \partial_{1} \psi^{-}_{\underline{x},\omega}) + \text{higher order derivatives.}\nonumber
\end{eqnarray}
The first term in (\ref{eq:tri}) is {\it relevant} in the renormalization group sense: dimensional bounds would produce an unbounded flow for $\nu_{h,\omega}$. The second term is {\it marginal} in the RG sense. This term is used to redefine the Gaussian integration; it produces new wave function renormalizations and effective Fermi velocities, on scale $h-1$. Dimensional estimates cannot rule out a divergent flow in $h$, with a $\lambda$-dependent power law (anomalous exponent). Finally, the higher order derivaties are {\it irrelevant} in the RG sense: they shrink under iteration of the RG map.

A similar localization procedure is used to rewrite the quartic terms. By Assumption \ref{assu}, recall (\ref{eq:asse}), (\ref{eq:asse2}), together with momentum conservation, implies that:
\begin{equation}\label{eq:quartic}
W^{(h)}_{4;\underline{\omega}}(\underline{X}) = 0\qquad \text{unless $\omega_{1} = \omega_{2}$, $\omega_{3} = \omega_{4}$ or $\omega_{1} = \omega_{3}$, $\omega_{2} = \omega_{4}$.}
\end{equation}
This cancellation drastically reduces the number of quartic terms. We then have:
\begin{eqnarray}\label{eq:quartloc}
&&\sum_{\underline{\omega}} \int_{\beta, L} d\underline{X}\, \Big[\prod_{i=1}^{4} \sqrt{Z_{h,\omega_{i}}}\Big]\psi^{+}_{\underline{x}_{1},\omega_{1}} \psi^{-}_{\underline{x}_{2},\omega_{2}} \psi^{+}_{\underline{x}_{3},\omega_{3}} \psi^{-}_{\underline{x}_{4},\omega_{4}} W^{(h)}_{4;\underline{\omega}}(\underline{X})  \\
&&\qquad \equiv \sum_{\omega,\omega'} \int_{\beta, L} d\underline{X}\, Z_{h,\omega} Z_{h,\omega'} \psi^{+}_{\underline{x}_{1},\omega} \psi^{-}_{\underline{x}_{2},\omega} \psi^{+}_{\underline{x}_{3},\omega'} \psi^{-}_{\underline{x}_{4},\omega'} W^{(h)}_{4;\omega,\omega'}(\underline{X}) \nonumber\\
&&\qquad \equiv \sum_{\omega<\omega'} \int_{\beta, L} d\underline{x}\, Z_{h,\omega} Z_{h,\omega'} \psi^{+}_{\underline{x},\omega} \psi^{-}_{\underline{x},\omega} \psi^{+}_{\underline{x},\omega'} \psi^{-}_{\underline{x},\omega'} \lambda_{h,\omega,\omega'} + \text{higher order derivatives.}\nonumber
\end{eqnarray}
The parameters $\lambda_{h,\omega,\omega'}$ define the new effective coupling constants of the theory. Notice that there is no term with $\omega = \omega'$, due to the fact that the square of a Grassmann variable is zero. The effective coupling constants correspond to marginal directions in the RG sense: dimensional estimates would imply a flow which diverges linearly in $h$. Finally, higher order derivatives are instead 
irrelevant in the RG sense. 

The cancellation (\ref{eq:quartic}) is crucial for our present analysis. It allows to rewrite the effective interaction as a density-density interaction for chiral fermions, at all scales. In the physics literature, this type of interaction can be analyzed via bosonization methods, which can be used to derive exact solutions for relativistic models. Our analysis will not rely on these ideas; instead, we will use the emergent chiral symmetry of the QFT to derive exact identities for the correlation functions of the scaling limit of the model, which ultimately allow to control the RG flow. 

To conclude, the control of the RG map is equivalent to the control of the following finite-dimensional dynamical system:
\begin{eqnarray}\label{eq:betaflow0}
\frac{Z_{h-1,\omega}}{Z_{h,\omega}} &=& 1 + z_{h,0,\omega} \equiv 1 + \beta^{z}_{h,\omega} \nonumber\\
v_{h-1,\omega} &=& \frac{Z_{h,\omega}}{Z_{h-1,\omega}} (v_{h,\omega} + z_{h,1,\omega}) \equiv v_{h} + \beta^{v}_{h,\omega}\nonumber\\
2^{h} \nu_{h,\omega} &=& \frac{Z_{h,\omega}}{Z_{h-1,\omega}} \widehat{W}^{(h)}_{2;\omega}(\underline{0}) \equiv 2^{h+1} \nu_{h+1,\omega} + 2^{h+1}\beta^{\nu}_{h+1,\omega}\nonumber\\
\lambda_{h,\omega,\omega'} &=& \Big[ \prod_{i=1}^{4} \sqrt{\frac{Z_{h,\omega}}{ Z_{h-1,\omega}}} \Big] \widehat{W}^{(h)}_{4;\omega,\omega,\omega',\omega'}(\underline{0},\underline{0},\underline{0}) \equiv \lambda_{h+1,\omega,\omega'} + \beta^{\lambda}_{h+1,\omega,\omega'}\;.
\end{eqnarray}
The initial data of the dynamical system are:
\begin{equation}
Z_{0,\omega} = 1\;,\qquad v_{0,\omega} = v_{\omega} + O(\lambda)\;,\qquad \nu_{0,\omega} = \mu_{\omega} + O(\lambda)\;,\qquad \lambda_{0,\omega, \omega'} = A_{\omega, \omega'}\lambda + O(\lambda^{2})\;,
\end{equation}
with 
\begin{equation}
A_{\omega,\omega'} = \sum_{x_{2}, y_{2}} \sum_{\rho, \rho'}w_{\rho\rho'}(\underline{0}; x_{2}, y_{2}) \xi_{\rho}(k_{F}^{\omega}, x_{2}) \overline{\xi_{\rho}(k_{F}^{\omega}, x_{2})} \xi_{\rho'}(k_{F}^{\omega'}, y_{2}) \overline{\xi_{\rho'}(k_{F}^{\omega'}, y_{2})}\;;
\end{equation}
the higher-order terms also take into account the bulk and of the ultraviolet degrees of freedom. The function: 
\begin{eqnarray}
\beta_{h,\underline{\omega}} &=& \beta_{h,\underline{\omega}}\big( \{ Z_{h,\omega}, v_{k,\omega}, \nu_{k,\omega}, \lambda_{k,\underline{\omega}} \}_{k=h+1}^{0} \big)  \nonumber\\
&=& ( \beta^{z}_{h,\omega}, \beta^{v}_{h,\omega}, \beta^{\nu}_{h,\omega}, \beta^{\lambda}_{h,\omega} )
\end{eqnarray}
is called the {\it beta function} of the theory. As it is clear from (\ref{eq:betaflow0}), a precise control of this function is needed in order to safely iterate the equations in (\ref{eq:betaflow0}) for $h\to -\infty$ (zero temperature limit). In \cite{AMP}, control of this dynamical system was achieved in the case of a single-edge mode, spin degenerate (which implies in particular that the quartic interaction is nontrivial). In \cite{MPhelical}, instead, the same control has been obtained for a configuration of edge states corresponding to a pair of counterpropagating modes, with opposite spins (a setting relevant for time-reveral invariant systems).

In both \cite{AMP, MPhelical}, the control of the dynamical system was achieved comparing the beta function in (\ref{eq:betaflow}) with the beta function of a model that can be exactly solved using renormalization group. We write:
\begin{equation}
\beta_{h,\underline{\omega}} = \beta^{\text{ref}}_{h,\underline{\omega}} + O(\lambda_{\geq h}^{2} 2^{\theta h})\;,
\end{equation} 
where $\beta^{(\text{ref})}_{h,\underline{\omega}}$ is the beta function of a suitable {\it reference model}, which describes the scaling limit for the original lattice model. In \cite{AMP}, the reference model coincided with the chiral Luttinger model, while in \cite{MPhelical} it coincided with the helical Luttinger model. In both cases, the beta function of the reference model can be proved to be exponentially small in the scale label: this is what we call the (asymptotic) vanishing of the beta function.

In the present setting, due to the arbitrary number of edge modes, we have to generalize the notion of reference model. We shall consider the multi-channel Luttinger model, describing  arbitrary number of chiral relativistic fermions in $1+1$ dimensions, interacting via a density-density quartic interaction. For this class of models, we will prove the vanishing of the beta function property by extending previous results, \cite{BMWI, BMchiral, BFM2}.

The following result allows to control the iteration of the RG map for the lattice model, and in particular to prove that $|\lambda_{k,\omega,\omega'}| \leq C|\lambda|$ together with the bounds (\ref{eq:assZv}).
\begin{prop}[Bounds for the beta function.]\label{prp:flow} There exists choices of $\mu_{\omega} = \mu + O(\lambda)$ such that, for $|\lambda|$ small enough, the following is true. Let $0 < \theta < 1$, $2^{\theta h_{\beta}} \geq Ce^{-cL}$. Let $|\lambda_{k}| = \max_{\omega,\omega'} |\lambda_{k,\omega,\omega'}|$. Then, the following estimates hold:
\begin{equation}\label{eq:betaflow}
|\beta^{v}_{k,\omega}| \leq C_{\theta} |\lambda_{k}| 2^{\theta k}\;,\qquad |\beta^{\lambda}_{k,\omega}| \leq C_{\theta} |\lambda_{k}|^{2} 2^{\theta k}\;,\qquad |\nu_{k,\omega}| \leq C_{\theta} |\lambda_{k}| 2^{\theta k}\;.
\end{equation}
\end{prop}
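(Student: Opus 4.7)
The plan is to prove Proposition \ref{prp:flow} by comparing the beta function $\beta_{h,\underline{\omega}}$ with the beta function $\beta^{\text{ref}}_{h,\underline{\omega}}$ of a suitable multi-channel chiral reference model, and then relying on a \emph{vanishing of the beta function} statement for the reference model. First, I would decompose the full beta function as
\begin{equation}
\beta_{h,\underline{\omega}} = \beta^{\text{ref}}_{h,\underline{\omega}}\bigl(\{\lambda_{k,\underline{\omega}}, v_{k,\omega}\}_{k\geq h}\bigr) + R_{h,\underline{\omega}},
\end{equation}
where $R_{h,\underline{\omega}}$ collects all contributions involving: (i) the differences between the single-scale lattice propagator (\ref{eq:propEh}) and the linearized relativistic propagator of the reference model; (ii) irrelevant terms in the effective potential (higher derivatives and monomials with six or more fields not present in (\ref{eq:2+4}), (\ref{eq:quartloc})); (iii) kernels mixing edge modes localized on opposite sides of the cylinder, which by Remark \ref{rem:edges} are exponentially small in $L$. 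Each source of remainder carries an improvement factor $2^{\theta h}$ coming from dimensional gain: either the $r_{h,\omega}(k'_1)$ corrections of (\ref{eq:gassu}), or the explicit derivative in the irrelevant terms, or the cutoff separation between upper and lower edges. A standard tree-expansion/short-memory argument, as in the analogous steps of \cite{AMP, MPhelical}, then yields
\begin{equation}
|R_{h,\underline{\omega}}| \leq C_\theta\, |\lambda_{\geq h}|^2\, 2^{\theta h},\qquad |\lambda_{\geq h}|:=\max_{k\geq h}|\lambda_k|,
\end{equation}
provided the inductive bounds (\ref{eq:assZv}) and $|\lambda_k|\leq 2|\lambda|$ hold on all scales $k\geq h$.

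The main obstacle is establishing the corresponding bound for the reference contribution:
\begin{equation}
|\beta^{v,\text{ref}}_{h,\omega}|\leq C|\lambda_h|\,2^{\theta h},\qquad |\beta^{\lambda,\text{ref}}_{h,\underline{\omega}}|\leq C|\lambda_h|^2\,2^{\theta h}.
\end{equation}
The nontrivial statement is the quartic one: naive power counting only gives $|\beta^{\lambda,\text{ref}}_{h}|=O(\lambda^2)$, which would produce a flow diverging linearly in $h$. To gain the factor $2^{\theta h}$ I would extend the strategy of \cite{BMchiral, BMWI}: combine the anomalous Ward identities (\ref{eq:anom})--(\ref{eq:anvert}) for the chiral currents of the reference model, the Adler--Bardeen non-renormalization of the chiral anomaly, and the Schwinger--Dyson equation for the four-point function, to express the would-be marginal part of the four-fermion kernel as a sum of two contributions which cancel at leading order, leaving only irrelevant terms of order $\lambda^2 2^{\theta h}$. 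The novelty with respect to \cite{BMchiral, MPhelical} is that one has to handle a matrix of couplings $\lambda_{h,\omega,\omega'}$ rather than a single coupling; the key point is that by (\ref{eq:quartic}) the interaction is purely density-density in each channel pair, so the anomaly structure decouples channel by channel and the cancellation mechanism applies to each entry of the matrix $\beta^{\lambda,\text{ref}}_{h,\omega,\omega'}$ independently. This is deferred to Section \ref{sec:flow}.

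With the bounds on $\beta^v$ and $\beta^\lambda$ in hand, I would control the relevant direction $\nu_{h,\omega}$ by the standard fixed-point argument on the counterterms $\mu_\omega$. The third flow equation in (\ref{eq:betaflow0}) shows that $\nu_{h,\omega}$ is relevant with eigenvalue $2$, so the backward iteration
\begin{equation}
\nu_{h,\omega}=2\nu_{h+1,\omega}+2\beta^{\nu}_{h+1,\omega}
\end{equation}
is unstable. Viewing the map $\mu_\omega\mapsto\{\nu_{h,\omega}\}_{h\leq 0}$ as a contraction on the Banach space of sequences satisfying $|\nu_{h,\omega}|\leq C_\theta|\lambda_h|2^{\theta h}$, and noting that $\beta^\nu_{h,\omega}$ is itself of order $|\lambda_h|2^{\theta h}$ (again by comparison with the reference model, whose exact chiral symmetry forces the running of $\nu$ to vanish), one obtains, by a Banach fixed-point argument analogous to Section 9.3 of \cite{AMP}, a unique choice $\mu_\omega=\mu+O(\lambda)$ for which $|\nu_{h,\omega}|\leq C_\theta |\lambda_h|2^{\theta h}$ uniformly in $h\geq h_\beta$.

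The final step is a joint induction on the scale: assuming (\ref{eq:assZv}) and $|\lambda_k|\leq 2|\lambda|$ for all $k>h$, the bounds proved above imply $|\lambda_{h-1,\omega,\omega'}-\lambda_{h,\omega,\omega'}|\leq C|\lambda|^2 2^{\theta h}$ and $|v_{h-1,\omega}-v_{h,\omega}|\leq C|\lambda|2^{\theta h}$, whose summability in $h$ closes the induction for $|\lambda|$ small enough and gives convergent limits $\lambda_{-\infty,\omega,\omega'}$, $v_{-\infty,\omega}$; the wave function renormalization $Z_{h,\omega}$ diverges with a $\lambda$-dependent anomalous exponent controlled by the first line of (\ref{eq:betaflow0}). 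The constraint $2^{\theta h_\beta}\geq C e^{-cL}$ ensures that the upper/lower edge cross terms in $R_{h,\underline{\omega}}$, which are bounded by $C_N L^{-N}$, remain uniformly small throughout the iteration.
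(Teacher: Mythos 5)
Your proposal follows essentially the same route as the paper: Proposition \ref{prp:flow} is obtained by writing the lattice beta function as the reference-model beta function plus a remainder controlled through the Gallavotti--Nicol\`o tree expansion with its short-memory dimensional gain $2^{\theta k}$ (including the exponentially small terms mixing the two edges), invoking the vanishing of the beta function for the multi-channel reference model (Theorem \ref{thm:vanishing}, proved in Section \ref{sec:flow} via Schwinger--Dyson equations, anomalous Ward identities and anomaly non-renormalization), and fixing the counterterms $\mu_{\omega}$ by a fixed-point argument to control the relevant coupling $\nu_{k,\omega}$, which is exactly the paper's reduction to \cite{AMP} together with Theorem \ref{thm:vanishing}. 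The only caveat concerns your side remark that the cancellation mechanism ``decouples channel by channel'': in the actual proof of Theorem \ref{thm:vanishing} the Ward identities yield genuinely matrix-valued relations (through $\Lambda_{Z}$ and $T(\underline{p})$), and the entrywise smallness of $\beta^{\lambda}_{h,\omega,\omega'}$ emerges only after inverting these matrices rather than by an independent per-entry argument; since you defer that step to Section \ref{sec:flow}, this imprecision does not affect the argument for the present proposition.
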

\begin{rem} The first two estimates allow to control the flow of the effective velocity and of the effective couplings. The proof follows \cite{AMP}, and it will be reviewed in Section \ref{sec:compare}. It is based on the comparison of the beta function of the lattice model with the beta function of the reference model, which turns out to be asymptotically vanishing as $h\to -\infty$. This key property will be proven in Section \ref{sec:flow}, adapting the existing argument developed for the usual Luttinger model \cite{BMWI, BMchiral, BFM2}. Notice that this comparison only allows to control the flow of the marginal couplings: the control of the flow of the relevant term $\nu_{k,\omega}$ is achieved by suitably choosing $\mu_{\omega} = \mu + O(\lambda)$, as in Proposition 9.4 of \cite{AMP}, see Section 4.1 of \cite{BM} for more details.
\end{rem}

\section{Definition of the reference model}\label{sec:defref}

\begin{rem} From this section until Section \ref{sec:compare}, with a slight abuse of notation we will use the symbols $Z_{\omega}$, $v_{\omega}$, $\lambda_{\omega,\omega'}$ to denote the bare parameters of the reference model. Until Section \ref{sec:compare}, these quantities will be unrelated from the analogous parameters of the lattice model. Later, we will replace them by $Z^{\text{ref}}_{\omega}$, $v^{\text{ref}}_{\omega}$, $\lambda^{\text{ref}}_{\omega,\omega'}$, to distinguish them from the analogous quantities appearing in the analysis of the lattice model.
\end{rem}

This section is devoted to the construction of a one-dimensional quantum field theory model, that underlies the infrared scaling limit of the edge correlation functions of our class of $2d$ condensed matter systems. The definition is motivated by the analysis of Section \ref{sec:sketchRG}: it is given by the multi-channel Luttinger model in the presence of a momentum cut-off, to be removed at the end. Formally, we want to construct a QFT with the following partition function:
\begin{eqnarray}\label{eq:formal}
\mathcal{Z} &=& \int D\Psi\, e^{-\mathcal{S}(\Psi)} \nonumber\\
\mathcal{S}(\Psi) &=& \sum_{\omega} \int_{\mathbb{R}^{2}} d\underline{x}\, Z_{\omega}\psi^{+}_{\underline{x},\omega} (\partial_{0} + iv_{\omega} \partial_{1}) \psi^{-}_{\underline{x},\omega}\nonumber\\&& + \sum_{\omega,\omega'} \lambda_{\omega,\omega'} Z_{\omega} Z_{\omega'} \int_{\mathbb{R}^{2}\times \mathbb{R}^{2}} d\underline{x}d\underline{y}\, \psi^{+}_{\underline{x},\omega} \psi^{-}_{\underline{y},\omega} \psi^{+}_{\underline{y},\omega'} \psi^{-}_{\underline{x},\omega'} v(\underline{x} - \underline{y})\;,
\end{eqnarray}
for $v(\cdot)$ smooth and short-ranged, such that $\hat v(\underline{0}) = 1$. The parameters $Z_{\omega}, v_{\omega}, \lambda_{\omega,\omega'}$ have to be fine tuned, in order to guarantee that the above QFT correctly describes the scaling limit of the edge correlation functions of the lattice model. A nice feature of the above theory is that it is invariant under chiral $U(1)$ gauge transformations:
\begin{equation}\label{eq:chisym}
\psi^{\pm}_{\underline{x},\omega}\to e^{\pm i \alpha_{\omega}} \psi^{\pm}_{\underline{x},\omega}\;.
\end{equation}
This symmetry is {\it absent} in the original lattice model. As we will see, this symmetry will play an important role in establishing the vanishing of the beta function.

The problem with the above definition of QFT is that the expression in (\ref{eq:formal}) is meaningless; in order to correctly formulate the model in a statistical mechanics setting, the functional integral has to be defined as a limit of well-defined objects involving finitely many degrees of freedom. Technically, this means that one has to introduce infrared and ultraviolet cutoffs. The presence of these unavoidable regularizations will have important  consequences: they will affect the conservation of the chiral current, which at the classical level is guaranteed by (\ref{eq:chisym}). More precisely, the Ward identities of the theory will be {\it anomalous.}

\paragraph{Sets and cutoffs.} Given even integers $L>0$, $\frak{N}>0$, and defining $\frak{a} := L/\frak{N}$, we introduce the space-time lattice of side $L$ and mesh $\frak{a}$ as:
\begin{equation}
\Lambda_{L,\frak{a}} := \Big\{ \underline{x} = ( n_{1} \frak{a}, n_{2} \frak{a} )\, \mid \, 0\leq n_{i} \leq \frak{N}-1 \Big\}\;. 
\end{equation}
We will consider Grassmann fields defined on this finite lattice, and we will extended them antiperiodically to the whole set $\frak{a} \mathbb{Z}^{2}$. To this end, it is convenient to define the following set of momenta, needed in order to define the Fourier series for antiperiodic functions:
\begin{equation}
\mathbb{D}_{L,\frak{a}}^{\text{a}} := \Big\{ \underline{k} = \frac{2\pi}{L}\Big( m_{1} + \frac{1}{2}, m_{2} + \frac{1}{2} \Big) \, \big| \, 0 \leq m_{i} \leq \frak{N}-1 \Big\}\;.
\end{equation}
Notice that the sets $\Lambda_{L,\frak{a}}$ and $\mathbb{D}_{L,\frak{a}}^{\text{a}}$ have the same number of elements.

Let $\chi(\cdot)$ be a cutoff function as in Section \ref{sec:grass}. For technical reasons, it will be convenient to introduce a $\varepsilon$-deformation $\chi^{\varepsilon}$, as follows:
\begin{equation}
\chi^{\varepsilon}(t) = C_{\varepsilon} \int_{0}^{\infty} ds\, e^{-| t - s |^{2} / \varepsilon} \chi(s)\;,
\end{equation}
with $C_{\varepsilon}$ such that $C_{\varepsilon}  \int_{0}^{\infty} ds\,e^{-| t - s |^{2} / \varepsilon} = 1$. One has $\lim_{\varepsilon \to 0} \chi^{\varepsilon}(t) = \chi(t)$; also, for $\varepsilon >0$, the function $\chi^{\varepsilon}(t)$ is never vanishing. Next, given $v_{\omega} \in \mathbb{R}$, $\omega = 1, \ldots, n_{\text{e}}$, $v_{\omega} \neq 0$, we define the Euclidean norm:
\begin{equation}\label{eq:dist0}
| \underline{k} |_{\omega} := \sqrt{ k_{0}^{2} + v^{2}_{\omega} k_{1}^{2} }\;.
\end{equation}
The parameters $v_{\omega}$ will play the role of bare velocities for the reference model. We will consider functions on the finite set of momenta $\mathbb{D}_{L,\frak{a}}^{\text{a}}$, that will be extended periodically over the whole $\frac{2\pi}{L} (\mathbb{Z} + \frac{1}{2})^{2}$. Thus, it is convenient to introduce the following notion of distance on $\frac{2\pi}{L} (\mathbb{Z} + \frac{1}{2})^{2}$:
\begin{equation}
\| \underline{k} \|_{\omega} := \inf_{a_{1}, a_{2} \in \mathbb{Z}} | \underline{k} - a_{1} \underline{G}_{1} - a_{2} \underline{G}_{2} |_{\omega}\;,
\end{equation}
with $\underline{G}_{1}$, $\underline{G}_{2}$ the basis vectors of the reciprocal lattice $(\frak{a} \mathbb{Z}^{2})^{*}$:
\begin{equation}
\underline{G}_{1} = \frac{2\pi}{\frak{a}}(1, 0)\;,\qquad \underline{G}_{2} = \frac{2\pi}{\frak{a}}(0, 1)\;.
\end{equation}
Finally, given two integers $h < 0$, $N > 0$, we define the cutoff function:
\begin{equation}\label{eq:cutoffe}
\chi^{\omega, \varepsilon}_{[h, N]}(\underline{k}) :=  ( 1 - \chi^{\varepsilon}(2^{-h} \| \underline{k} \|_{\omega}) ) \chi^{\varepsilon}(2^{-N} \| \underline{k} \|_{\omega})\;.
\end{equation}
As $\varepsilon \to 0$, $\chi^{\omega, \varepsilon}_{[h, N]}(\underline{k}) \to \chi^{\omega}_{[h, N]}(\underline{k})$, a function supported on $\underline{k}$ such that $2^{h-1} \leq \| \underline{k} \|_{\omega} \leq 2^{N+1}$. The function in (\ref{eq:cutoffe}) introduces an infrared and an ultraviolet cutoff.

\paragraph{Fields and propagators.} We define a $1+1$ dimensional Grassmann field starting from its (finitely many) Fourier components. In the following, we are interested in defining a suitable lattice version of the Grassmann field with relativistic propagator $Z_{\omega}^{-1}\chi^{\omega, \varepsilon}_{[h, N]}(\underline{k}) / (-ik_{0} + v_{\omega} k_{1})$. 

To each $\underline{k} \in \mathbb{D}_{L,\frak{a}}^{\text{a}}$, we associate Grassmann variables $\psi^{\pm}_{\underline{k}, \omega}$, $\omega = 1, \ldots, n_{\text{e}}$. We extend periodically the field $\hat \psi_{\underline{k}, \omega}$ to the whole $\frac{2\pi}{L} ( \mathbb{Z} + 1/2 )^{2}$ by:
\begin{equation}
\hat \psi^{\pm}_{\underline{k} + a_{1}\underline{G}_{1} + a_{2} \underline{G}_{2}, \omega} := \hat\psi^{\pm}_{\underline{k}, \omega}\;.
\end{equation}
For $\underline{x} \in \frak{a} \mathbb{Z}^{2}$, we define the configuration-space field, as:
\begin{equation}\label{eq:confpsi}
\psi^{\pm}_{\underline{x}, \omega} := \frac{1}{L^{2}} \sum_{\underline{k} \in \mathbb{D}^{\text{a}}_{L,a}} e^{\pm ik\cdot x} \hat \psi^{\pm}_{\underline{k}, \omega}\;;
\end{equation}
the configuration-space field satisfies the following antiperiodicity condition:
\begin{equation}
\psi^{\pm}_{\underline{x} + a_{1} L \underline{e}_{1} + a_{2} L \underline{e}_{2}, \omega} := (-1)^{a_{1} + a_{2}} \psi^{\pm}_{\underline{x}, \omega}\;.
\end{equation}
Finally, using that:
\begin{equation}
\sum_{\underline{x} \in \Lambda_{L,\frak{a}}} e^{i\underline{k}\cdot \underline{x}} = \sum_{\underline{J} \in (\frak{a} \mathbb{Z}^{2})^{*}} \delta_{\underline{k} + \underline{J}, \underline{0}} \frak{N}^{2}\;,
\end{equation}
with $\delta_{\underline{k}, \underline{0}}$ the Kronecker delta function, the relation in (\ref{eq:confpsi}) can be inverted as:
\begin{equation}
\hat \psi^{\pm}_{\underline{k}, \omega} = \frak{a}^{2} \sum_{\underline{x} \in \Lambda_{L, \frak{a}}} e^{\mp i k\cdot x} \psi^{\pm}_{\underline{x},\omega}\;,\qquad \forall \underline{k} \in \mathbb{D}_{L, \frak{a}}^{\text{a}}\;.
\end{equation}
Associated to the Grassmann field, we define the regularized, momentum space propagator as:
\begin{eqnarray}\label{eq:latchi}
\hat g^{[h, N]}_{\omega, \e, \frak{a}}(\underline{k}) &:=& \frac{1}{Z_{\omega}}\frac{\chi^{\omega, \e}_{[h,N]}(\underline{k})}{\frak{D}_{\omega, \frak{a}}(\underline{k})}\;,\qquad \forall \underline{k} \in \mathbb{D}^{\text{a}}_{L, \frak{a}} \nonumber\\
\frak{D}_{\omega, \frak{a}}(\underline{k}) &:=& -i \frac{1}{\frak{a}} \sin(\frak{a} k_{0}) + v_{\omega} \frac{1}{\frak{a}} \sin(\frak{a} k_{1})\;,
\end{eqnarray}
for suitable real parameters $Z_{\omega}$, $v_{\omega}$, $\omega = 1,\ldots, n_{\text{e}}$, to be chosen later on. Notice that:
\begin{equation}\label{eq:perio}
\hat g^{[h, N]}_{\omega, \e, \frak{a}}(\underline{k}) = \hat g^{[h, N]}_{\omega, \e, \frak{a}}(\underline{k} + a_{1} \underline{G}_{1} + a_{2} \underline{G}_{2})\;.
\end{equation}
Thanks to this periodicity property, we extend $\hat g^{[h, N]}_{\omega, \e, \frak{a}}(\underline{k})$ to the whole $\frac{2\pi}{L} ( \mathbb{Z} + 1/2 )^{2}$.

The denominator of $\hat g^{[h, N]}_{\omega, \e, \frak{a}}(\underline{k})$ vanishes at points $\underline{k}$ such that:
\begin{equation}\label{eq:singpts}
\underline{k} = (0, 0)\;,\quad (\pi/ \frak{a}, 0)\;,\quad (0, \pi/\frak{a})\;,\quad (\pi/ \frak{a}, \pi/\frak{a})
\end{equation}
and all translations of these points by $a_{1} \underline{G}_{1} + a_{2} \underline{G}_{2}$. It is important to notice that the points in (\ref{eq:singpts}) do not belong\footnote{The number $0$ cannot be realized as $m + 1/2$ with $m$ integer. Also, the number $\pi/\frak{a}$ cannot be realized as $\frac{\pi (2m + 1)}{L}$; this is equivalent to $\frak{N} = 2m + 1$, which is impossible since $\frak{N}$ is even by assumption.} to $\mathbb{D}_{L,\frak{a}}^{\text{a}}$: thus, for $\frak{a} > 0$, the propagator is a bounded function. We see that, as $\frak{a}\to 0$, the lattice propagator in (\ref{eq:latchi}) correctly approximates the relativistic propagator we are interested in. However, there are three other singular points in this limit, which are not present in the relativistic approximation; notice however that for $\frak{a}$ small enough the last three points in (\ref{eq:singpts}) are outside the support of $\chi^{\omega}_{[h,N]}(\underline{k})$. More precisely, for $\frak{a}$, $\varepsilon$ small enough, the contribution from these points to the lattice propagator is arbitrarily small. That is, in the limit $\varepsilon \to 0$, $\frak{a}\to 0$, the singular behavior of the momentum space propagator is only due to the point $(0,0)$:
%
\begin{eqnarray}
\lim_{\varepsilon \to 0}\lim_{\frak{a}\to 0}\hat g^{[h, N]}_{\omega, \e, \frak{a}}(\underline{k}) &=& \frac{1}{Z_{\omega}}\frac{\chi^{\omega}_{[h,N]}(\underline{k})}{-ik_{0} + v_{\omega} k_{1}} \nonumber\\
&=:& \hat g^{[h, N]}_{\omega}(\underline{k})\;.
\end{eqnarray}
Next, we then define the configuration-space propagator as:
\begin{equation}\label{eq:realg}
g^{[h,N]}_{\omega, \e, \frak{a}}(\underline{x}) := \frac{1}{L^{2}} \sum_{\underline{k} \in \mathbb{D}^{\text{a}}_{L, \frak{a}}} e^{ik\cdot x} \hat g^{[h,N]}_{\omega, \e, \frak{a}}(\underline{k})\;,\qquad \forall \underline{x} \in \frak{a} \mathbb{Z}^{2}\;.
\end{equation}

\paragraph{Generating functional of correlations.} Having defined fields and propagators, we introduce the Grassmann Gaussian integration as follows:
\begin{eqnarray}\label{eq:Cpsi}
\mu_{[h,N]}(d\psi) &:=& \mathcal{N}^{-1} \Big[ \prod_{\omega = 1}^{n_{\text{e}}} \prod_{\underline{k} \in \mathbb{D}^{\text{a}}_{L, \frak{a}}} d\hat\psi_{\underline{k},\omega}^{+} d\hat\psi^{-}_{\underline{k},\omega}\Big] e^{-C(\psi)} \nonumber\\
C(\psi) &:=& \frac{1}{L^{2}} \sum_{\omega}\sum_{\underline{k} \in \mathbb{D}^{\text{a}}_{L, \frak{a}}} \hat \psi^{+}_{\underline{k}, \omega} g^{[h,N]}_{\omega, \varepsilon, \frak{a}}(\underline{k})^{-1} \hat \psi^{-}_{\underline{k},\omega}\;,
\end{eqnarray}
where the parameter $\mathcal{N}$ is a normalization factor, chosen so that:
\begin{equation}
\int \mu_{[h,N]}(d\psi) = 1\;.
\end{equation}
Thanks to $\chi^{\omega, \e}_{[h,N]}(\underline{k}) > 0$, the covariance $C(\psi)$ is well defined. From the rules of Grassmann integration, it is easy to see that:
\begin{equation}
\int \mu_{[h,N]}(d\psi)\, \hat \psi^{-}_{\underline{k}, \omega} \hat \psi^{+}_{\underline{q}, \omega'} = L^{2} \delta_{\underline{k},\underline{q}} \delta_{\omega, \omega'} \hat g^{[h,N]}_{\omega, \e, \frak{a}}(\underline{k})\;.
\end{equation}
This implies that, by linearity of the Gaussian expectation:
\begin{equation}
\int \mu_{[h,N]}(d\psi)\, \psi^{-}_{\underline{x}, \omega} \psi^{+}_{\underline{y}, \omega'} = \delta_{\omega, \omega'} g_{\omega, \e, \frak{a}}^{[h,N]}(\underline{x} - \underline{y})\;.
\end{equation}
Next, we introduce the many-body interaction as:
\begin{equation}\label{eq:Vdef}
V(\psi) := \frac{\frak{a}^{4}}{2} \sum_{\underline{x}, \underline{y} \in \Lambda_{L, \frak{a}}} \sum_{\omega, \omega'} \lambda_{\omega,\omega'}Z_{\omega} Z_{\omega'} n_{\underline{x}, \omega} n_{\underline{y},\omega'} v(\underline{x} - \underline{y})
\end{equation}
where: $\lambda_{\omega, \omega'}\in \mathbb{R}$, $\lambda_{\omega, \omega'} = \lambda_{\omega',\omega}$, $\lambda_{\omega,\omega} = 0$; $v(\cdot)$ satisfies the periodicity condition:
\begin{equation}\label{eq:periov}
v(\underline{x} + a_{1} L \underline{e}_{1} + a_{2} L \underline{e}_{2}) = v(\underline{x})\;;
\end{equation}
and $n_{\underline{x},\omega}$ is the Grassmann counterpart of the density operator, $n_{\underline{x},\omega} = \psi^{+}_{\underline{x},\omega} \psi^{-}_{\underline{x},\omega}$. We will suppose that
\begin{equation}
|\lambda_{\omega, \omega'}| \leq C|\lambda|\;,
\end{equation}
for $\lambda$ small enough. Without loss of generality, we will also assume that $\hat v(0) = 1$. Let $\mathbb{D}^{\text{p}}_{L}$ be the set of momenta compatible with the periodicity condition (\ref{eq:periov}):
\begin{equation}
\mathbb{D}_{L}^{\text{p}} := \Big\{ \underline{p} = \Big(\frac{2\pi}{L} n_{0}, \frac{2\pi}{L} n_{1}\Big)\; \Big|\; n_{i} \in \mathbb{Z} \Big\}\;.
\end{equation}
We have:
\begin{equation}
v(\underline{x}) = \frac{1}{L^{2}} \sum_{\underline{p} \in \mathbb{D}_{L}^{\text{p}}} e^{-i\underline{p} \cdot \underline{x}} \hat v(\underline{p})\;,\qquad \hat v(\underline{p}) = \frak{a}^{2}\sum_{x\in \Lambda_{L}} e^{i\underline{p} \cdot x} v(\underline{x})\;.
\end{equation}
We will suppose that the function $\hat v(\cdot)$ is the restriction to $\mathbb{D}_{L}^{\text{p}}$ of an even, smooth and compactly supported function on $\mathbb{R}^{2}$. Thus, denoting by $\| \cdot \|$ the periodic distance on $\frak{a} \mathbb{Z}^{2}$:
\begin{equation}
\| \underline{x} \| := \inf_{a_{1}, a_{2} \in \mathbb{Z}} | \underline{x} - a_{1} L \underline{e}_{1} - a_{2} L \underline{e}_{2} |\;,
\end{equation}
we easily get:
\begin{equation}
\| \underline{x} \|^{n} |v(\underline{x})| \leq C_{n}\;, \qquad \forall n\in \mathbb{N}\;.
\end{equation}
In Fourier space, the many-body interaction reads:
\begin{equation}
V(\psi) = \frac{1}{2 L^{2}} \sum_{\underline{p} \in \mathbb{D}_{L}^{\text{p}}} \sum_{\omega, \omega'} \lambda_{\omega, \omega'} Z_{\omega} Z_{\omega'} \hat v(\underline{p}) \hat n_{\underline{p},\omega} \hat n_{-\underline{p}, \omega'}\;,
\end{equation}
with momentum-space Grassmann density:
\begin{equation}
\hat n_{\underline{p}, \omega} = \frak{a}^{2} \sum_{\underline{x}\in \Lambda_{L, \frak{a}}} e^{i\underline{p}\cdot \underline{x}} n_{\underline{x}, \omega} = \frac{1}{L^{2}} \sum_{\underline{k} \in \mathbb{D}^{\text{a}}_{L, \frak{a}}} \hat \psi^{+}_{\underline{k} - \underline{p}, \omega} \hat \psi^{-}_{\underline{k}, \omega}\;.
\end{equation}
With these notations, we define the partition function of the theory as:
\begin{equation}\label{eq:part}
\mathcal{Z}_{L, \frak{a}, \e, h, N} = \int \mu_{[h,N]}(d\psi)\, e^{-V(\psi)}\;.
\end{equation}
As long as all cutoffs are kept finite, the right-hand side of (\ref{eq:part}) is a finite sum of Grassmann monomials, with bounded coefficients. Next, let us introduce the generating functional of correlations. To this end, we define the source terms as:
\begin{eqnarray}
\Gamma(\psi; A) &:=& \sum_{\omega = 1}^{n_{\text{e}}} \sum_{\mu = 0,1} \sum_{x_{2} = 0}^{L-1} \frak{a}^{2} \sum_{\underline{x}\in \Lambda_{L, \frak{a}}} Z_{\mu,\omega}(x_{2}) A_{\underline{x}, \mu, \omega}(x_{2}) n_{\underline{x},\omega}\\
B(\psi; \phi) &:=& \sum_{\omega = 1}^{n_{\text{e}}} \sum_{x_{2} = 0}^{L-1} \frak{a}^{2} \sum_{\underline{x} \in \Lambda_{L, \frak{a}}} \Big[ \psi^{+}_{\underline{x},\omega} Q_{\omega}(x_{2}) \phi^{-}_{\underline{x},\omega}(x_{2}) + \phi^{+}_{\underline{x},\omega}(x_{2}) \overline{Q_{\omega}(x_{2})} \psi^{-}_{\underline{x},\omega}\Big]\;,\nonumber
\end{eqnarray}
where $A_{\underline{x},\mu,\omega}(x_{2})$, $\phi^{\pm}_{\underline{x}, \omega}$ are respectively a real-valued and a Grassmann external field, of the form: 
\begin{equation}
A_{\underline{x},\mu,\omega}(x_{2}) = \frac{1}{L^2} \sum_{\underline{p} \in \mathbb{D}_{L}^{\text{p}}} e^{-i\underline{p}\cdot \underline{x}} \hat A_{\underline{p}, \mu, \omega}(x_{2})\;,\qquad \phi^{\pm}_{\underline{x}, \omega}(x_{2}) = \frac{1}{L^2} \sum_{\underline{k} \in \mathbb{D}^{\text{a}}_{L, \frak{a}}} e^{\pm ik\cdot x} \hat \phi^{\pm}_{\underline{k}, \omega}(x_{2})\;.
\end{equation}
Equivalently, the source terms can be written in momentum space as:
\begin{eqnarray}\label{eq:sourceFourier}
\Gamma(\psi; A) &=& \sum_{\omega = 1}^{n_{\text{e}}} \sum_{\mu = 0,1} \sum_{x_{2} = 0}^{L-1} \frac{1}{L^{2}} \sum_{\underline{p}\in \mathbb{D}^{\text{p}}_{L}} Z_{\mu,\omega}(x_{2}) \hat A_{-\underline{p}, \mu, \omega}(x_{2}) \hat n_{\underline{p},\omega} \\
B(\psi; \phi) &=& \sum_{\omega = 1}^{n_{\text{e}}} \sum_{x_{2} = 0}^{L-1} \frac{1}{L^{2}} \sum_{\underline{k} \in \mathbb{D}^{\text{a}}_{L, \frak{a}}} \Big[ \hat \psi^{+}_{\underline{k},\omega} Q_{\omega}(x_{2}) \hat \phi^{-}_{\underline{k},\omega}(x_{2}) + \hat \phi^{+}_{\underline{k},\omega}(x_{2}) \overline{Q_{\omega}(x_{2})} \hat \psi^{-}_{\underline{k},\omega}\Big]\;.\nonumber
\end{eqnarray}
Finally, we set:
\begin{eqnarray}
\mathcal{Z}_{L, \frak{a}, \e, h, N}(A, \phi) &:=& \int \mu_{[h,N]}(d\psi)\, e^{-V(\psi) + \Gamma(\psi; A) + B(\psi; \phi)} \nonumber\\
\mathcal{W}_{L, \frak{a}, \e, h, N}(A, \phi) &:=&  \log \frac{\mathcal{Z}_{L, \frak{a}, \e, h, N}(A, \phi)}{\mathcal{Z}_{L, \frak{a}, \e, h, N}(0, 0)}\;.
\end{eqnarray}
The derivatives with respect to $A$ and to $\phi^{\pm}$ of $\mathcal{W}_{L, \frak{a}, \e, h, N}$ define the connected correlation functions of the model. We shall be interested in the $L\to \infty$, $\e\to 0$, $\frak{a} \to 0$ limit of such correlations.

%
%

%

\section{Renormalization group analysis of the reference model}\label{sec:RG}

As for the original lattice model, the generating functional of correlations of the reference model is evaluated via a multiscale procedure, which we shall outline here. The content of this section is a straightforward adaptation of the analysis that has already been done for the Luttinger model; see {\it e.g.} \cite{BFM, BFM3}. We repeat it for completeness, referring to the literature for technical details. We will consider separately the ultraviolet and the infrared regime.

\subsection{Ultraviolet regime}

We write:
\begin{equation}
\chi^{\omega,\e}_{[h,N]}(\underline{k}) = \chi^{\omega,\e}_{[h,0]}(\underline{k}) + \sum_{j = 1}^{N} f_{j}^{\omega, \e}(\underline{k})\;,
\end{equation}
with $f_{j}^{\omega, \e}(\underline{k}) = \chi^{\omega,\e}_{[h,j]}(\underline{k}) - \chi^{\omega,\e}_{[h,j-1]}(\underline{k})$, and we decompose the propagator and the fields as:
\begin{equation}
\hat g^{[h,N]}_{\omega, \e, \frak{a}}(\underline{k}) = \hat g^{[h, 0]}_{\omega, \e, \frak{a}}(\underline{k}) + \sum_{j = 1}^{N} \hat g^{(j)}_{\omega, \e, \frak{a}}(\underline{k})\;,\qquad \hat \psi^{\pm}_{\underline{k},\omega} = \hat\psi^{(\leq 0)\pm}_{\underline{k},\omega} + \sum_{j=1}^{N} \hat\psi^{(j)\pm}_{\underline{k},\omega}\;,
\end{equation}
with $\hat g^{[h, 0]}_{\omega, \e, \frak{a}}(\underline{k})$, $\hat g^{(j)}_{\omega, \e, \frak{a}}(\underline{k})$ given by $\hat g^{[h,N]}_{\omega, \e, \frak{a}}(\underline{k})$ after replacing $\chi^{\omega,\e}_{[h,N]}(\underline{k})$ by $\chi^{\omega,\e}_{[h,0]}(\underline{k})$, $f_{j}^{\omega, \e}(\underline{k})$, respectively. In configuration space, the single-scale propagators satisfy the following bounds, for universal constants $C_{n}>0$, and for all $j\geq 1$:
\begin{equation}
(2^{j}\| x \|)^{n} | g^{(j)}_{\omega, \e, \frak{a}}(x) | \leq C_{n} 2^{j}\;,\qquad \forall n\in \mathbb{N}\;,\quad \forall \underline{x} \in \frak{a} \mathbb{Z}^{2}\;.
\end{equation}
From this bound, we easily get:
\begin{equation}\label{eq:singscaleuv}
\| g^{(j)}_{\omega, \e, \frak{a}} \|_{\ell^\infty(\Lambda_{L, \frak{a}})} \leq C2^{j}\;,\qquad \| g^{(j)}_{\omega, \e, \frak{a}} \|_{\ell^{1}(\Lambda_{L, \frak{a}})} \leq C 2^{-j}\;.
\end{equation}
By the addition principle of Grassmann variables,
\begin{equation}
\mathcal{Z}_{L, \frak{a}, \e, h, N}(A, \phi) = \int \mu_{[h,0]}(d\psi^{(\leq 0)}) \int \mu_{1}(d\psi^{(1)}) \cdots \int \mu_{N}(d\psi^{(N)}) e^{V (\psi^{(\leq 0)} + \psi^{(1)} + \ldots + \psi^{(N)}; A, \phi)}\;,
\end{equation}
where:
\begin{equation}
V (\psi; A, \phi) = -V(\psi) + \Gamma(\psi; A) + B(\psi; \phi)\;.
\end{equation}
We integrate the single-scale fields in an iterative fashion, from the scale $N$ until the scale $0$. Every step of integration is performed via fermionic cluster expansion, see {\it e.g.} \cite{GM} for a review. The convergence of the expansion is guaranteed by the bounds (\ref{eq:singscaleuv}). We get, for any $j\geq 0$:
\begin{equation}
\mathcal{Z}_{L, \frak{a}, \e, h, N}(A, \phi) = e^{\mathcal{W}^{(j)}(A,\phi)} \int \mu_{[h,j]}(d\psi)  e^{V^{(j)}(\psi; A, \phi)}\;,
\end{equation}
where the prefactor is chosen so that $V^{(j)}(0; A, \phi) = 0$. The effective potential on scale $j$ has the form, for suitable kernels $W^{(j)}_{\Gamma}$:
\begin{equation}\label{eq:N-1}
V^{(j)}(\psi; A,\phi)  = \sum_{\Gamma} \frak{a}^{2|\Gamma|} \sum_{\underline{X}, \underline{Y}, \underline{Z}}  \psi_{\Gamma}(\underline{X}) A_{\Gamma}(\underline{Y}) \phi_{\Gamma}(\underline{Z})\, W^{(j)}_{\Gamma}(\underline{X}, \underline{Y}, \underline{Z})\;,
\end{equation}
where: $\underline{X}, \underline{Y}, \underline{Z}$ denote collective position variables, e.g. $\underline{X} = (\underline{x}_{1}, \ldots, \underline{x}_{n})$, with $n$ the order of the $\psi$-monomial; $\Gamma = (\Gamma_{\psi}, \Gamma_{A}, \Gamma_{\phi})$ collects the indices labelling the monomials in the variables $\psi$, $A$, $\phi$, respectively. Also,
\begin{equation}
\psi_{\Gamma}(\underline{X})  \equiv \psi_{\Gamma_{\psi}}(\underline{X})\;,\qquad  A_{\Gamma}(\underline{Y}) \equiv A_{\Gamma_{A}}(\underline{Y})\;,\qquad \phi_{\Gamma}(\underline{Z})\equiv \phi_{\Gamma_{\phi}}(\underline{Z})
\end{equation}
with:
\begin{equation}
\psi_{\Gamma_{\psi}}(\underline{X}) = \prod_{f\in \Gamma_{\psi}} \psi^{\varepsilon(f)}_{\omega(f), \underline{x}(f)}\;,\; A_{\Gamma_{A}}(\underline{Y}) = \prod_{f\in \Gamma_{A}} A_{\mu(f), \underline{y}(f)}(x_{2}(f))\;,\; \phi_{\Gamma_{\phi}}(\underline{Z}) = \prod_{f\in \Gamma_{\phi}} \phi^{\varepsilon(f)}_{\omega(f), \underline{z}(f)}(z_{2}(f))\;.
\end{equation}
The generating functional on scale $j$, $\mathcal{W}^{(j)}(A,\phi)$, has a form similar to (\ref{eq:N-1}), but with $\psi \equiv 0$. Let us define the weighted norms:
\begin{equation}
\| W_{\Gamma}^{(j)} \|_{1,k} := \frac{1}{|\Lambda_{L,\frak{a}}|} \frak{a}^{2|\Gamma|} \sum_{\underline{X}, \underline{Y}, \underline{Z}} \big | W_{\Gamma}^{(j)}(\underline{X}, \underline{Y}, \underline{Z}) \big| w_{k}(\underline{X}, \underline{Y}, \underline{Z})\;,
\end{equation}
where, setting $(\underline{X}, \underline{Y}, \underline{Z}) \equiv \underline{Q} = (\underline{q}_{1}, \ldots, \underline{q}_{|\Gamma|})$:
\begin{equation}
w_{k}(\underline{X}, \underline{Y}, \underline{Z}) = \sum^{*}_{\{m_{ij}\}}\prod_{i<j} \| \underline{q}_{i} - \underline{q}_{j} \|^{m_{ij}}\;;
\end{equation}
the sum is over all natural numbers $m_{ij}$, labelled by $1\leq i,j\leq |\Gamma|$, and the asterisk denotes the constraint $\sum_{ij} m_{ij} = k$. The kernels can be estimated inductively. On scale $N-1$, the bounds (\ref{eq:singscaleuv}) allow to show that:
\begin{equation}\label{eq:estN-1}
\| W_{\Gamma}^{(N-1)} \|_{1,k} \leq C_{\Gamma, k} 2^{-k(N-1)}2^{(N-1) D(\Gamma)} \Big[\prod_{f\in \Gamma_{A}} |Z_{\mu(f),\omega(f)}(y_{2}(f))|\Big]\Big[ \prod_{f\in \Gamma_{\phi}} |Q^{\varepsilon(f)}_{\omega(f)} (z_{2}(f))|\Big]\;,
\end{equation}
where $D(\Gamma)$ is the {\it scaling dimension:}
\begin{equation}
D(\Gamma) := 2 - \frac{|\Gamma_{\psi}|}{2} - |\Gamma_{A}| - \frac{3 |\Gamma_{\phi}|}{2}\;.
\end{equation}
It turns out that the above estimate cannot be reproduced on all scales $j\geq 0$. The next proposition provides a more precise bound, which can be iterated on all scales $j\geq 0$. It has been proven in \cite{BFM}, Theorem 3.1, for the case of fermions with two chiralities, see also \cite{MaQED}, Lemma 2, for QED in $1+1$ dimensions. The proof can be immediately adapted to our setting, and it will be omitted.
\begin{prop}[Improved ultraviolet bounds.]\label{prp:gainUV} For $|\lambda_{\omega,\omega'}| \leq |\lambda|$ with $|\lambda|$ small enough uniformly in all cutoff parameters, the following is true. Let $\Gamma$ be such that $D(\Gamma)\geq 0$ and $\Gamma_{\psi} \neq \emptyset$. For all $j\in [0,N]$, there exists $C>0$ such that:
\begin{eqnarray}\label{eq:bduv}
\| W_{\Gamma}^{(j)} - W_{\Gamma}^{(N)} \|_{1,k} \leq C |\lambda| 2^{-kj}2^{j (D(\Gamma) - \theta(\Gamma))} \Big[\prod_{f\in \Gamma_{A}} |Z_{\mu(f),\omega(f)}(y_{2}(f))|\Big]\Big[ \prod_{f\in \Gamma_{\phi}} |Q^{\varepsilon(f)}_{\omega(f)} (z_{2}(f))|\Big]
\end{eqnarray}
where: $\theta(\Gamma) = 2$ if $\Gamma = \Gamma_{\psi}$ and $|\Gamma_{\psi}| = 2$; $\theta(\Gamma) = 1$ if $|\Gamma_{A}| = 1$, $|\Gamma_{\psi}| = 2$; $\theta(\Gamma) = 1$ if $\Gamma = \Gamma_{\psi}$, $|\Gamma_{\psi}| = 4$. For all the other kernels, the estimate (\ref{eq:estN-1}) with $N-1$ replaced by $j$ holds true. 
\end{prop}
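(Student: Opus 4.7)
The plan is to proceed by downward induction on the scale index $j$, adapting the strategy of Theorem 3.1 of \cite{BFM} (see also Lemma 2 of \cite{MaQED}) to the multi-channel setting. The base case $j = N$ is trivial, since the difference $W^{(N)}_\Gamma - W^{(N)}_\Gamma$ vanishes. For the inductive step $j+1 \to j$, one expands
\begin{equation}
V^{(j)}(\psi;A,\phi) = \sum_{n\geq 1}\frac{(-1)^{n+1}}{n!}\,\mathcal{E}^{T}_{j+1}\bigl(V^{(j+1)}(\psi + \psi^{(j+1)};A,\phi);n\bigr),
\end{equation}
with $\mathcal{E}^{T}_{j+1}$ the truncated expectation with respect to $\mu_{j+1}$, so that the kernels on scale $j$ are written as connected truncated expectations built out of the kernels on scales $j' > j$. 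Bounding these expectations via the Brydges-Battle-Federbush formula and combining the single-scale estimates (\ref{eq:singscaleuv}) with the inductive hypothesis produces the naive dimensional bound, that is (\ref{eq:bduv}) with $D(\Gamma)$ in place of $D(\Gamma) - \theta(\Gamma)$. This already covers every kernel with $\theta(\Gamma) = 0$, including all kernels with $\Gamma_\psi = \emptyset$ or $D(\Gamma) < 0$.

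The crux of the argument is the extraction of the extra factor $2^{-\theta(\Gamma) j}$ for the three dangerous kernel types. The key input is a parity property of the single-scale propagator: since $\chi^{\omega,\varepsilon}_{[h,k]}(\underline{k})$ is even while $\frak{D}_{\omega,\frak{a}}(\underline{k})^{-1}$ is odd in $\underline{k}$, one has $\frak{a}^{2}\sum_{\underline{x}\in\Lambda_{L,\frak{a}}} g^{(k)}_{\omega,\varepsilon,\frak{a}}(\underline{x}) = \hat g^{(k)}_{\omega,\varepsilon,\frak{a}}(\underline{0}) = 0$ for every $k \in [0,N]$. Combined with the absence of same-chirality couplings in the bare interaction (\ref{eq:Vdef}) --- the condition $\lambda_{\omega,\omega} = 0$, which ultimately traces back to Assumption \ref{assu}.d --- this forces every would-be tadpole contribution to the two-point and vertex kernels, as well as every same-chirality bubble contribution to the four-point kernel, to vanish at coinciding external space-time arguments. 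A short-distance Taylor expansion of the dangerous subkernel around this coincident point then converts the vanishing into a quantitative gain: each expansion order costs one power of external displacement divided by the internal scale $2^{-k}$, producing $2^{-j}$ after resummation over $k > j$. One such step suffices for the four-point and vertex kernels, giving $\theta = 1$; for the two-point kernel a second parity argument, based on the evenness of $\hat v(\underline{p})$ combined with the chiral structure of the propagator, kills the first-derivative term as well, enabling a second expansion step and hence $\theta = 2$.

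The main technical obstacle is the uniform bookkeeping of these gains within the Gallavotti-Nicol\`o tree expansion: one must allocate the gain factor $2^{-\theta(\Gamma_{v}) j_{v}}$ to every internal cluster $v$ of the tree corresponding to a dangerous vertex type, and verify that the resulting combinatorial factors remain controlled by a convergent geometric series in $|\lambda|$. The argument of \cite{BFM}, developed there for the two-channel Luttinger model, extends to our setting with only combinatorial modifications, since the increased number of chirality labels enters only through constants that are uniformly bounded thanks to Assumption \ref{assu}.d (which keeps the set of allowed chirality configurations at each vertex finite and scale-independent). Since the external-field renormalizations $Z_{\mu,\omega}(x_{2})$ and $Q_{\omega}(x_{2})$ never take part in the chiral cancellations, their multiplicative structure factors out cleanly, yielding the estimate (\ref{eq:bduv}) in the stated form.
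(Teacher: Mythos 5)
Your overall route --- downward induction on the ultraviolet scale, Brydges--Battle--Federbush bounds giving the naive dimensional estimate, then extraction of a gain $2^{-\theta(\Gamma)j}$ for the three dangerous kernel classes, following \cite{BFM} (Theorem 3.1) and \cite{MaQED} (Lemma 2) --- is indeed the one the paper relies on (it simply cites those references). But the cancellation you invoke to generate the gain is not the right one, and this is a genuine gap. You attribute the improvement to a parity/tadpole property of the propagator together with $\lambda_{\omega,\omega}=0$, claiming that this makes ``every same-chirality bubble contribution to the four-point kernel'' vanish. It does not: even with $\lambda_{\omega,\omega}=0$ (which, incidentally, is part of the definition of the reference model in (\ref{eq:Vdef}) and has nothing to do with Assumption \ref{assu}.d), second-order contributions proportional to $\lambda_{\omega,\tilde\omega}\lambda_{\tilde\omega,\omega'}$ contain the same-chirality particle--hole bubble $\int d\underline{k}\,\hat g_{\tilde\omega}(\underline{k})\hat g_{\tilde\omega}(\underline{k}+\underline{p})$, whose value at $\underline{p}=\underline{0}$ is $\pm\int d\underline{k}\,\hat g_{\tilde\omega}(\underline{k})^{2}$; this integrand is \emph{even} under $\underline{k}\to-\underline{k}$, so parity gives nothing. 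Its vanishing in the continuum limit rests on the rotational covariance of the relativistic propagator: under $\underline{k}\mapsto R_{\alpha}\underline{k}$ the factor $(-ik_{0}+k_{1})^{-2}$ acquires the phase $e^{-2i\alpha}$ while the cutoff is invariant, forcing the integral to be zero --- precisely the computation the paper isolates in the Remark following the proposition as ``the key ingredient for the dimensional improvement''. Without this input your inductive step fails exactly for the marginal quartic kernel, and consequently the claimed $\theta=1$ gain for the $|\Gamma_A|=1,|\Gamma_\psi|=2$ kernel and the $\theta=2$ gain for the two-point kernel (which in the cited proofs also use the bubble cancellation and the compact support/smoothness of $\hat v$, not merely tadpole parity plus evenness of $\hat v$) are unsupported.

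A smaller but telling inaccuracy: you write $\frak{a}^{2}\sum_{\underline{x}}g^{(k)}_{\omega,\varepsilon,\frak{a}}(\underline{x})=\hat g^{(k)}_{\omega,\varepsilon,\frak{a}}(\underline{0})=0$ and justify it by the oddness of $\frak{D}_{\omega,\frak{a}}^{-1}$. The quantity on the left is the zero-momentum value of the propagator, which vanishes on single ultraviolet scales simply because $\underline{k}=\underline{0}$ is outside the support of $f^{\omega,\varepsilon}_{j}$; the object relevant for killing the first-order (tadpole) contribution to the two-point kernel is instead $g^{(k)}_{\omega}(\underline{x}=\underline{0})=L^{-2}\sum_{\underline{k}}\hat g^{(k)}_{\omega}(\underline{k})$, which vanishes by oddness of $\hat g^{(k)}_{\omega}$. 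Conflating the two, and then resting the whole gain on ``Taylor expansion around coinciding external arguments'', suggests the actual mechanism of \cite{BFM, MaQED} --- tadpole parity for the relevant part plus the rotation-symmetry bubble cancellation for the marginal parts, implemented within the tree expansion --- has not been captured; as written, the proposal would not close.
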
 
Proposition \ref{prp:gainUV} shows that the scaling dimension of all monomials is actually {\it strictly negative} in the ultraviolet regime: all kernels are irrelevant in the renormalization group sense. 
\begin{rem}
As discussed in \cite{BFM, MaQED}, the key ingredient for the dimensional improvement of (\ref{eq:bduv}) is the vanishing of the bubble diagram in the continuum limit. We have:
\begin{equation}\label{eq:bubblee}
\lim_{L\to \infty} \lim_{\e\to 0} \lim_{\frak{a} \to 0} \frak{a}^{2} \sum_{\underline{x} \in \Lambda_{L, \frak{a}}} g^{[j,N]}_{\omega, \e, \frak{a}}(\underline{x})^{2} = \int_{\mathbb{R}^{2}} \frac{d\underline{k}}{(2\pi)^{2}}\, g^{[j,N]}_{\omega}(\underline{k})^{2}\;.
\end{equation}
Performing the change of variables $v_{\omega} k_{1} \to k_{1}$, and setting $\chi_{[j,N]}(\underline{k}) \equiv \chi^{\omega}_{[j,N]}(\underline{k})|_{v_{\omega} = 1}$, we get:
\begin{equation}
(\ref{eq:bubblee}) = \frac{1}{Z_{\omega}|v_{\omega}|} \int_{\mathbb{R}^{2}} \frac{d\underline{k}}{(2\pi)^{2}} \frac{(\chi_{[j,N]}(\underline{k}))^{2}}{(-ik_{0} + k_{1})^{2}}\;.
\end{equation}
Consider the rotations $\underline{k} \mapsto R_{\alpha} \underline{k}$ , with $R_{\alpha}$ the counterclockwise rotation with angle $\alpha$. The Jacobian of the transformation is $1$, and $\chi_{[h,N]}$ is invariant; instead, the function $1/(-ik_{0} + k_{1})$ transforms as $1/(-ik_{0} + k_{1}) \to e^{-i\alpha} /(-ik_{0} + k_{1})$. Choosing $\alpha\neq \pi$, this symmetry immediately implies:
\begin{equation}
\int \frac{d\underline{k}}{(2\pi)^{2}} \frac{(\chi_{[j,N]}(\underline{k}))^{2}}{(-ik_{0} + k_{1})^{2}} = 0\;.
\end{equation}
\end{rem}
%
%
\subsection{Infrared regime: the partition function}\label{sec:RGIR}
The starting point is the identity obtained after the integration of the ultraviolet degrees of freedom:
\begin{equation}
\mathcal{Z}_{L, \frak{a}, \e, h, N}(A, \phi) = e^{\mathcal{W}^{(0)}(A,\phi)} \int \mu_{[h,0]}(d\psi)\, e^{V^{(0)}(\psi; A, \phi)}\;.
\end{equation}
We will start by considering the partition function, corresponding to $A = \phi = 0$. Later, we will discuss how to adapt the analysis to the case $A,\phi \neq 0$. We will use the following notations:
\begin{equation}
\mathcal{W}^{(j)} \equiv \mathcal{W}^{(j)}(0,0)\;,\qquad V^{(j)}(\psi) \equiv V^{(j)}(\psi, 0, 0)\;.
\end{equation}
We shall suppose inductively that the partition function can be rewritten as, for all scales $r\geq j > h$:

\begin{equation}
\mathcal{Z}_{L, \frak{a}, \e, h, N}(0,0) = e^{\mathcal{W}^{(r)}} \int \mu_{[h,r]}(d\psi)\, e^{V^{(r)}(\sqrt{Z_{r}}\psi)}\;.
\end{equation}
The Grassmann Gaussian integration $\mu_{[h,r]}(d\psi)$ has covariance:
\begin{eqnarray}
\hat g^{[h,r]}_{\omega, \e, \frak{a}}(\underline{k}) &=& \frac{1}{Z_{r,\omega}(\underline{k})} \frac{\chi^{\omega,\e}_{[h,r]}(\underline{k})}{\frak{D}^{(r)}_{\omega, \e, \frak{a}}(\underline{k})} \nonumber\\
\frak{D}^{(r)}_{\omega, \e, \frak{a}}(\underline{k}) &=&  -i \frac{1}{\frak{a}} \sin(\frak{a} k_{0}) + v_{r,\omega}(\underline{k}) \frac{1}{\frak{a}} \sin(\frak{a} k_{1})\;, 
\end{eqnarray}
where the renormalizations $Z_{r,\omega}(\underline{k})$, $v_{r,\omega}(\underline{k})$ are even functions of $\underline{k}$; they satisfy a recursion relation, that will be obtained below. The effective interaction is given by a (finite) linear combination of Grassmann monomials:
\begin{equation}
V^{(r)}(\sqrt{Z_{r}}\psi) = \sum_{\Gamma} \frak{a}^{2|\Gamma|} \sum_{\underline{X}} \Big[\prod_{f\in \Gamma} \sqrt{Z_{r,\omega(f)}}\Big] \psi_{\Gamma}(\underline{X}) W^{(r)}_{\Gamma}(\underline{X})\;,
\end{equation}
for suitable kernels to be determined inductively. They are analytic functions of $\lambda_{\omega, \omega'}$, and they are translation-invariant in the space-time arguments. We shall suppose that:
\begin{equation}\label{eq:bdir}
\| W_{\Gamma}^{(r)} \|_{1,k} \leq C_{\Gamma, k} |\lambda| 2^{-k r}2^{r D(\Gamma)}\;.
\end{equation}
All these assumptions are true on scale $0$. In order to prepare the integration of the scale $j$, we define a localization and renormalization procedure, as follows. We write:
\begin{equation}\label{eq:LR0}
V^{(j)}(\sqrt{Z_{j}}\psi) = \mathcal{L} V^{(j)}(\sqrt{Z_{j}}\psi) + \mathcal{R} V^{(j)}(\sqrt{Z_{j}}\psi)\;,
\end{equation}
for suitable operators $\mathcal{L}$ and $\mathcal{R}$, that we shall now introduce. We rewrite the effective action on scale zero in Fourier space as:
\begin{equation}\label{eq:potFu}
V^{(j)}(\sqrt{Z_{j}}\psi) = \sum_{n\geq 1} \frac{1}{L^{4n}} \sum_{\underline{K}, \underline{\omega}} \Big[ \prod_{i=1}^{2n} \sqrt{Z_{j,\omega_{i}}} \Big]\Big[\prod_{i=1}^{2n} \hat \psi^{+}_{\underline{k}_{2i-1}, \omega_{2i-1}} \hat \psi^{-}_{\underline{k}_{2i}, \omega_{2i}} \Big] \widehat W^{(j)}_{2n; \underline{\omega}}(\underline{K})\delta(\underline{K})\;,
\end{equation}
where: $\underline{K} = (\underline{k}_{1}, \ldots, \underline{k}_{2n})$, $\delta(\underline{K}) = L^{2}\delta_{L, \frak{a}}(\sum_{i=1}^{2n} (-1)^{i} \underline{k}_{i})$ with $\delta_{L,\frak{a}}$ the Kronecker delta function periodic on $\mathbb{D}^{\text{a}}_{L, \frak{a}}$, and
\begin{eqnarray}
\widehat{W}^{(j)}_{2n;\underline{\omega}}(\underline{K}) &:=& \frac{1}{L^{2}}\widehat{W}^{(j)}_{2n;\underline{\omega}}(\underline{k}_{1}, \underline{k}_{2}, \ldots, \underline{k}_{2n-1}, -\underline{k}_{1} - \ldots - \underline{k}_{2n-1}) \nonumber\\&\equiv& \widehat{W}^{(j)}_{2n;\underline{\omega}}(\underline{k}_{1}, \underline{k}_{2}, \ldots, \underline{k}_{2n-1})\;.
\end{eqnarray}
The expression (\ref{eq:potFu}) follows from the translation-invariance of the configuration-space kernels. We shall also set:
\begin{equation}
\widehat{W}^{(j)}_{2n;\underline{\omega};\infty}(\underline{K}) := \lim_{L\to \infty} \lim_{\varepsilon\to 0} \lim_{\frak{a}\to 0} \widehat{W}^{(j)}_{2n;\underline{\omega}}(\underline{K})\;.
\end{equation}
We shall suppose that $\widehat{W}^{(j)}_{2;\underline{\omega};\infty}(\underline{0}) = 0$. This assumption is true on scale $j=0$; it follows from $\hat g^{(\ell)}_{\omega}(\underline{k}) = -\hat g^{(\ell)}_{\omega}(-\underline{k})$  for all $\ell>0$, from $\hat v(\underline{p}) = \hat v(-\underline{p})$, and from the fact that $\widehat{W}^{(0)}_{2;\underline{\omega};\infty}(\underline{0})$ is given by a sum of Feynman graphs containing an odd number of fermionic propagators.

The localization operator $\mathcal{L}$ is a  linear operator acting on the kernels of the effective interaction $V^{(j)}$ as:
\begin{eqnarray}\label{eq:local}
\mathcal{L} \widehat W^{(j)}_{2;\underline{\omega}}(\underline{k}) &=& \widehat W^{(j)}_{2;\underline{\omega}; \infty}(\underline{0}) + \sum_{i=0}^{1} \frac{1}{\frak{a}} \sin (k_{i} \frak{a}) \partial_{i} \widehat W^{(j)}_{2;\underline{\omega};\infty}(\underline{0}) \nonumber\\
\mathcal{L} \widehat W^{(j)}_{4;\underline{\omega}}(\underline{k}_{1}, \underline{k}_{2}, \underline{k}_{3}) &=& \widehat W^{(j)}_{4;\underline{\omega};\infty}(\underline{0}, \underline{0}, \underline{0}) \nonumber\\
\mathcal{L} \widehat W^{(j)}_{2n;\underline{\omega}}(\underline{K}) &=& 0\qquad \text{otherwise.}
\end{eqnarray}
The role of the localization operator is to isolate the potentially dangerous contributions to the effective action, that tend to expand under the renormalization group iteration. The terms $\widehat W^{(j)}_{2;\underline{\omega};\infty}(\underline{0})$ are relevant, while the terms $\partial_{i} \widehat W^{(j)}_{2;\underline{\omega};\infty}(\underline{0})$, $\widehat W^{(j)}_{4;\underline{\omega};\infty}(\underline{0}, \underline{0}, \underline{0})$ are marginal. Instead, the operator $\mathcal{R}$ is defined as $1 - \mathcal{L}$; the kernels appearing in $\mathcal{R} V^{(j)}$ are by definition irrelevant. If we could replace $V^{(j)}$ by $\mathcal{R} V^{(j)}$, the inductive assumptions could be safely iterated on all scales.

By our inductive assumptions, the relevant terms are exactly zero.  Consider the marginal terms in the first of (\ref{eq:local}). By global chiral gauge symmetry, $\omega_{1} = \omega_{2}$ at the argument of $\widehat W^{(j)}_{2;\underline{\omega}}$, and we shall set $\widehat W^{(j)}_{2;\omega}\equiv \widehat W^{(j)}_{2;(\omega,\omega)}$. Let:
\begin{equation}
z_{j,0,\omega} := i\frac{\partial}{\partial k_{0}} \widehat W^{(j)}_{2,\omega;\infty}(\underline{0})\;,\qquad z_{j,1,\omega} :=  \frac{\partial}{\partial k_{1}} \widehat W^{(j)}_{2,\omega;\infty}(\underline{0})\;.
\end{equation}
Thanks to the estimates (\ref{eq:bdir}):
\begin{equation}
|z_{j,0,\omega} | \leq C|\lambda|\;,\qquad |z_{j,1,\omega} | \leq C|\lambda|\;.
\end{equation}
The corresponding contribution to the effective interaction is:
\begin{eqnarray}
\mathcal{L}_{2} V^{(j)}(\sqrt{Z_{j}}\psi) = \sum_{\omega} Z_{j,\omega} \frac{1}{L^{2}}\sum_{\underline{k} \in \mathbb{D}^{\text{a}}_{L, \frak{a}}} \hat \psi^{+}_{\underline{k}, \omega} \Big( -iz_{j,0,\omega} \frac{1}{\frak{a}}\sin(a k_{0}) + z_{j,1,\omega} \frac{1}{\frak{a}}\sin(a k_{1}) \Big) \hat \psi^{-}_{\underline{k},\omega}\;.\nonumber
\end{eqnarray}
This term has the same structure of the covariance of the integration $\mu_{[h,j]}(d\psi)$. It is convenient to reabsorb it in the Grassmann Gaussian integration, as follows:
\begin{eqnarray}
&&\mu_{[h,j]}(d\psi) \exp\Big\{ \sum_{\omega} Z_{j,\omega}  \frac{1}{L^{2}} \sum_{\underline{k} \in \mathbb{D}^{\text{a}}_{L, \frak{a}}} \hat \psi^{+}_{\underline{k}, \omega} \Big( -iz_{j,0,\omega} \frac{1}{\frak{a}}\sin(a k_{0}) + z_{j,1,\omega} \frac{1}{\frak{a}}\sin(a k_{1}) \Big) \hat \psi^{-}_{\underline{k},\omega}  \Big\} \nonumber\\
&&\qquad \equiv \frac{1}{n_{j}}\nu_{[h,j]}(d\psi)\;,
\end{eqnarray}
for a suitable normalization factor $n_{j}$. The new Grassmann Gaussian integration can be written as:
\begin{equation}
\nu_{[h,j]}(d\psi) = \frac{1}{\widetilde{\mathcal{N}}}d\psi\, e^{- \widetilde{C}_{[h,j]}(\psi)}\;,
\end{equation}
with:
\begin{eqnarray}
\widetilde{C}_{[h,j]}(\psi) &=& \sum_{\omega} \frac{1}{L^{2}} \sum_{\underline{k} \in \mathbb{D}^{\text{a}}_{L, \frak{a}}}\hat \psi^{+}_{\underline{k},\omega} \Big[ \hat g^{[h,j]}_{\omega, \e, \frak{a}}(\underline{k})^{-1} - ik_{0} Z_{j,\omega}z_{j,0,\omega} + k_{1} Z_{j,\omega}z_{j,1,\omega} \Big] \hat \psi^{-}_{\underline{k},\omega} \nonumber\\
&\equiv& \sum_{\omega}\frac{1}{L^{2}} \sum_{\underline{k} \in \mathbb{D}^{\text{a}}_{L, \frak{a}}} \hat \psi^{+}_{\underline{k},\omega} \tilde g^{[h,j]}_{\omega, \e, \frak{a}}(\underline{k})^{-1} \hat \psi^{-}_{\underline{k},\omega}\;,
\end{eqnarray}
where the new `dressed' propagator is given by:
\begin{eqnarray}
\tilde g^{[h,j]}_{\omega, \e, \frak{a}}(\underline{k}) &=& \frac{1}{Z_{j-1,\omega}(\underline{k})} \frac{\chi^{\omega,\e}_{[h,j]}(\underline{k})}{\frak{D}^{(j)}_{\omega, \e, \frak{a}}(\underline{k})}\nonumber\\
\frak{D}^{(j)}_{\omega, \e, \frak{a}}(\underline{k}) &=& -i \frac{1}{\frak{a}} \sin(\frak{a} k_{0}) + v_{j-1,\omega}(\underline{k}) \frac{1}{\frak{a}} \sin(\frak{a} k_{1})\;,
\end{eqnarray}
with new parameters:
\begin{eqnarray}\label{eq:flow0}
Z_{j-1,\omega}(\underline{k}) &=& Z_{j,\omega}(1 + z_{j,0,\omega} \chi_{[h,j]}^{\omega,\e}(\underline{k})) \nonumber\\
Z_{j-1,\omega}(\underline{k}) v_{j-1,\omega}(\underline{k}) &=& Z_{j,\omega} (v_{j,\omega} +  z_{j,1,\omega} \chi_{[h,j]}^{\omega,\e}(\underline{k}))\;.
\end{eqnarray}
For later use, we also define:
\begin{equation}\label{eq:flow1}
Z_{j-1,\omega} = Z_{j,\omega}(1 + z_{j,0,\omega} )\;,\qquad Z_{j-1,\omega} v_{j-1,\omega} = Z_{j,\omega} (v_{j,\omega} +  z_{j,1,\omega} )\;.
\end{equation}
\begin{rem}\label{rem:diffbeh}
Notice that for $\|\underline{k}\|_{\omega} > 2^{h}$, the functions (\ref{eq:flow0}) behave qualitatively as (\ref{eq:flow1}). Instead, for $\|\underline{k}\|_{\omega} \leq 2^{h}$, the behavior of (\ref{eq:flow0}) for $j=h$ is different from the one of (\ref{eq:flow1}); to see this, consider the $\varepsilon\to 0$ limit. We can rewrite (\ref{eq:flow0}) as:
\begin{eqnarray}\label{eq:flow0h}
Z_{j-1,\omega}(\underline{k}) &=& Z_{j,\omega}(\underline{k}) + Z_{j,\omega} z_{j,0,\omega} f_{h}^{\omega}(\underline{k})\nonumber\\
Z_{j-1,\omega}(\underline{k}) v_{j-1,\omega}(\underline{k}) &=& Z_{j,\omega}(\underline{k}) v_{j,\omega}(\underline{k}) +  Z_{j,\omega}z_{j,1,\omega} f_{h}^{\omega}(\underline{k})\;.
\end{eqnarray}
Iterating the expressions, we end up with:
\begin{eqnarray}\label{eq:subtle}
Z_{j-1,\omega}(\underline{k}) &=& Z_{\omega} + (Z_{j-1,\omega} - Z_{\omega}) f_{h}^{\omega}(\underline{k})\nonumber\\
Z_{j-1,\omega}(\underline{k})v_{h-1,\omega}(\underline{k})  &=& Z_{\omega} v_{\omega} + (Z_{j-1,\omega} v_{j-1,\omega} - Z_{\omega} v_{\omega}) f_{h}^{\omega}(\underline{k})\;.\label{eq:h-1}
\end{eqnarray}
Nevertheless, as we will comment later on, this will not affect the estimates for the single-scale propagators on scale $h$.
\end{rem}
Thus,
\begin{equation}\label{eq:resum}
\mu_{[h,j]}(d\psi) e^{V^{(j)}(\sqrt{Z}\psi)} = \frac{1}{n_{j}} \nu_{[h,j]}(d\psi) e^{\mathcal{L}_{4} V^{(j)}(\sqrt{Z_{j}}\psi) + \mathcal{R} V^{(j)}(\sqrt{Z_{j}}\psi)}\;,
\end{equation}
where $\mathcal{L}_{4} V^{(j)}$ only collects the quartic contributions to $\mathcal{L} V^{(j)}$. Before integrating the field on scale $j$, we rescale the effective interaction as follows:
\begin{equation}
\mathcal{L}_{4} V^{(j)}(\sqrt{Z_{j}}\psi) + \mathcal{R} V^{(j)}(\sqrt{Z_{j}}\psi) \equiv \mathcal{L}_{4} \widetilde V^{(j)}(\sqrt{Z_{j-1}} \psi) + \mathcal{R} \widetilde V^{(j)}(\sqrt{Z_{j-1}} \psi)\;.
\end{equation}
The relation between the new and the old kernels is:
\begin{equation}
\widetilde{W}^{(j)}_{\Gamma_{\psi}}(\underline{X}) := \Big[\prod_{f\in \Gamma_{\psi}} \sqrt{\frac{Z_{j,\omega(f)}}{Z_{j-1,\omega(f)}}} \Big] W^{(j)}_{\Gamma_{\psi}}(\underline{X})\;.
\end{equation}
In particular, the local quartic term is:
\begin{equation}\label{eq:loc4}
\mathcal{L}_{4} \widetilde V^{(j)}(\sqrt{Z_{j-1}} \psi) = \sum_{\underline{\omega}} \Big[\prod_{i=1}^{4} \sqrt{Z_{j-1,\omega_{i}}}\Big] \frak{a}^{2} \sum_{\underline{x}} \psi^{+}_{\underline{x},\omega_{1}} \psi^{-}_{\underline{x},\omega_{2}} \psi^{+}_{\underline{x},\omega_{3}}\psi^{+}_{\underline{x},\omega_{4}} \lambda_{j, \underline{\omega}}\;,
\end{equation}
with the effective coupling on scale $j$ given by:
\begin{equation}\label{eq:coupling0}
\lambda_{j, \underline{\omega}} = \Big[\prod_{i=1}^{4} \sqrt{ \frac{Z_{j,\omega_{i}}}{Z_{j-1,\omega_{i}}} } \Big] \widehat W^{(j)}_{4; \underline{\omega}}(\underline{0}, \underline{0}, \underline{0})\;.
\end{equation}
By global chiral gauge symmetry, $\lambda_{j,\underline{\omega}}$ is zero unless $\omega_{1} = \omega_{2}$ and $\omega_{3} = \omega_{4}$ or $\omega_{1} = \omega_{4}$ and $\omega_{2} = \omega_{3}$. We shall denote by $\lambda_{j, \omega, \omega'}$ the independent couplings. For $j=0$, one finds:
\begin{equation}
|\lambda_{0, \omega,\omega'} - \lambda_{\omega,\omega'}|\leq C|\lambda|^{2}\;.
\end{equation}
In general, the iterative integration gives rise to a recursion relation for the local quartic coupling:
\begin{equation}\label{eq:flow2}
\lambda_{j, \omega, \omega'} = \lambda_{j+1, \omega, \omega'} + \beta^{\lambda}_{j+1, \omega, \omega'}\;,
\end{equation}
where $\beta^{\lambda}_{j+1, \omega, \omega'}$ is the beta function of the marginal quartic coupling. Controlling the flow generated by (\ref{eq:flow2}) is a major task in our analysis. In terms of the effective coupling constant, we rewrite (\ref{eq:loc4}) as:
\begin{equation}\label{eq:L40}
\mathcal{L}_{4} \widetilde V^{(j)}(\sqrt{Z_{j-1}} \psi) = \sum_{\omega, \omega'} Z_{j-1,\omega} Z_{j-1, \omega'} \frak{a}^{2} \sum_{\underline{x}} \psi^{+}_{\underline{x},\omega} \psi^{-}_{\underline{x},\omega} \psi^{+}_{\underline{x},\omega'}\psi^{+}_{\underline{x},\omega'} \lambda_{j, \omega, \omega'}\;.
\end{equation}
We are now ready to integrate the scale $j$. We split:
\begin{equation}
\tilde g^{[h,j]}_{\omega, \e, \frak{a}}(\underline{k}) = \hat g^{[h,j-1]}_{\omega, \e, \frak{a}}(\underline{k}) + \hat g^{(j)}_{\omega, \e, \frak{a}}(\underline{k})\;,
\end{equation}
with:
\begin{equation}
\hat g^{(j)}_{\omega, \e, \frak{a}}(\underline{k}) = \frac{1}{Z_{j-1,\omega}(\underline{k})} \frac{f^{\omega, \e}_{j}(\underline{k})}{\frak{D}^{(j)}_{\omega, \e, \frak{a}}(\underline{k})}
\end{equation}
and where $f^{\omega, \e}_{j}(\underline{k}) = \chi^{\omega, \e}_{[h,j]}(\underline{k}) - \chi^{\omega, \e}_{[h,j-1]}(\underline{k})$. In the $L\to \infty$, $\varepsilon \to 0$, $\frak{a}\to 0$ limit, the propagator on scale $j$ is a compactly supported function, for $2^{j-1}\leq \|\underline{k}\|_{\omega} \leq 2^{j+1}$, and it satisfies the following estimate, for all scales $j\geq h$:
\begin{eqnarray}\label{eq:singscale}
\| \partial_{k}^{\ell}\hat g^{(j)}_{\omega, \infty}(\underline{k}) \|_{\infty} \leq C_{\ell}\frac{2^{-j(1 + \ell)}}{Z_{j-1,\omega}}\;, \qquad \forall \ell\geq 0\;.
\end{eqnarray}
This bound easily implies the following decay estimate for the configuration space propagator:
\begin{equation}\label{eq:bdxj}
2^{j \ell} \|\underline{x}\|^{\ell} | g_{\omega,\infty}^{(j)} (\underline{x})|\leq C_{\ell} \frac{2^{j}}{Z_{j-1,\omega}}\;,\qquad \forall \ell\geq 0\;.
\end{equation}
\begin{rem}
\begin{itemize}
\item[a)] Despite Remark \ref{rem:diffbeh}, it is not difficult to check that the above bounds also hold on scale $j=h$, without losing the factor $Z_{h-1,\omega}$. This is due to the fact that whenever $Z_{h-1,\omega}(\underline{k})$ is close to $Z_{\omega}$, the support function $f^{\omega}_{h}(\underline{k})$ is close to zero. See discussion around Eq. (63) of \cite{BMdensity} for a proof.
\item[b)] For $L$ large enough, and for $\e, \frak{a}$ small enough, the propagator $g^{(j)}_{\omega, \frak{a}, \varepsilon}(\underline{x})$ satisfies the same bounds (\ref{eq:singscale}), (\ref{eq:bdxj}), provided $\|\underline{x}\|$ is replaced by the distance on the torus of side $L$. 
\item[c)] Furthermore, we observe that the propagator on scale $j$ satisfies the parity property:
\begin{equation}
\hat g^{(j)}_{\omega, \e, \frak{a}}(\underline{k}) = - \hat g^{(j)}_{\omega, \e, \frak{a}}(-\underline{k})\;.
\end{equation}
This property will allow to reproduce the cancellation of the relevant terms on the next scale. 
\end{itemize}
\end{rem}
By the addition principle of Grassmann variables, we write:
\begin{eqnarray}\label{eq:intj-1}
\mathcal{Z}_{L, \frak{a}, \e, h, N}(0,0) &=& e^{\mathcal{W}^{(j)}} \frac{1}{n_{j}}\int \nu_{[h,j]}(d\psi)\, e^{\mathcal{L}_{4} \widetilde{V}^{(j)}(\sqrt{Z_{j-1}} \psi) + \mathcal{R} \widetilde{V}^{(j)}(\sqrt{Z_{j-1}}\psi)} \nonumber\\
&=& e^{\mathcal{W}^{(j)}} \frac{1}{n_{j}} \int \mu_{[h,j-1]}(d\psi) \int \nu_{0}(d\xi)\, e^{\mathcal{L}_{4} \widetilde{V}^{(j)}(\sqrt{Z_{j-1}} (\psi + \xi)) + \mathcal{R} \widetilde{V}^{(j)}(\sqrt{Z_{j-1}}(\psi + \xi))} \nonumber\\
&\equiv& e^{\mathcal{W}^{(j-1)}} \int \mu_{[h,j-1]}(d\psi)\, e^{V^{(j-1)}(\sqrt{Z_{j-1}}\psi)}\;,
\end{eqnarray}
where the prefactor is chosen so that $V^{(j-1)}(0) = 0$. The new effective interaction has the form:
\begin{equation}
V^{(j-1)}(\sqrt{Z_{j-1}}\psi) = \sum_{\Gamma} \frak{a}^{2|\Gamma|} \Big[ \prod_{f\in \Gamma} \sqrt{Z_{j-1, \omega(f)}} \Big] \sum_{\underline{X}} \psi_{\Gamma}(\underline{X}) W^{(j-1)}_{\Gamma}(\underline{X})\;,
\end{equation}
for suitable new kernels $W^{(j-1)}_{\Gamma}$. The iterative scale integration can be conveniently organized in terms of {\it Gallavotti-Nicol\`o trees}: we refer the reader to \cite{GM} for a review. The outcome of this representation is a convergent expansion for the kernels of the effective potentials, in powers of the running coupling constants $\lambda_{r, \omega, \omega'}$, with $r\geq j$. The new kernels also depend on the values of the wave function renormalization $Z_{r,\omega}(\underline{k})$ and of the effective Fermi velocity $v_{r, \omega}(\underline{k})$. It turns out that the expansion converges uniformly in all cutoff parameters provided:
\begin{equation}\label{eq:controlrcc}
\Big|\frac{Z_{r,\omega}}{Z_{r-1,\omega}}\Big| \leq e^{c|\lambda|}\;,\qquad | v_{r, \omega} - v_{\omega} | \leq C |\lambda|\;,\qquad |\lambda_{r,\omega, \omega'}| \leq C|\lambda|\;.
\end{equation}
If these bounds hold, the inductive assumptions can be propagated to the next scale. In particular, the free energy of the model is analytic at small coupling.

The proof of the bounds (\ref{eq:controlrcc}) is our main technical task. These properties are highly nontrivial, and rely on subtle cancellations. In fact, convergence of perturbation theory only implies that:
\begin{equation}\label{eq:betafunction}
|z_{j,\mu,\omega}| \leq C \lambda^{2}_{\geq j+1}\;,\qquad |\beta^{\lambda}_{j, \omega, \omega'}| \leq C\lambda_{\geq j+1}^{2}\;.
\end{equation}
These bounds cannot be used to prove the smallness of $\lambda_{\geq j}$ uniformly in $j$: to show this, we need to exhibit a cancellation in the expansion. Based on the bounds (\ref{eq:betafunction}), it is however possible to prove that the bounds (\ref{eq:controlrcc}) hold true for a finite number of scales, say for all $j$ such that $|j| |\lambda|^{\frac{1}{2} - \epsilon}\leq 1$. More precisely, for these scales one can prove that both the effective interaction and the Fermi velocity are close to the bare ones:
\begin{equation}\label{eq:lambdak}
| \lambda_{j,\omega,\omega'} - \lambda_{\omega,\omega'}  | \leq |\lambda|^{\frac{3}{2}}\;,\qquad | v_{j,\omega} - v_{\omega} | \leq |\lambda|^{\frac{1}{2}}\;.
\end{equation}
Notice that the validity of these bounds implies that:
\begin{equation}
Z_{j-1,\omega}  = Z_{\omega} \prod_{k\geq j} ( 1+ z_{0,k,\omega}) \sim Z_{\omega} 2^{\eta_{\omega} j}\;,\qquad \eta_{\omega} = O(\lambda^{2})\;. 
\end{equation}
In Section \ref{sec:flow}, we will establish the bounds (\ref{eq:lambdak}) for all $j\leq 0$. Furthermore, we will prove the vanishing of the beta function, for the effective coupling constants and for the Fermi velocity. Let us rewrite the flow of the Fermi velocity and of the effective coupling as:
\begin{equation}
v_{j-1,\omega} = v_{j,\omega} + \beta^{v}_{j,\omega}\;,\qquad \lambda_{j-1,\omega,\omega'} = \lambda_{j,\omega,\omega'} + \beta_{j,\omega,\omega'}^{\lambda}\;,
\end{equation}
where $\beta_{j,\omega}^{v}$, $\beta_{j,\omega,\omega'}^{\lambda}$ are analytic functions of the couplings $\lambda_{k} = \{ \lambda_{k,\omega,\omega'} \}$ for $k\geq j$,
\begin{equation}
\beta_{j,\omega}^{v} \equiv \beta_{j,\omega}^{v}(\lambda_{j}, \ldots, \lambda_{0})\;,\qquad \beta_{j,\omega,\omega'}^{\lambda} \equiv \beta_{j,\omega,\omega'}^{\lambda}(\lambda_{j}, \ldots, \lambda_{0})
\end{equation}
\begin{thm}[Vanishing of the beta function]\label{thm:vanishing}
There exists $0< \theta < 1$ such that the following is true. Let $s = \{s_{\omega,\omega'}\}$, with $s_{\omega,\omega'} \in \mathbb{C}$ and with $|s| = \max_{\omega,\omega'} |s_{\omega,\omega'}|$ small enough. Then, the following bound holds, for all $j\leq 0$:
\begin{equation}
|\beta_{j,\omega}^{v}(s, \ldots, s)| \leq C|s| 2^{\theta j}\;,\qquad  |\beta_{j,\omega,\omega'}^{\lambda}(s, \ldots, s)| \leq C|s|^{2} 2^{\theta j}\;.
\end{equation}
\end{thm}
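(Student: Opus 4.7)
The plan is to extend the proof of the vanishing of the beta function from \cite{BMchiral, BFM2} --- originally developed for the two-chirality Luttinger model --- to the present multi-channel setting. The key idea is not to compute $\beta^{\lambda}_{j,\omega,\omega'}$ and $\beta^{v}_{j,\omega}$ order by order, but to recast them in terms of correlation functions of the reference model which are constrained by emergent symmetries. Specifically, I would exploit the $U(1)^{n_{\text{e}}}$ chiral gauge symmetry (\ref{eq:chisym}), absent in the lattice model but present in the reference theory, to derive nonperturbative Ward identities among its Schwinger functions. The crucial point is that in the quantized theory these identities are \emph{anomalous}, but --- thanks to the diagonal structure of the free chiral currents and Assumption \ref{assu}(d), which forbids the generation of off-diagonal chiral bilinears --- the anomaly is linear in the external chiral phases, diagonal in $\omega$, and not renormalized by the interaction.

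Concretely, the first step is to introduce auxiliary reference models with different ultraviolet regularizations (such as the one of Section \ref{sec:defref} together with the smooth relativistic regularization used in \cite{BMchiral}) and to show that the resulting continuum correlation functions coincide for different regulators. Next, I would derive, for each chirality $\omega$, the anomalous Ward identity of the type $(-ip_{0}+v_{\omega}p_{1})\widehat{\Gamma}^{(2)}_{0,\omega}(\underline{p},\underline{k}) - (ip_{0}+v_{\omega}p_{1})\widehat{\Gamma}^{(2)}_{1,\omega}(\underline{p},\underline{k}) = \mathcal{A}_{\omega}(\underline{p},\underline{k})$, where $\mathcal{A}_{\omega}$ is diagonal in $\omega$ and contains the bubble anomaly plus bare two-point contributions. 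I would then prove the non-renormalization of $\mathcal{A}_{\omega}$ by showing, via the improved ultraviolet bounds of Proposition \ref{prp:gainUV} and standard dimensional power counting, that the radiative corrections to the anomaly are irrelevant and vanish as cutoffs are removed.

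The third step is to combine the anomalous Ward identity with a Schwinger-Dyson equation for the four-point function, which expresses the quartic kernel as the tree-level vertex contracted with a renormalized vertex function at a shifted external momentum. Together with the Ward identity, this yields a closed expression for the four-point function on scale $j$ in terms of the two-point function and of a correction whose size is controlled by the running couplings on scales $k\geq j$. Evaluating this expression at zero external momenta and subtracting the analogous expression at the next scale produces $\beta^{\lambda}_{j,\omega,\omega'}(s,\ldots,s)$ as a difference of quantities which agree in the continuum limit up to irrelevant terms. Standard short-memory estimates on the tree expansion of the reference model --- dimensional bounds with gain $2^{\theta (j-k)}$ for each interaction insertion at scale $k>j$ --- then give the factor $2^{\theta j}$ claimed in the statement. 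The velocity beta function $\beta^{v}_{j,\omega}$ is treated along the same lines, using the $\partial_{k_{1}}$ version of the Ward identity.

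The main obstacle will be the matrix structure of the problem. In the classical Luttinger case, both the coupling and the anomaly matrix are essentially scalars, and the Ward-identity--Schwinger-Dyson system closes immediately. Here the couplings $\lambda_{\omega,\omega'}$ form an $n_{\text{e}}\times n_{\text{e}}$ matrix with off-diagonal entries, and solving the coupled system requires inverting a matrix of the type $T(\underline{p})$ appearing in (\ref{eq:dd}). Two structural inputs make this manageable: (i) by Assumption \ref{assu}(d) together with the global chiral gauge symmetry and momentum conservation at the Fermi level, the quartic effective interaction keeps its density-density form $n_{\omega}n_{\omega'}$ along the whole RG flow, as shown in (\ref{eq:quartic})--(\ref{eq:quartloc}); and (ii) the anomaly matrix is diagonal in $\omega$, because each chiral current is conserved independently at the classical level and the anomaly is computed from a single-loop bubble for each $\omega$. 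Together, these reduce the multi-channel Adler-Bardeen-type argument of \cite{BMchiral} to a family of one-channel identities, coupled only through the matrix of quartic couplings, whose inversion is a well-posed algebraic operation uniformly in $s$.
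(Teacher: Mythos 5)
Your overall strategy coincides with the paper's: Schwinger--Dyson equation for the four-point function, anomalous chiral Ward identities, non-renormalization of the anomaly corrections, and the matrix structure handled through the inversion of $T(\underline{p})$; steps analogous to your second and third ones are carried out in Propositions \ref{prp:SD}, \ref{prp:WI} and \ref{prp:corrections}. The genuine gap is in your last step, where you claim that evaluating the closed expression for the four-point function at external momenta of order $2^{j}$ and ``subtracting the analogous expression at the next scale'' produces the bound $|\beta^{\lambda}_{j,\omega,\omega'}(s,\ldots,s)|\le C|s|^{2}2^{\theta j}$. This does not work. The combination of the Schwinger--Dyson equation with the Ward identities determines the running coupling only up to errors of order $\lambda_{\ge j}\,(2^{\theta j}+\lambda_{\ge j})$, cf.\ (\ref{eq:compare}): the $O(|s|^{2})$ part of this error is uniform in the scale but does \emph{not} decay, so the difference between the scale-$j$ and scale-$(j-1)$ expressions is only $O(|s|^{2})$, not $O(|s|^{2}2^{\theta j})$. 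In addition, the theorem concerns the beta function evaluated on the \emph{constant} sequence $(s,\ldots,s)$, which is not the actual RG trajectory, so no identity among correlation functions at two consecutive scales can be directly identified with $\beta_{j}(s,\ldots,s)$.

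What the WI--SD machinery actually yields is the uniform boundedness of the flow, $|\lambda_{h,\omega,\omega'}-\lambda_{\omega,\omega'}|\le|\lambda|^{3/2}$ and $|v_{h,\omega}-v_{\omega}|\le C|\lambda|$ for all $h\le 0$, for all complex couplings in a small ball (and here the inductive structure in $h$, which you leave implicit, is essential: the bounds of Proposition \ref{prp:correlations} at scale $h$ require the running couplings on scales $\ge h+1$ to be already under control). The passage from boundedness of the flow to the decay of the diagonal beta function is a separate, nontrivial argument --- the one of Section 15 of \cite{BG}, or Theorem 3.1 of \cite{BMWI} --- which uses the analyticity of $\beta_{j}$ in the running couplings, the short-memory property, and the fact that the flow remains bounded for an open set of complex initial data: if $\beta_{j}(s,\ldots,s)$ did not decay in $j$, one could choose an initial datum driving the flow out of the ball, contradicting boundedness. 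You must add (or explicitly invoke) this step; without it, the factor $2^{\theta j}$ in the statement does not follow from your construction, and the subsequent comparison with the lattice beta function in Proposition \ref{prp:flow} would not go through.
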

\begin{rem}
Theorem \ref{thm:vanishing} has been proved in \cite{BMWI, BMchiral}, for the case of relativistic fermions with two chiralities. See also \cite{FM} for the simpler case of relativistic fermions with a single chirality. In Section \ref{sec:flow} we will review the proof, and we will discuss how to adapt it to the case of arbitrarily many chiralities.
\end{rem}

\subsection{Infrared regime: the correlation functions}\label{sec:corrfunc}

To conclude the discussion about the multiscale analysis of the generating functional of correlations, here we shall discuss how to adapt the analysis of the previous section to the multiscale evalutation of the generating functional of the correlation functions, $A\neq 0$, $\phi \neq 0$. We shall omit the $x_{2}, y_{2}$ dependence of the external fields: taking them into account simply amounts to introducing another label in the following discussion.

The generating functional of the correlations on scale $j \leq 0$ has the form:
\begin{equation}\label{eq:grasscor}
\mathcal{Z}_{L, \frak{a}, \e, h, N}(A,\phi) = e^{\mathcal{W}^{(j)}(A,\phi)} \int \mu_{[h,j]}(d\psi) e^{V^{(j)}(\sqrt{Z_{j}}\psi; A, \phi)}
\end{equation}
where the effective interaction has the form:
\begin{equation}
V^{(j)}(\sqrt{Z_{j}}\psi; A, \phi) = \sum_{\Gamma} \frak{a}^{2|\Gamma|} \sum_{\underline{X}, \underline{Y}, \underline{Z}} \Big[\prod_{f\in \Gamma_{\psi}} \sqrt{Z_{j, \omega(f)}}\Big] \psi_{\Gamma}(\underline{X}) A_{\Gamma}(\underline{Y}) \phi_{\Gamma}(\underline{Z})\, W^{(j)}_{\Gamma}(\underline{X}, \underline{Y}, \underline{Z})\;,
\end{equation}
where the kernels $W^{(j)}_{\Gamma}(\underline{X}, \underline{Y}, \underline{Z})$ are defined inductively. The Grassmann integral (\ref{eq:grasscor}) is evaluated in a multiscale fashion, as for the partition function. In order to control the flow of the new effective interactions, we need to extend the definition of localization operator, so to include terms that depend on the external fields, and behave dimensionally as relevant and marginal terms. To this end, we rewrite the effective interaction on scale $j$ in Fourier space as:
\begin{eqnarray}
&&V^{(j)}(\sqrt{Z_{j}}\psi; A, \phi) = \\
&& \sum_{n, m, s} \frac{1}{L^{4n + 2m + 2q}} \sum_{\substack{\underline{\omega}, \underline{\mu} \\ \underline{\rho}, \underline{\varepsilon}}}\sum_{\substack{\underline{K}, \underline{P} \\ \underline{Q}}} \widehat \psi_{\underline{\omega}}(\underline{K}) \widehat A_{\underline{\mu}}(\underline{P}) \widehat \Phi^{\underline{\varepsilon}}_{\underline{\rho}}(\underline{Q})\Big[ \prod_{i=1}^{2n} \sqrt{Z_{j,\omega_{i}}} \Big] \widehat W^{(j)}_{2n, m, s; \underline{\omega}, \underline{\mu}, \underline{\rho}, \underline{\varepsilon}}(\underline{K}, \underline{P}, \underline{Q})\delta(\underline{K}, \underline{P}, \underline{Q})\nonumber
\end{eqnarray}
where:
\begin{equation}
\widehat \psi_{\underline{\omega}}(\underline{K}) = \prod_{i=1}^{n} \hat \psi^{+}_{\underline{k}_{2i-1}, \omega_{2i-1}} \hat \psi^{-}_{\underline{k}_{2i}, \omega_{2i}}\;,\qquad \widehat A_{\underline{\mu}}(\underline{P}) = \prod_{i=1}^{m} \hat A_{\mu_{i},\underline{p}}\;,\qquad \widehat \Phi^{\underline{\varepsilon}}_{\underline{\rho}}(\underline{Q}) = \prod_{i=1}^{s} \phi^{\varepsilon_{i}}_{\underline{q},\rho_{i}}\;;
\end{equation}
the symbol $\delta(\underline{K}, \underline{P}, \underline{Q})$ denotes the Kronecker delta function; and:
\begin{eqnarray}
\widehat W^{(j)}_{2n, m, s; \underline{\omega}, \underline{\mu}, \underline{\rho}, \underline{\varepsilon}}(\underline{K}, \underline{P}, \underline{Q}) &=& \frac{1}{L^{2}} \widehat W^{(j)}_{2n, m, s; \underline{\omega}, \underline{\mu}, \underline{\rho}, \underline{\varepsilon}}(\underline{k}_{1}, \ldots, \underline{k}_{2n-1}, \underline{k}_{*}, \underline{p}_{1}, \ldots, \underline{p}_{m}, \underline{q}_{1}, \ldots, \underline{q}_{s}) \nonumber\\
&\equiv& \widehat W^{(j)}_{2n, m, s; \underline{\omega}, \underline{\mu}, \underline{\rho}, \underline{\varepsilon}}(\underline{k}_{1}, \ldots, \underline{k}_{2n-1}, \underline{p}_{1}, \ldots, \underline{p}_{m}, \underline{q}_{1}, \ldots, \underline{q}_{s})\;,
\end{eqnarray}
where $\underline{k}_{*}$ is chosen so to enforce momentum conservation. We complement the definition of the localization operator in (\ref{eq:local}), with:
\begin{equation}\label{eq:loccorr}
\mathcal{L} \widehat W^{(j)}_{2,1,0}(\underline{k}, \underline{p}) = \widehat W^{(j)}_{2,1,0;\infty}(\underline{0}, \underline{0})\;,\qquad 
\mathcal{L} \widehat W^{(j)}_{1,0,1}(\underline{k}) = \widehat W^{(j)}_{1,0,1;\infty}(\underline{k})\;,
\end{equation}
and we update accordingly the definition of $\mathcal{R}$. The first term in (\ref{eq:loccorr}) is marginal, while the second is relevant. We proceed as for the multiscale evaluation of the partition function, with the difference that in Eq. (\ref{eq:intj-1}) the argument of the first exponential is replaced by:
\begin{eqnarray}
&&\mathcal{L}_{1,0,1} \widetilde{V}^{(j)}(\sqrt{Z_{j-1}} \psi; A, \phi) + \mathcal{L}_{2,1,0} \widetilde{V}^{(j)}(\sqrt{Z_{j-1}} \psi; A, \phi) \nonumber\\ &&\qquad + \mathcal{L}_{4,0,0} \widetilde{V}^{(j)}(\sqrt{Z_{j-1}} \psi; A, \phi) + \mathcal{R} \widetilde{V}^{(j)}(\sqrt{Z_{j-1}}\psi; A, \phi)\nonumber
\end{eqnarray}
where:
\begin{eqnarray}
\mathcal{L}_{1,0,1} \widetilde{V}^{(j)}(\sqrt{Z_{j-1}} \psi; A, \phi) &:=& \sum_{\omega} \frac{1}{L^{2}} \sum_{\underline{k}} [\hat\psi_{\underline{k},\omega}^{+} Q^{(j)}_{\omega}(\underline{k}) \hat \phi^{-}_{\underline{k},\omega} + \hat\phi_{\underline{k},\omega}^{+} \overline{Q^{(j)}_{\omega}(\underline{k})} \hat \psi^{-}_{\underline{k},\omega}\big]\nonumber\\
\mathcal{L}_{2,1,0} \widetilde{V}^{(j)}(\sqrt{Z_{j-1}} \psi; A, \phi) &:=& \sum_{\omega, \mu} \frac{1}{L^{4}} \sum_{\underline{k}, \underline{p}} \hat A_{\mu, \underline{p}}\hat \psi^{+}_{\underline{k} + \underline{p}, \omega} \hat \psi^{-}_{\underline{k}, \omega} Z_{j-1,\omega} Z_{j, \mu, \omega}\nonumber\\
\mathcal{L}_{4,0,0} \widetilde{V}^{(j)}(\sqrt{Z_{j-1}} \psi; A, \phi) &:=& \mathcal{L}_{4} \widetilde{V}^{(j)}(\sqrt{Z_{j-1}} \psi)\;.
\end{eqnarray}
The new running coupling constants evolve as follows \cite{GM}, for $j\leq 0$:
\begin{eqnarray}
Q^{(j)}_{\omega}(\underline{k}) &=& Q^{(j+1)}_{\omega}(\underline{k}) + Z_{j,\omega}\widehat W^{(j)}_{2,0,0;\omega,\omega}(\underline{k}) \sum_{r=j+1}^{N} \hat g^{(r)}_{\omega;\infty}(\underline{k}) Q^{(r)}_{\omega}(\underline{k})\nonumber\\
Z_{j-1,\omega} Z_{\mu,j,\omega} &=& Z_{j,\omega} Z_{\mu,j+1,\omega} + Z_{j,\omega} \widehat W^{(j)}_{2,1,0;\omega,\omega,\mu;\infty}(\underline{0}, \underline{0})\;.
\end{eqnarray}
Notice that, for $\underline{k}$ in the support of $\hat g^{(j)}_{\omega;\infty}(\underline{k})$, the first equation becomes:
\begin{equation}
Q^{(j)}_{\omega}(\underline{k}) = Q_{\omega}(\underline{k}) [ 1 + Z_{j,\omega}\widehat W^{(j)}_{2,0,0;\omega,\omega}(\underline{k}) \widehat g^{(j+1)}_{\omega;\infty}(\underline{k})]\;.
\end{equation}
Instead, the second equation has to be solved for $Z_{\mu,0,\omega} = Z_{\mu,\omega} ( 1+ O(\lambda))$, where the $O(\lambda)$ corrections are determined by the integration of the ultraviolet degrees if freedom. The single-scale integration is performed via fermionic cluster expansion, as for the partition function, which can be conveniently organized in terms of Gallavotti-Nicol\`o trees. We refer to \cite{GM} for details. The next result will provide important information on various correlation functions of the reference model. We shall use the following notation:
\begin{equation}
\langle \cdot \rangle_{h,N} = \lim_{L\to \infty} \lim_{\varepsilon_{\to 0}}\lim_{\frak{a}\to 0} \frac{1}{L^{2}}\langle \cdot \rangle_{L, \frak{a}, \e, h, N}
\end{equation}
\begin{prop}[Bounds for the correlation functions.]\label{prp:correlations}\label{prp:propcorr} Suppose that the bounds (\ref{eq:controlrcc}) hold. Then, the following is true, uniformly in $N$.
\vspace{.2cm}

\noindent{\underline{\emph{Bound for the two-point function.}}} Let $\underline{k}$ be such that $\|\underline{k}\|_{\omega} = 2^{h}$. Then,
\begin{equation}\label{eq:2ptf}
\langle \psi^{-}_{\underline{k},\omega} \psi^{+}_{\underline{k},\omega} \rangle_{h,N} = \frac{1}{Z_{h-1,\omega} D_{h-1,\omega}(\underline{p})}(1 + o(1))\;,
\end{equation}
where $o(1)$ denotes error terms bounded by $C2^{\theta h}$ or by $C|\lambda|$, for a universal constant $C>0$.

\vspace{.2cm}

\noindent{\underline{\emph{Bound for the vertex function.}}} Let $\underline{k}$, $\underline{k}+\underline{p}$ be such that $\|\underline{k}\|_{\omega'} = \|\underline{k} + \underline{p}\|_{\omega'} = 2^{h}$. Then:
\begin{equation}\label{eq:bdvert}
\langle \hat n_{\underline{p},\omega}\;; \hat \psi^{-}_{\underline{k},\omega'}\;; \hat \psi^{+}_{\underline{k}+\underline{p},\omega'} \rangle_{h,N} = Z_{0,h-1,\omega'} Z_{h-1,\omega} \langle \hat \psi^{-}_{\underline{k},\omega} \hat \psi^{+}_{\underline{k},\omega} \rangle_{h,N} \langle \hat \psi^{-}_{\underline{k} + \underline{p},\omega} \hat\psi^{+}_{\underline{k} + \underline{p},\omega} \rangle_{h,N} (\delta_{\omega,\omega'} + o(1))\;.
\end{equation}
\vspace{.2cm}

\noindent{\underline{\emph{Bounds for the four-point function.}}} Let $\underline{k}_{1}$, $\underline{k}_{2}$, $\underline{k}_{3}$, $\underline{k}_{4}$ be such that $\| \underline{k}_{1} \|_{\omega} = \| \underline{k}_{2} \|_{\omega} = \| \underline{k}_{3} \|_{\omega'} = \| \underline{k}_{4} \|_{\omega'} = 2^{h}$. Then:
\begin{eqnarray}\label{eq:4ptf}
&&\langle \hat \psi^{+}_{\underline{k}_{1}, \omega}\;; \hat \psi^{-}_{\underline{k}_{2}, \omega}\;; \hat \psi^{+}_{\underline{k}_{3}, \omega'}\;; \hat \psi^{-}_{\underline{k}_{4}, \omega'} \rangle_{h,N} \\
&& = -\langle \psi^{-}_{\underline{k}_{1},\omega} \psi^{+}_{\underline{k}_{1},\omega} \rangle_{h,N} \langle \psi^{-}_{\underline{k}_{2},\omega} \psi^{+}_{\underline{k}_{2},\omega} \rangle_{h,N} \langle \psi^{-}_{\underline{k}_{3},\omega'} \psi^{+}_{\underline{k}_{3},\omega'} \rangle_{h,N} \langle \psi^{-}_{\underline{k}_{4},\omega'} \psi^{+}_{\underline{k}_{4},\omega'} \rangle_{h,N}\nonumber\\&&\quad\cdot \big( Z_{h-1,\omega} Z_{h-1,\omega'} (\lambda_{h,\omega,\omega'} + O(\lambda^{2}_{\geq h})) \big)\nonumber
\end{eqnarray}
Moreover, suppose that $\|\underline{k}_{1}\|_{\omega} = 2^{h_1}$, $\| \underline{k}_{2} \|_{\omega} = 2^{h_2}$. Then:
\begin{equation}\label{eq:4ptdiffscales}
| \langle \hat \psi^{+}_{\underline{k}_{1}, \omega}\;; \hat \psi^{-}_{\underline{k}_{2}, \omega}\;; \hat \psi^{+}_{\underline{k}_{3}, \omega'}\;; \hat \psi^{-}_{\underline{k}_{4}, \omega'} \rangle_{h,N} | \leq C \lambda_{\geq h} \frac{2^{-h_{1} - h_{2} - 2h}}{\sqrt{Z_{h_{1},\omega_{1}} Z_{h_{2},\omega_{2}}} Z_{h,\omega}}\;.
\end{equation}
\vspace{.2cm}

\noindent{\underline{\emph{Bound for the $(1,4)$-point function.}}} Let $\|\underline{p}\|_{\omega'} \leq C2^{h}$, and let $\|\underline{k}_{i}\| = 2^{h}$. Then:
\begin{equation}\label{eq:41bd}
|\langle \hat n_{\underline{p},\tilde \omega}\;; \hat \psi^{-}_{\underline{k}_{4} + \underline{p}, \omega'} \;; \hat \psi^{+}_{\underline{k}_{3},\omega'} \;; \hat \psi^{-}_{\underline{k}_{2}, \omega}\;; \hat \psi^{+}_{\underline{k}_{1}, \omega}\rangle_{h,N}|\leq C\lambda_{\geq h} \frac{2^{-5h}}{Z_{h-1,\omega} Z_{h-1,\omega'}} Z_{h-1,0,\tilde \omega}\;. 
\end{equation}
\end{prop}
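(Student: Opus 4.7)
The strategy is to extract these bounds from the multiscale tree expansion for the generating functional $\mathcal{W}^{(j)}(A,\phi)$ developed in Section \ref{sec:corrfunc}. Each correlation function is obtained by differentiating $\mathcal{W}^{(j)}(A,\phi)$ in the external fields $A,\phi$ and then letting the scale label $j$ flow down to $h$. When the external momenta sit on a dyadic shell $\|\underline{k}\|_\omega \simeq 2^h$, the sum over Gallavotti--Nicol\`o trees is dominated by the tree with root cluster on scale $h$; deeper trees are forbidden by the support properties of the single-scale propagators, while trees whose endpoints lie on scales $j > h$ get suppressed by the usual dimensional gain $2^{\theta(h-j)}$ coming from renormalization of irrelevant operators, whose size is controlled by the inductive bounds (\ref{eq:controlrcc}) and the running coupling bounds $|\lambda_{j,\omega,\omega'}|\le C|\lambda|$.

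First, for the two-point function (\ref{eq:2ptf}): after integrating out all scales $>h$, the quadratic part of the effective action defines the dressed propagator with covariance $[Z_{h-1,\omega}\frak{D}_{\omega}^{(h-1)}]^{-1}$, and the inclusion of the marginal $z_{h-1,0,\omega}, z_{h-1,1,\omega}$ terms has been absorbed into the flows (\ref{eq:flow1}). The leading term is therefore the Gaussian contraction in the dressed measure, and the $o(1)$ corrections come from two distinct sources: the analytic remainder $r_{h-1,\omega}(k_1)$ produced by resumming the flow (which is $O(2^{\theta h})$ by the short-memory property), and the one-particle irreducible corrections from $\mathcal R V^{(h)}$, bounded by $C|\lambda|$ after the tree bound on irrelevant kernels.

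Second, for the vertex function (\ref{eq:bdvert}) and the four-point functions (\ref{eq:4ptf}), (\ref{eq:4ptdiffscales}): differentiation in $A_\mu$ (respectively the $\phi$ fields) amputates the relevant external legs; what remains is the amputated kernel $\widehat W^{(h)}_{2,1,0}$ or $\widehat W^{(h)}_{4,0,0}$ dressed by external propagators. By the recursion for $Z_{\mu,j,\omega}$ recorded after (\ref{eq:loccorr}), the leading vertex kernel equals $Z_{0,h-1,\omega'} Z_{h-1,\omega}\delta_{\omega,\omega'}$ plus irrelevant remainders; the Kronecker $\delta_{\omega,\omega'}$ is forced by the combination of global chiral gauge symmetry (which survives for the kernels at every scale) together with the spectral support of the external legs. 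Similarly, the leading four-point kernel at equal scale is $Z_{h-1,\omega}Z_{h-1,\omega'}\lambda_{h,\omega,\omega'}$, with remainders of order $\lambda^2_{\geq h}$ by the tree bound. The unequal-scale four-point bound (\ref{eq:4ptdiffscales}) follows from the same expansion: amputation now contributes single-scale propagators $\hat g^{(h_i)}_{\omega_i}$ of size $2^{-h_i}/Z_{h_i-1,\omega_i}$ on the two external lines at scale $h_i$, and the internal amputated kernel satisfies, by a direct dimensional count on the Gallavotti--Nicol\`o tree with root at the minimum of $h_1,h_2,h$, the bound $C\lambda_{\ge h} 2^{-2h}/Z_{h-1,\omega}$.

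Finally, for the $(1,4)$-point function (\ref{eq:41bd}): the contribution is built from trees with one external $A$-insertion and four external $\phi$-insertions; the $A$-insertion carries a factor $Z_{h-1,0,\tilde\omega}$ from the $\mathcal L_{2,1,0}$ renormalization, each pair of amputated $\phi$-legs contributes $2^{-h}/Z_{h-1,\omega}$, and the remaining internal amputated kernel is bounded by $C\lambda_{\geq h}2^{-h}$ by dimensional counting applied once more to the tree.

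I expect the main obstacle to be the rigorous control of the $o(1)$ remainders, in particular separating the $O(2^{\theta h})$ corrections (coming from irrelevant terms summed against dimensional gains) from the $O(|\lambda|)$ corrections (coming from marginal renormalization of the $Q^{(j)}_\omega$ and $Z_{\mu,j,\omega}$ couplings, which do not shrink in $h$). This separation requires a careful bookkeeping of the short-memory property along the tree, together with the factorization between relevant/marginal and irrelevant kernels that has already been set up for the partition function in Section \ref{sec:RGIR} and that we now extend to the kernels depending on the external sources $A,\phi$.
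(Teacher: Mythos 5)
Your strategy coincides with the paper's: the proposition is extracted from the Gallavotti--Nicol\`o tree expansion for the generating functional with sources, the dominant contribution coming from trees rooted on the scale of the external momenta, with remainders controlled by the short-memory dimensional gains and the inductive bounds (\ref{eq:controlrcc}); the paper itself gives no more than this, deferring the details to the two-chirality analogues in \cite{BMchiral}, \cite{BMdensity} and \cite{BMWI}, plus a remark explaining how the $Z$-factors arise.

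That remark, however, is exactly where your bookkeeping goes wrong, and as written your counts do not reproduce the stated bounds. Each external leg at scale $h_i$ contributes the external propagator $\sim 2^{-h_i}/Z_{h_i-1,\omega_i}$ \emph{times} a factor $\sqrt{Z_{h_i-1,\omega_i}}$, because the field it contracts in the effective potential is the rescaled field $\sqrt{Z}\,\psi$; the net factor per leg is therefore $2^{-h_i}/\sqrt{Z_{h_i-1,\omega_i}}$, which is what produces the $\sqrt{Z_{h_1}Z_{h_2}}$ in (\ref{eq:4ptdiffscales}). Your count (a full $2^{-h_i}/Z_{h_i-1,\omega_i}$ per leg, times an ``amputated'' kernel $\sim\lambda_{\geq h}2^{-2h}/Z_{h-1,\omega}$) gives $1/(Z_{h_1}Z_{h_2}Z_{h})$ instead of $1/(\sqrt{Z_{h_1}Z_{h_2}}\,Z_{h})$, and since $Z_{j}\sim 2^{\eta_\omega j}$ with $\eta_\omega=O(\lambda^2)\neq 0$ this discrepancy is not a constant but grows as the scales separate; the precise power matters because (\ref{eq:4ptdiffscales}) is subsequently integrated over the momentum scale $h_{\underline{p}}$ in the Schwinger--Dyson estimates. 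Similarly, for (\ref{eq:41bd}) a pair of external legs at scale $h$ contributes $2^{-2h}/Z_{h-1,\omega}$, not the $2^{-h}/Z_{h-1,\omega}$ you quote; with your figures the $2^{-5h}$ in the statement is not recovered. Finally, a smaller point: the $\delta_{\omega,\omega'}$ in (\ref{eq:bdvert}) is not forced by the global chiral symmetry---the density of chirality $\omega$ couples to fermions of chirality $\omega'\neq\omega$ already at first order in $\lambda_{\omega,\omega'}$; that off-diagonal contribution is merely part of the $o(1)$, so your symmetry argument claims more than is true, even though the stated estimate is unaffected.
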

\begin{rem}
\begin{itemize}
\item[a)] The proof of this proposition follows from the tree expansion for the correlations. In the setting of the usual Luttinger model, the proofs of the bounds (\ref{eq:2ptf}), (\ref{eq:bdvert}), (\ref{eq:4ptf}) can be found in \cite{BMchiral}, Theorem 1,  or in \cite{BMdensity}, Section 3.5. A proof of the bound (\ref{eq:4ptdiffscales}), (\ref{eq:41bd}) can be found in Appendix A1 of \cite{BMWI}. The adaptation to the present setting is straightforward, and will be omitted.
\item[b)] The $Z$-dependence of the bounds is easily understood recalling that the contraction with the fields $\phi$ produce propagators on the scale of the external momenta, which carry a factor $1/Z$. Also, the fields that are contracted with the monomials proportional to $\phi$ carry a factor $\sqrt{Z}$.
\end{itemize}
\end{rem}

\section{Vanishing of the beta function of the reference model}\label{sec:flow}

In this section we will prove the bounds (\ref{eq:lambdak}) and Theorem \ref{thm:vanishing}, adapting the strategy of \cite{BMWI, BMchiral}. We will proceed by induction: we will assume the validity of the bounds (\ref{eq:lambdak}) on scales $\geq h+1$, and we will show that they hold true for the running coupling constants on scale $h$ (recall that $h$ is the scale of the infrared cutoff). Recall that, as discussed at the end of Section \ref{sec:RGIR}, the estimates (\ref{eq:lambdak}) are true on scales $h$ such that $|h| |\lambda|^{\frac{1}{2} - \epsilon} \leq 1$.

All the statements below have to be understood for finite $L, h, N$ and nonzero $\frak{a}, \varepsilon$. To simplify the expressions, we shall temporarily use the notation:
\begin{equation}
\langle \cdot \rangle \equiv \langle \cdot \rangle_{h,N,L,\frak{a},\varepsilon}\;.
\end{equation}
Also, we shall set:
\begin{equation}
\mathcal{Z}(\phi) \equiv \mathcal{Z}_{h,N,L,\frak{a},\varepsilon}(0,\phi)\;,\qquad \mathcal{W}(\phi) \equiv \mathcal{W}_{h,N,L,\frak{a},\varepsilon}(0,\phi)\;,
\end{equation}
and we shall denote by $\langle \cdot \rangle(\phi)$ the Gibbs state of the reference model in the presence of a nonzero external field $\phi$. In what follows, it will be enough to study the generating functional of the correlations for the following choice of parameters:
\begin{eqnarray}
Q_{\omega}^{\pm} = 1\;,\qquad Z_{\mu,\omega} = \delta_{\mu,0}\;.
\end{eqnarray}

\subsection{Schwinger-Dyson equation}
The analysis relies on the combination of Schwinger-Dyson equations and Ward identities. Recall the definition of four-point function:
\begin{equation}
\langle \hat \psi^{+}_{\underline{k}_{1}, \omega_{1}}\;; \hat \psi^{-}_{\underline{k}_{2}, \omega_{2}}\;; \hat \psi^{+}_{\underline{k}_{3}, \omega_{3}}\;; \hat \psi^{-}_{\underline{k}_{4}, \omega_{4}}\rangle := L^{8}\frac{\partial^{4}}{\partial \hat \phi^{-}_{\underline{k}_{1},\omega_{1}} \partial \hat \phi^{+}_{\underline{k}_{2},\omega_{2}} \partial \hat \phi^{-}_{\underline{k}_{3},\omega_{3}} \partial \hat \phi^{+}_{\underline{k}_{4},\omega_{4}} } \mathcal{W}(\phi) \Big|_{\phi = 0}\;.
\end{equation}
By conservation of momentum, the four point function is zero unless $\underline{k}_{1} - \underline{k}_{2} + \underline{k}_{3} -\underline{k}_{4} = 0$ (modulo translations by $\underline{G}_{1}$ and $\underline{G}_{2}$). Also, by global chiral gauge symmetry, the four-point function is zero unless:
\begin{equation}
\omega_{1} = \omega_{2}\;,\quad \omega_{3} = \omega_{4} \qquad \text{or}\qquad \omega_{1} = \omega_{4}\;,\quad \omega_{2} = \omega_{3}\;.
\end{equation}
The study of the four-point function will be central in order to prove the vanishing of the beta function of the marginal couplings $\lambda_{h,\omega, \omega'}$. The starting point is the next proposition.
\begin{prop}[Schwinger-Dyson equation for the four-point function.]\label{prp:SD} Let:
\begin{equation}
\omega_{1} = \omega_{2} \equiv \omega\;,\quad \omega_{3} = \omega_{4} \equiv \omega'\;,\quad \omega \neq \omega'\;,\quad \underline{k}_{1} - \underline{k}_{2} = \underline{k}_{4} - \underline{k}_{3}\;.
\end{equation}
Then, the following identity holds true:
\begin{eqnarray}\label{eq:SD}
&&\langle \hat \psi^{+}_{\underline{k}_{1}, \omega}\;; \hat \psi^{-}_{\underline{k}_{2}, \omega}\;; \hat \psi^{+}_{\underline{k}_{3}, \omega'}\;; \hat \psi^{-}_{\underline{k}_{4}, \omega'} \rangle \\
&& = - \frac{\hat g_{\omega'}(\underline{k}_{4})}{L^{2}} \sum_{\tilde \omega:\, \tilde\omega \neq \omega'} \lambda_{\tilde\omega, \omega'} Z_{\tilde\omega} Z_{\omega'}  \sum_{\underline{p} \in \mathbb{D}_{L}^{\text{s}}} \hat v(\underline{p}) \langle  \hat n_{\underline{p},\tilde\omega}\;; \hat \psi^{+}_{\underline{k}_{1}, \omega}\;; \hat \psi^{-}_{\underline{k}_{2}, \omega}\;; \hat \psi^{+}_{\underline{k}_{3},\omega'}\;;  \hat \psi^{-}_{\underline{k}_{4} - \underline{p}, \omega'} \rangle\nonumber\\
&& \quad - \frac{\hat g_{\omega'}(\underline{k}_{4})}{L^{2}} \sum_{\tilde\omega:\, \tilde\omega \neq \omega'} \lambda_{\tilde\omega, \omega'} Z_{\tilde\omega} Z_{\omega'} \hat v(\underline{k}_{2} - \underline{k}_{1}) \langle \hat \psi^{-}_{\underline{k}_{3}, \omega'} \;; \hat \psi^{+}_{\underline{k}_{3},\omega'}\rangle \langle \hat n_{\underline{k}_{1} - \underline{k}_{2},\tilde\omega}\;; \hat \psi^{-}_{\underline{k}_{2}, \omega}\;; \hat \psi^{+}_{\underline{k}_{1}, \omega}\rangle\;.\nonumber
\end{eqnarray}
\end{prop}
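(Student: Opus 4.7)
The plan is to combine a Grassmann integration by parts (the Schwinger–Dyson identity in its lattice-regularized form) on the variable $\hat\psi^+_{\underline{k}_4,\omega'}$ with a cumulant decomposition of the resulting $5$-point correlator, keeping only the blocks that survive the global chiral $U(1)^{n_{\text{e}}}$ symmetry.

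First, I would apply the identity $\int d\psi\,\partial_{\hat\psi^+_{\underline{k}_4,\omega'}}\bigl[e^{-C(\psi)-V(\psi)}\hat\psi^+_{\underline{k}_1,\omega}\hat\psi^-_{\underline{k}_2,\omega}\hat\psi^+_{\underline{k}_3,\omega'}\bigr]=0$, with $C(\psi)$ as in (\ref{eq:Cpsi}). The Gaussian piece yields the inverse propagator $L^{-2}\hat g_{\omega'}^{-1}(\underline{k}_4)\hat\psi^-_{\underline{k}_4,\omega'}$; the derivative of $V$ from (\ref{eq:Vdef}), after using the symmetries $\lambda_{\tilde\omega,\tilde\omega'}=\lambda_{\tilde\omega',\tilde\omega}$ and $\hat v(-\underline{p})=\hat v(\underline{p})$ together with the fact that $\hat n_{\underline{p},\tilde\omega}$ is Grassmann-even, collapses to the single sum $L^{-4}\sum_{\underline{p},\tilde\omega:\tilde\omega\neq\omega'}\lambda_{\omega',\tilde\omega}Z_{\omega'}Z_{\tilde\omega}\hat v(\underline{p})\,\hat\psi^-_{\underline{k}_4-\underline{p},\omega'}\hat n_{\underline{p},\tilde\omega}$; and the derivative of the three-field monomial produces only $\delta_{\underline{k}_3,\underline{k}_4}\hat\psi^+_{\underline{k}_1,\omega}\hat\psi^-_{\underline{k}_2,\omega}$, since the contribution from $\hat\psi^+_{\underline{k}_1,\omega}$ is killed by $\omega\neq\omega'$. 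After multiplying by $L^2\hat g_{\omega'}(\underline{k}_4)$ and reordering the Grassmann fields (an overall sign comes from moving $\hat\psi^-_{\underline{k}_4,\omega'}$ past three fermions), this produces the untruncated identity
\begin{equation*}
\langle \hat\psi^+_{\underline{k}_1,\omega}\hat\psi^-_{\underline{k}_2,\omega}\hat\psi^+_{\underline{k}_3,\omega'}\hat\psi^-_{\underline{k}_4,\omega'}\rangle=-L^2\hat g_{\omega'}(\underline{k}_4)\delta_{\underline{k}_3,\underline{k}_4}\langle\hat\psi^+_{\underline{k}_1,\omega}\hat\psi^-_{\underline{k}_2,\omega}\rangle-\frac{\hat g_{\omega'}(\underline{k}_4)}{L^2}\sum_{\underline{p},\tilde\omega:\tilde\omega\neq\omega'}\lambda_{\tilde\omega,\omega'}Z_{\tilde\omega}Z_{\omega'}\hat v(\underline{p})\langle\hat n_{\underline{p},\tilde\omega}\hat\psi^+_{\underline{k}_1,\omega}\hat\psi^-_{\underline{k}_2,\omega}\hat\psi^+_{\underline{k}_3,\omega'}\hat\psi^-_{\underline{k}_4-\underline{p},\omega'}\rangle.
\end{equation*}

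Next, I would pass to truncated correlators. The hypothesis $\underline{k}_1-\underline{k}_2=\underline{k}_4-\underline{k}_3$ together with $\omega\neq\omega'$ reduces the left-hand side to the connected 4-point cumulant plus the disconnection $\langle\hat\psi^+_{\underline{k}_1,\omega}\hat\psi^-_{\underline{k}_2,\omega}\rangle\langle\hat\psi^+_{\underline{k}_3,\omega'}\hat\psi^-_{\underline{k}_4,\omega'}\rangle$, which vanishes at generic momenta $\underline{k}_1\neq\underline{k}_2$ by two-point momentum conservation; the contact term proportional to $\delta_{\underline{k}_3,\underline{k}_4}$ on the right-hand side drops for the same reason. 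On the right-hand side I would expand the 5-point correlator into cumulants and discard every partition whose blocks are not separately chirality-neutral. The only survivors are: (a) the fully connected 5-point cumulant, which yields the first line of (\ref{eq:SD}); and (b) the partition $\{\hat n_{\underline{p},\tilde\omega},\hat\psi^+_{\underline{k}_1,\omega},\hat\psi^-_{\underline{k}_2,\omega}\}\cup\{\hat\psi^+_{\underline{k}_3,\omega'},\hat\psi^-_{\underline{k}_4-\underline{p},\omega'}\}$, whose vertex cumulant forces $\tilde\omega=\omega$ and whose two-point factor $\langle\hat\psi^+_{\underline{k}_3,\omega'}\hat\psi^-_{\underline{k}_4-\underline{p},\omega'}\rangle=-\delta_{\underline{p},\underline{k}_4-\underline{k}_3}\langle\hat\psi^-_{\underline{k}_3,\omega'}\hat\psi^+_{\underline{k}_3,\omega'}\rangle$ collapses the momentum sum to $\underline{p}=\underline{k}_1-\underline{k}_2$; then using the parity-evenness of $\hat v$ and the antisymmetry $\langle\hat n;\hat\psi^+;\hat\psi^-\rangle_c=-\langle\hat n;\hat\psi^-;\hat\psi^+\rangle_c$ one recovers exactly the second line of (\ref{eq:SD}). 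All tadpole partitions $\{\hat n_{\underline{p},\tilde\omega}\}\cup\{\dots\}$ vanish because $\langle\hat n_{\underline{p},\tilde\omega}\rangle=0$, a consequence of the discrete symmetry $\hat\psi^\pm_{\underline{k},\omega}\leftrightarrow\hat\psi^\mp_{-\underline{k},\omega}$ of $C$ and $V$.

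The main obstacle is the careful bookkeeping of Grassmann signs both in the integration-by-parts step and in the cumulant expansion, together with the verification that no other cumulant block survives the constraint $\tilde\omega\neq\omega'$. In particular one has to rule out the partition $\{\hat n_{\underline{p},\tilde\omega},\hat\psi^+_{\underline{k}_3,\omega'},\hat\psi^-_{\underline{k}_4-\underline{p},\omega'}\}\cup\{\hat\psi^+_{\underline{k}_1,\omega},\hat\psi^-_{\underline{k}_2,\omega}\}$, which would dress the free propagator $\hat g_{\omega'}(\underline{k}_4)$ in the contact term into the full interacting two-point function via a self-energy insertion; this partition is forbidden precisely because the corresponding density-fermion-fermion vertex cumulant vanishes unless $\tilde\omega=\omega'$, which is excluded from the sum by $\lambda_{\omega',\omega'}=0$. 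Once these cancellations are identified, the identity (\ref{eq:SD}) follows.
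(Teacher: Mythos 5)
Your overall mechanism is the same as the paper's: the identity comes from integrating by parts against the Gaussian weight, using $L^{-2}e^{-C(\psi)}\hat\psi^{-}_{\underline{k}_{4},\omega'}=-\hat g_{\omega'}(\underline{k}_{4})\,\partial_{\hat\psi^{+}_{\underline{k}_{4},\omega'}}e^{-C(\psi)}$, and then isolating the connected pieces. The difference is organizational: the paper differentiates $\log\mathcal{Z}(\phi)$ with respect to the Grassmann sources, so every object is connected from the start and the would-be contact term is linear in $\phi$ and is annihilated by the three remaining $\phi$-derivatives; you instead work with untruncated moments and truncate at the end by a cumulant decomposition. Your route can be made to work, but it forces you to dispose of the contact term $\propto\delta_{\underline{k}_{3},\underline{k}_{4}}$, of the disconnected piece $\langle\hat\psi^{+}_{\underline{k}_{1},\omega}\hat\psi^{-}_{\underline{k}_{2},\omega}\rangle\langle\hat\psi^{+}_{\underline{k}_{3},\omega'}\hat\psi^{-}_{\underline{k}_{4},\omega'}\rangle$, and of the partition $\{\hat n_{\underline{p},\tilde\omega},\hat\psi^{+}_{\underline{k}_{3},\omega'},\hat\psi^{-}_{\underline{k}_{4}-\underline{p},\omega'}\}\cup\{\hat\psi^{+}_{\underline{k}_{1},\omega},\hat\psi^{-}_{\underline{k}_{2},\omega}\}$ by hand, and all three of these only vanish when $\underline{k}_{1}\neq\underline{k}_{2}$ (momentum conservation of the two-point function). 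That genericity assumption is not part of the hypothesis of the proposition, whereas the paper's source-field argument proves (\ref{eq:SD}) for all momenta satisfying $\underline{k}_{1}-\underline{k}_{2}=\underline{k}_{4}-\underline{k}_{3}$.

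The genuine flaw is the chirality claim you use twice: the connected density--fermion--fermion cumulant $\langle\hat n_{\underline{p},\tilde\omega}\,;\hat\psi^{+}_{\underline{k}_{1},\omega}\,;\hat\psi^{-}_{\underline{k}_{2},\omega}\rangle$ does \emph{not} vanish for $\tilde\omega\neq\omega$. Global chiral symmetry only requires each block to be neutral in each chirality, and $\hat n_{\underline{p},\tilde\omega}$ is neutral, so no constraint relates $\tilde\omega$ to $\omega$; indeed this off-diagonal vertex is nonzero already at first order in $\lambda_{\tilde\omega,\omega}$, and it is precisely why the target identity (\ref{eq:SD}) and the computed vertex function (\ref{eq:ddexp}) retain a nontrivial matrix $T_{\omega,\omega'}(\underline{p})$. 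If you actually enforced $\tilde\omega=\omega$ in partition (b) you would obtain only the single term $\tilde\omega=\omega$ of the sum in the second line of (\ref{eq:SD}), not the full $\sum_{\tilde\omega\neq\omega'}$ that appears there; and your stated reason for discarding the ``self-energy'' partition is likewise wrong --- that partition drops because $\langle\hat\psi^{+}_{\underline{k}_{1},\omega}\hat\psi^{-}_{\underline{k}_{2},\omega}\rangle=0$ for $\underline{k}_{1}\neq\underline{k}_{2}$, not because of any diagonality of the vertex cumulant. The fix is simple: keep the full $\tilde\omega$ sum in partition (b) (the collapse of the $\underline{p}$-sum to $\underline{p}=\underline{k}_{1}-\underline{k}_{2}$ comes only from the two-point factor), and justify dropping the remaining partitions by momentum conservation, by $\langle\hat n_{\underline{p},\tilde\omega}\rangle=0$, and by chirality neutrality of blocks mixing $\omega$ with $\omega'$. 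With these corrections, and with the genericity caveat made explicit (or avoided altogether by running the argument through the generating functional as in the paper), your derivation reproduces (\ref{eq:SD}).
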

\begin{proof} We start by computing:
\begin{equation}
\frac{\partial}{\partial \hat \phi^{+}_{\underline{k}_{4},\omega_{4}}} \log \mathcal{Z}(\phi) = \frac{1}{L^{2}}\frac{1}{\mathcal{Z}(\phi)} \int \mu(d\psi)\, e^{-V(\psi) + B(\psi; \phi)} \hat \psi^{-}_{\underline{k}_{4}, \omega_{4}}\;.
\end{equation}
Next, we use that:
\begin{equation}
\frac{1}{L^{2}}e^{-C(\psi)} \hat \psi^{-}_{\underline{k}_{4}, \omega_{4}} = - \hat g_{\omega_{4}}(\underline{k}_{4}) \Big( \frac{\partial}{\partial \psi^{+}_{\underline{k}_{4}, \omega_{4}}} e^{-C(\psi)}\Big)\;.
\end{equation}
Hence:
\begin{eqnarray}\label{eq:intparts}
\frac{\partial}{\partial \hat \phi^{+}_{\underline{k}_{4},\omega_{4}}} \log \mathcal{Z}(\phi) &=& - \frac{\hat g_{\omega_{4}}(\underline{k}_{4})}{\mathcal{Z}(\phi)} \int D\psi\, \Big( \frac{\partial}{\partial \psi^{+}_{\underline{k}_{4}, \omega_{4}}} e^{-C(\psi)}\Big) e^{-V(\psi) + B(\psi; \phi)} \nonumber\\
&=&  \frac{\hat g_{\omega_{4}}(\underline{k}_{4})}{\mathcal{Z}(\phi)} \int D\psi\, e^{-C(\psi)} \frac{\partial}{\partial \psi^{+}_{\underline{k}_{4}, \omega_{4}}} e^{-V(\psi) + B(\psi; \phi)}\\
&=&  \frac{\hat g_{\omega_{4}}(\underline{k}_{4})}{\mathcal{Z}(\phi)} \int \mu(d\psi)\, e^{-V(\psi) + B(\psi; \phi)} \Big( -\frac{\partial}{\partial \hat\psi^{+}_{\underline{k}_{4}, \omega_{4}}} V(\psi) + \frac{1}{L^{2}} \hat\phi^{-}_{\underline{k}_{4}, \omega_{4}} \Big)\;.\nonumber
\end{eqnarray}
The second step in Eq. (\ref{eq:intparts}) follows from Grassmann integration by parts, and the last step follows from the fact that $V(\psi), B(\psi; \phi)$ are given by sums of even monomials in the Grassmann variables (hence their exponentials commute with any Grassmann monomial). Using that:
\begin{eqnarray}
\frac{\partial}{\partial \hat \psi^{+}_{\underline{k}_{4}, \omega_{4}}} \hat n_{\underline{p}, \tilde\omega} &=& \frac{\partial}{\partial \hat \psi^{+}_{\underline{k}_{4}, \omega_{4}}} \Big( \frac{1}{L^{2}} \sum_{\underline{k} \in \mathbb{D}^{\text{a}}_{L,\frak{a}}} \hat \psi^{+}_{\underline{k} - \underline{p}, \tilde\omega} \hat \psi^{-}_{\underline{k}, \tilde\omega} \Big) \nonumber\\
&=& \frac{\delta_{\tilde\omega, \omega_{4}}}{L^{2}} \hat \psi^{-}_{\underline{k}_{4} + \underline{p}, \tilde\omega}\;,
\end{eqnarray}
we explicitly find, recalling that $\hat v(p) = \hat v(-p)$, $\lambda_{\omega, \omega'} = \lambda_{\omega', \omega}$:
\begin{equation}
\frac{\partial}{\partial \hat\psi^{+}_{\underline{k}_{4}, \omega_{4}}} V(\psi) = \sum_{\tilde\omega:\, \tilde\omega\neq \omega_{4}} \lambda_{\tilde\omega, \omega_{4}} Z_{\tilde\omega} Z_{\omega_{4}} \frac{1}{L^{4}} \sum_{\underline{p} \in \mathbb{D}_{L}^{\text{s}}} \hat v(\underline{p}) \hat n_{\underline{p},\tilde\omega} \hat \psi^{-}_{\underline{k}_{4} - \underline{p}, \omega_{4}}\;.
\end{equation}
From now on, we will enforce the constraints:
\begin{equation}\label{eq:chiralchoices}
\omega_{1} = \omega_{2}\;,\quad \omega_{3} = \omega_{4}\;,\quad \omega_{1} \neq \omega_{3}\;.
\end{equation}
%
%
%
We then get\footnote{The linear term in $\hat \phi^{-}_{\underline{k}_{4},\omega_{4}}$ gives a vanishing contribution to the four point function.}:
\begin{eqnarray}\label{eq:SD0}
&&\langle \hat \psi^{+}_{\underline{k}_{1}, \omega_{1}}\;; \hat \psi^{-}_{\underline{k}_{2}, \omega_{2}}\;; \hat \psi^{+}_{\underline{k}_{3}, \omega_{3}}\;; \hat \psi^{-}_{\underline{k}_{4}, \omega_{4}} \rangle \\
&& = -\hat g_{\omega_{4}}(\underline{k}_{4}) \sum_{\tilde\omega:\, \tilde\omega \neq \omega_{4}} \lambda_{\tilde\omega, \omega_{4}} Z_{\tilde\omega} Z_{\omega_{4}} \frac{1}{L^{2}} \sum_{\underline{p} \in \mathbb{D}_{L}^{\text{s}}} \hat v(\underline{p}) \frac{L^{6}\partial^{3}}{\partial \hat \phi^{-}_{\underline{k}_{1},\omega_{1}} \partial \hat \phi^{+}_{\underline{k}_{2},\omega_{2}} \partial \hat \phi^{-}_{\underline{k}_{3},\omega_{3}} } \langle \hat n_{\underline{p},\tilde\omega} \hat \psi^{-}_{\underline{k}_{4} - \underline{p}, \omega_{4}}\rangle \nonumber(\phi)\Big|_{\phi=0}\;.
\end{eqnarray}
Next, let us evaluate the remaining derivatives in terms of connected correlation functions. We have:
\begin{eqnarray}
\frac{L^{6}\partial^{3}}{\partial \hat \phi^{-}_{\underline{k}_{1},\omega_{1}} \partial \hat \phi^{+}_{\underline{k}_{2},\omega_{2}} \partial \hat \phi^{-}_{\underline{k}_{3},\omega_{3}} } \langle \hat n_{\underline{p},\tilde\omega} \hat \psi^{-}_{\underline{k}_{4} - \underline{p}, \omega_{4}}\rangle \nonumber(\phi)\Big|_{\phi=0} = \langle  \hat \psi^{+}_{\underline{k}_{1}, \omega_{1}}\;; \hat \psi^{-}_{\underline{k}_{2}, \omega_{2}}\;; \hat \psi^{+}_{\underline{k}_{3},\omega_{3}}\;; \hat n_{\underline{p},\tilde\omega} \hat \psi^{-}_{\underline{k}_{4} - \underline{p}, \omega_{4}} \rangle\;,
\end{eqnarray}
which we rewrite as, recalling the definition of cumulant, thanks to the choices (\ref{eq:chiralchoices}):
\begin{eqnarray}\label{eq:cumu}
&&\frac{L^{6}\partial^{3}}{\partial \hat \phi^{-}_{\underline{k}_{1},\omega_{1}} \partial \hat \phi^{+}_{\underline{k}_{2},\omega_{2}} \partial \hat \phi^{-}_{\underline{k}_{3},\omega_{3}} } \langle \hat n_{\underline{p},\tilde\omega} \hat \psi^{-}_{\underline{k}_{4} - \underline{p}, \omega_{4}}\rangle(\phi)\Big|_{\phi=0}\\
&&\qquad\qquad = \langle  \hat n_{\underline{p},\tilde\omega}\;; \hat \psi^{+}_{\underline{k}_{1}, \omega_{1}}\;; \hat \psi^{-}_{\underline{k}_{2}, \omega_{2}}\;; \hat \psi^{+}_{\underline{k}_{3},\omega_{3}}\;;  \hat \psi^{-}_{\underline{k}_{4} - \underline{p}, \omega_{4}} \rangle\nonumber\\
&& \qquad\qquad\quad + \langle \hat \psi^{-}_{\underline{k}_{4} - \underline{p}, \omega_{4}} \;; \hat \psi^{+}_{\underline{k}_{3},\omega_{3}}\rangle \langle \hat n_{\underline{p},\tilde\omega}\;; \hat \psi^{-}_{\underline{k}_{2}, \omega_{2}}\;; \hat \psi^{+}_{\underline{k}_{1}, \omega_{1}}\rangle\;.\nonumber
\end{eqnarray}
Plugging (\ref{eq:cumu}) in (\ref{eq:SD0}) we get:
\begin{eqnarray}
&&\langle \hat \psi^{+}_{\underline{k}_{1}, \omega_{1}}\;; \hat \psi^{-}_{\underline{k}_{2}, \omega_{2}}\;; \hat \psi^{+}_{\underline{k}_{3}, \omega_{3}}\;; \hat \psi^{-}_{\underline{k}_{4}, \omega_{4}} \rangle \\
&& = -\hat g_{\omega_{4}}(\underline{k}_{4}) \sum_{\tilde\omega:\, \tilde\omega \neq \omega_{4}} \lambda_{\tilde\omega, \omega_{4}} Z_{\tilde\omega} Z_{\omega_{4}} \frac{1}{L^{2}} \sum_{\underline{p} \in \mathbb{D}_{L}^{\text{s}}} \hat v(\underline{p})\langle  \hat n_{\underline{p},\tilde\omega}\;; \hat \psi^{+}_{\underline{k}_{1}, \omega_{1}}\;; \hat \psi^{-}_{\underline{k}_{2}, \omega_{2}}\;; \hat \psi^{+}_{\underline{k}_{3},\omega_{3}}\;;  \hat \psi^{-}_{\underline{k}_{4} - \underline{p}, \omega_{4}} \rangle\nonumber\nonumber\\
&& \quad - \hat g_{\omega_{4}}(\underline{k}_{4}) \sum_{\tilde\omega:\, \tilde\omega \neq \omega_{4}} \lambda_{\tilde\omega, \omega_{4}} Z_{\tilde\omega} Z_{\omega_{4}} \frac{1}{L^{2}} \sum_{\underline{p} \in \mathbb{D}_{L}^{\text{s}}} \hat v(\underline{p})\langle \hat \psi^{-}_{\underline{k}_{4} - \underline{p}, \omega_{4}} \;; \hat \psi^{+}_{\underline{k}_{3},\omega_{3}}\rangle \langle \hat n_{\underline{p},\tilde\omega}\;; \hat \psi^{-}_{\underline{k}_{2}, \omega_{2}}\;; \hat \psi^{+}_{\underline{k}_{1}, \omega_{1}}\rangle\;.\nonumber
\end{eqnarray}
This expression can be further simplified, noting that the last correlation function is zero unless $\underline{p} - \underline{k}_{1} + \underline{k}_{2} = 0$. Furthermore, by our choice of external momenta $\underline{k}_{3}$ and $\underline{k}_{4}$, we also have $\underline{p} = \underline{k}_{4} - \underline{k}_{3}$. In conclusion:
\begin{eqnarray}\label{eq:SD4pt}
&&\langle \hat \psi^{+}_{\underline{k}_{1}, \omega_{1}}\;; \hat \psi^{-}_{\underline{k}_{2}, \omega_{2}}\;; \hat \psi^{+}_{\underline{k}_{3}, \omega_{3}}\;; \hat \psi^{-}_{\underline{k}_{4}, \omega_{4}} \rangle \\
&& = -\hat g_{\omega_{4}}(\underline{k}_{4}) \sum_{\tilde\omega:\, \tilde\omega \neq \omega_{4}} \lambda_{\tilde\omega, \omega_{4}} Z_{\tilde\omega} Z_{\omega_{4}} \frac{1}{L^{2}} \sum_{\underline{p} \in \mathbb{D}_{L}^{\text{s}}} \hat v(\underline{p})\langle  \hat n_{\underline{p},\tilde\omega}\;; \hat \psi^{+}_{\underline{k}_{1}, \omega_{1}}\;; \hat \psi^{-}_{\underline{k}_{2}, \omega_{2}}\;; \hat \psi^{+}_{\underline{k}_{3},\omega_{3}}\;;  \hat \psi^{-}_{\underline{k}_{4} - \underline{p}, \omega_{4}} \rangle\nonumber\nonumber\\
&& \quad - \hat g_{\omega_{4}}(\underline{k}_{4}) \sum_{\tilde\omega:\, \tilde\omega \neq \omega_{4}} \lambda_{\tilde\omega, \omega_{4}} Z_{\tilde\omega} Z_{\omega_{4}} \frac{1}{L^{2}} \hat v(\underline{k}_{2} - \underline{k}_{1})\langle \hat \psi^{-}_{\underline{k}_{3}, \omega_{4}} \;; \hat \psi^{+}_{\underline{k}_{3},\omega_{3}}\rangle \langle \hat n_{\underline{k}_{1} - \underline{k}_{2},\tilde\omega}\;; \hat \psi^{-}_{\underline{k}_{2}, \omega_{2}}\;; \hat \psi^{+}_{\underline{k}_{1}, \omega_{1}}\rangle\;.\nonumber
\end{eqnarray}
This concludes the proof of Proposition \ref{prp:SD}.
\end{proof}
\subsection{Anomalous Ward identities}\label{sec:Ward}
The importance of the identity (\ref{eq:SD}) relies on the fact that, for external momenta on scale $2^{h}$, the left-hand side gives access to $\lambda_{h,\omega,\omega'}$ (recall Proposition \ref{prp:correlations}), while the right-hand side is manifestly $O(\lambda)$. Therefore, if we can show that, at leading order, the right-hand side of (\ref{eq:SD}) has the same momentum dependence of the four-point function, we will be able to bound $\lambda_{h,\omega,\omega'}$ by $C|\lambda|$ uniformly in $h$.

In order to gain insight on the momentum-dependence of the right-hand side, we shall derive Ward identities for the reference model. These identities follow from the behavior of the QFT under chiral local gauge transformations:
\begin{equation}\label{eq:chiralphase}
\psi^{\pm}_{\underline{x},\omega} \to e^{\pm i \alpha_{\omega}(\underline{x})} \psi^{\pm}_{\underline{x}, \omega}\;,
\end{equation}
with $\alpha_{\omega}(\underline{x})$ a function on $\frak{a} \mathbb{Z}^{2}$, with the periodicity of $\Lambda_{L,\frak{a}}$:
\begin{equation}
\alpha_{\omega}(\underline{x} + n_{1} e_{1} L + n_{2} e_{2} L) = \alpha_{\omega}(\underline{x})\;,
\end{equation}
which we can represent in terms of its Fourier series as:
\begin{equation}
\alpha_{\omega}(\underline{x}) = \frac{1}{L^{2}} \sum_{\underline{p} \in \mathbb{D}_{L}^{\text{p}}} e^{-i\underline{p}\cdot \underline{x}} \hat \alpha_{\omega}(\underline{p})\;.
\end{equation}
The starting point is the following proposition.
\begin{prop}[Local phase transformations.] Let $Q(\psi^{+}, \psi^{-})$ be a monomial in the Grassmann variables $\psi^{+}_{\underline{x},\omega}$, $\psi^{-}_{\underline{x},\omega}$. Let $Q_{\alpha}(\psi^{+}, \psi^{-})$ be the monomial obtained performing the following replacement in $Q(\psi^{+}, \psi^{-})$:
\begin{equation}\label{eq:phase}
\psi^{\pm}_{\underline{x},\omega} \to e^{\pm i \alpha_{\omega}(\underline{x})} \psi^{\pm}_{\underline{x}, \omega}\;.
\end{equation}
Then, the following identity holds true:
\begin{equation}\label{eq:Jac}
\int D\psi\, Q(\psi^{+}, \psi^{-}) = \int D\psi\, Q_{\alpha}(\psi^{+}, \psi^{-})\;.
\end{equation}
\end{prop}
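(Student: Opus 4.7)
The proposition asserts that the Berezin measure $D\psi$ is invariant under the linear change of generators $T_\alpha : \psi^\pm_{\underline{x},\omega} \mapsto e^{\pm i\alpha_\omega(\underline{x})}\psi^\pm_{\underline{x},\omega}$. My plan is to reduce the identity to a single Jacobian calculation. By linearity in $Q$, it is enough to treat the case when $Q$ is a Grassmann monomial. Using the classical fact that if $\eta_i = T_{ij}\psi_j$ then $\eta_1\cdots\eta_n = \det(T)\,\psi_1\cdots\psi_n$ in the top exterior power of the Grassmann algebra, one reads off the Berezin change-of-variables formula
$$\int D\psi\, Q(T\psi) \,=\, \det(T)\,\int D\psi\, Q(\psi)$$
for an arbitrary linear map $T$ of the generators. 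Since by definition $Q_\alpha(\psi) = Q(T_\alpha\psi)$, identity (\ref{eq:Jac}) reduces to the single computation $\det(T_\alpha) = 1$.

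To evaluate the Jacobian I would work in the position-space basis $\{\psi^+_{\underline{x},\omega},\,\psi^-_{\underline{x},\omega}\}_{\underline{x}\in\Lambda_{L,\frak{a}},\;\omega=1,\ldots,n_{\text{e}}}$ of the Grassmann algebra, in which $T_\alpha$ is manifestly diagonal with eigenvalues $e^{+i\alpha_\omega(\underline{x})}$ on the $\psi^+$-sector and $e^{-i\alpha_\omega(\underline{x})}$ on the $\psi^-$-sector. Pairing the two eigenvalues attached to each $(\underline{x},\omega)$,
$$\det(T_\alpha) \,=\, \prod_{\underline{x}\in\Lambda_{L,\frak{a}}}\prod_{\omega=1}^{n_{\text{e}}} e^{+i\alpha_\omega(\underline{x})}\,e^{-i\alpha_\omega(\underline{x})} \,=\, 1,$$
which is the desired cancellation. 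Since determinants are basis-independent, this value transfers to $T_\alpha$ expressed in the Fourier basis $\{\hat\psi^\pm_{\underline{k},\omega}\}$ in which $D\psi$ is literally defined in (\ref{eq:Cpsi}). Equivalently, one may verify directly that in Fourier variables $T_\alpha$ decomposes as $M^+\oplus M^-$, where $M^\pm$ acts on $\hat\psi^\pm_{\cdot,\omega}$ by convolution with the Fourier coefficients of $e^{\pm i\alpha_\omega}$; the pointwise identity $e^{+i\alpha_\omega}\,e^{-i\alpha_\omega}\equiv 1$ then gives $M^+M^- = I$ and hence $\det(M^+)\det(M^-)=1$.

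There is no substantive obstacle here: the statement is essentially a bookkeeping exercise, and the only point requiring a small check is the basis-independence of the Jacobian (trivial for determinants). The content of (\ref{eq:Jac}) is that the \emph{bare} Berezin measure has an exact local chiral $U(1)$ symmetry; the subsequent analysis in Section~\ref{sec:Ward} will apply this identity inside expectations containing the Grassmann Gaussian factor $e^{-C(\psi)}$ from (\ref{eq:Cpsi}), which is \emph{not} invariant under $T_\alpha$ because the cutoff function $\chi^{\omega,\varepsilon}_{[h,N]}$ breaks chiral covariance. It is that failure of invariance of the covariance, combined with the exact invariance of the measure established here, that will produce the anomalous terms in the Ward identities.
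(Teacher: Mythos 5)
Your proposal is correct, and it proves the statement by a slightly different route than the paper. You first establish the general Berezin change-of-variables formula $\int D\psi\, Q(T\psi)=\det(T)\int D\psi\, Q(\psi)$ and then reduce everything to the computation $\det(T_\alpha)=1$, performed in the position basis and transported to the Fourier basis (in which $D\psi$ is defined) by basis-independence of the determinant. The paper instead argues directly at the level of monomials: the Berezin integral of a monomial is nonzero only if every field $\psi^{+}_{\underline{x},\omega}$, $\psi^{-}_{\underline{x},\omega}$ appears exactly once (nilpotency plus the counting $|\Lambda_{L,\frak{a}}|=|\mathbb{D}^{\text{a}}_{L,\frak{a}}|$, so that only full-degree monomials survive), and for such monomials the phases cancel pairwise, $e^{+i\alpha_{\omega}(\underline{x})}e^{-i\alpha_{\omega}(\underline{x})}=1$, so $Q_\alpha=Q$ on the support of the integral. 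Both arguments hinge on the same two ingredients (invertibility of the finite Fourier transform and the pairwise phase cancellation), but the logical order is reversed: the paper deduces ``Jacobian $=1$'' as a corollary of the proposition (see its Remark), whereas you prove the Jacobian statement first and obtain the proposition from it. Your version is marginally more general (it isolates a reusable change-of-variables lemma), the paper's is marginally more self-contained (no appeal to the exterior-power/determinant identity).

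One small imprecision, which does not affect the proof since it occurs only in your optional alternative check: with the conventions of Eq.~(\ref{eq:confpsi}), where $\hat\psi^{+}$ and $\hat\psi^{-}$ carry opposite Fourier phases, the convolution matrices satisfy $M^{+}(M^{-})^{T}=I$ rather than $M^{+}M^{-}=I$ (the two sectors are expressed in mutually conjugate plane-wave bases). The conclusion $\det(M^{+})\det(M^{-})=1$ is unchanged, and your main argument via the position basis is already complete without this remark.
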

\begin{rem} Since all functions on a finite Grassmann algebra are polynomials, Eq. (\ref{eq:Jac}) proves that the Jacobian of (\ref{eq:phase}) is one.
\end{rem}
\begin{proof} The proof is well-known, and we spell it out for completeness. To begin, notice that both the left-hand side and the right-hand side of (\ref{eq:Jac}) are zero if the same Grassmann field $\psi^{\pm}_{\underline{x},\omega}$ appears more than once in the monomial. Also, representing the fields $\psi^{\pm}_{\underline{x},\omega}$ in Fourier space as in (\ref{eq:confpsi}), and using that $|\Lambda_{L,\frak{a}}| = |\mathbb{D}^{\text{a}}_{L,\frak{a}}|$, it is clear that left-hand side and right-hand side of (\ref{eq:Jac}) are zero unless the number of $\psi^{+}_{\underline{x},\omega}$ fields and of $\psi^{-}_{\underline{x},\omega}$ fields are both equal to $n_{\text{e}}\times |\Lambda_{L,\frak{a}}|$.

Hence the fields $\psi^{+}_{\underline{x},\omega}$, $\psi^{-}_{\underline{x},\omega}$ come in pairs, and the identity (\ref{eq:Jac}) follows from the invariance of $\psi^{+}_{\underline{x},\omega} \psi^{-}_{\underline{x},\omega}$ under (\ref{eq:phase}).
\end{proof}
By linearity of the Grassmann integration, the property (\ref{eq:Jac}) implies the following identity, valid for any function $f$ on the finite Grassmann algebra:
\begin{equation}\label{eq:Jac2}
\int D\psi\, f(\psi) = \int D\psi\, f_{\alpha}(\psi)\;,
\end{equation}
with $f_{\alpha}(\psi)$ the function obtained from $f(\psi)$, after the chiral gauge transformation (\ref{eq:chiralphase}). The next proposition is a direct consequence of this identity.
\begin{prop}[Generating Ward identity.] The following identity holds true:
\begin{equation}\label{eq:masterWI}
0 = \int \mu(d\psi) e^{- V(\psi) + B(\psi; \phi) + \Gamma(\psi; A)} \Big[ Z_{\omega} \frak{D}_{\omega, \frak{a}}(\underline{p}) \hat n_{\underline{p},\omega} - \Delta_{\underline{p},\omega}(\psi) + B_{\underline{p},\omega}(\psi;\phi) \Big]\;,
\end{equation}
where:
\begin{eqnarray}\label{eq:Delta}
\Delta_{\underline{p},\omega}(\psi) &:=& \frac{1}{L^{2}} \sum_{\underline{k} \in \mathbb{D}^{\text{a}}_{L,\frak{a}}} \hat \psi^{+}_{\underline{k} - \underline{p},\omega}\Delta_{\omega}(\underline{k}, \underline{p})\hat \psi^{-}_{\underline{k},\omega} \\
\Delta_{\omega}(\underline{k}, \underline{p}) &:=& -\chi^{\omega, \varepsilon}_{[h,N]}(\underline{k} - \underline{p})^{-1} Z_{\omega} \frak{D}_{\omega,\frak{a}}(\underline{k} - \underline{p}) + \chi^{\omega, \varepsilon}_{[h,N]}(\underline{k})^{-1} Z_{\omega} \frak{D}_{\omega,\frak{a}}(\underline{k}) + Z_{\omega} \frak{D}_{\omega,\frak{a}}(\underline{p})\;,\nonumber
\end{eqnarray}
and:
\begin{equation}\label{eq:Bp}
B_{\underline{p},\omega}(\psi;\phi)  := \frac{1}{L^{2}} \sum_{\underline{k} \in \mathbb{D}^{\text{a}}_{L,\frak{a}}} \Big[ \hat \psi^{+}_{\underline{k} - \underline{p},\omega} \hat \phi^{-}_{\underline{k},\omega} - \hat \phi^{+}_{\underline{k},\omega} \hat \psi^{-}_{\underline{k} + \underline{p},\omega}\Big]\;.
\end{equation}
\end{prop}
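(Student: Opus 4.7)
The plan is to apply the generating identity (\ref{eq:Jac2}) to the integrand $f(\psi)=e^{-C(\psi)-V(\psi)+B(\psi;\phi)+\Gamma(\psi;A)}$ of the generating functional, and to differentiate both sides with respect to $\hat\alpha_\omega(\underline{p})$ at $\alpha\equiv 0$. The interaction $V(\psi)$ and the current source $\Gamma(\psi;A)$ depend only on the densities $n_{\underline{x},\omega}=\psi^+_{\underline{x},\omega}\psi^-_{\underline{x},\omega}$, which are manifestly invariant under the local phase rotation (\ref{eq:chiralphase}); hence their $\alpha$-derivatives vanish identically. The symmetry-breaking contributions are therefore concentrated in just two places: the quadratic Gaussian covariance $C(\psi)$ appearing in $\mu_{[h,N]}(d\psi)$, and the linear source $B(\psi;\phi)$.

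The derivative of $B(\psi_\alpha;\phi)$ is the easy one: using $\partial_{\hat\alpha_\omega(\underline{p})}\alpha_\omega(\underline{x})\big|_{\alpha=0}=L^{-2}e^{-i\underline{p}\cdot\underline{x}}$ and expanding the exponentials at first order in $\alpha$, a short Fourier computation identifies the result with $iL^{-2}B_{\underline{p},\omega}(\psi;\phi)$, once one relabels summation indices and reads the definition (\ref{eq:Bp}) with the appropriate sign convention on $\underline{p}$. The derivative of $C(\psi_\alpha)$ requires a bit more care. Writing $C(\psi)$ in position space as $\sum_\omega \frak{a}^4\sum_{\underline{x},\underline{y}}\psi^+_{\underline{x},\omega}\,[g^{[h,N]}_{\omega,\e,\frak{a}}]^{-1}(\underline{x}-\underline{y})\,\psi^-_{\underline{y},\omega}$, the transformation inserts the phase $e^{i(\alpha_\omega(\underline{x})-\alpha_\omega(\underline{y}))}$ into the kernel; differentiating at $\alpha=0$ and returning to momentum space yields a quadratic form in $\hat\psi^\pm$ with kernel $g^{[h,N]}_{\omega,\e,\frak{a}}(\underline{k})^{-1}-g^{[h,N]}_{\omega,\e,\frak{a}}(\underline{k}-\underline{p})^{-1}$.

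By the very definition (\ref{eq:Delta}), this difference equals $Z_\omega\frak{D}_{\omega,\frak{a}}(\underline{p})-\Delta_\omega(\underline{k},\underline{p})$, so the bilinear splits naturally into the classical continuity-equation term $Z_\omega\frak{D}_{\omega,\frak{a}}(\underline{p})\hat n_{\underline{p},\omega}$ and the anomalous contribution $-\Delta_{\underline{p},\omega}(\psi)$. Assembling the two derivatives and dividing by the common factor $iL^{-2}$ produces exactly the identity (\ref{eq:masterWI}). The proof is essentially mechanical; the only subtlety lies in tracking signs and momentum shifts, and in recognising that $\Delta_\omega(\underline{k},\underline{p})$ is engineered precisely to absorb the two independent defects of chiral gauge invariance at the regularised level: the lattice defect in $\frak{D}_{\omega,\frak{a}}$ (finite-difference sines do not obey exact translation invariance in momentum) and the insertion of the smooth cutoff $\chi^{\omega,\e}_{[h,N]}(\underline{k})^{-1}$ in the inverse propagator. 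Neither of these defects vanishes as $\e,\frak{a}\to 0$ inside the support of the cutoff, and it is precisely their persistence which will give rise, after cutoff removal, to the chiral anomaly in the Ward identities for correlation functions used in the remainder of the paper.
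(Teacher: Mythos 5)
Your proposal is correct and follows essentially the same route as the paper's own proof: invariance of the Berezin integral under the local chiral phase transformation (Eq. (\ref{eq:Jac2})), invariance of $V$ and $\Gamma$ because they depend only on the densities $n_{\underline{x},\omega}$, and extraction of the symmetry-breaking terms from the Gaussian covariance $C(\psi)$ and the source $B(\psi;\phi)$ by differentiating with respect to $\hat\alpha_{\omega}(\underline{p})$ at $\alpha=0$; performing the variation of $C$ in position space before Fourier transforming is only a cosmetic difference. One bookkeeping remark: with $\hat g^{[h,N]}_{\omega,\e,\frak{a}}(\underline{k})^{-1}=Z_{\omega}\chi^{\omega,\e}_{[h,N]}(\underline{k})^{-1}\frak{D}_{\omega,\frak{a}}(\underline{k})$ the definition (\ref{eq:Delta}) gives $\hat g^{-1}(\underline{k})-\hat g^{-1}(\underline{k}-\underline{p})=\Delta_{\omega}(\underline{k},\underline{p})-Z_{\omega}\frak{D}_{\omega,\frak{a}}(\underline{p})$, i.e.\ the opposite sign of the relation you state, but this is exactly compensated by the minus sign with which $C$ enters the exponent (the relevant insertion is $-\partial_{\hat\alpha}C_{\alpha}+\partial_{\hat\alpha}B_{\alpha}$), so the assembled identity is precisely (\ref{eq:masterWI}).
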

%
%
%
\begin{proof}
We apply the symmetry (\ref{eq:Jac2}) to the generating functional of correlations. We get:
\begin{eqnarray}
\int \mu(d\psi)\, e^{-V(\psi) + B(\psi; \phi) + \Gamma(\psi; A)} &\equiv& \frac{1}{\mathcal{N}}\int D\psi\, e^{-C(\psi) - V(\psi) + B(\psi; \phi) + \Gamma(\psi; A)} \nonumber\\
&=& \frac{1}{\mathcal{N}}\int D\psi\, e^{-C_{\alpha}(\psi) - V_{\alpha}(\psi) + B_{\alpha}(\psi; \phi) + \Gamma_{\alpha}(\psi; A)} \\
&=& \int \mu(d\psi)e^{-(C_{\alpha}(\psi) - C(\psi)) - V(\psi) + B_{\alpha}(\psi; \phi) + \Gamma(\psi; A)}\;,\nonumber
\end{eqnarray}
where the second identity follows from (\ref{eq:Jac2}), and the last from the invariance of $V(\psi), \Gamma(\psi;A)$ under chiral gauge transformations. Let $\hat \alpha_{\omega}(\underline{p})$ be the Fourier coefficients of $\alpha_{\omega}(\underline{x})$, $\underline{p} \in \mathbb{D}_{L}^{\text{p}}$. Differentiating with respect to $\hat \alpha_{\omega}(\underline{p})$ we get:
\begin{equation}
0 = \frac{\partial}{\partial \hat \alpha_{\omega}(\underline{p})} \int \mu(d\psi)e^{-(C_{\alpha}(\psi) - C(\psi)) - V(\psi) + B_{\alpha}(\psi; \phi) + \Gamma(\psi;A)}\;.
\end{equation}
In particular:
\begin{equation}\label{eq:id}
0 = \int \mu(d\psi)e^{- V(\psi) + B(\psi; \phi) + \Gamma(\psi;A)}\Big[ -\frac{\partial}{\partial \hat \alpha_{\omega}(\underline{p})} C_{\alpha}(\psi) + \frac{\partial}{\partial \hat \alpha_{\omega}(\underline{p})} B_{\alpha}(\psi; \phi) \Big]_{\alpha = 0} \;.
\end{equation}
Let us now compute the derivatives. To begin, it is convenient to linearize the gauge transformation in momentum space:
\begin{equation}
\widehat{\Big(e^{i\alpha_{\omega}(\underline{x})} \psi^{+}_{\underline{x},\omega}\Big)}(\underline{k}) = \hat \psi^{+}_{\underline{k}, \omega} + \frac{i}{L^{2}} \sum_{\underline{p} \in \mathbb{D}_{L}^{\text{p}}} \hat \alpha_{\omega}(\underline{p}) \hat \psi^{+}_{\underline{k}+\underline{p},\omega} + O(\hat\alpha^{2})\;.
\end{equation}
Differentiating with respect to $\hat \alpha_{\omega}(\underline{p})$:
\begin{equation}\label{eq:paralpha1}
\frac{\partial}{\partial \hat \alpha_{\omega}(\underline{p})} \widehat{\Big(e^{i\alpha_{\omega}(\underline{x})} \psi^{+}_{\underline{x},\omega}\Big)}(\underline{k})\Big|_{\alpha = 0} = \frac{i}{L^{2}} \hat \psi^{+}_{\underline{k}+ \underline{p},\omega}\;.
\end{equation}
Similarly,
\begin{equation}\label{eq:paralpha2} 
\frac{\partial}{\partial \hat \alpha_{\omega}(\underline{p})} \widehat{\Big(e^{-i\alpha_{\omega}(\underline{x})} \psi^{-}_{\underline{x},\omega}\Big)}(\underline{k})\Big|_{\alpha = 0} = -\frac{i}{L^{2}} \hat \psi^{-}_{\underline{k}- \underline{p},\omega}\;.
\end{equation}
Let us now compute the derivatives in (\ref{eq:id}). Recalling (\ref{eq:sourceFourier}), and using (\ref{eq:paralpha1}), (\ref{eq:paralpha2}), we get:
\begin{eqnarray}
\frac{\partial}{\partial \hat \alpha_{\omega}(\underline{p})} B_{\alpha}(\psi; \phi) \Big|_{\alpha = 0} &=& \frac{i}{L^{4}} \sum_{\underline{k} \in \mathbb{D}^{\text{a}}_{L,\frak{a}}} \Big[ \hat \psi^{+}_{\underline{k} + \underline{p},\omega} \hat \phi^{-}_{\underline{k},\omega} - \hat \phi^{+}_{\underline{k},\omega} \hat \psi^{-}_{\underline{k} - \underline{p},\omega}\Big] \nonumber\\
&\equiv& \frac{i}{L^{2}}B_{-\underline{p},\omega}(\psi;\phi)\;, 
\end{eqnarray}
with $B_{\underline{p},\omega}(\psi;\phi)$ given by (\ref{eq:Bp}). Finally, consider now the first term in (\ref{eq:id}). Recalling the expression (\ref{eq:Cpsi}) for the covariance $C(\psi)$, and using (\ref{eq:paralpha1}), (\ref{eq:paralpha2}), we get:
\begin{eqnarray}
-\frac{\partial}{\partial \hat \alpha_{\omega}(\underline{p})} C_{\alpha}(\psi)\Big|_{\alpha = 0} &=& \frac{i}{L^{4}} \sum_{\underline{k} \in \mathbb{D}^{\text{a}}_{L,\frak{a}}} \Big[ \hat \psi^{+}_{\underline{k} + \underline{p},\omega}\Big( \hat g_{\omega}(\underline{k}+\underline{p})^{-1} - \hat g_{\omega}(\underline{k})^{-1} \Big) \hat \psi^{-}_{\underline{k},\omega} \Big]\nonumber\\
&\equiv& -\frac{i}{L^{2}}Z_{\omega} \frak{D}_{\omega,\frak{a}}(-\underline{p}) \hat n_{-\underline{p},\omega} - \frac{i}{L^{2}}\Delta_{-\underline{p},\omega}(\psi)\label{eq:parCalpha}
\end{eqnarray}
with $\Delta_{\underline{p},\omega}(\psi)$ given by (\ref{eq:Delta}). Changing $\underline{p}\to -\underline{p}$, and using $\frak{D}_{\omega, \frak{a}}(\underline{p}) = -\frak{D}_{\omega,\frak{a}}(-\underline{p})$, the claim follows.
\end{proof}
Differentiating the identity (\ref{eq:masterWI}) with respect to the external fields $A$ and to $\phi$ we obtain nontrivial relations between correlation functions. Suppose that:
\begin{equation}\label{eq:Zneq0}
\mathcal{Z}(0, 0) \neq 0\;;
\end{equation}
this is certainly true for $|\lambda|$ small enough, due to the finiteness of the Grassmann algebra. The bounds (\ref{eq:controlrcc}) also imply the validity of (\ref{eq:Zneq0}) for a range of $\lambda_{\omega,\omega'}$ which is uniform in the dimension of the algebra. This follows from the analyticity of the free energy, which rules out zeroes of the partition function. 

Differentiating with respect to the external fields the following identity:
\begin{equation}\label{eq:master}
0 = \frac{1}{\mathcal{Z}(A, \phi)}\int \mu(d\psi) e^{- V(\psi) + B(\psi; \phi) + \Gamma(\psi; A)} \Big[ Z_{\omega} \frak{D}_{\omega, \frak{a}}(\underline{p}) \hat n_{\underline{p},\omega} - \Delta_{\underline{p},\omega}(\psi) + B_{\underline{p},\omega}(\psi;\phi) \Big]
\end{equation}
we obtain relations between connected correlation functions. The prefactor $1/\mathcal{Z}(A, \phi)$ is a polynomial in $\phi^{\pm}$, with bounded coefficients thanks to (\ref{eq:Zneq0}), for $\|A\|_{\infty}$ small enough. The identities obtained in this way hold as long as (\ref{eq:Zneq0}) holds true. In the next proposition we collect the identities we will need in our analysis.
\begin{prop}[Ward identities for the correlation functions.]\label{prp:WI} Assume the bounds (\ref{eq:lambdak}) on scales $\geq h+1$. Then, the following identities hold true.
\vspace{.2cm}

\noindent{\em \underline{Density-density identity:}}
\begin{equation}\label{eq:density2anom}
Z_{\omega} \frak{D}_{\omega,\frak{a}} (\underline{p}) \langle \hat n_{\underline{p},\omega}\;; \hat n_{-\underline{p},\omega_{2}} \rangle = \langle \Delta_{\underline{p}, \omega}\;;\hat n_{-\underline{p},\omega_{2}} \rangle\;.
\end{equation}
{\em \underline{Vertex identity:}}
\begin{equation}\label{eq:vertexWI}
Z_{\omega} \frak{D}_{\omega,\frak{a}}(\underline{p}) \langle \hat n_{\underline{p},\omega}\;; \hat \psi^{-}_{\underline{k}, \omega_{2}} \;; \hat \psi^{+}_{\underline{k} + \underline{p},\omega_{2}} \rangle = \delta_{\omega,\omega_{2}}(\langle \hat \psi^{-}_{\underline{k},\omega} \hat \psi^{+}_{\underline{k},\omega} \rangle - \langle  \hat \psi^{-}_{\underline{k} + \underline{p},\omega} \hat \psi^{+}_{\underline{k} + \underline{p},\omega}\rangle) + \langle \Delta_{\underline{p}, \omega} \;; \hat \psi^{-}_{\underline{k}, \omega_{2}}\;; \hat \psi^{+}_{\underline{k} + \underline{p},\omega_{2}}\rangle\;.
\end{equation}
{\em \underline{$(1,4)$-point identity:}} 
\begin{eqnarray}\label{eq:WI6}
&&Z_{\omega} \frak{D}_{\omega,\frak{a}}(\underline{p}) \langle \hat n_{\underline{p},\omega}\;; \hat \psi^{+}_{\underline{k}, \omega_{2}}\;; \hat \psi^{-}_{\underline{q}, \omega_{2}}\;; \hat \psi^{+}_{\underline{k}', \omega_{3}} \;; \hat \psi^{-}_{\underline{q}_{*}, \omega_{3}} \rangle \nonumber\\
&&\qquad = \delta_{\omega,\omega_{3}}(\langle \hat \psi^{+}_{\underline{k}, \omega_{2}}\;; \hat \psi^{-}_{\underline{q}, \omega_{2}}\;; \hat \psi^{+}_{\underline{k}', \omega_{3}} \;; \hat \psi^{-}_{\underline{q}_{*} + \underline{p}, \omega_{3}} \rangle - \langle \hat \psi^{+}_{\underline{k}, \omega_{2}}\;; \hat \psi^{-}_{\underline{q}, \omega_{2}}\;; \hat \psi^{+}_{\underline{k}' - \underline{p}, \omega_{3}} \;; \hat \psi^{-}_{\underline{q}_{*}, \omega_{3}} \rangle) \nonumber\\
&&\qquad\quad + \delta_{\omega,\omega_{2}}(\langle \hat \psi^{+}_{\underline{k}, \omega_{2}}\;; \hat \psi^{-}_{\underline{q} + \underline{p}, \omega_{2}}\;; \hat \psi^{+}_{\underline{k}', \omega_{3}} \;; \hat \psi^{-}_{\underline{q}_{*}, \omega_{3}} \rangle - \langle \hat \psi^{+}_{\underline{k} - \underline{p}, \omega_{2}}\;; \hat \psi^{-}_{\underline{q}, \omega_{2}}\;; \hat \psi^{+}_{\underline{k}', \omega_{3}} \;; \hat \psi^{-}_{\underline{q}_{*}, \omega_{3}} \rangle) \nonumber\\
&&\qquad\quad+ \langle \Delta_{\underline{p},\omega}\;; \hat \psi^{+}_{\underline{k}, \omega_{2}}\;; \hat \psi^{-}_{\underline{q}, \omega_{2}}\;; \hat \psi^{+}_{\underline{k}', \omega_{3}} \;; \hat \psi^{-}_{\underline{q}_{*}, \omega_{3}} \rangle \;,
\end{eqnarray}
with the understanding that $\underline{q}_{*} = -\underline{p} + \underline{k} - \underline{q} + \underline{k}'$.
\end{prop}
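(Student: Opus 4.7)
All three identities are obtained by differentiating the master Ward identity (\ref{eq:masterWI}), divided by $\mathcal{Z}(A,\phi)\neq 0$, with respect to external sources, and evaluating at $A=\phi=0$. Writing $H:=-V(\psi)+B(\psi;\phi)+\Gamma(\psi;A)$ and applying the generating identity
\begin{equation*}
\frac{\partial}{\partial J}\langle \mathcal{O}\rangle_{A,\phi} \;=\; \Big\langle \mathcal{O};\frac{\partial H}{\partial J}\Big\rangle_{A,\phi} + \Big\langle \frac{\partial\mathcal{O}}{\partial J}\Big\rangle_{A,\phi}
\end{equation*}
iteratively, each source derivative at $A=\phi=0$ inserts either a factor $L^{-2}\hat\psi^{\mp}$ (from $\partial B/\partial\hat\phi^{\pm}$) or a factor $L^{-2}Z_{0,\omega_2}\hat n_{\underline{p},\omega_2}$ (from $\partial\Gamma/\partial\hat A$) inside the truncated expectation. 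Contact terms arise precisely when a derivative acts on the explicit $\phi$-factor of $B_{\underline{p},\omega}(\psi;\phi)$ rather than on the measure.

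\textbf{Density-density identity.} I would take a single derivative of (\ref{eq:masterWI}) with respect to $\hat A_{-\underline{p},0,\omega_2}$. Since $B_{\underline{p},\omega}(\psi;\phi)$ is $A$-independent and vanishes at $\phi=0$ (it is linear in $\phi$), the third summand contributes nothing. The first two summands produce the truncated two-point correlations $Z_{\omega}\frak{D}_{\omega,\frak{a}}(\underline{p})\langle\hat n_{\underline{p},\omega};\hat n_{-\underline{p},\omega_2}\rangle$ and $-\langle\Delta_{\underline{p},\omega};\hat n_{-\underline{p},\omega_2}\rangle$, which, after factoring out the common $L^{-2}Z_{0,\omega_2}$, yield (\ref{eq:density2anom}).

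\textbf{Vertex identity.} Next I would take two derivatives, with respect to $\hat\phi^+_{\underline{k},\omega_2}$ and $\hat\phi^-_{\underline{k}+\underline{p},\omega_2}$. For the first two summands of (\ref{eq:masterWI}), whose integrands are $\phi$-independent, both derivatives act on the measure and yield $L^{-4}Z_{\omega}\frak{D}_{\omega,\frak{a}}(\underline{p})\langle\hat n_{\underline{p},\omega};\hat\psi^-_{\underline{k},\omega_2};\hat\psi^+_{\underline{k}+\underline{p},\omega_2}\rangle$ and $-L^{-4}\langle\Delta_{\underline{p},\omega};\hat\psi^-_{\underline{k},\omega_2};\hat\psi^+_{\underline{k}+\underline{p},\omega_2}\rangle$. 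For the third summand $\langle B_{\underline{p},\omega}\rangle_{A,\phi}$, linearity in $\phi$ forces exactly one derivative to hit the explicit $\phi$-factor in $B_{\underline{p},\omega}$ and the other to act on the measure. The two monomials appearing in (\ref{eq:Bp}) produce, respectively, the contact terms $\delta_{\omega,\omega_2}\langle\hat\psi^-_{\underline{k},\omega}\hat\psi^+_{\underline{k},\omega}\rangle$ and $-\delta_{\omega,\omega_2}\langle\hat\psi^-_{\underline{k}+\underline{p},\omega}\hat\psi^+_{\underline{k}+\underline{p},\omega}\rangle$ (the one-point functions being zero at $A=\phi=0$), which combine to yield (\ref{eq:vertexWI}) after sign bookkeeping.

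\textbf{$(1,4)$-point identity and main obstacle.} For (\ref{eq:WI6}) I would differentiate four times, with respect to the two pairs $(\hat\phi^+_{\underline{k},\omega_2},\hat\phi^-_{\underline{q},\omega_2})$ and $(\hat\phi^+_{\underline{k}',\omega_3},\hat\phi^-_{\underline{q}_*,\omega_3})$. As before, the first two summands yield the truncated five-point correlation $Z_{\omega}\frak{D}_{\omega,\frak{a}}(\underline{p})\langle\hat n_{\underline{p},\omega};\cdots\rangle$ and the anomaly term $\langle\Delta_{\underline{p},\omega};\cdots\rangle$ on the left-hand and last-term positions of the stated identity. The third summand $\langle B_{\underline{p},\omega}\rangle_{A,\phi}$, still linear in $\phi$, admits four-fold derivatives in which exactly one of the four derivatives is absorbed by the explicit $\phi$-factor of $B_{\underline{p},\omega}$ and the remaining three generate a truncated four-point function from the measure. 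Each of the four possible absorptions produces a Kronecker delta $\delta_{\omega,\omega_i}$ ($i=2$ or $3$) and shifts the corresponding external momentum by $\pm\underline{p}$, reproducing precisely the four signed four-point function differences on the right-hand side of (\ref{eq:WI6}). The main obstacle is the careful bookkeeping of Grassmann signs arising from the anticommutation of four Grassmann derivatives, combined with the relative sign between the two monomials composing $B_{\underline{p},\omega}$; these signs must match exactly the stated signs and must be tracked consistently through the momentum relabelings. Since the combinatorics are finite and the calculation is structurally identical to that of the vertex identity, this amounts to a direct verification rather than a conceptual difficulty.
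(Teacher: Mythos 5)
Your proposal follows essentially the same route as the paper: differentiate the master Ward identity (divided by $\mathcal{Z}(A,\phi)$) with respect to the appropriate external sources, set $A=\phi=0$, and note that the contact terms arise exactly from the derivatives that hit the explicit $\phi$-factors in $B_{\underline{p},\omega}(\psi;\phi)$, while the density-density case has no contact term since $B_{\underline{p},\omega}$ is linear in $\phi$. The only quibble is a labeling slip in the first case: with the source term as written, the insertion of $\hat n_{-\underline{p},\omega_{2}}$ is produced by differentiating with respect to $\hat A_{\underline{p},0,\omega_{2}}$ rather than $\hat A_{-\underline{p},0,\omega_{2}}$, which is pure relabeling and does not affect the argument.
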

\begin{proof}
The density-density identity is obtained differentiating Eq. (\ref{eq:master}) with respect to $\hat A_{\underline{p}, \omega_{2}}$, and setting $A = \phi = 0$. The vertex identity is obtained differentiating Eq. (\ref{eq:master}) with respect to $\hat \phi^{-}_{\underline{k} + \underline{p}, \omega_{2}}$ and then with respect to $\hat \phi^{+}_{\underline{k},\omega_{2}}$, and setting $A=\phi=0$. The $(1,4)$-point identity is obtained differentiating Eq. (\ref{eq:master}) with respect to (in this order, from left to right): $\hat \phi^{+}_{\underline{q}_{*},\omega_{3}}$, $\hat \phi^{-}_{\underline{k}', \omega_{3}}$, $\hat \phi^{+}_{\underline{q},\omega_{2}}$, $\hat \phi^{-}_{\underline{k},\omega_{2}}$, and then setting $A=\phi=0$.
\end{proof}
It is instructive to compute the right-hand side of (\ref{eq:density2anom}) in the absence of interactions, $\lambda = 0$. In the limit $L\to \infty$, $\varepsilon\to 0$, $\frak{a}\to 0$ we get:
\begin{eqnarray}\label{eq:buban}
\frac{1}{L^{2}}\langle \Delta_{\underline{p}, \omega}\;; \hat n_{-\underline{p},\omega_{2}} \rangle^{0} &=& -\delta_{\omega,\omega_{2}} \int \frac{d\underline{k}}{(2\pi)^{2}}\, \hat g_{\omega}(\underline{k} - \underline{p}) \hat g_{\omega}(\underline{k}) \Delta_{\omega}(\underline{k},\underline{p})\nonumber\\
&\equiv& \delta_{\omega,\omega_{2}} \frac{1}{Z_{\omega}}\frak{B}^{h,N}_{\omega}(\underline{p})\;.
\end{eqnarray}
We shall refer to $\frak{B}^{h,N}_{\omega}(\underline{p})$ as the {\it anomalous bubble diagram}. In order to evaluate this quantity, we proceed as follows. We write:
\begin{eqnarray}
&&\chi^{\omega}_{[h,N]}(\underline{k} - \underline{p}) \chi^{\omega}_{[h,N]}(\underline{k})\Big[ -(\chi^{\omega}_{[h,N]}(\underline{k} - \underline{p})^{-1} - 1) D_{\omega}(\underline{k} - \underline{p}) + (\chi^{\omega}_{[h,N]}(\underline{k})^{-1} - 1) D_{\omega}(\underline{k}) \Big] \nonumber\\
&& = -(1 - \chi^{\omega}_{[h,N]}(\underline{k} - \underline{p})) \chi^{\omega}_{[h,N]}(\underline{k}) D_{\omega}(\underline{k} - \underline{p}) + (1 - \chi^{\omega}_{[h,N]}(\underline{k}))\chi^{\omega}_{[h,N]}(\underline{k} - \underline{p}) D_{\omega}(\underline{k})\;.
\end{eqnarray}
We shall be interested in the value of this integral for $\underline{p}$ fixed, and as $h\to -\infty$, $N\to \infty$. Hence,
\begin{eqnarray}
\frak{B}^{h,N}_{\omega}(\underline{p}) &=&  \int \frac{d\underline{k}}{(2\pi)^{2}}\, \frac{1}{D_{\omega}(\underline{k})} \chi^{\omega}_{[h,N]}(\underline{k}) ( \chi^{\omega}_{[h,N]}(\underline{k} + \underline{p}) - \chi^{\omega}_{[h,N]}(\underline{k} - \underline{p}) )\;.
\end{eqnarray}
We see that the argument of the integral is nonzero only if $\|\underline{k} \pm \underline{p} \|\sim 2^{h}$, or if $\| \underline{k} \pm \underline{p} \| \sim 2^{N}$. In the first case, being $\underline{p}$ fixed, we either have $\|\underline{k} + \underline{p} \| \sim 2^{h}$ or $\|\underline{k} - \underline{p} \| \sim 2^{h}$. In both cases, the volume of the region of integration is estimated as $\|\underline{p}\| 2^{h}$, and the integrand is bounded by $\|\underline{p}\|^{-1}$. Hence, the corresponding contribution vanishes as $h\to -\infty$. 

In the latter case, we approximate:
\begin{equation}\label{eq:taylor}
\chi^{\omega}_{[h,N]}(\underline{k} + \underline{p}) - \chi^{\omega}_{[h,N]}(\underline{k} - \underline{p}) = 2\underline{p}\cdot \nabla_{\underline{k}}  \chi^{\omega}_{N}(\underline{k}) + O(2^{-2N})\;,
\end{equation}
where we used that the effect of the infrared cutoff is invisible on the chosen momentum range, and the fact that every derivative of the cutoff function is bounded by $2^{-N}$. The error term in (\ref{eq:taylor}) gives a vanishing contribution to the integral as $N\to \infty$. Concerning the main term, we rewrite it as, again neglecting the effect of the infrared cutoff:
\begin{equation}
\int \frac{d\underline{k}}{(2\pi)^{2}}\, \frac{1}{D_{\omega}(\underline{k})} \chi^{\omega}_{N}(\underline{k}) 2\underline{p}\cdot \nabla_{\underline{k}}  \chi^{\omega}_{N}(\underline{k}) = \int \frac{d\underline{k}}{(2\pi)^{2}}\, \frac{1}{D_{\omega}(\underline{k})} \chi^{\omega}(\underline{k}) 2\underline{p}\cdot \nabla_{\underline{k}}  \chi^{\omega}(\underline{k})\;,
\end{equation}
where we set $\chi^{\omega}(\underline{k})\equiv \chi_{1}^{\omega}(\underline{k})$ and we used that the integral is scale invariant. Next, we write:
\begin{eqnarray}
\underline{p}\cdot \nabla_{\underline{k}} \chi^{\omega}(\underline{k}) &=& (p_{0} \partial_{0} + p_{1} \partial_{1})\chi^{\omega}(\underline{k}) \nonumber\\
&=& \Big( \frac{p_{0} k_{0}}{\|\underline{k}\|_{\omega}} + \frac{p_{1} |v_{\omega}|^{2} k_{1}}{\|\underline{k}\|_{\omega}} \Big) \chi'(\|\underline{k}\|_{\omega})\;.
\end{eqnarray}
Hence,
\begin{eqnarray}
\int \frac{d\underline{k}}{(2\pi)^{2}}\, \frac{1}{D_{\omega}(\underline{k})} \chi^{\omega}_{N}(\underline{k}) 2\underline{p}\cdot \nabla_{\underline{k}}  \chi^{\omega}_{N}(\underline{k}) &=& \int \frac{d\underline{k}}{(2\pi)^{2}}\, \frac{2}{D_{\omega}(\underline{k})} \chi(\|\underline{k}\|_{\omega}) \Big( \frac{p_{0} k_{0}}{\|\underline{k}\|_{\omega}} + \frac{p_{1} |v_{\omega}|^{2} k_{1}}{\|\underline{k}\|_{\omega}} \Big) \chi'(\|\underline{k}\|_{\omega})\nonumber\\
&=& \frac{1}{|v_{\omega}|}\int \frac{d\underline{k}}{(2\pi)^{2}}\, \frac{2}{-ik_{0} + k_{1}} \chi(\|\underline{k}\|) \Big( \frac{p_{0} k_{0}}{\|\underline{k}\|} + \frac{p_{1} v_{\omega} k_{1}}{\|\underline{k}\|} \Big) \chi'(\|\underline{k}\|)\nonumber
\end{eqnarray}
where in the last step we performed the change of variables $v_{\omega} k_{1} \to k_{1}$, and we used that $|v_{\omega}|^{2} / v_{\omega} = v_{\omega}$. Finally, using that:
\begin{equation}
\int \frac{d\underline{k}}{(2\pi)^{2}}\, \frac{1}{-ik_{0} + k_{1}} \chi(\|\underline{k}\|) \frac{-ik_{0}}{\|\underline{k}\|} \chi'(\|\underline{k}\|) = \int \frac{d\underline{k}}{(2\pi)^{2}}\, \frac{1}{-ik_{0} + k_{1}} \chi(\|\underline{k}\|) \frac{k_{1}}{\|\underline{k}\|} \chi'(\|\underline{k}\|)
\end{equation}
we rewrite:
\begin{eqnarray}
\int \frac{d\underline{k}}{(2\pi)^{2}}\, \frac{1}{D_{\omega}(\underline{k})} \chi^{\omega}_{N}(\underline{k}) 2\underline{p}\cdot \nabla_{\underline{k}}  \chi^{\omega}_{N}(\underline{k}) &=& \frac{(i p_{0} + v_{\omega} p_{1})}{|v_{\omega}|} \int \frac{d\underline{k}}{(2\pi)^{2}}\, \frac{2}{-ik_{0} + k_{1}} \chi(\|\underline{k}\|) \frac{k_{1}}{\|\underline{k}\|} \chi'(\|\underline{k}\|) \nonumber\\
&=& \frac{(i p_{0} + v_{\omega} p_{1})}{|v_{\omega}|} \int \frac{d\underline{k}}{(2\pi)^{2}}\, \frac{1}{\|\underline{k}\|} \chi(\|\underline{k}\|)  \chi'(\|\underline{k}\|)\nonumber\\
&=& \frac{(i p_{0} + v_{\omega} p_{1})}{4\pi |v_{\omega}|}\;.
\end{eqnarray}
Therefore, setting $\tilde{\underline{p}} = (p_{0}, -p_{1})$, we obtain:
\begin{equation}\label{eq:bubble}
\frak{B}_{\omega}(\underline{p}) := \lim_{h\to -\infty}\lim_{N\to \infty} \frak{B}^{h,N}_{\omega}(\underline{p}) = -\frac{D_{\omega}(\tilde{\underline{p}})}{4\pi |v_{\omega}|}\;,\qquad \tilde{\underline{p}} = (p_{0}, -p_{1})\;.
\end{equation}
Together with (\ref{eq:density2anom}), (\ref{eq:buban}), this result shows that, for $\lambda = 0$:
\begin{equation}
\lim_{h\to -\infty}\lim_{N\to \infty}  \lim_{L\to \infty} \lim_{\varepsilon \to 0}\lim_{\frak{a}\to 0} \frac{1}{L^{2}}\langle \hat n_{\underline{p},\omega}\;; \hat n_{-\underline{p},\omega_{2}} \rangle^{0} =  -\delta_{\omega,\omega_{2}} \frac{1}{Z^{2}_{\omega}} \frac{1}{4\pi |v_{\omega}|} \frac{D_{\omega}(\tilde{\underline{p}})}{D_{\omega} (\underline{p}) }\;. 
\end{equation}
Later, we will extend this computation to the case of interacting fermions, $\lambda\neq 0$.

The next proposition will provide useful information about the structure of the correction terms in the Ward identities (\ref{prp:WI}), for the infrared and ultraviolet regularized theory. We will only state results for the corrections to the vertex Ward identity and to the $(1,4)$-point function Ward identity, Eqs. (\ref{eq:vertexWI}), (\ref{eq:WI6}). Thanks to these results, we will be able to close the Schwinger-Dyson equation (\ref{eq:SD}), and to control the flow of the marginal couplings. Recall the notation:
\begin{equation}
\langle \cdot \rangle_{h,N} = \lim_{L\to \infty} \lim_{\varepsilon\to 0} \lim_{\frak{a}\to 0} L^{-2}\langle \cdot \rangle_{h,N,L,\frak{a},\varepsilon}\;.
\end{equation}
The existence of the limit is guaranteed by our construction.
\begin{prop}[Analysis of the correction terms.]\label{prp:corrections} Suppose that the running coupling constants on scales $\geq h+1$ satisfy the estimates (\ref{eq:controlrcc}). Then, for $\lambda$ small enough, the following holds true:
\vspace{.2cm}

%
%
\noindent{\em \underline{Vertex function correction identity:}} 
\begin{eqnarray}\label{eq:vertexcorr}
&&\langle \Delta_{\underline{p}, \omega} \;;  \hat \psi^{-}_{\underline{k}, \omega_{2}}\;; \hat \psi^{+}_{\underline{k} + \underline{p},\omega_{2}} \rangle_{h,N}\\&&\qquad = -\frak{B}^{N}_{\omega}(\underline{p}) \sum_{\tilde\omega} \lambda_{\omega,\tilde\omega} Z_{\tilde\omega} v(\underline{p}) \langle \hat n_{\tilde\omega,\underline{p}}\;;  \hat \psi^{-}_{\underline{k}, \omega_{2}}\;; \hat \psi^{+}_{\underline{k} + \underline{p},\omega_{2}}\rangle_{h,N} + H^{h,N}_{1,2;\omega,\omega_{2}}(\underline{p}, \underline{k})\nonumber
\end{eqnarray}
\noindent{\em \underline{$(1,4)$-point correlation correction identity:}}
\begin{eqnarray}\label{eq:6ptWI}
&&\langle \Delta_{\underline{p},\omega}\;; \hat \psi^{+}_{\underline{k}_{1}, \omega_{2}}\;; \hat \psi^{-}_{\underline{k}_{2}, \omega_{2}}\;; \hat \psi^{+}_{\underline{k}_{3}, \omega_{3}} \;; \hat \psi^{-}_{\underline{k}_{4} - \underline{p}, \omega_{3}} \rangle_{h,N} \\&&\qquad = -\frak{B}^{N}_{\omega}(\underline{p}) \sum_{\tilde\omega} \lambda_{\omega,\tilde\omega} Z_{\tilde\omega} v(\underline{p}) \langle \hat n_{\tilde\omega,\underline{p}}\;; \hat \psi^{+}_{\underline{k}_{1}, \omega_{2}}\;; \hat \psi^{-}_{\underline{k}_{2}, \omega_{2}}\;; \hat \psi^{+}_{\underline{k}_{3}, \omega_{3}} \;; \hat \psi^{-}_{\underline{k}_{4} - \underline{p}, \omega_{3}} \rangle_{h,N}\nonumber\\&&\quad\qquad + H^{h,N}_{1,4;\omega,\omega_{2},\omega_{3}}(\underline{K}, \underline{p})\;,\nonumber
\end{eqnarray}
where $\frak{B}^{N}_{\omega}(\underline{p}) = \lim_{h\to -\infty} \frak{B}^{h,N}_{\omega}(\underline{p})$ with $\frak{B}^{h,N}_{\omega}(\underline{p})$ given by (\ref{eq:buban}), and the momenta are chosen so to satisfy momentum conservation. The error terms satisfy the following bounds, for $N$ large enough, and for external momenta of order $2^{h}$:
\begin{eqnarray}\label{eq:errest}
\Big| \frac{1}{D_{\omega}(\underline{p})} H^{h,N}_{1,2;\omega,\omega_{2}}(\underline{p}, \underline{k})\Big| &\leq& C \frac{\lambda_{\geq h}}{Z_{\omega_{2}, h-1}} 2^{- 2h}\nonumber\\
\Big| \frac{1}{L^{2}} \sum_{\underline{p} \in \mathbb{D}_{L}^{\text{p}}}  F_{\geq h}(\underline{p}) \frac{1}{D_{\omega}(\underline{p})} H^{h,N}_{1,4;\omega,\omega_{2},\omega_{3}}(\underline{K}, \underline{p})\Big| &\leq& C\frac{\lambda_{\geq h}}{Z_{\omega_{2}, h-1} Z_{\omega_{3}, h-1}} 2^{-3h}\;,
\end{eqnarray}
where $F_{\geq h}(\underline{p})$ is an arbitrary smooth function, supported for $\|\underline{p}\| \geq 2^{h}$, such that $\| \|\underline{p}\|^{n} F_{\geq h}(\underline{p}) \|_{1} \leq C_{n}$ for any $n\geq 0$.
\end{prop}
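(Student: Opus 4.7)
\medskip

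\noindent\textbf{Proof proposal for Proposition \ref{prp:corrections}.}

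The plan is to exploit the explicit structure of the correction operator $\Delta_{\omega}(\underline{k},\underline{p})$ defined in Eq.~(\ref{eq:Delta}), which vanishes identically on any momentum where $\chi^{\omega,\varepsilon}_{[h,N]}(\underline{k}-\underline{p}) = \chi^{\omega,\varepsilon}_{[h,N]}(\underline{k}) = 1$. Hence $\Delta_{\omega}$ is supported on the boundaries of the cutoff function, and admits a natural decomposition
\begin{equation}
\Delta_{\omega}(\underline{k},\underline{p}) = \Delta^{\text{UV}}_{\omega}(\underline{k},\underline{p}) + \Delta^{\text{IR}}_{\omega}(\underline{k},\underline{p}),
\end{equation}
obtained by splitting $\chi_{[h,N]} = \chi_{[h,0]} + (1-\chi_{[h,0]})$. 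The UV piece is supported for $\|\underline{k}\|_{\omega} \gtrsim 1$, while the IR piece is supported near $\|\underline{k}\|_{\omega} \sim 2^{h}$. I would then correspondingly split $\Delta_{\underline{p},\omega}(\psi) = \Delta^{\text{UV}}_{\underline{p},\omega}(\psi) + \Delta^{\text{IR}}_{\underline{p},\omega}(\psi)$ and treat the two contributions separately.

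First I would show that the IR piece is negligible. When the external momenta $\underline{k}, \underline{k}+\underline{p}$ (respectively $\underline{k}_1,\ldots,\underline{k}_4$) live on scale $2^{h}$, the correlation $\langle \Delta^{\text{IR}}_{\underline{p},\omega}\,;\cdots\rangle$ factors a propagator pair $\hat g_\omega(\underline{k}')\hat g_\omega(\underline{k}'-\underline{p})$ forced onto the IR cutoff scale by the support of $\Delta^{\text{IR}}$. Using the tree/cluster expansion of Section~\ref{sec:corrfunc} together with the single-scale bound (\ref{eq:singscale}) and the analog of the vanishing-bubble mechanism in Proposition~\ref{prp:gainUV}, one obtains an extra factor $2^{\theta h}$ that sends this contribution to zero as $h\to -\infty$; hence it can be absorbed in $H^{h,N}_{1,2}$ and $H^{h,N}_{1,4}$. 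I would then turn to the UV piece, which is where the anomaly is generated. The point is that, for external momenta on scale $2^{h}$ and $\underline{p}$ on scale $\geq 2^{h}$, the UV part $\Delta^{\text{UV}}_{\underline{p},\omega}(\psi)$ is saturated by internal momenta living at scales $\geq 0$, and is thus controlled by an explicit finite integral. The leading behavior is dictated by the non-interacting bubble computation performed after Eq.~(\ref{eq:buban}): one opens the cumulant in $\langle \Delta^{\text{UV}}_{\underline{p},\omega}\,;\cdots\rangle$ by performing one Wick contraction against the interaction vertex $V(\psi)$, then contracts the two UV propagators sandwiching $\Delta^{\text{UV}}_{\omega}$ to produce precisely $\frak{B}^{N}_{\omega}(\underline{p})$ multiplying $\sum_{\tilde\omega} \lambda_{\omega,\tilde\omega} Z_{\tilde\omega} \hat v(\underline{p}) \langle \hat n_{\tilde\omega,\underline{p}};\cdots\rangle$, up to remainders.

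The remainders are of two types: (i) terms where the UV propagators carrying the anomalous bubble are dressed by self-energy or four-point insertions coming from the interacting measure, and (ii) terms where more than one vertex of $V$ is extracted before contraction. To bound them, I would combine the improved UV bounds of Proposition~\ref{prp:gainUV} with the tree expansion for the relevant correlations; the key is that every additional interaction vertex is marginal but, by virtue of the vanishing of the relativistic bubble, produces an effective scaling dimension that is strictly negative, yielding powers of $|\lambda| 2^{\theta(h-0)}$. The sum over $\underline{p}$ in the second inequality of (\ref{eq:errest}) is controlled by the smoothness and compact support of $F_{\geq h}$, which gives an additional $\|\underline{p}\|^{-1}$-integrability after cancelling the $D_\omega(\underline{p})^{-1}$ singularity; this is the standard mechanism by which the anomaly survives in the sum over the auxiliary momentum in the Schwinger-Dyson equation of Proposition~\ref{prp:SD}.

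The main obstacle will be step (i) above: controlling the dressing of the anomalous bubble by interaction insertions and establishing non-renormalization at the required order. The Adler-Bardeen-type argument requires showing that radiative corrections to the bubble $\frak{B}^{N}_{\omega}(\underline{p})$ are uniformly bounded by $|\lambda| 2^{\theta h}$, which in turn rests on the strict irrelevance, in the sense of the RG, of every vertex that would modify the coefficient of the anomaly. This is delicate because the chiral index structure must be preserved through the multiscale integration, and the proof requires combining the local chiral gauge identity (\ref{eq:masterWI}) applied at each scale with the improved ultraviolet estimates of Proposition~\ref{prp:gainUV}, exactly as done in \cite{BMWI, BMchiral, BFM2} for the two-chirality case. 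The generalization to arbitrary $\omega$ is conceptually the same but requires verifying that the multi-index coupling $\lambda_{\omega,\omega'}$ does not introduce spurious mixing between different chiral channels in the bubble contraction; this follows from the global chiral gauge symmetry preserved by the cutoff regularization.
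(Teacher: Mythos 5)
Your overall strategy -- isolate the cutoff-supported correction operator $\Delta_{\underline{p},\omega}$, extract the anomalous bubble $\frak{B}^{N}_{\omega}(\underline{p})$ from its ultraviolet-boundary part, and control everything else through the multiscale/tree expansion using the improved bounds of Proposition \ref{prp:gainUV} (vanishing of the relativistic bubble, i.e.\ Adler--Bardeen non-renormalization) -- is the same route the paper relies on: the paper gives no self-contained proof of Proposition \ref{prp:corrections}, but defers to Lemma 3 of \cite{MaQED}, Lemma 1 of \cite{BMchiral} and Lemma 4.2 of \cite{BFM3}, noting that multiple chiralities and arbitrary velocities only affect the external-line $Z$-factors. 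Two remarks on your set-up: after splitting $\chi_{[h,N]}=\chi_{[h,0]}+(1-\chi_{[h,0]})$, the ``UV'' part of $\Delta_{\omega}$ is in fact supported near the scale $2^{N}$ rather than merely $\|\underline{k}\|_{\omega}\gtrsim 1$, since $\Delta_{\omega}(\underline{k},\underline{p})$ vanishes identically (in the $\frak{a}\to 0$ limit) wherever both cutoff factors equal one -- harmless, but worth stating correctly; and the extraction of the main term by ``one Wick contraction against the interaction vertex'' only produces the lowest-order diagram, so the identity with the full interacting correlation $\langle \hat n_{\tilde\omega,\underline{p}};\cdots\rangle_{h,N}$ on the right-hand side requires the systematic multiscale analysis of the generating functional with the $\Delta$-insertion, which you do gesture at in your final paragraph.

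The genuine gap is your treatment of the infrared-boundary piece. The claim that it carries ``an extra factor $2^{\theta h}$ that sends this contribution to zero as $h\to-\infty$'' is the statement appropriate for \emph{fixed} external momenta -- that is the content of Proposition \ref{prp:corrsref} -- but it is not what the bounds (\ref{eq:errest}) require. There the external momenta are themselves of order $2^{h}$, hence they sit exactly in the support of the infrared part of $\Delta_{\omega}(\underline{k},\underline{p})$, and those contributions are not small in absolute terms; what must be proved is the quantitative bound with the prefactor $\lambda_{\geq h}$ and the precise scalings $2^{-2h}/Z_{\omega_{2},h-1}$ and $2^{-3h}/(Z_{\omega_{2},h-1}Z_{\omega_{3},h-1})$, i.e.\ that after subtracting the bubble term the remainder is suppressed by a coupling factor relative to the corresponding correlation functions, uniformly in $h$. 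In particular, the zeroth-order (free) infrared-boundary contributions are not manifestly proportional to $\lambda$ and must be shown to cancel or recombine at the chosen momenta; handling this is precisely why the paper points to the infrared-cutoff analysis of \cite{BMchiral} rather than to a support argument alone. As written, the step absorbing the IR piece into $H^{h,N}_{1,2}$ and $H^{h,N}_{1,4}$ is asserted rather than established, and it is the one place where the finite-$h$ estimates (\ref{eq:errest}), on which Sections \ref{sec:flowFermi} and the control of the flow of $\lambda_{h,\omega,\omega'}$ rest, could fail.
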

\begin{rem}
The proof of (\ref{eq:vertexcorr}) is analogous to the proof of Lemma 3 of \cite{MaQED}, while the proof of (\ref{eq:6ptWI}) is analogous to the proof of Lemma 1 of \cite{BMchiral}. To be precise, the reference \cite{MaQED} considers the case of removed infrared cutoff, $h=-\infty$; the effect of the infrared cutoff can be studied as in \cite{BMchiral}. The reference \cite{BMchiral} considers the case of interacting fermions with local interactions, which introduces extra difficulties with respect to our case. The strategy of \cite{BMchiral} has been revisited and extended in \cite{BFM3}, see Lemma 4.2 there. Finally, we point out that these references consider relativistic fermions with two chiralities, and opposite velocities. However, inspection of the proofs shows that the same arguments apply unchanged to the case of multiple chiralities, with arbitrary velocities. In particular, the $Z$-dependence of the bounds above is only due to the external fermionic lines.
\end{rem}

\subsection{The flow of the Fermi velocity and of the vertex renormalization}\label{sec:flowFermi}

As a first consequence of the Ward identities of Proposition \ref{prp:WI}, combined with the analysis of the correction terms of Proposition \ref{prp:corrections}, we will control the flow of the Fermi velocity $v_{h-1,\omega}$ and of the vertex renormalization $Z_{0,h-1,\omega}$.
\begin{rem}[Notations.]\label{rem:not} In the following, we shall use the following notations:
\begin{eqnarray}\label{eq:LZdef}
&&(\Lambda_{Z})_{\omega_{1},\omega_{2}} := \frac{1}{Z_{\omega_{1}}} \lambda_{\omega_{1},\omega_{2}} Z_{\omega_{2}}\;,\quad Z := \text{diag}(Z_{\omega})\;,\quad |v| := \text{diag}(|v_{\omega}|)\;,\nonumber\\&&\frak{B}^{N}(\underline{p}) := \text{diag}(\frak{B}^{N}_{\omega}(\underline{p}))\;,\quad D(\underline{p}) := \text{diag}(D_{\omega}(\underline{p}))\;.
\end{eqnarray}
Moreover, we shall use the notation $o(1)$ to denote error terms bounded by $C2^{\theta h}$ or by $C\lambda_{\geq h+1}$, for a universal constant $C>0$.
\end{rem}
Let $S_{1,2}(\underline{p},\underline{k})$, $\Delta_{1,2}(\underline{p},\underline{k})$ be the matrices with entries:
\begin{eqnarray}
(S_{1,2}(\underline{p},\underline{k}))_{\omega_{1},\omega_{2}} &=&  \langle \hat n_{\underline{p},\omega_{1}}\;; \hat \psi^{-}_{\underline{k}, \omega_{2}}\;; \hat \psi^{+}_{\underline{k} + \underline{p}, \omega_{2}}\rangle_{h,N} \nonumber\\
(\Delta_{1,2}(\underline{p},\underline{k}))_{\omega_{1},\omega_{2}} &=& \langle \Delta_{\underline{p}, \omega_{1}} \;; \hat \psi^{-}_{\underline{k}, \omega_{2}}\;; \hat \psi^{+}_{\underline{k} + \underline{p},\omega_{2}}  \rangle_{h,N}\;.
\end{eqnarray}
Then, we can rewrite the correction identity (\ref{eq:vertexcorr}) in a more compact form as, using the notations (\ref{eq:LZdef}):
\begin{equation}\label{eq:corrid12}
\Delta_{1,2}(\underline{p},\underline{k}) = -\frak{B}^{N}(\underline{p}) Z\Lambda_{Z} S_{1,2}(\underline{p},\underline{k}) + H_{1,2}(\underline{p},\underline{k})\;.
\end{equation}
Let $\delta S_{2}(\underline{p},\underline{k})$ be the matrix with entries:
\begin{equation}\label{eq:deltaS2def}
(\delta S_{2}(\underline{p},\underline{k}))_{\omega_{1},\omega_{2}} =  \frac{\delta_{\omega_{1},\omega_{2}}}{Z_{\omega_{1}} D_{\omega_{1}}(\underline{p})} (\langle  \hat\psi^{-}_{\underline{k},\omega_{1}} \hat\psi^{+}_{\underline{k},\omega_{1}} \rangle_{h,N} - \langle  \hat\psi^{-}_{\underline{k} + \underline{p},\omega_{1}} \hat\psi^{+}_{\underline{k} + \underline{p},\omega_{1}} \rangle_{h,N})\;;
\end{equation}
we rewrite the vertex Ward identity (\ref{eq:vertexWI}) as:
\begin{eqnarray}
S_{1,2}(\underline{p},\underline{k}) &=& \delta S_{2}(\underline{p},\underline{k}) + \frac{1}{Z D(\underline{p})} \Delta_{1,2}(\underline{p},\underline{k}) \nonumber\\
&=& \delta S_{2}(\underline{p},\underline{k}) - \frac{\frak{B}^{N}(\underline{p})}{D(\underline{p})} \Lambda_{Z} S_{1,2}(\underline{p},\underline{k}) + \frac{1}{Z D(\underline{p})} H_{1,2}(\underline{p},\underline{k})\;,
\end{eqnarray}
where in the last step we used the correction identity (\ref{eq:corrid12}). Therefore, solving for the vertex function:
\begin{equation}\label{eq:vertcomp}
S_{1,2}(\underline{p},\underline{k}) = \frac{1}{1 +  \frac{\frak{B}^{N}(\underline{p})}{D(\underline{p})} \Lambda_{Z}} \Big[ \delta S_{2}(\underline{p},\underline{k})  + \frac{1}{Z D(\underline{p})} H_{1,2}(\underline{p},\underline{k}) \Big]\;.
\end{equation}
Let $\|\underline{k}\|_{\omega} = 2^{h}$, $\|\underline{k} + \underline{p}\|_{\omega} = 2^{h}$, $\|\underline{p}\|_{\omega} = O(2^{h})$. In order to compute the asymptotic behavior of the vertex function, we use that, from Eq. (\ref{eq:2ptf}):
\begin{equation}\label{eq:vert1}
(\delta S_{2}(\underline{p},\underline{k}))_{\omega,\omega'} = \delta_{\omega,\omega'}\frac{Z_{h-1,\omega'} D_{h-1,\omega'}(\underline{p})}{Z_{\omega'} D_{\omega'}(\underline{p})} \langle \hat\psi^{-}_{\underline{k},\omega'}\hat\psi^{+}_{\underline{k},\omega'}\rangle_{h,N} \langle \hat\psi^{-}_{\underline{k} + \underline{p},\omega'}\hat\psi^{+}_{\underline{k} + \underline{p},\omega'}\rangle_{h,N}(1 + o(1))\;,
\end{equation}
where we used the $o(1)$ notation defined in Remark \ref{rem:not}. Also, we recall that, by the second of (\ref{eq:errest}):
\begin{equation}\label{eq:matrixelE}
\Big| \frac{1}{Z_{\omega} D_{\omega}(\underline{p})} H_{1,2;\omega,\omega'}(\underline{p},\underline{k})\Big| \leq C\lambda_{\geq h} \frac{2^{-2h}}{Z_{\omega} Z_{h-1,\omega'}}\;.
\end{equation}
The first information we can extract from (\ref{eq:vertcomp}) is the boundedness of the Fermi velocity. To see this, recall Eq. (\ref{eq:bdvert}), 
\begin{equation}\label{eq:vert2}
\big(S_{1,2}(\underline{p},\underline{k})\big)_{\omega',\omega'} = Z_{0,h-1,\omega'} Z_{h-1,\omega'} \langle \hat\psi^{-}_{\underline{k},\omega'}\hat\psi^{+}_{\underline{k},\omega'}\rangle_{h,N} \langle \hat\psi^{-}_{\underline{k} + \underline{p},\omega'}\hat\psi^{+}_{\underline{k} + \underline{p},\omega'}\rangle_{h,N} (1 + o(1))\;.
\end{equation}
Plugging (\ref{eq:vert1}), (\ref{eq:vert2}) in (\ref{eq:vertcomp}), and dividing by the product of the two-point functions with chirality $\omega'$:
\begin{eqnarray}\label{eq:uff}
Z_{0,h-1,\omega'} Z_{h-1,\omega'} (1 + o(1)) = \frac{Z_{h-1,\omega'} D_{h-1,\omega'}(\underline{p})}{Z_{\omega'} D_{\omega'}(\underline{p})} (1 + o(1))\;,
\end{eqnarray}
To get this identity, we crucially used that, from (\ref{eq:matrixelE}):
\begin{equation}
\frac{1}{\langle \hat\psi^{-}_{\underline{k},\omega'}\hat\psi^{+}_{\underline{k},\omega'}\rangle_{h,N} } \Big(\frac{1}{1 +  \frac{\frak{B}^{N}(\underline{p})}{D(\underline{p})}} \frac{1}{Z D(\underline{p})} H_{1,2}(\underline{p},\underline{k}) \Big)_{\omega',\omega'} \frac{1}{\langle \hat\psi^{-}_{\underline{k} + \underline{p},\omega'}\hat\psi^{+}_{\underline{k} + \underline{p},\omega'}\rangle_{h,N} } \leq C\lambda_{\geq h} Z_{h-1,\omega'}\;,
\end{equation}
and the fact that, for $|\lambda|$ small enough uniformly in $h$, as a consequence of the inductive assumptions (\ref{eq:controlrcc}) on scales $\geq h+1$:
\begin{equation}
\frac{1}{2} \leq \Big|\frac{D_{h-1,\omega'}(\underline{p})}{Z_{\omega'} D_{\omega'}(\underline{p})}\Big|\leq 2\;.
\end{equation}
In conclusion, we get, setting $p_{0} = 0$ or $p_{1} = 0$ in (\ref{eq:uff}):
\begin{equation}\label{eq:checkFermi}
|Z_{\omega} Z_{0, h-1,\omega} - 1| \leq C|\lambda|\;,\qquad |v_{h-1,\omega} - v_{\omega}|\leq C|\lambda|\;,
\end{equation}
for a universal constant $C>0$. This allows to control the flow of the Fermi velocity and of the vertex renormalization.

\subsection{The flow of the effective interaction}

In this section we will show how to control the flow of the effective couplings $\lambda_{h,\omega,\omega'}$. We will use the analysis of the correction terms of Proposition \ref{prp:corrections} to close the Schwinger-Dyson equation (\ref{eq:SD4pt}). Recall the choices (\ref{eq:chiralchoices})
\begin{equation}
\omega_{1} = \omega_{2}\equiv \omega\;,\quad \omega_{3} = \omega_{4} \equiv \omega'\;,\quad \omega \neq \omega'\;;
\end{equation}
The SD equation for the four-point function reads, in the $L\to \infty$, $\varepsilon\to 0$, $\frak{a}\to 0$ limit, (\ref{eq:SD}):
\begin{eqnarray}\label{eq:SD4ptlimti}
&&\langle \hat \psi^{+}_{\underline{k}_{1}, \omega}\;; \hat \psi^{-}_{\underline{k}_{2}, \omega}\;; \hat \psi^{+}_{\underline{k}_{3}, \omega'}\;; \hat \psi^{-}_{\underline{k}_{4}, \omega'} \rangle_{h,N} \\
&& = -\hat g_{\omega'}(\underline{k}_{4}) \sum_{\tilde \omega} \lambda_{\tilde \omega, \omega'} Z_{\tilde \omega} Z_{\omega'} \int \frac{d\underline{p}}{(2\pi)^{2}}\, \hat v(\underline{p}) \langle  \hat n_{\underline{p},\tilde\omega}\;; \hat \psi^{+}_{\underline{k}_{1}, \omega}\;; \hat \psi^{-}_{\underline{k}_{2}, \omega}\;; \hat \psi^{+}_{\underline{k}_{3},\omega'}\;;  \hat \psi^{-}_{\underline{k}_{4} - \underline{p}, \omega'} \rangle_{h,N}\nonumber\\
&& - \hat g_{\omega'}(\underline{k}_{4}) \sum_{\tilde \omega} \lambda_{\tilde \omega, \omega'} Z_{\tilde \omega} Z_{\omega'} \hat v(\underline{k}_{2} - \underline{k}_{1})\langle \hat \psi^{-}_{\underline{k}_{3}, \omega'} \;; \hat \psi^{+}_{\underline{k}_{3},\omega'}\rangle_{h,N} \langle \hat n_{\underline{k}_{1} - \underline{k}_{2},\tilde\omega}\;; \hat \psi^{-}_{\underline{k}_{2}, \omega}\;; \hat \psi^{+}_{\underline{k}_{1}, \omega}\rangle_{h,N}\nonumber\\
&&\equiv \text{I}_{\omega,\omega'} + \text{II}_{\omega,\omega'}\;.
\end{eqnarray}
Let us denote by $S_{4}$ the matrix with entries:
\begin{equation}
\big(S_{4}(\underline{k}_{1}, \underline{k}_{2}, \underline{k}_{3})\big)_{\omega, \omega'} = \langle \hat \psi^{+}_{\underline{k}_{1}, \omega}\;; \hat \psi^{-}_{\underline{k}_{2}, \omega}\;; \hat \psi^{+}_{\underline{k}_{3}, \omega'}\;; \hat \psi^{-}_{\underline{k}_{4}, \omega'} \rangle_{h,N}\;.
\end{equation}
It is convenient to ``amputate'' the external legs, by considering the matrix $\Gamma_{4}(\underline{k}_{1}, \underline{k}_{2}, \underline{k}_{3})$, with entries:
\begin{equation}\label{eq:G4A}
\big(\Gamma_{4}(\underline{k}_{1}, \underline{k}_{2}, \underline{k}_{3})\big)_{\omega,\omega'} := S_{2;\omega}(\underline{k}_{1})^{-1} S_{2;\omega}(\underline{k}_{2})^{-1} \big(S_{4}(\underline{k}_{1}, \underline{k}_{2}, \underline{k}_{3})\big)_{\omega,\omega'} S_{2;\omega'}(\underline{k}_{3})^{-1} S_{2;\omega'}(\underline{k}_{4})^{-1}\;. 
\end{equation}
where $S_{2;\omega}(\underline{k}) = \langle \psi^{-}_{\underline{k},\omega}\psi^{+}_{\underline{k},\omega} \rangle_{h,N}$. We shall discuss the two terms in (\ref{eq:SD4ptlimti}) separately.
\subsubsection{Analysis of the main term}
We denote by $\text{II}^{\text{A}}$ the term obtained from $\text{II}$ after amputating the external legs, as in (\ref{eq:G4A}). We have:
\begin{equation}\label{eq:IIA0}
\text{II}^{\text{A}}_{\omega,\omega'} = - \frac{\hat g_{\omega'}(\underline{k}_{4})}{ S_{2;\omega'}(\underline{k}_{4}) } \sum_{\tilde \omega} \lambda_{\tilde \omega, \omega'} Z_{\tilde \omega} Z_{\omega'} \hat v(\underline{k}_{2} - \underline{k}_{1}) S_{1,2;\tilde\omega, \omega}(\underline{k}_{1} - \underline{k}_{2}, \underline{k}_{2}) S_{2;\omega}(\underline{k}_{2})^{-1} S_{2;\omega}(\underline{k}_{1})^{-1}\;. 
\end{equation}
In matrix notation, using that $\lambda_{\tilde \omega, \omega'}  = \lambda_{\omega', \tilde \omega}$, recall the definition (\ref{eq:LZdef}) of $\Lambda_{Z}$:
\begin{eqnarray}\label{eq:LS21}
\sum_{\tilde \omega} \lambda_{\tilde \omega, \omega'} Z_{\tilde \omega} S_{1,2;\tilde\omega, \omega}(\underline{k}_{1} - \underline{k}_{2}, \underline{k}_{2}) &\equiv& \sum_{\tilde \omega} \lambda_{\omega', \tilde \omega} Z_{\tilde \omega} S_{1,2;\tilde\omega, \omega}(\underline{k}_{1} - \underline{k}_{2}, \underline{k}_{2}) \nonumber\\
&=& (Z \Lambda_{Z} S_{1,2}(\underline{k}_{1} - \underline{k}_{2}, \underline{k}_{2}))_{\omega',\omega}\;,
\end{eqnarray}
and using (\ref{eq:vertcomp}): 
\begin{eqnarray}\label{eq:Eterm}
&&Z \Lambda_{Z} S_{1,2}(\underline{k}_{1} - \underline{k}_{2}, \underline{k}_{2}) \\
&&\quad = Z \Lambda_{Z}\frac{1}{1 +  \frac{\frak{B}^{N}(\underline{k}_{1} - \underline{k}_{2})}{D(\underline{k}_{1} - \underline{k}_{2})} \Lambda_{Z}} \Big[ \delta S_{2}(\underline{k}_{1} - \underline{k}_{2}, \underline{k}_{2})  + \frac{1}{Z D(\underline{k}_{1} - \underline{k}_{2})} H_{1,2}(\underline{k}_{1} - \underline{k}_{2},\underline{k}_{2}) \Big]\;.\nonumber
\end{eqnarray}
The term $ \delta S_{2}(\underline{k}_{1} - \underline{k}_{2}, \underline{k}_{2})$ is a diagonal matrix, that can be rewritten as, see (\ref{eq:vert1}):
\begin{equation}
 \delta S_{2}(\underline{k}_{1} - \underline{k}_{2}, \underline{k}_{2}) = \frac{Z_{h-1} D_{h-1}(\underline{k}_{1} - \underline{k}_{2})}{Z D(\underline{k}_{1} - \underline{k}_{2})} S_{2}(\underline{k}_{1}) S_{2}(\underline{k}_{2}) (1 + o(1))\;.
\end{equation}
In terms of the matrix entries:
\begin{equation}
(\delta S_{2}(\underline{k}_{1} - \underline{k}_{2}, \underline{k}_{2}))_{\omega,\omega'} = \delta_{\omega,\omega'}\frac{Z_{h-1,\omega} D_{h-1,\omega}(\underline{k}_{1} - \underline{k}_{2})}{Z_{\omega} D_{\omega}(\underline{k}_{1} - \underline{k}_{2})} S_{2,\omega}(\underline{k}_{1}) S_{2,\omega}(\underline{k}_{2}) (1 + o(1))\;.
\end{equation}
Plugging this into (\ref{eq:Eterm}), we get:
\begin{eqnarray}
&&\big(Z \Lambda_{Z} S_{1,2}(\underline{k}_{1} - \underline{k}_{2}, \underline{k}_{2})\big)_{\omega',\omega}\nonumber\\
&& = \Big(Z \Lambda_{Z}\frac{1}{1 +  \frac{\frak{B}^{N}(\underline{k}_{1} - \underline{k}_{2})}{D(\underline{k}_{1} - \underline{k}_{2})} \Lambda_{Z}}\Big)_{\omega',\omega} \frac{Z_{h-1,\omega} D_{h-1,\omega}(\underline{k}_{1} - \underline{k}_{2})}{Z_{\omega} D_{\omega}(\underline{k}_{1} - \underline{k}_{2})} S_{2,\omega}(\underline{k}_{1}) S_{2,\omega}(\underline{k}_{2}) (1 + o(1)) \nonumber\\
&&\quad + \Big(Z \Lambda_{Z}\frac{1}{1 +  \frac{\frak{B}^{N}(\underline{k}_{1} - \underline{k}_{2})}{D(\underline{k}_{1} - \underline{k}_{2})} \Lambda_{Z}} \frac{1}{Z D(\underline{k}_{1} - \underline{k}_{2})} H_{1,2}(\underline{k}_{1} - \underline{k}_{2}, \underline{k}_{2})\Big)_{\omega',\omega}\;.
\end{eqnarray}
Using that:
\begin{equation}
 \Big(Z \Lambda_{Z}\frac{1}{1 +  \frac{\frak{B}^{N}(\underline{k}_{1} - \underline{k}_{2})}{D(\underline{k}_{1} - \underline{k}_{2})} \Lambda_{Z}}\Big)_{\omega',\omega} = \lambda_{\omega',\omega} Z_{\omega} + O(\lambda^{2})\;,
\end{equation}
we can write the contribution to $\text{II}^{\text{A}}$ obtained by only considering contributions proportional to $ \delta S_{2}(\underline{k}_{1} - \underline{k}_{2}, \underline{k}_{2})$ as:
\begin{eqnarray}
&&\text{II}_{\omega,\omega'}^{\text{A};1} = \nonumber\\
&&- Z_{\omega'} Z_{\omega}(\lambda_{\omega,\omega'} + O(\lambda^{2}))\hat v(\underline{k}_{2} - \underline{k}_{1}) \frac{Z_{h-1,\omega} D_{h-1,\omega}(\underline{k}_{1} - \underline{k}_{2})}{Z_{\omega} D_{\omega}(\underline{k}_{2} - \underline{k}_{1})} (1 + o(1)) \frac{\hat g_{\omega'}(\underline{k}_{4})}{ S_{2,\omega'}(\underline{k}_{4}) } \\
&&= -Z_{\omega'} Z_{\omega}(\lambda_{\omega,\omega'} + O(\lambda^{2})) \hat v(\underline{k}_{2} - \underline{k}_{1}) \frac{Z_{h-1,\omega} D_{h-1,\omega}(\underline{k}_{2} - \underline{k}_{2})}{Z_{\omega} D_{\omega}(\underline{k}_{2} - \underline{k}_{1})} \frac{Z_{h-1,\omega'} D_{h-1,\omega'}(\underline{k}_{4})}{Z_{\omega'} D_{\omega'}(\underline{k}_{4})} (1 + o(1))\;.\nonumber
\end{eqnarray}
Choosing $\underline{k}_{i} = (k_{0,i}, 0)$, and recalling that $\hat v(\underline{0}) = 1$, we get:
\begin{equation}\label{eq:IIA1}
\text{II}_{\omega,\omega'}^{\text{A};1} = -Z_{h-1,\omega} Z_{h-1,\omega'} (\lambda_{\omega,\omega'} + O(\lambda^{2})) (1 + o(1))\;.
\end{equation}
On the other hand, by Proposition \ref{prp:propcorr}, Eq. (\ref{eq:4ptf}):
\begin{equation}
\Gamma_{4}(\underline{k}_{1}, \underline{k}_{2}, \underline{k}_{3}) = -Z_{h-1,\omega} Z_{h-1,\omega'} (\lambda_{h,\omega,\omega'} + O(\lambda_{\geq h}^{2}))\;.
\end{equation}
Hence, if we could neglect all the other contributions to the Schwinger-Dyson equation, we would immediately get $|\lambda_{h,\omega,\omega'} - \lambda_{\omega,\omega'}| \leq C|\lambda|^{2}$, for a universal constant $C>0$, as wished. In the remaining part of the section, we shall discuss the control of the error terms.
\subsubsection{Analysis of the error terms}
To begin, consider the contribution due to the last term in (\ref{eq:Eterm}). From the second of (\ref{eq:errest}):
\begin{eqnarray}\label{eq:estIIB0}
\Big|\Big( Z \Lambda_{Z} \frac{1}{1 +  \frac{\frak{B}^{N}(\underline{k}_{1} - \underline{k}_{2})}{D(\underline{k}_{1} - \underline{k}_{2})} \Lambda_{Z}} \frac{1}{Z D(\underline{k}_{1} - \underline{k}_{2})} H_{1,2}(\underline{k}_{1} - \underline{k}_{2},\underline{k}_{2}) \Big)_{\omega',\omega}\Big|\leq C\lambda^{2}_{\geq h} \frac{2^{-2h}}{Z_{h-1,\omega}}\;.
\end{eqnarray}
Let us denote by $\text{II}^{\text{A};2}_{\omega,\omega'}$ the contribution to $\text{II}^{\text{A}}_{\omega,\omega'}$ associated to the term $H_{1,2}$. Using the estimate (\ref{eq:estIIB0}), we easily get, setting to zero the spatial component of the esternal momenta: 
\begin{equation}\label{eq:estIIA2}
|\text{II}^{\text{A};2}_{\omega,\omega'}| \leq C\lambda_{\geq h}^{2} Z_{h-1,\omega} Z_{h-1,\omega'}\;.
\end{equation}
Comparing with (\ref{eq:IIA1}), we see that this term gives a contribution that is subleading by an extra factor $\lambda_{\geq h}$.

We are left with considering the term $\text{I}$ in (\ref{eq:SD4ptlimti}). Our main task is to estimate the integral
\begin{equation}
\int \frac{d\underline{p}}{(2\pi)^{2}}\, \hat v(\underline{p}) \langle  \hat n_{\underline{p},\tilde\omega}\;; \hat \psi^{+}_{\underline{k}_{1}, \omega}\;; \hat \psi^{-}_{\underline{k}_{2}, \omega}\;; \hat \psi^{+}_{\underline{k}_{3},\omega'}\;;  \hat \psi^{-}_{\underline{k}_{4} - \underline{p}, \omega'} \rangle_{h,N}\;.
\end{equation}
We rewrite it as, with $\chi_{\leq h}(\underline{p})$ supported for momenta $\|\underline{p}\| \leq C2^{h}$, for $C>0$ large enough:
\begin{eqnarray}\label{eq:splitint}
&&\int \frac{d\underline{p}}{(2\pi)^{2}}\, \hat v(\underline{p})\chi_{\leq h}(\underline{p}) \langle  \hat n_{\underline{p},\tilde\omega}\;; \hat \psi^{+}_{\underline{k}_{1}, \omega}\;; \hat \psi^{-}_{\underline{k}_{2}, \omega}\;; \hat \psi^{+}_{\underline{k}_{3},\omega'}\;;  \hat \psi^{-}_{\underline{k}_{4} - \underline{p}, \omega'} \rangle_{h,N} \nonumber\\
&& + \int \frac{d\underline{p}}{(2\pi)^{2}}\, \hat v(\underline{p}) \chi_{\geq h}(\underline{p})\langle  \hat n_{\underline{p},\tilde\omega}\;; \hat \psi^{+}_{\underline{k}_{1}, \omega}\;; \hat \psi^{-}_{\underline{k}_{2}, \omega}\;; \hat \psi^{+}_{\underline{k}_{3},\omega'}\;;  \hat \psi^{-}_{\underline{k}_{4} - \underline{p}, \omega'} \rangle_{h,N}\;.
\end{eqnarray}
The first term can be estimated simply using the bound (\ref{eq:41bd}) of Proposition \ref{prp:propcorr}. We have, using that $|Z_{h-1,0,\omega}| \leq C$, recall (\ref{eq:checkFermi}):
\begin{equation}\label{eq:41leqh}
\Big| \int \frac{d\underline{p}}{(2\pi)^{2}}\, \hat v(\underline{p})\chi_{\leq h}(\underline{p}) \langle  \hat n_{\underline{p},\tilde\omega}\;; \hat \psi^{+}_{\underline{k}_{1}, \omega}\;; \hat \psi^{-}_{\underline{k}_{2}, \omega}\;; \hat \psi^{+}_{\underline{k}_{3},\omega'}\;;  \hat \psi^{-}_{\underline{k}_{4} - \underline{p}, \omega'} \rangle_{h,N}  \Big| \leq \frac{C\lambda_{\geq h} 2^{-3h}}{Z_{h-1,\omega'} Z_{h-1,\omega}}\;.
\end{equation}
If plugged in the Schwinger-Dyson equation (\ref{eq:SD4ptlimti}), this term gives a subleading contribution to the four-point function, by an extra factor $\lambda_{\geq h}$. We are left with estimating the second term in (\ref{eq:splitint}). To do this, we shall gain insight on the structure of the $(1,4)$-point function using the Ward identity (\ref{eq:WI6}), together with the correction identity (\ref{eq:6ptWI}). As $L\to \infty$, $\varepsilon \to 0$, $\frak{a}\to 0$, we get:
\begin{eqnarray}\label{eq:WI6ptnew}
&&\langle  \hat n_{\underline{p},\tilde\omega}\;; \hat \psi^{+}_{\underline{k}_{1}, \omega}\;; \hat \psi^{-}_{\underline{k}_{2}, \omega}\;; \hat \psi^{+}_{\underline{k}_{3},\omega'}\;;  \hat \psi^{-}_{\underline{k}_{4} - \underline{p}, \omega'} \rangle_{h,N} =\\
&& =\frac{\delta_{\tilde \omega,\omega'}}{Z_{\tilde \omega} D_{\tilde \omega}(\underline{p})} \big(\langle \hat \psi^{+}_{\underline{k}_{1}, \omega}\;; \hat \psi^{-}_{\underline{k}_{2}, \omega}\;; \hat \psi^{+}_{\underline{k}_{3},\omega'}\;;  \hat \psi^{-}_{\underline{k}_{4}, \omega'} \rangle_{h,N}
- \langle \hat \psi^{+}_{\underline{k}_{1}, \omega}\;; \hat \psi^{-}_{\underline{k}_{2}, \omega}\;; \hat \psi^{+}_{\underline{k}_{3}-\underline{p},\omega'}\;;  \hat \psi^{-}_{\underline{k}_{4} - \underline{p}, \omega'} \rangle_{h,N}\big) \nonumber\\
&&\quad + \frac{\delta_{\tilde \omega,\omega}}{ Z_{\tilde \omega} D_{\tilde \omega}(\underline{p}) } \big(\langle  \hat \psi^{+}_{\underline{k}_{1}, \omega}\;; \hat \psi^{-}_{\underline{k}_{2}+\underline{p}, \omega}\;; \hat \psi^{+}_{\underline{k}_{3},\omega'}\;;  \hat \psi^{-}_{\underline{k}_{4} - \underline{p}, \omega'} \rangle_{h,N} - \langle  \hat \psi^{+}_{\underline{k}_{1} - \underline{p}, \omega}\;; \hat \psi^{-}_{\underline{k}_{2}, \omega}\;; \hat \psi^{+}_{\underline{k}_{3},\omega'}\;;  \hat \psi^{-}_{\underline{k}_{4} - \underline{p}, \omega'} \rangle_{h,N}\big) \nonumber\\
&&\quad+ \frac{1}{Z_{\tilde \omega} D_{\tilde \omega}(\underline{p})}\langle \Delta_{\underline{p},\tilde \omega}\;; \hat \psi^{+}_{\underline{k}_{1}, \omega}\;; \hat \psi^{-}_{\underline{k}_{2}, \omega}\;; \hat \psi^{+}_{\underline{k}_{3},\omega'}\;;  \hat \psi^{-}_{\underline{k}_{4} - \underline{p}, \omega'} \rangle_{h,N}\;.\nonumber
\end{eqnarray}
Let $S_{1,4}(\underline{p}, \underline{K})$, with $\underline{K} = (\underline{k}_{1}, \underline{k}_{2}, \underline{k}_{3})$ be the tensor with entries:
\begin{equation}
(S_{1,4}(\underline{p}, \underline{K}))_{\tilde \omega, \omega,\omega'} = \langle  \hat n_{\underline{p},\tilde\omega}\;; \hat \psi^{+}_{\underline{k}_{1}, \omega}\;; \hat \psi^{-}_{\underline{k}_{2}, \omega}\;; \hat \psi^{+}_{\underline{k}_{3},\omega'}\;;  \hat \psi^{-}_{\underline{k}_{4} - \underline{p}, \omega'} \rangle_{h,N}\;,
\end{equation}
and let $\Delta_{1,4}(\underline{p},\underline{K})$ be the tensor with entries:
\begin{equation}
(\Delta_{1,4}(\underline{p},\underline{K}))_{\tilde \omega, \omega, \omega'} = \langle \Delta_{\underline{p},\tilde \omega}\;; \hat \psi^{+}_{\underline{k}_{1}, \omega}\;; \hat \psi^{-}_{\underline{k}_{2}, \omega}\;; \hat \psi^{+}_{\underline{k}_{3},\omega'}\;;  \hat \psi^{-}_{\underline{k}_{4} - \underline{p}, \omega'} \rangle_{h,N}\;.
\end{equation}
We rewrite the correction identity (\ref{eq:6ptWI}) as:
\begin{equation}
\Delta_{1,4}(\underline{p}, \underline{K}) = - \frak{B}^{N}(\underline{p}) Z \Lambda_{Z} \hat v(\underline{p}) S_{1,4}(\underline{p}, \underline{K}) + H_{1,4}(\underline{p}, \underline{K})\;.
\end{equation}
We are now in the position to solve (\ref{eq:WI6ptnew}) for the $(1,4)$-point function. We get:
\begin{eqnarray}\label{eq:6ptWI3}
&&S_{4,1; \tilde \omega, \omega, \omega'}(\underline{K}, \underline{p}) \\
&&= \frac{T_{\tilde \omega, \omega'}(\underline{p})}{Z_{\omega'} D_{\omega'}(\underline{p})} \big(\langle \hat \psi^{+}_{\underline{k}_{1}, \omega}\;; \hat \psi^{-}_{\underline{k}_{2}, \omega}\;; \hat \psi^{+}_{\underline{k}_{3},\omega'}\;;  \hat \psi^{-}_{\underline{k}_{4}, \omega'} \rangle_{h,N}
- \langle \hat \psi^{+}_{\underline{k}_{1}, \omega}\;; \hat \psi^{-}_{\underline{k}_{2}, \omega}\;; \hat \psi^{+}_{\underline{k}_{3}-\underline{p},\omega'}\;;  \hat \psi^{-}_{\underline{k}_{4} - \underline{p}, \omega'} \rangle_{h,N}\big)\nonumber\\
&& + \frac{T_{\tilde \omega,\omega}(\underline{p})}{Z_{\omega} D_{\omega}(\underline{p})}  \big(\langle  \hat \psi^{+}_{\underline{k}_{1}, \omega}\;; \hat \psi^{-}_{\underline{k}_{2}+\underline{p}, \omega}\;; \hat \psi^{+}_{\underline{k}_{3},\omega'}\;;  \hat \psi^{-}_{\underline{k}_{4} - \underline{p}, \omega'} \rangle_{h,N} - \langle  \hat \psi^{+}_{\underline{k}_{1} - \underline{p}, \omega}\;; \hat \psi^{-}_{\underline{k}_{2}, \omega}\;; \hat \psi^{+}_{\underline{k}_{3},\omega'}\;;  \hat \psi^{-}_{\underline{k}_{4} - \underline{p}, \omega'} \rangle_{h,N}\big)\nonumber\\
&& + \Big(T(\underline{p}) \frac{1}{Z D(\underline{p})} H_{1,4}(\underline{K},\underline{p})\Big)_{\tilde \omega,\omega,\omega'}\;,\nonumber
\end{eqnarray}
where:
\begin{equation}\label{eq:Tdef}
T_{\tilde \omega, \omega'}(\underline{p}) = \Big(\frac{1}{1 + \frak{B}^{N}(\underline{p}) D(\underline{p})^{-1} \Lambda_{Z} \hat v(\underline{p})}\Big)_{\tilde \omega, \omega'}\;.
\end{equation}
We now plug the identity (\ref{eq:6ptWI3}) in the second integral in (\ref{eq:splitint}), and we estimate the various terms. The contribution due to the first term in (\ref{eq:6ptWI3}) is exactly zero by parity, due to the fact that $T(\underline{p}) = T(-\underline{p})$, $D(\underline{p}) = -D(-\underline{p})$ and $\hat v(\underline{p}) = \hat v(-\underline{p})$. Consider the contribution to the Schwinger-Dyson equation due to the second; the third and fourth terms can be estimated in the same way. We have to estimate:
\begin{equation}
\int \frac{d\underline{p}}{(2\pi)^{2}}\, \hat v(\underline{p}) \chi_{\geq h}(\underline{p}) \frac{T_{\tilde \omega, \omega'}(\underline{p})}{Z_{\omega'} D_{\omega'}(\underline{p})}  \langle \hat \psi^{+}_{\underline{k}_{1}, \omega}\;; \hat \psi^{-}_{\underline{k}_{2}, \omega}\;; \hat \psi^{+}_{\underline{k}_{3}-\underline{p},\omega'}\;;  \hat \psi^{-}_{\underline{k}_{4}-\underline{p}, \omega'} \rangle_{h,N}\;.
\end{equation}
By the support properties of $\chi_{\geq h}$, $\| \underline{k_{3}} - \underline{p} \| \geq C2^{h}$, $\| \underline{k_{4}} - \underline{p} \| \geq C2^{h}$. By Proposition \ref{prp:correlations}, Eq. (\ref{eq:4ptdiffscales}), the four-point function at the argument of the integral is estimated as:
\begin{equation}
|\langle \hat \psi^{+}_{\underline{k}_{1}, \omega}\;; \hat \psi^{-}_{\underline{k}_{2}, \omega}\;; \hat \psi^{+}_{\underline{k}_{3}-\underline{p},\omega'}\;;  \hat \psi^{-}_{\underline{k}_{4}-\underline{p}, \omega'} \rangle_{h,N}|\leq C\lambda_{\geq h}\frac{2^{-2h_{\underline{p}}} 2^{-2h}}{Z_{h_{p},\omega} Z_{h,\omega}}\;,
\end{equation}
with $2^{h_{\underline{p}}} = \|\underline{p}\|$. Therefore, the corresponding contribution to the $\underline{p}$ integration is bounded as:
\begin{eqnarray}
&&\Big|\int_{\|\underline{p}\| \geq C2^{h}} \frac{d\underline{p}}{(2\pi)^{2}}\, \hat v(\underline{p}) \frac{T_{\tilde \omega, \omega'}(\underline{p})}{Z_{\omega'} D_{\omega'}(\underline{p})}   \langle \hat \psi^{+}_{\underline{k}_{1}, \omega}\;; \hat \psi^{-}_{\underline{k}_{2}, \omega}\;; \hat \psi^{+}_{\underline{k}_{3}-\underline{p},\omega'}\;;  \hat \psi^{-}_{\underline{k}_{4}-\underline{p}, \omega'} \rangle_{h,N}\Big| \nonumber\\ &&\qquad \leq  \int_{\|\underline{p}\| \geq C2^{h}} d\underline{p}\, |\hat v(\underline{p})|\frac{\lambda_{\geq h}}{\|\underline{p}\|^{3}} \frac{2^{-2h}}{Z_{h_{p},\omega} Z_{h,\omega'}}\nonumber\\ &&\qquad \leq C \lambda_{\geq h} \frac{2^{-3h}}{Z_{h,\omega} Z_{h,\omega'}}\;.
\end{eqnarray}
Hence, the contribution of this term to the first line of the right-hand side of (\ref{eq:SD4ptlimti}) is bounded as:
\begin{eqnarray}\label{eq:41corr1}
&&\Big| \hat g_{\omega'}(\underline{k}_{4}) \sum_{\tilde \omega} \lambda_{\tilde \omega, \omega'} Z_{\tilde \omega} Z_{\omega'} \int \frac{d\underline{p}}{(2\pi)^{2}}\, \hat v(\underline{p}) \chi_{\geq h}(\underline{p})  \frac{T_{\tilde \omega, \omega'}(\underline{p})}{Z_{\omega'} D_{\omega'}(\underline{p})}  \langle \hat \psi^{+}_{\underline{k}_{1}, \omega}\;; \hat \psi^{-}_{\underline{k}_{2}, \omega}\;; \hat \psi^{+}_{\underline{k}_{3}-\underline{p},\omega'}\;;  \hat \psi^{-}_{\underline{k}_{4}-\underline{p}, \omega'} \rangle_{h,N}\Big| \nonumber\\
&&\leq C\lambda_{\geq h}^{2} \frac{2^{-4h}}{Z_{h,\omega} Z_{h,\omega'}}\;,
\end{eqnarray}
which is subleading with respect to the left-hand side of (\ref{eq:SD4ptlimti}) by an extra factor $\lambda_{\geq h+1}$. The same holds for the contributions associated to the third and fourth terms in (\ref{eq:6ptWI3}). We are left with discussing the contribution to the Schwinger-Dyson equation due to the last term in the right-hand side of (\ref{eq:6ptWI3}). We have to control the integral:
\begin{equation}
 \int \frac{d\underline{p}}{(2\pi)^{2}}\, \hat v(\underline{p}) \chi_{\geq h}(\underline{p})  \frac{T_{\tilde \omega, \bar \omega}(\underline{p})}{Z_{\bar \omega} D_{\bar \omega}(\underline{p})} H_{1,4;\bar\omega, \omega, \omega'}(\underline{K},\underline{p})\;.
\end{equation}
The bound for this quantity is provided by the last estimate of (\ref{eq:errest}). We have:
\begin{equation}\label{eq:estfin41}
\Big|  \int \frac{d\underline{p}}{(2\pi)^{2}}\, \hat v(\underline{p}) \chi_{\geq h}(\underline{p})   \frac{T_{\tilde \omega, \bar \omega}(\underline{p})}{Z_{\bar \omega} D_{\bar \omega}(\underline{p})}H_{1,4;\bar\omega, \omega, \omega'}(\underline{K},\underline{p}) \Big| \leq C\frac{\lambda_{\geq h}}{Z_{\omega, h} Z_{\omega', h}} 2^{-3h}\;.
\end{equation}
Plugging this estimate in the Schwinger-Dyson equation (\ref{eq:SD4ptlimti}), we see that the corresponding contribution is subleading by a factor $\lambda_{\geq h}$ with respect to the four-point function.
\subsubsection{Conclusion: control of the RG flow of the reference model}
In a more compact notation, the Schwinger-Dyson equation is:
\begin{equation}
\Gamma_{4}(\underline{k}_{1}, \underline{k}_{2}, \underline{k}_{3}) = \text{I}^{\text{A}} + \text{II}^{\text{A}}\;, 
\end{equation}
where $\text{I}^{\text{A}}$ and $\text{II}^{\text{A}}$ are the terms $\text{I}$ and $\text{II}$ in Eq. (\ref{eq:SD4ptlimti}) after amputating the external legs, as in (\ref{eq:IIA0}). By Proposition \ref{prp:correlations}, Eq. (\ref{eq:4ptf}), we know that:
\begin{equation}
\Gamma_{4}(\underline{k}_{1}, \underline{k}_{2}, \underline{k}_{3}) = -Z_{h-1,\omega} Z_{h-1, \omega'} \big(\lambda_{h,\omega,\omega'}  + O(\lambda_{\geq h}^{2})\big)\;.
\end{equation}
By (\ref{eq:IIA1}), (\ref{eq:estIIA2}), we know that:
\begin{equation}
\text{II}^{\text{A}}_{\omega,\omega'} = \text{II}^{\text{A}; 1}_{\omega,\omega'} + \text{II}^{\text{A}; 2}_{\omega,\omega'}\;,
\end{equation}
with:
\begin{eqnarray}
 \text{II}^{\text{A}; 1}_{\omega,\omega'} &=& -Z_{h-1, \omega} Z_{h-1,\omega'} \lambda_{\omega,\omega'} (1 + o(1)) \nonumber\\
 |\text{II}^{\text{A}; 2}_{\omega,\omega'}| &\leq& C\lambda_{\geq h}^{2} Z_{h-1,\omega} Z_{h-1,\omega'}\;.
\end{eqnarray}
Finally, by (\ref{eq:41leqh}), (\ref{eq:41corr1}), (\ref{eq:estfin41}) we have that:
\begin{equation}
|\text{I}^{\text{A}}_{\omega,\omega'}| \leq C\lambda_{\geq h}^{2} Z_{h-1,\omega} Z_{h-1,\omega'}\;.
\end{equation}
All in all, the above estimates imply that:
\begin{equation}\label{eq:compare}
| \lambda_{h,\omega,\omega'} - \lambda_{\omega,\omega'} | \leq C\lambda_{\geq h} (2^{\theta h} + \lambda_{\geq h})\;,
\end{equation}
provided $\lambda_{\geq h}$ is small enough. As discussed at the end of Section \ref{sec:RGIR}, the inequality $|\lambda_{\geq h}| \leq K|\lambda|$ holds for $|\lambda|$ small enough and for all $h$ such that $|h| \lambda^{\frac{1}{2} - \epsilon} \leq 1$. The identity (\ref{eq:compare}) implies the validity of the bound $|\lambda_{\geq h}| \leq K\lambda$, with the same constant $K>0$, for the remaining scales. This allows to prove the validity of the bounds (\ref{eq:controlrcc}) for the running coupling constants on all scales.

We conclude by discussing the vanishing of the beta function. The bound (\ref{eq:compare}) immediately implies that:
\begin{equation}\label{eq:controllambda}
| \lambda_{h,\omega,\omega'} - \lambda_{\omega,\omega'} | \leq |\lambda|^{\frac{3}{2}}\;,
\end{equation}
That is, the function $h\mapsto \lambda_{h,\omega,\omega'} $ does not leave the complex ball of radius $|\lambda|^{\frac{3}{2}}$ centered at $\lambda_{\omega,\omega'}$. Recall that $\lambda_{h,\omega,\omega'} = \lambda_{h+1,\omega,\omega'} + \beta^{\lambda}_{h+1,\omega,\omega'}$, where the beta function $\beta^{\lambda}_{h+1,\omega,\omega'}$ is an analytic function of all effective couplings:
\begin{equation}
\beta^{\lambda}_{h+1,\omega,\omega'} \equiv \beta^{\lambda}_{h+1,\omega,\omega'}(\lambda_{h+1}, \lambda_{h+2},\ldots, \lambda_{0})\;,
\end{equation}
with $\lambda_{k} = \{ \lambda_{k,\omega,\omega'} \}$. As proven in Section 15 of \cite{BG}, see also Theorem 3.1 of \cite{BMWI}, the inequality (\ref{eq:controllambda}) implies that:
\begin{equation}\label{eq:vanishlambda}
|\beta^{\lambda}_{h+1,\omega,\omega'}(s, \ldots, s)| \leq C|s|^{2} 2^{\theta h}\;,
\end{equation}
for $s = \{s_{\omega,\omega'}\}$ and $|s|$ small enough, and for some $0<\theta < 1$. Also, the analysis of Section \ref{sec:flowFermi} implies the vanishing of the beta function for the Fermi velocity as well. We proved that, uniformly in $h$:
\begin{equation}
| v_{h,\omega} - v_{\omega} | \leq C|\lambda|\;.
\end{equation} 
Writing $v_{h,\omega} = v_{h+1,\omega} + \beta_{h+1,\omega}^{v}$, the same argument used to prove (\ref{eq:vanishlambda}) can be used to show that:
\begin{equation}
|\beta^{v}_{h+1,\omega}(s, \ldots, s)| \leq C|s| 2^{\theta h}\;.
\end{equation}
This concludes the proof of Theorem \ref{thm:vanishing}.\qed

\subsection{Correlation functions after cutoff removal}

To conclude the discussion about the reference model, we show how to compute the density-density and vertex function of the reference model, after removing the infrared and ultraviolet cutoff. The analysis of this subsection will be important to compute the edge correlation functions of the lattice model. 
\begin{rem}[Notations.]
In what follows, we shall set:
\begin{equation}
\langle \cdot \rangle = \lim_{h\to -\infty} \lim_{N\to \infty} \langle \cdot \rangle_{h,N}\;.
\end{equation} 
Also, we define $\frak{B}(\underline{p}) = \lim_{N\to \infty} \frak{B}^{N}(\underline{p})$. From (\ref{eq:bubble}):
\begin{equation}
\frak{B}_{\omega}(\underline{p}) = -\frac{D_{\omega}(\tilde{\underline{p}})}{4\pi |v_{\omega}|}\;,\qquad \tilde{\underline{p}} = (p_{0}, -p_{1})\;.
\end{equation}
\end{rem}
The next proposition is the counterpart of Proposition \ref{prp:corrections}, after the removal of the cutoffs.
\begin{prop}[Correlations after cutoffs removal.]\label{prp:corrsref} For $|\lambda|$ small enough, the following holds true.
\vspace{.2cm}

\noindent{\underline{\emph{Density-density correlation correction identity.}}} We have:
\begin{equation}\label{eq:ddcorrinfty}
\langle \Delta_{\underline{p}, \omega}\;; \hat n_{-\underline{p},\omega_{2}} \rangle = \delta_{\omega,\omega_{2}} \frac{1}{Z_{\omega}}\frak{B}_{\omega}(\underline{p}) - \frak{B}_{\omega}(\underline{p}) \sum_{\tilde \omega} \lambda_{\omega,\tilde \omega} Z_{\tilde\omega} v(\underline{p}) \langle \hat n_{\tilde\omega,\underline{p}}\;; \hat n_{-\underline{p},\omega_{2}} \rangle\;.
\end{equation}
\vspace{.2cm}

\noindent{\underline{\emph{Vertex function correction identity.}}} We have:
\begin{equation}\label{eq:vertexcorrinfty}
\langle \Delta_{\underline{p}, \omega} \;;  \hat \psi^{-}_{\underline{k}, \omega_{2}}\;; \hat \psi^{+}_{\underline{k} + \underline{p},\omega_{2}} \rangle = -\frak{B}_{\omega}(\underline{p}) \sum_{\tilde\omega} \lambda_{\omega,\tilde\omega} Z_{\tilde\omega} v(\underline{p}) \langle \hat n_{\tilde\omega,\underline{p}}\;;  \hat \psi^{-}_{\underline{k}, \omega_{2}}\;; \hat \psi^{+}_{\underline{k} + \underline{p},\omega_{2}}\rangle
\end{equation}
\end{prop}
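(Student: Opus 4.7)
My plan is to obtain both identities as the limits $h\to -\infty$, $N\to \infty$ of cutoffed analogues that parallel Proposition \ref{prp:corrections}. The vertex identity (\ref{eq:vertexcorrinfty}) is essentially the limit of (\ref{eq:vertexcorr}): I only have to show that the error term $H^{h,N}_{1,2;\omega,\omega_{2}}(\underline{p},\underline{k})$ vanishes at fixed $\underline{p},\underline{k}$ as the cutoffs are removed. The bound (\ref{eq:errest}) is stated for external momenta on scale $2^{h}$; what I need is an improved bound of the form $C|\lambda|\, 2^{\theta(h-h_{\underline{p}})}\|\underline{p}\|^{-1}/Z_{\omega_{2},h_{\underline{p}}-1}$, with $2^{h_{\underline{p}}} = \|\underline{p}\|_{\omega}$ and some $\theta \in (0,1)$, uniformly in $N$ for $N$ large. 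Such a bound follows from the multiscale representation of $H^{h,N}_{1,2}$ by inserting the cutoff decomposition of the propagators, and exploiting that every internal line carried by the ``anomalous'' cluster can be traded against an explicit factor $2^{\theta(h-h_{\underline{p}})}$ coming from the dimensional gain, exactly as in Lemma 1 of \cite{BMchiral}. Simultaneously, $\frak B^{h,N}_{\omega}(\underline{p})\to \frak B_{\omega}(\underline{p})$ as $h\to -\infty$, $N\to \infty$, by the explicit computation leading to (\ref{eq:bubble}).

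For the density-density identity (\ref{eq:ddcorrinfty}) I will first derive a cutoffed version: starting from (\ref{eq:density2anom}) is not enough, since the left-hand side is already the quantity of interest. Instead, I plan to re-run the Wick/contraction analysis of \cite{MaQED} directly on $\langle \Delta_{\underline{p},\omega}\,;\, \hat n_{-\underline{p},\omega_{2}}\rangle_{h,N}$. Using the definition (\ref{eq:Delta}) and integrating by parts in the Grassmann integral as in the proof of Proposition \ref{prp:SD}, this correlation decomposes into three contributions: (a) the direct contraction of the two $\psi^{+}\psi^{-}$ bilinears against the free covariance, producing $\delta_{\omega,\omega_{2}} Z_{\omega}^{-1}\frak B^{N}_{\omega}(\underline{p})$ in the limit, thanks to (\ref{eq:buban})--(\ref{eq:bubble}); (b) the contraction through one interaction vertex, which after using $\hat v(\underline{p}-\underline{q})\simeq \hat v(\underline{p})$ on the $\|\underline{q}\|\ll\|\underline{p}\|$ support of the bubble collapses to $-\frak B^{N}_{\omega}(\underline{p})\sum_{\tilde\omega}\lambda_{\omega,\tilde\omega}Z_{\tilde\omega}\hat v(\underline{p})\langle \hat n_{\tilde\omega,\underline{p}}\,;\,\hat n_{-\underline{p},\omega_{2}}\rangle_{h,N}$; (c) a remainder $H^{h,N}_{2,2}(\underline{p})$ consisting of clusters with at least two interaction vertices attached to the anomalous loop, whose bound follows the same pattern as $H^{h,N}_{1,2}$ in (\ref{eq:errest}), hence vanishes in the joint limit with the same improved estimate described above.

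The main obstacle is item (c): establishing the improved bound on the correction terms $H^{h,N}_{2,2}$ and $H^{h,N}_{1,2}$ at \emph{fixed} external momentum, uniformly in the cutoffs. This is the Adler--Bardeen non-renormalization property for the multi-chirality model, and it is the content of Lemma~1 of \cite{BMchiral} in the two-chirality case. The extension to $n_{\text{e}}$ chiralities with independent velocities $v_{\omega}$ is not automatic: one has to check that the cancellation responsible for the dimensional gain, which in \cite{BMchiral} uses the rotation invariance of the single-chirality propagator in rescaled variables $v_{\omega}k_{1}\to k_{1}$, survives when several chiralities coexist. This is possible because the cutoff $\chi^{\omega,\varepsilon}_{[h,N]}$ and the single-scale propagator are labelled by $\omega$, so for each fixed $\omega$ the reduction to the chiral computation that produces (\ref{eq:bubble}) goes through verbatim; the interactions between distinct chiralities enter only through density insertions, whose contribution is precisely the ``chain'' term $-\frak B\sum_{\tilde\omega}\lambda_{\omega,\tilde\omega} Z_{\tilde\omega}\hat v \langle n;\cdots\rangle$ appearing on the right-hand sides of (\ref{eq:ddcorrinfty})--(\ref{eq:vertexcorrinfty}). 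Once the improved bound is in place, taking $h\to -\infty$ followed by $N\to \infty$ concludes the proof of both identities.
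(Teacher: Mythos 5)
Your proposal is correct and follows essentially the same route as the paper: the paper proves Proposition \ref{prp:corrsref} by reducing it to the Adler--Bardeen-type vanishing of the correction terms at fixed external momenta (Lemma 3 of \cite{MaQED}, Theorem 3.2 of \cite{BFM3}, Lemma 1 of \cite{BMchiral}), with the direct contraction reproducing the anomalous bubble as in (\ref{eq:buban})--(\ref{eq:bubble}) and the multi-chirality extension being immediate because cutoffs and propagators are diagonal in $\omega$. Your sketch reconstructs exactly this argument (bubble term, chain term with one interaction vertex, remainder with a dimensional gain $2^{\theta(h-h_{\underline{p}})}$ plus the analogous ultraviolet gain), the only cosmetic remark being that momentum conservation fixes the interaction momentum to $\underline{p}$ exactly, so no approximation $\hat v(\underline{p}-\underline{q})\simeq\hat v(\underline{p})$ is needed.
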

That is, the $H$-error terms in Proposition \ref{prp:corrections} vanish in the limit $h\to -\infty$ and $N\to \infty$, for fixed external momenta. We refer the reader to {\it e.g.} Lemma 3 of \cite{MaQED}, or Theorem 3.2 of \cite{BFM3} for a proof of these results in the case of fermions with two chiralities. The arguments immediately apply to a general number of chiralities. Proposition \ref{prp:corrsref}, combined with the anomalous Ward identities of Proposition \ref{prp:WI}, allow to {\it compute} the density-density and vertex correlation functions, in terms of the two-point function.
\begin{prop}[Density-density correlation and vertex function.] The following identities hold true:
\begin{eqnarray}\label{eq:ddexp}
\langle \hat n_{\underline{p},\omega}\;; \hat n_{-\underline{p},\omega'} \rangle &=& T_{\omega,\omega'}(\underline{p}) \frac{1}{Z_{\omega'}^{2}}\frac{\frak{B}_{\omega'}(\underline{p})}{D_{\omega'}(\underline{p})}\;,\nonumber\\
\langle \hat n_{\underline{p},\omega}\;; \hat \psi^{-}_{\underline{k}, \omega'}\;; \hat \psi^{+}_{\underline{k} + \underline{p},\omega'} \rangle &=& T_{\omega,\omega'}(\underline{p}) \frac{1}{Z_{\omega'} D_{\omega'}(\underline{p})} \big(\langle \hat\psi^{-}_{\underline{k},\omega'} \hat\psi^{+}_{\underline{k},\omega'}\rangle - \langle \hat\psi^{-}_{\underline{k} +\underline{p},\omega'} \hat\psi^{+}_{\underline{k} + \underline{p},\omega'}  \rangle \big)\;,
\end{eqnarray}
where $T(\underline{p})$ is given by the $N\to \infty$ limit of (\ref{eq:Tdef}).
\end{prop}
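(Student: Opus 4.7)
The plan is to derive the two identities as closed-form solutions of the linear equations obtained by feeding the correction identities of Proposition~\ref{prp:corrsref} into the anomalous Ward identities of Proposition~\ref{prp:WI}, after taking the limits $\frak{a}\to 0$, $\varepsilon\to 0$, $L\to\infty$, $h\to-\infty$ and $N\to\infty$. In these limits $\frak{D}_{\omega,\frak{a}}(\underline{p})\to D_{\omega}(\underline{p})$, the corrections terms $H_{1,2}$ and the contributions bounded as in (\ref{eq:errest}) drop out, and we are left with purely algebraic identities among density-density, vertex, and two-point correlators.

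For the density-density identity, the first step is to rewrite (\ref{eq:density2anom}) in the limit as
\begin{equation*}
Z_{\omega} D_{\omega}(\underline{p})\, \langle \hat n_{\underline{p},\omega}\,;\hat n_{-\underline{p},\omega'}\rangle = \langle \Delta_{\underline{p},\omega}\,;\hat n_{-\underline{p},\omega'}\rangle\,,
\end{equation*}
and then substitute the right-hand side using (\ref{eq:ddcorrinfty}). Using the matrix notation of Remark~\ref{rem:not} with $S_{0,2}(\underline{p})$ the matrix with entries $\langle \hat n_{\underline{p},\omega}\,;\hat n_{-\underline{p},\omega'}\rangle$, this yields
\begin{equation*}
\bigl(\mathbbm{1} + \frak{B}(\underline{p}) D(\underline{p})^{-1} \Lambda_{Z}\, \hat v(\underline{p})\bigr)\, S_{0,2}(\underline{p}) \;=\; Z^{-2}\, D(\underline{p})^{-1}\,\frak{B}(\underline{p})\,.
\end{equation*}
Recognizing the prefactor as $T(\underline{p})^{-1}$ (cf.\ the $N\to\infty$ limit of (\ref{eq:Tdef})) and using that $D(\underline{p})$, $\frak{B}(\underline{p})$ and $Z$ are diagonal, inversion gives $S_{0,2}(\underline{p}) = T(\underline{p})\, Z^{-2}\, D(\underline{p})^{-1}\, \frak{B}(\underline{p})$, which in components is precisely the first identity in (\ref{eq:ddexp}).

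For the vertex identity, I would proceed in the same way starting from (\ref{eq:vertexWI}) in the limit,
\begin{equation*}
Z_{\omega} D_{\omega}(\underline{p})\, \langle \hat n_{\underline{p},\omega}\,;\hat\psi^{-}_{\underline{k},\omega'}\,;\hat\psi^{+}_{\underline{k}+\underline{p},\omega'}\rangle = \delta_{\omega,\omega'}\,\Delta S_{2;\omega'}(\underline{k},\underline{p}) + \langle \Delta_{\underline{p},\omega}\,;\hat\psi^{-}_{\underline{k},\omega'}\,;\hat\psi^{+}_{\underline{k}+\underline{p},\omega'}\rangle\,,
\end{equation*}
with $\Delta S_{2;\omega'}(\underline{k},\underline{p}) := \langle \hat\psi^{-}_{\underline{k},\omega'}\hat\psi^{+}_{\underline{k},\omega'}\rangle - \langle \hat\psi^{-}_{\underline{k}+\underline{p},\omega'}\hat\psi^{+}_{\underline{k}+\underline{p},\omega'}\rangle$. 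Substituting (\ref{eq:vertexcorrinfty}) and treating $\omega'$ as a fixed spectator index, the resulting equation for the vector $(S_{1,2;\cdot,\omega'}(\underline{p},\underline{k}))_{\omega}$ has exactly the same coefficient matrix $\mathbbm{1}+\frak{B}(\underline{p})D(\underline{p})^{-1}\Lambda_Z\hat v(\underline{p})$ as in the density-density case. Inversion using $T(\underline{p})$ yields the second identity of (\ref{eq:ddexp}).

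The argument is algebraic once the Ward identities and correction identities are in hand; the only point requiring care is the justification of the $h\to-\infty$, $N\to\infty$ limits of all the quantities involved. The bounds on running coupling constants established in Sections \ref{sec:flowFermi}--\ref{sec:flow} and the vanishing of the $H$-error terms stated in Proposition~\ref{prp:corrsref} guarantee that the $h,N$-limit of each correlator on the two sides exists and that the limits commute with the linear algebra. The invertibility of $\mathbbm{1}+\frak{B}(\underline{p})D(\underline{p})^{-1}\Lambda_Z\hat v(\underline{p})$ for $|\lambda|$ small is immediate from $\|\Lambda_Z\| = O(|\lambda|)$ together with the explicit $O(1)$ bound on $\frak{B}(\underline{p})D(\underline{p})^{-1}$ coming from (\ref{eq:bubble}). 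No genuine obstacle is expected beyond the careful bookkeeping of limits.
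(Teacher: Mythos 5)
Your proposal is correct and follows exactly the paper's route: substitute the cutoff-removed correction identities (\ref{eq:ddcorrinfty}), (\ref{eq:vertexcorrinfty}) into the anomalous Ward identities (\ref{eq:density2anom}), (\ref{eq:vertexWI}), obtaining a closed linear system whose coefficient matrix is $T(\underline{p})^{-1}$, and invert it; the paper's proof is the same argument stated more tersely. Your added remarks on the spectator index $\omega'$, the diagonality of $Z^{-2}D(\underline{p})^{-1}\frak{B}(\underline{p})$, and the invertibility of $\mathbbm{1}+\frak{B}(\underline{p})D(\underline{p})^{-1}\Lambda_{Z}\hat v(\underline{p})$ for small $|\lambda|$ are accurate elaborations of the same computation.
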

\begin{proof} Plugging the identities (\ref{eq:ddcorrinfty}), (\ref{eq:vertexcorrinfty}) into the Ward identities (\ref{eq:density2anom}), (\ref{eq:vertexWI}) we get closed equations for the density-density correlation, and for the vertex function in terms of the two-point function. The equations can be easily solved, and the claim follows.
\end{proof}

\section{Connection with the lattice model}\label{sec:compare}

\begin{rem} From this section until the end of the manuscript, we shall decorate by the supscript `ref' all the quantities related to the reference model.
\end{rem}

In this section we will discuss the connection between the construction of the reference model, and the control of the RG flow for the original lattice model. In particular, we will sketch the proof of Proposition \ref{prp:flow}, about the flow of the running coupling constants of the original lattice model. We conclude the section by stating the connection between the correlation functions of the lattice model, and those of the reference model. 

As discussed after Eq. (\ref{eq:betaflow}), the last bound follows from a suitable choice of the counterterms $\mu_{\omega}$. Concerning the first two estimates in (\ref{eq:betaflow}), we proceed as in the proof of Propositon 9.3 of \cite{AMP}. We use the representation of the beta function of the lattice model in terms of Gallavotti-Nicol\`o trees, Section 9.3.1 of \cite{AMP}. Every tree contributing to the beta function of the lattice model can be written the corresponding tree for the reference model, up to an error term that satisfies the same bounds of the tree times an extra dimensional gain $2^{\theta k}$. This relies on the short-memory property and of the continuity property of GN trees, Remarks 9.1 and 9.2 of \cite{AMP}. The final estimates follow from the corresponding estimates for the beta function of the reference model, Theorem \ref{thm:vanishing}.

Recall the notations of Section \ref{sec:schw} for the lattice Schwinger functions. Concerning the Schwinger functions of the reference model, we shall set:
\begin{eqnarray}
S^{\text{ref}}_{2;\omega}(\underline{k}) &:=& \langle \hat \psi^{-}_{\underline{k},\omega} \hat \psi^{+}_{\underline{k},\omega}\rangle \nonumber\\
S^{\text{ref}}_{1,2;\omega,\omega'}(\underline{p}, \underline{k}) &:=& \langle \hat n_{\underline{p},\omega}\;; \hat \psi^{-}_{\underline{k}, \omega'}\;; \hat \psi^{+}_{\underline{k} + \underline{p},\omega'} \rangle\nonumber\\
S^{\text{ref}}_{0,2;\omega,\omega'}(\underline{p}) &:=& \langle \hat n_{\underline{p},\omega}\;; \hat n_{-\underline{p},\omega'} \rangle\;.
\end{eqnarray}
The next proposition allows to compare the correlations of lattice and reference model.
\begin{prop}[Comparison of correlations.]\label{prp:rel} For $\lambda$ small enough, and for $\beta, L$ large enough, the following is true. There exists a choice of the bare parameters $Z^{\text{ref}}_{\omega}, v^{\text{ref}}_{\omega}, \lambda_{\omega,\omega'}^{\text{ref}}$ of the reference model, satisfying
\begin{eqnarray}
&&|Z^{\text{ref}}_{\omega} - 1|\leq C|\lambda|\;,\qquad |v^{\text{ref}}_{\omega} - v_{\omega}|\leq C|\lambda|\;,\qquad |\lambda^{\text{ref}}_{\omega,\omega'} - A_{\omega,\omega'} \lambda|\leq C|\lambda|^{2}
\end{eqnarray}
with $A_{\omega,\omega'} = \sum_{x_{2}, y_{2}} \sum_{\rho,\rho'} \hat w_{\rho\rho'}(\underline{0}; x_{2}, y_{2}) |\xi^{\omega}_{\rho}(k_{F}^{\omega}; x_{2})|^{2} |\xi^{\omega'}_{\rho'}(k_{F}^{\omega'}; y_{2})|^{2}$, such that the following holds.
\vspace{.2cm}

\noindent{\underline{\emph{Lattice $2$-point function.}}} There exists $Q^{\text{ref}}_{\omega}(x_{2})$ satisfying, for all $n\in \mathbb{N}$,
\begin{equation}\label{eq:2pterr0}
| Q^{\text{ref}}_{\omega,\rho}(x_{2}) |\leq \frac{C_{n}}{1 + x_{2}^{n}}\;,\qquad \| Q^{\text{ref}}_{\omega} \|_{2} = 1 + O(\lambda)
\end{equation}
such that, for $\|\underline{k}'\|$ small enough:
\begin{equation}\label{eq:2pt2ref}
S^{\beta, L}_{2; \rho, \rho'}(\underline{k}' + \underline{k}_{F}^{\omega}; x_{2}, y_{2}) = S_{2; \omega}^{\text{ref}} (\underline{k}') Q^{\text{ref}}_{\omega,\rho}(x_{2}) \overline{Q^{\text{ref}}_{\omega,\rho'}(y_{2})} + R^{\beta, L}_{2;\omega, \rho, \rho'}(\underline{k}'; x_{2}, y_{2})\;.
\end{equation}
The error term satisfies the bound, for $0<\theta<1$:
\begin{equation}\label{eq:2pterr}
\Big| R^{\beta, L}_{2;\omega, \rho, \rho'}(\underline{k}'; x_{2}, y_{2}) \Big| \leq \frac{C_{n} \|\underline{k}'\|_{\omega}^{\theta-1}}{1 + |x_{2} - y_{2}|^{n}}\;.
\end{equation}
\vspace{.2cm}

\noindent{\underline{\emph{Lattice vertex function.}}} There exists $Z^{\text{ref}}_{\mu}(z_{2})$ satisfying, for all $n\in \mathbb{N}$, 
\begin{equation}\label{eq:Zdec0}
| Z^{\text{ref}}_{\mu}(z_{2}) | \leq \frac{C_{n}}{1 + z_{2}^{n}}\;,
\end{equation}
such that, for $\|\underline{k}'\|$ small enough, and $\|\underline{k}' + \underline{p}\| = \|\underline{k}'\|$:
\begin{eqnarray}\label{eq:vertex2ref}
&&S^{\beta, L}_{1,2;\mu,\omega,\rho,\rho'}(\underline{p}, \underline{k}'; x_{2}, y_{2}, z_{2}) \nonumber\\
&&= \sum_{\omega'}^{*} Z^{\text{ref}}_{\mu,\omega'}(z_{2}) S^{\text{ref}}_{1,2;\omega',\omega}(\underline{p}, \underline{k}') Q^{\text{ref}}_{\omega,\rho}(x_{2}) \overline{Q^{\text{ref}}_{\omega,\rho'}(y_{2})} + R^{\beta, L}_{\mu,\omega,\rho,\rho'}(\underline{p}, \underline{k}'; x_{2}, y_{2}, z_{2})\;.
\end{eqnarray}
We recall that the asterisk denotes summation over the edge modes localized around the $x_{2} = 0$ edge. The error term satisfies the bound, for $0<\theta<1$:
\begin{equation}\label{eq:Zdec}
\Big| R^{\beta, L}_{\mu,\omega,\rho,\rho'}(\underline{p}, \underline{k}'; x_{2}, y_{2}, z_{2}) \Big| \leq \frac{C_{n} \|\underline{k}'\|^{\theta-2}}{1 + d(x_{2}, y_{2}, z_{2})^{n}}
\end{equation}
with $d(x_{2}, y_{2}, z_{2})$ the tree distance for $(x_{2}, y_{2}, z_{2})$.
\vspace{.2cm}

\noindent{\underline{\emph{Lattice current-current correlation.}}} For $\|\underline{p}\|$ small enough:
\begin{equation}\label{eq:JJ2ref}
S^{\beta, L}_{0,2;\mu,\nu}(\underline{p}; x_{2}, y_{2}) = \sum^{*}_{\omega, \omega'} Z^{\text{ref}}_{\mu,\omega}(x_{2}) Z^{\text{ref}}_{\nu,\omega'}(y_{2}) S_{0,2;\omega,\omega'}^{\text{ref}}(\underline{p}) + R^{\beta,L}_{0,2;\mu,\nu}(\underline{p}; x_{2}, y_{2})\;,
\end{equation}
where:
\begin{equation}\label{eq:contR}
| R^{\beta,L}_{0,2;\mu,\nu}(\underline{p}; x_{2}, y_{2}) |\leq \frac{C_{n}}{1 + |x_{2} - y_{2}|^{n}}\;,\quad \sum_{y_{2}} | R^{\beta,L}_{0,2;\mu,\nu}(\underline{p}; x_{2}, y_{2}) - R^{\beta,L}_{0,2;\mu,\nu}(\underline{0}; x_{2}, y_{2})  | \leq C\|\underline{p}\|^{\theta}\;.
\end{equation}
All bounds are uniform in $\beta, L$, and hold in the $\beta, L\to \infty$ limit.
\end{prop}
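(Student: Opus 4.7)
The plan is to proceed in close analogy to the proof of Proposition~9.5 of \cite{AMP}, exploiting the RG machinery of Sections \ref{sec:red1d}--\ref{sec:flow}. First I would fix the bare parameters of the reference model so that its effective theory on scale $0$ matches the effective $1+1$ dimensional theory of the lattice model produced by the ultraviolet and bulk integrations of Section \ref{sec:red1d}. Concretely: set $Z^{\text{ref}}_\omega := 1$, $v^{\text{ref}}_\omega$ equal to the scale-$0$ effective velocity of the lattice model, and $\lambda^{\text{ref}}_{\omega,\omega'}$ equal to the scale-$0$ local quartic coupling. The estimates on these parameters then follow from the convergence of the cluster expansion used to integrate the bulk/ultraviolet modes, together with the explicit expression of $\lambda_{0,\omega,\omega'}$ as $A_{\omega,\omega'}\lambda$ plus higher orders, which contains $A_{\omega,\omega'}$ in the form stated.

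Next I would compare the correlations tree by tree. Both sides of (\ref{eq:2pt2ref}), (\ref{eq:vertex2ref}), (\ref{eq:JJ2ref}) admit expansions as sums over Gallavotti--Nicol\`o trees, with endpoints given by the quartic coupling, the marginal/relevant local terms, and the external-field insertions. By Proposition \ref{prp:flow} and Theorem \ref{thm:vanishing}, the running coupling constants of the two models satisfy the same asymptotic bounds and differ by $O(2^{\theta k})$ on scale $k$. Invoking the short-memory and continuity properties of GN trees (Remarks 9.1, 9.2 of \cite{AMP}), each lattice tree differs from its reference counterpart by a term admitting the same dimensional bound as the tree itself, improved by a factor $2^{\theta k^*}$ where $k^*$ is the root scale.

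The wave-function factors would be extracted from the contractions between the external legs of the lattice theory and the edge eigenfunctions $\check\xi^{\omega}_\rho$ at the relevant Fermi point, as in the reduction of Section \ref{sec:red1d}. The decay (\ref{eq:2pterr0}) for $Q^{\text{ref}}_{\omega,\rho}(x_2)$ is inherited from (\ref{eq:decxi}), and $\|Q^{\text{ref}}_\omega\|_2 = 1 + O(\lambda)$ follows from the $\ell^2$-normalization of the edge states up to the $O(\lambda)$ dressing produced by bulk and ultraviolet integrations. The factors $Z^{\text{ref}}_{\mu,\omega'}(z_2)$ encode how the lattice current $j_{\mu,\zz}$ projects, at low momentum near the $\omega'$-th Fermi point, onto the chiral density of the reference model; the sum over $\omega'$ in (\ref{eq:vertex2ref}), (\ref{eq:JJ2ref}) reflects the mixing of chiral channels by the lattice current insertion, and the decay (\ref{eq:Zdec0}) is again a consequence of the exponential localization of the edge wave functions. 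The leading terms on the right-hand side of (\ref{eq:2pt2ref})--(\ref{eq:JJ2ref}) then correspond to trees whose root is at the scale of the external momentum and whose local endpoints coincide with those of the reference model; all other contributions---irrelevant kernels, higher derivatives, bulk remnants, and the tree-by-tree differences from the previous step---are absorbed into $R$. By the analysis above, each such contribution is smaller than the leading term by a factor $2^{\theta h}$, producing the improvements $\|\underline{k}'\|^{\theta-1}$ and $\|\underline{k}'\|^{\theta-2}$ in (\ref{eq:2pterr}), (\ref{eq:Zdec}); the transverse decay in $x_2, y_2, z_2$ arises from (\ref{eq:gdec}) and (\ref{eq:decxi}). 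The H\"older-type bound on the summed remainder in (\ref{eq:contR}) is obtained by writing $R_{0,2}(\underline{p}) - R_{0,2}(\underline{0})$ as a difference of tree contributions and applying the continuity property to extract $\|\underline{p}\|^\theta$.

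The main obstacle I expect is the vertex-function remainder (\ref{eq:Zdec}), which must improve the singular scaling $\|\underline{k}'\|^{-2}$ of $S^{\text{ref}}_{1,2}$ by a full power $\|\underline{k}'\|^\theta$. This requires that every non-leading GN tree contributing to $S^{\beta,L}_{1,2;\mu,\omega,\rho,\rho'}$ be strictly irrelevant in the RG sense, \emph{uniformly} in the external momenta $\underline{p}, \underline{k}'$ and in the scales of all intermediate branchings, and that this irrelevance survive the passage from the lattice trees to the reference trees. Here the vanishing-of-the-beta-function result (Theorem \ref{thm:vanishing}), combined with the uniform bounds on the running coupling constants from Proposition \ref{prp:flow} and the fine control of the anomalous dimensions $\eta_\omega = O(\lambda^2)$, provides the needed inputs, but the bookkeeping of the various irrelevant operators that can appear at the current-vertex is the most delicate part of the argument.
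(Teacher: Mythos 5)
Your overall route (reduce to the effective $1d$ theory, expand both models in Gallavotti--Nicol\`o trees, compare tree by tree using the short-memory and continuity properties and Theorem \ref{thm:vanishing}, extract the $Q$- and $Z$-factors from the edge eigenfunctions) is the same strategy the paper relies on, since the paper simply invokes Proposition 6.1 of \cite{AMP} once the reference model has been constructed. However, there is one genuine gap, and it sits precisely at the step you dispose of first: the choice of the bare parameters $Z^{\text{ref}}_{\omega}, v^{\text{ref}}_{\omega}, \lambda^{\text{ref}}_{\omega,\omega'}$. Matching the reference model to the lattice model \emph{at scale $0$} is not the right prescription. The two beta functions coincide only up to errors $O(\lambda^{2}2^{\theta k})$; with identical initial data at scale $0$ these errors accumulate along the flow into an $O(\lambda^{2})$ discrepancy between the \emph{asymptotic} ($h\to-\infty$) effective couplings and velocities of the two models. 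Since the anomalous exponents (e.g. $\eta_{\omega}$ in $Z_{h,\omega}\sim 2^{\eta_{\omega}h}$) are functions of these asymptotic couplings, they would then differ by a nonzero $O(\lambda^{2})$ amount, so that $Z_{h,\omega}/Z^{\text{ref}}_{h,\omega}\sim 2^{h\,\delta\eta_{\omega}}$ does not converge as $h\to-\infty$. At external momentum $\|\underline{k}'\|\sim 2^{h}$ the leading terms of the two models would then differ by a factor $\|\underline{k}'\|^{-\delta\eta_{\omega}}$, which cannot be absorbed into $Q^{\text{ref}}_{\omega}$ and is not of the relative size $O(\|\underline{k}'\|^{\theta})$; the bounds (\ref{eq:2pterr}) and (\ref{eq:Zdec}) would fail.

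The correct argument, as in \cite{AMP} (and in the analogous comparisons of Benfatto--Mastropietro), determines the bare reference parameters by a fixed-point (implicit-function) argument: $\lambda^{\text{ref}}_{\omega,\omega'}$ and $v^{\text{ref}}_{\omega}$ are tuned so that the running couplings and effective velocities of the two models differ by $O(2^{\theta h})$ on \emph{every} scale, i.e. have the same infrared limits, which forces the anomalous exponents to coincide exactly; the residual finite ratio of the wave-function renormalizations is then absorbed into $Q^{\text{ref}}_{\omega}$ (or into $Z^{\text{ref}}_{\omega}$). The invertibility of the map from bare to asymptotic couplings, at small $|\lambda|$, is what guarantees existence of this choice and the estimates $|v^{\text{ref}}_{\omega}-v_{\omega}|\leq C|\lambda|$, $|\lambda^{\text{ref}}_{\omega,\omega'}-A_{\omega,\omega'}\lambda|\leq C|\lambda|^{2}$. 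This is why the proposition is phrased as ``there exists a choice of the bare parameters'': the choice is the output of the comparison, not data read off at scale $0$. Once this fine-tuning is in place, the remainder of your argument (tree-by-tree comparison with the $2^{\theta k}$ gain, transverse decay from (\ref{eq:gdec}) and (\ref{eq:decxi}), H\"older continuity of the summed current-current remainder) is sound and is essentially what the cited proof does.
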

Proposition \ref{prp:rel} has been proved in \cite{AMP}, see Proposition 6.1 there, in the case of lattice models with one chiral edge state (up to spin degeneracy). The prerequisite of the proof is the construction of the reference model, which has been discussed in the previous sections. With this construction at hand, the proof of \cite{AMP} applies to the present setting.

\section{Proof of Theorem \ref{thm:main}}\label{sec:proof}

In this section we will prove the universality of edge charge transport in interacting topological insulators, Theorem \ref{thm:main}. We start by deriving a convenient expression for the lattice current-current correlation function.
\paragraph{Computing the lattice current-current correlation.} As a consequence of Proposition \ref{prp:rel} and of the lattice Ward identities, we can express the zero-momentum limit of the edge current-current correlation functions in terms of the parameters 
\begin{equation}
Z^{\text{ref}}_{\mu,\omega} = \sum_{x=0}^{\infty} Z^{\text{ref}}_{\mu,\omega}(x_{2})\;.
\end{equation}
Recall that, after Wick rotation, Eq. (\ref{eq:G2S}):
\begin{equation}
G^{\underline{a}}(\underline{p}) := \sum_{x_{2} = 0}^{a} \sum_{y_{2} = 0}^{a'} S_{0,2;0,1}(\underline{p}; x_{2}, y_{2})\;.
\end{equation}
It is convenient to write:
\begin{equation}
\sum_{x_{2} = 0}^{a} S_{0,2;0,1}(\underline{p}; x_{2}, y_{2}) = \sum_{x_{2} = 0}^{a} S^{\beta, L}_{0,2;0,1}(\underline{p}; x_{2}, y_{2}) + \frak{e}_{1}(\beta, L)\;,
\end{equation}
where $\frak{e}_{1}(\beta, L)\to 0$ as $\beta, L \to \infty$. Then, using the fast decay of correlations in the $y_{2}$ direction, Proposition \ref{prp:rel}, we write:
\begin{equation}
\sum_{x_{2} = 0}^{a} S^{\beta, L}_{0,2;0,1}(\underline{p}; x_{2}, y_{2}) = \sum_{x_{2} = 0}^{L-1} S^{\beta, L}_{0,2;0,1}(\underline{p}; x_{2}, y_{2}) + \frak{e}_{2}(y_{2}, a)\;,
\end{equation}
where, uniformly in $\underline{p}$, recalling that $y_{2} \leq a' \ll a$:
\begin{equation}
| \frak{e}_{2}(y_{2}, a) | \leq \frac{C_{n}}{1 + | y_{2} - a |^{n}}\;.
\end{equation}
Hence,
\begin{eqnarray}\label{eq:aa'2a}
G^{\underline{a}}(\underline{p}) &=& G^{a'}(\underline{p}) + O(|a - a'|^{-n}) \nonumber\\
G^{a'}(\underline{p}) &:=& \sum_{y_{2} = 0}^{a'} \lim_{\beta\to \infty} \lim_{L\to \infty} \sum_{x_{2} = 0}^{L-1} S^{\beta, L}_{0,2;0,1}(\underline{p}; x_{2}, y_{2})\;.
\end{eqnarray}
Next, using the representation (\ref{eq:JJ2ref}), we rewrite:
\begin{eqnarray}\label{eq:dec1}
G^{a'}(\underline{p}) &=& \sum^{*}_{\omega, \omega'} Z^{\text{ref}}_{0,\omega} Z^{\text{ref}}_{1,\omega'} S_{0,2;\omega,\omega'}^{\text{ref}}(\underline{p}) \nonumber\\
&& - \sum_{\omega, \omega'}^{*} \sum_{y_{2} \geq a' + 1}^{\infty} Z^{\text{ref}}_{1,\omega'}(y_{2}) Z^{\text{ref}}_{0,\omega} S_{0,2;\omega,\omega'}^{\text{ref}}(\underline{p}) \nonumber\\
&& + \sum_{y_{2} = 0}^{a'} \sum_{x_{2}= 0}^{\infty} R_{0,2;0,1}(\underline{p}; x_{2}, y_{2})\;.
\end{eqnarray}
Thanks to (\ref{eq:Zdec0}), the second term in (\ref{eq:dec1}) is smaller than any power in $a'$:
\begin{eqnarray}
E_{1}^{a'}(\underline{p}) &:=& -\sum_{\omega, \omega'}^{*} \sum_{y_{2} \geq a' + 1}^{\infty} Z^{\text{ref}}_{1,\omega'}(y_{2}) Z^{\text{ref}}_{0,\omega} S_{0,2;\omega,\omega'}^{\text{ref}}(\underline{p})\nonumber\\
| E_{1}^{a'}(\underline{p}) | &\leq& \frac{C_{n}}{1 + a'^{n}}\;.
\end{eqnarray}
Now, thanks to the lattice Ward identity \ref{eq:S020}, we know that:
\begin{equation}
G^{a'}((p_{0}, 0)) = 0\;.
\end{equation}
This identity, together with the representation (\ref{eq:dec1}), gives:
\begin{equation}\label{eq:dec2}
0 = \sum^{*}_{\omega, \omega'} Z^{\text{ref}}_{0,\omega} Z^{\text{ref}}_{1,\omega'} S_{0,2;\omega,\omega'}^{\text{ref}}((p_{0}, 0)) + E_{1}^{a'}((p_{0}, 0)) + \sum_{y_{2} = 0}^{a'} \sum_{x_{2} = 0}^{\infty}R_{0,2;0,1}((p_{0}, 0); x_{2}, y_{2})\;.
\end{equation}
Thus, by the continuity of the last term in (\ref{eq:dec2}), see the last bound in (\ref{eq:contR}):
\begin{eqnarray}\label{eq:Rdec}
\sum_{y_{2} = 0}^{a'} \sum_{x_{2}= 0}^{\infty} R_{0,2;0,1}(\underline{p}; x_{2}, y_{2}) &=& \sum_{y_{2} = 0}^{a'} \sum_{x_{2} = 0}^{\infty}R_{0,2;0,1}((p_{0}, 0); x_{2}, y_{2}) + E^{a'}_{2}(\underline{p}) \\
&=& - \sum_{\omega, \omega'}^{*} Z^{\text{ref}}_{0,\omega} Z^{\text{ref}}_{1,\omega'} S_{0,2;\omega,\omega'}^{\text{ref}}((p_{0}, 0)) - E_{1}^{a'}((p_{0}, 0)) + E^{a'}_{2}(\underline{p})\nonumber
\end{eqnarray}
where $|E^{a'}_{2}(\underline{p})| \leq a' \|\underline{p}\|^{\theta}$ and where in the last identity we used (\ref{eq:dec2}). Defining $E_{3}^{a'}(\underline{p}) := - E_{1}^{a'}((p_{0}, 0)) + E^{a'}_{2}(\underline{p})$, and plugging (\ref{eq:Rdec}) into (\ref{eq:dec1}), we get:
\begin{equation}\label{eq:a'exp}
G^{a'}(\underline{p}) = \sum^{*}_{\omega, \omega'} Z^{\text{ref}}_{0,\omega} Z^{\text{ref}}_{1,\omega'} \Big(S_{0,2;\omega,\omega'}^{\text{ref}}(\underline{p}) - S_{0,2;\omega,\omega'}^{\text{ref}}((p_{0},0))\Big) + E^{a'}_{3}(\underline{p})\;.
\end{equation}
Recall the definition of edge charge conductance:
\begin{equation}
G = \lim_{a' \to \infty} \lim_{a\to \infty} \lim_{p_{1}\to 0} \lim_{p_{0}\to 0} G^{\underline{a}}(\underline{p})\;.
\end{equation}
From (\ref{eq:aa'2a}), (\ref{eq:a'exp}) we obtain the following remarkable expression:
\begin{equation}
G = (\vec Z^{\text{ref}}_{0}, \mathcal{A} \vec Z^{\text{ref}}_{1})\;,
\end{equation}
where $(\vec f,\vec g) = \sum^{*}_{\omega} f_{\omega} g_{\omega}$, and the matrix $\mathcal{A}$ detects the discontinuity at zero of the density-density correlation of the reference model:
\begin{equation}
\mathcal{A}_{\omega,\omega'} =  \lim_{p_{1}\to 0} \lim_{p_{0}\to 0} S_{0,2;\omega,\omega'}^{\text{ref}}(\underline{p}) - \lim_{p_{0}\to 0} \lim_{p_{1}\to 0} S_{0,2;\omega,\omega'}^{\text{ref}}(\underline{p})\;.
\end{equation}
At this point, notice that if we would have interchanged the $\eta \to 0^{+}$, $p\to 0$ limits in the definition of edge conductance, we would have obtained a trivial result. The matrix $A$ can be computed explicitly, starting from the expression in (\ref{eq:ddexp}) for the density-density correlation of the reference model. We have:
\begin{eqnarray}
\lim_{p_{0}\to 0} \lim_{p_{1}\to 0} S_{0,2}^{\text{ref}}(\underline{p}) &=& \lim_{p_{0}\to 0} \lim_{p_{1}\to 0}\Big(T(\underline{p}) \frac{1}{(Z^{\text{ref}})^{2}} \frac{\frak{B}(\underline{p})}{D(\underline{p})}\Big) \nonumber\\
&=& \frac{-1}{1 - (4\pi |v^{\text{ref}}|)^{-1} \Lambda_{Z}} \frac{1}{4\pi |v^{\text{ref}}|} \frac{1}{(Z^{\text{ref}})^2}\;,
\end{eqnarray}
where we used that $\lim_{p_{0}\to 0} \lim_{p_{1}\to 0} D(\underline{p})^{-1} \frak{B}(\underline{p}) = -(4\pi |v^{\text{ref}}|)^{-1}$, with $|v^{\text{ref}}|$ the diagonal matrix with entries $|v^{\text{ref}}_{\omega}|$ (recall the expression (\ref{eq:bubble}) for the bubble diagram $\frak{B}_{\omega}(\underline{p})$). Similarly, using that $\lim_{p_{1}\to 0} \lim_{p_{0}\to 0} D(\underline{p})^{-1} \frak{B}(\underline{p}) = (4\pi |v^{\text{ref}}|)^{-1}$:
\begin{eqnarray}
\lim_{p_{1}\to 0} \lim_{p_{0}\to 0} S_{0,2}^{\text{ref}}(\underline{p}) = \frac{1}{1 + (4\pi |v^{\text{ref}}|)^{-1} \Lambda_{Z}} \frac{1}{4\pi |v^{\text{ref}}|} \frac{1}{(Z^{\text{ref}})^{2}}\;.
\end{eqnarray}
In conclusion, we get:
\begin{eqnarray}
\mathcal{A} &=& \frac{1}{1 + (4\pi |v^{\text{ref}}|)^{-1} \Lambda_{Z}} \frac{1}{4\pi |v^{\text{ref}}|} \frac{1}{(Z^{\text{ref}})^{2}} + \frac{1}{1 - (4\pi |v^{\text{ref}}|)^{-1} \Lambda_{Z}} \frac{1}{4\pi |v^{\text{ref}}|} \frac{1}{(Z^{\text{ref}})^{2}} \nonumber\\
&=& \frac{1}{1 + (4\pi |v^{\text{ref}}|)^{-1} \Lambda_{Z}} \frac{1}{1 - (4\pi |v^{\text{ref}}|)^{-1} \Lambda_{Z}}  \frac{1}{2\pi |v^{\text{ref}}|} \frac{1}{(Z^{\text{ref}})^{2}}\;.
\end{eqnarray}
This allows to rewrite the edge conductance as:
\begin{equation}\label{eq:Graw}
G = \Big(\vec Z^{\text{ref}}_{0}, \frac{1}{1 + (4\pi |v^{\text{ref}}|)^{-1} \Lambda_{Z}} \frac{1}{1 - (4\pi |v^{\text{ref}}|)^{-1} \Lambda_{Z}}  \frac{1}{2\pi |v^{\text{ref}}|} \frac{1}{(Z^{\text{ref}})^{2}} \vec Z^{\text{ref}}_{1} \Big)\;.
\end{equation}
\paragraph{Computing the vertex renormalizations.} To gain insight on the vertex renormalizations $Z_{\mu,\omega}$, we start from the lattice vertex Ward identity, Eq. (\ref{eq:vertlat0}):
\begin{equation}\label{eq:latvertWI}
\sum_{\mu = 0,1} p_{\mu} \eta_{\mu}(\underline{p}) \sum_{x_{2} = 0}^{\infty} S_{1,2; \mu,\rho,\rho'}(\underline{k}, \underline{p}; x_{2}, y_{2}, z_{2}) = S_{2;\rho,\rho'}(\underline{k}; y_{2}, z_{2}) - S_{2;\rho,\rho'}(\underline{k} + \underline{p}; y_{2}, z_{2})\;,
\end{equation}
with $\eta_{0}(\underline{p}) = -i$ and $\eta_{1}(\underline{p}) = p_{1} + O(p_{1}^{2})$. By (\ref{eq:vertex2ref}), for $\underline{k} = \underline{k}_{F}^{\omega} + \underline{k}'$:
\begin{eqnarray}
&&\sum_{x_{2} = 0}^{\infty} S_{1,2; \mu,\rho,\rho'}(\underline{k}, \underline{p}; x_{2}, y_{2}, z_{2})  = \nonumber\\&&\quad \sum_{\omega'} Z^{\text{ref}}_{\mu,\omega'} S^{\text{ref}}_{1,2;\mu,\omega',\omega}(\underline{k}', \underline{p}) Q^{\text{ref}}_{\omega,\rho}(y_{2}) \overline{Q^{\text{ref}}_{\omega,\rho'}(z_{2})} + R_{1,2;\mu,\omega,\rho,\rho'}(\underline{k}',\underline{p}; y_{2}, z_{2})\;,
\end{eqnarray}
with:
\begin{equation}
R_{1,2;\mu,\omega,\rho,\rho'}(\underline{k}',\underline{p}; y_{2}, z_{2}) = \sum_{x_{2} = 0}^{\infty} R_{1,2;\mu,\omega,\rho,\rho'}(\underline{k}',\underline{p}; x_{2}, y_{2}, z_{2})\;.
\end{equation}
Also, recall (\ref{eq:2pt2ref}):
\begin{equation}
S_{2; \rho, \rho'}(\underline{k}' + \underline{k}_{F}^{\omega}; y_{2}, z_{2}) = S_{2; \omega}^{\text{ref}} (\underline{k}') Q^{\text{ref}}_{\omega,\rho}(y_{2}) \overline{Q^{\text{ref}}_{\omega,\rho'}(z_{2})} + R_{2;\omega, \rho, \rho'}(\underline{k}'; y_{2}, z_{2})\;.
\end{equation}
We then multiply left-hand side and right-hand side of (\ref{eq:latvertWI}) times $\overline{Q^{\text{ref}}_{\omega,\rho}(y_{2})} Q^{\text{ref}}_{\omega,\rho'}(z_{2})$, and sum over all $\rho,\rho'$ and $y_{2}, z_{2}$. We get:
\begin{eqnarray}\label{eq:WIlat2}
\sum_{\mu = 0,1} p_{\mu} \eta_{\mu}(\underline{p}) \Big[ \sum^{*}_{\omega'} Z_{\mu,\omega'} S^{\text{ref}}_{1,2; \omega',\omega}(\underline{p}, \underline{k}') + R_{1,2;\mu,\omega}(\underline{p}, \underline{k}') \Big] &=& S_{2; \omega}^{\text{ref}} (\underline{k}') - S_{2; \omega}^{\text{ref}} (\underline{k}' + \underline{p}) \\
&& + R_{2;\omega}(\underline{k}') - R_{2;\omega}(\underline{k}' + \underline{p})\;,\nonumber
\end{eqnarray}
where we introduced the notations:
\begin{eqnarray}
R_{1,2;\mu,\omega}(\underline{p}, \underline{k}') &:=& \frac{1}{\|Q^{\text{ref}}_{\omega}\|_{2}^{2}}\sum_{\rho,\rho'} \sum_{y_{2}, z_{2}} R_{1,2;\mu,\omega,\rho,\rho'}(\underline{p}, \underline{k}'; y_{2}, z_{2}) \overline{Q^{\text{ref}}_{\omega,\rho}(y_{2})} Q^{\text{ref}}_{\omega,\rho'}(z_{2}) \nonumber\\
R_{2;\omega}(\underline{k}') &:=& \frac{1}{\|Q^{\text{ref}}_{\omega}\|_{2}^{2}}\sum_{\rho,\rho'} \sum_{y_{2}, z_{2}} R_{2;\omega, \rho, \rho'}(\underline{k}'; y_{2}, z_{2}) \overline{Q^{\text{ref}}_{\omega,\rho}(y_{2})} Q^{\text{ref}}_{\omega,\rho'}(z_{2})\;,
\end{eqnarray}
recall that $\|Q^{\text{ref}}_{\omega}\|_{2} = 1 + O(\lambda)$, see (\ref{eq:2pterr}). Next, recall the explicit expression for the vertex function of the reference model, Eq. (\ref{eq:ddexp}):
\begin{equation}
S^{\text{ref}}_{1,2; \omega',\omega}(\underline{p}, \underline{k}') = T_{\omega',\omega}(\underline{p}) \frac{1}{Z^{\text{ref}}_{\omega} D_{\omega}(\underline{p})} ( S_{2; \omega}^{\text{ref}} (\underline{k}') - S_{2; \omega}^{\text{ref}} (\underline{k}'  + \underline{p} )\big)\;.
\end{equation}
Recall the estimates (\ref{eq:2pterr}), (\ref{eq:Zdec}) for the error terms. Choosing $\|\underline{k}'\| = \kappa$, $\|\underline{k}' + \underline{p}\| = \kappa$, $\|\underline{p}\| = O(\kappa)$, we see that in (\ref{eq:WIlat2}) the $R$-terms give contributions that are suppressed by a factor $\kappa^{\theta}$ with respect to the other quantities. Hence, we get:
\begin{equation}\label{eq:divide}
\sum_{\omega'}^{*} D^{Z}_{\omega'}(\underline{p}) T_{\omega',\omega}(\underline{p}) \frac{1}{Z^{\text{ref}}_{\omega} D_{\omega}(\underline{p})}( S_{2; \omega}^{\text{ref}} (\underline{k}') - S_{2; \omega}^{\text{ref}} (\underline{k}'  + \underline{p} )\big) = S_{2; \omega}^{\text{ref}} (\underline{k}') - S_{2; \omega}^{\text{ref}} (\underline{k}' + \underline{p}) + O(\kappa^{\theta - 1})
\end{equation}
with 
\begin{equation}
D^{Z}_{\omega'}(\underline{p}) := -ip_{0}Z^{\text{ref}}_{0,\omega'} + p_{1} Z^{\text{ref}}_{1,\omega'}\;.
\end{equation}
Dividing both sides of (\ref{eq:divide}) by $( S_{2; \omega}^{\text{ref}} (\underline{k}') - S_{2; \omega}^{\text{ref}} (\underline{k}'  + \underline{p} )\big) (Z^{\text{ref}}_{\omega} D_{\omega}(\underline{p}))^{-1} = O(\kappa^{-2})$, we find:
\begin{equation}\label{eq:DZD}
\sum_{\omega'}^{*}  D^{Z}_{\omega'}(\underline{p}) T_{\omega',\omega}(\underline{p}) = Z^{\text{ref}}_{\omega} D_{\omega}(\underline{p}) + O(\kappa^{\theta+1})\;.
\end{equation}
Recall the expression (\ref{eq:Tdef}) for $T(\underline{p})$. We have:
\begin{equation}
\lim_{p_{0}\to 0} \lim_{p_{1}\to 0} T(\underline{p}) = \frac{1}{1-(4\pi|v^{\text{ref}}|)^{-1}\Lambda_{Z}}\;,\qquad \lim_{p_{1}\to 0} \lim_{p_{0}\to 0} T(\underline{p})  = \frac{1}{1+(4\pi|v^{\text{ref}}|)^{-1}\Lambda_{Z}}\;.
\end{equation}
In conclusion, Eq. (\ref{eq:DZD}) implies:
\begin{equation}
\sum_{\omega'}^{*} Z^{\text{ref}}_{0,\omega'} \Big(\frac{1}{1-(4\pi|v^{\text{ref}}|)^{-1}\Lambda_{Z}}\Big)_{\omega',\omega} = Z^{\text{ref}}_{\omega}\;,\qquad \sum^{*}_{\omega'} Z^{\text{ref}}_{1,\omega'} \Big(\frac{1}{1+(4\pi|v^{\text{ref}}|)^{-1}\Lambda_{Z}}\Big)_{\omega',\omega} = Z^{\text{ref}}_{\omega} v^{\text{ref}}_{\omega}\;,
\end{equation}
which we can also rewrite as:
\begin{equation}\label{eq:Z0Z1}
\vec Z^{\text{ref}}_{0} = \big( 1 - \Lambda_{Z}^{T} (4\pi|v^{\text{ref}}|)^{-1} \big) \vec Z^{\text{ref}}\;,\qquad \vec Z^{\text{ref}}_{1} = \big( 1 + \Lambda_{Z}^{T} (4\pi|v^{\text{ref}}|)^{-1} \big) v^{\text{ref}}\vec Z^{\text{ref}}\;.
\end{equation}
\paragraph{Conclusion: universality of edge charge conductance.} Finally, let us plug the identities (\ref{eq:Z0Z1}) into (\ref{eq:Graw}). We get, omitting temporarily all `ref' labels:
\begin{eqnarray}
G &=& \Big( \vec Z, ( 1 - (4\pi|v|)^{-1} \Lambda_{Z}  ) \frac{1}{1 + (4\pi |v|)^{-1} \Lambda_{Z}} \frac{1}{1 - (4\pi |v|)^{-1} \Lambda_{Z}}  \frac{1}{2\pi |v|} \frac{1}{Z^{2}} ( 1 + \Lambda_{Z}^{T} (4\pi|v|)^{-1} ) v \vec Z\Big)\nonumber\\
&\equiv& \Big( \vec Z, \frac{1}{1 + (4\pi |v|)^{-1} \Lambda_{Z}}  \frac{1}{2\pi |v|} \frac{1}{Z^{2}} ( 1 + \Lambda_{Z}^{T} (4\pi|v|)^{-1} ) v \vec Z\Big)\;.
\end{eqnarray}
Now, using that, recalling the definition (\ref{eq:LZdef}) of $\Lambda_{Z}$:
\begin{eqnarray}
\frac{1}{2 \pi |v|} \frac{1}{Z^{2}} \Big(1 + \Lambda_{Z}^{T}\frac{1}{4\pi |v|} \Big) &=& \frac{1}{2 \pi |v|} \frac{1}{Z^{2}} \Big(1 + Z \Lambda \frac{1}{Z} \frac{1}{4\pi |v|} \Big) \nonumber\\
&=&  \Big(1 +  \frac{1}{4\pi |v|}\frac{1}{Z} {\Lambda} {Z} \Big) \frac{1}{2\pi |v|} \frac{1}{{Z}^{2}}\nonumber\\
&\equiv& \Big(1 +  \frac{1}{4\pi |v|}  {\Lambda}_{Z} \Big) \frac{1}{2\pi |v|} \frac{1}{{Z}^{2}}\;,
\end{eqnarray}
we immediately get, reintroducing the `ref' labels:
\begin{eqnarray}
G &=& \Big( \vec Z^{\text{ref}}, \frac{1}{2\pi |v^{\text{ref}}|} \frac{1}{(Z^{\text{ref}})^{2}} v^{\text{ref}} \vec Z^{\text{ref}}\Big) \nonumber\\
&=& \sum^{*}_{\omega} \frac{\text{sgn}(v^{\text{ref}}_{\omega})}{2\pi} \nonumber\\
&=& \sum^{*}_{\omega} \frac{\text{sgn}(v_{\omega})}{2\pi}
\end{eqnarray}
where we used that $|v^{\text{ref}}_{\omega} - v_{\omega}|\leq C|\lambda|$. This concludes the proof of Theorem \ref{thm:main}.\qed

\paragraph{Acknowledgements.} V.M. acknowledges financial support from MIUR, PRIN 2017 for the project MaQuMA, PRIN201719VMAST01. The work of V.M. has been supported by the European Research Council (ERC)
under the European Union’s Horizon 2020 research and innovation program ERC CoG UniCoSM, grant agreement n.724939. The work of M.P. has been supported by the European Research Council (ERC) under the European Union’s Horizon 2020 research and innovation program ERC StG MaMBoQ, n.802901. This work has been carried out under the auspices of the GNFM of INdAM. We thank Christian Sp\aa nsl\"att  Rugarn for interesting comments.

\end{document}